\documentclass[11pt,letterpaper]{article}
\usepackage{fullpage}
\usepackage[utf8]{inputenc}

\usepackage{amsmath, amssymb, amsfonts, amsthm}

\usepackage{natbib}
\usepackage{url}
\usepackage[colorlinks,linkcolor=blue,citecolor=blue,urlcolor=blue]{hyperref}

\usepackage{mathtools}
\usepackage{authblk}
\usepackage{graphicx}
\usepackage[usenames,dvipsnames]{xcolor}
\usepackage{verbatim, color}
\usepackage{soul}

\usepackage{setspace}

\doublespacing

\usepackage{amsmath}
\usepackage{amsfonts}
\usepackage{amssymb}
\usepackage{amsthm}
\usepackage{algorithm}
\usepackage{algpseudocode}
\usepackage{subcaption}
\usepackage[colorlinks,citecolor=blue,urlcolor=blue]{hyperref}
\usepackage{fullpage}
\usepackage{enumitem}

\newtheorem{theorem}{Theorem}

\newtheorem{lemma}{Lemma}

\newtheorem{definition}{Definition}
\newtheorem{assumption}{Assumption}

\newcommand{\Ccal}{\mathcal{C}}
\newcommand{\Rcal}{\mathcal{R}}
\def\ba{\boldsymbol{a}}
\def\bb{\boldsymbol{b}}

\def\be{\boldsymbol{e}}
\def\bdf{\boldsymbol{f}}
\def\bg{\boldsymbol{g}}

\def\bm{\boldsymbol{m}}

\def\bu{\boldsymbol{u}}
\def\bv{\boldsymbol{v}}

\def\bx{\boldsymbol{x}}
\def\by{\boldsymbol{y}}
\def\bz{\boldsymbol{z}}

\def\bA{\boldsymbol{A}}
\def\bB{\boldsymbol{B}}
\def\bC{\boldsymbol{C}}

\def\bI{\boldsymbol{I}}

\def\bN{\boldsymbol{N}}

\def\bS{\boldsymbol{S}}
\def\bT{\boldsymbol{T}}

\def\ones{\boldsymbol{1}}
\def\zeros{\boldsymbol{0}}

\def\bbeta{\boldsymbol{\beta}}
\def\bdelta{\boldsymbol{\delta}}
\def\bmu{\boldsymbol{\mu}}

\def\bDelta{\boldsymbol{\Delta}}

\def\bPi{\boldsymbol{\Pi}}

\def\bSigma{\boldsymbol{\Sigma}}

\def\bGamma{\boldsymbol{\Gamma}}
\def\bpi{\boldsymbol{\pi}}
\def\bom{\boldsymbol{\omega}}

\def\bDeltahat{\widehat{\boldsymbol{\Delta}} }
\def\bPihat{\widehat{\boldsymbol{\Pi}} }
\def\bThat{\widehat{\boldsymbol{T}} }
\def\bpihat{\widehat{\boldsymbol{\pi}}}
\def\pihat{\widehat{\pi}}
\def\tauhat{\widehat{\tau}}
\def\sigmahat{\widehat{\sigma}}
\def\Deltahat{\widehat{\Delta}}

\def\sign{\textup{sign}}
\def\diag{\textup{diag}}

\def\new{\text{new}}

\DeclareMathOperator*{\argmin}{argmin}

\def\R{\mathbb{R}}
\def\E{\mathbb{E}}

\newcommand{\PP}{\mathbb{P}} 

\DeclareMathOperator*{\cov}{Cov}
\DeclareMathOperator*{\var}{Var}

\DeclareMathOperator*{\card}{card}

\def\NPN{\textup{NPN}}

\def\N{\textup{N}}

\def\Unif{\textup{Unif}}

\def\brg{G}

\newcommand{\bpm}{\begin{pmatrix}}
\newcommand{\epm}{\end{pmatrix}}

\usepackage[textwidth=2cm, textsize=scriptsize]{todonotes}


\def\namedlabel#1#2{\begingroup
    #2%
    \def\@currentlabel{#2}%
    \phantomsection\label{#1}\endgroup
}

\def\bSigmah {\widehat{\boldsymbol{\Sigma}} }
\def\bbetah {\widehat{\boldsymbol{\beta}} }


\newcommand{\bfl}[1]{\begin{flalign} #1 \end{flalign}}
\newcommand{\bfln}[1]{\begin{flalign*} #1 \end{flalign*}}


\newcommand{\ech}{\color{black}\rm}

\begin{document}
\title{Sparse semiparametric discriminant analysis for high-dimensional zero-inflated data }
\author[1]{Hee Cheol Chung \thanks{hchung13@uncc.edu}}
\author[2]{Yang Ni \thanks{yni@stat.tamu.edu}}
\author[2]{Irina Gaynanova \thanks{Corresponding author, irinag@stat.tamu.edu}}
\affil[1]{\small Department of Mathematics and Statistics, University of North Carolina at Charlotte, Charlotte, NC 28223, USA}
\affil[2]{\small Department of Statistics, Texas A\&M University, College Station, TX 77843, USA}
\date{}
\maketitle
\begin{abstract}
Sequencing-based technologies provide an abundance of high-dimensional biological datasets with skewed and zero-inflated measurements. Classification of such data with linear discriminant analysis leads to poor performance due to the violation of the Gaussian distribution assumption. To address this limitation, we propose a new semiparametric discriminant analysis framework based on the truncated latent Gaussian copula model that accommodates both skewness and zero inflation. By applying sparsity regularization, we demonstrate that the proposed method leads to the consistent estimation of classification direction in high-dimensional settings. On simulated data, the proposed method shows superior performance compared to the existing method. We apply the method to discriminate healthy controls from patients with Crohn's disease based on microbiome data and to identify genera with the most influence on the classification rule.
\end{abstract}
\textbf{Keyword}: 
Classification;
Latent Gaussian copula;
Microbiome data;
Probit regression;
Variable selection 

\section{Introduction}
Linear discriminant analysis (LDA) is a popular classification method 
due to its simple linear classification rule and Bayes optimality under the assumption that each class population is Gaussian with equal covariance matrix but different means. However, 
the complexity of modern high-dimensional data raises many challenges in application of classical LDA. For example, microbiome and single-cell RNA sequencing data not only contain a large number of variables relative to sample size, but also are highly skewed and zero-inflated \citep{silverman2020naught}. 
In our motivating example of microbiome data from \citet{vandeputte2017quantitative}, we are interested in discriminating $n_1=29$ patients with Crohn's disease from $n_2=106$ healthy controls based on $p=101$ genera, and identifying key genera that affect the classification rule.
First, a large number of genera relative to sample size makes standard LDA suffer in terms of interpretation and accuracy, as it performs no variable selection and the sample covariance matrix becomes close to singular \citep{bickel2004some}. 
Second, the advantages of LDA under the Gaussian assumption -- simple linear classification rule and Bayes optimality -- can clearly not be transferred to skewed and zero-inflated data. Since LDA is very sensitive to outliers \citep{hastie2009elements}, large measurement values due to skewness and large proportions of zeros can significantly bias LDA parameter estimates leading to inaccurate classification. 

There is a rich body of work extending classical LDA to high-dimensional settings. A common approach is to add regularization to the classification direction vector, e.g., the $\ell_{2}$ regularization \citep{guo2007regularized} 
or the $\ell_{1}$ regularization \citep{witten2011penalized,  Mai:2012bf, Gaynanova:2016wk}.
An alternative approach is to consider data piling phenomenon in high dimensions, that is that observations from each class can be projected to a single point.
 \citet{ahn2010maximal} estimate classification direction by maximizing the distance between data piling sites, and \citet{lee2013hdlss} regularize the degree of data piling. While these approaches improve accuracy of LDA in the presence of larger number of variables, they still rely on assumption of Gaussian distribution, leading to poor performance in the presence of skewness and zero-inflation. 
 Several works consider relaxations of the Gaussian assumption. \citet{ahn2021trace} consider an alternative trace ratio formulation of Fisher's LDA, which is more robust to violations of Gaussianity than standard LDA. \citet{clemmensen2011sparse} model each class as a mixture of Gaussian distributions with subclass-specific means and common covariance matrices.  
 \citet{witten2011classification} and \citet{dong2016nblda} consider extensions of LDA model for RNA-seq data based on Poisson distribution and negative binomial distribution, respectively.  
 \citet{hernandez2005dimension} consider fully nonparametric approach with kernel LDA. \citet{lapanowski2019sparse} consider optimal scoring formalation of kernel LDA with sparsity regularization. These methods, although useful for high-dimensional data, are still not fully tailored for skewness and zero-inflation. 


In this work, we aim to simultaneously address the issues of high-dimensionality, skewness and zero-inflation in the LDA context. We propose to achieve this by formulating a new semiparametric LDA model via the truncated latent Gaussian copula designed for zero-inflated data \citep{yoon2020sparse}. Latent Gaussian copula models provide an elegant framework for the analysis of non-Gaussian data of (possibly) mixed types, such as skewed continuous \citep{Liu:2009wz}, binary \citep{Fan:2016um}, ordinal \citep{Quan:2018tv, Feng:2019wh}, and zero-inflated \citep{yoon2020sparse}. These models capture dependencies among variables via the latent correlation matrix, which can be consistently estimated by inverting a bridge function that connects latent correlations to a rank-based measure of association (Kendall’s $\tau$). Subsequently, these models have been used for graphical model estimation \citep{Fan:2016um,Feng:2019wh, Yoon2019microbial} and canonical correlation analysis \citep{yoon2020sparse} with non-Gaussian data.  However, the use of latent Gaussian copulas in classification context has been limited. The existing approaches \citep{lin2003discriminant,Han:2014je} are restricted to continuous data type, treating zeros as absolute. Furthermore, since standard LDA model assumes class-specific means, but the means are not identifiable under the Gaussian copula, \citet{Han:2013ju} and \citet{lin2003discriminant} enforce identifiability by enforcing marginal transformations to preserve means and variances between the observed and the latent Gaussian variables. Consequently, these approaches still rely on observation-level moment estimates, which are sensitive to zero inflation and outliers.

There are two major difficulties in adopting the truncated Gaussican copula model of \citet{yoon2020sparse} for LDA setting. The first difficulty is the identifiability of mean and variance parameters as described above, since the copula model is invariant under shifting and scaling of the bijective marginal transformations. Secondly, the aforementioned unsupervised problems -- graphical model estimation and canonical correlation analysis -- only require a consistent estimator of latent correlation matrix. In contrast, LDA encompasses both estimations of classification direction (based on latent Gaussian level) and prediction of class labels on new data (observed at non-Gaussian level), thus requiring mapping of observed data to the latent level. This is a particularly challenging task for zero-inflated data, as the mapping from observed zeros to underlying latent Gaussian variables is not one to one.




To address the identifiability difficulty, we consider a joint mixed binary-truncated copula model, where the class label is treated as dichotomized latent Gaussian variable, and zero-inflated measurements follow the truncated Gaussian copula. The joint framework encapsulates all relationships between the class label and covariates via joint latent correlation matrix. Unlike \citet{lin2003discriminant,Han:2014je}, our approach does not require marginal transformations to be mean and variance preserving, and consequently we do not rely on observation-level moment estimates. The optimal classification direction depends only on the joint latent correlation matrix. Furthermore, by adapting $\ell_1$-regularization, we prove that this classification direction can be estimated consistently in high-dimensional settings with the same rate as in sparse linear regression \citep{bickel2009simultaneous}. This is a non-trivial result as direct application of element-wise consistency of rank-based estimator of joint latent correlation matrix \citep{yoon2020sparse} leads to suboptimal rate. 
To our knowledge, the closest result is obtained by \citet{barber2018rocket} for continuous Gaussian copula. However, their proof takes advantage of the closed form of the inverse bridge function. Due to significant complexity of bridge function in truncated case \citep{yoon2020sparse}, its inverse is not available in closed form, presenting significant new challenges for theoretical analyses. 

To address the difficulty of predicting classes based on observed non-Gaussian data, we derive the posterior class probabilities conditional on observed measurements (zeros and non-zeros). We demonstrate that these probabilities can be computed using the truncated Gaussian distribution. Furthermore, we derive a Taylor approximation of the posterior probabilities that leads to a simple classification rule where the latent measurements corresponding to zeros are substituted with their conditional expectations. 

In summary, our main contributions are: (a) a new LDA framework for skewed and zero-inflated data based on semiparametric mixed binary-truncated latent Gaussian copula model; (b)~theoretical guarantees that the proposed framework leads to consistent estimation of classification direction in high-dimensional settings; (c) a principled approach for assigning class labels to observed zero-inflated data with non-unique mapping to latent space; (d) superior classification accuracy compared to existing methods on simulated data and microbiome data of \cite{vandeputte2017quantitative}.

The rest of the paper is organized as follows. In Section \ref{sec:method}, we introduce the proposed classification model and describe the estimation procedures. In Section \ref{sec:theory}, we show the consistency of the estimated classification direction. In Section \ref{sec:simulation}, we evaluate the classification accuracy of the proposed method with simulated datasets. In Section \ref{sec:application}, we apply the proposed method to the quantitative microbiome dataset \citep{vandeputte2017quantitative} to discriminate healthy controls from patients with Crohn's disease. In Section \ref{sec:disc}, we conclude the paper with a brief discussion.

\section{Methodology}\label{sec:method}
In Section \ref{subsec:method_notations}, we define notations that are used throughout the paper. In Section \ref{subsec:method_model}, we propose a semiparametric linear classification model based on latent Gaussian copula after reviewing the probit regression model and its latent variable formulation. In Section \ref{subsec:classrule}, we derive the posterior probability under the proposed model and provide the Bayes classification rule.

\subsection{Notation}\label{subsec:method_notations}
For a vector $\ba \in \R^{p}$, we denote the $\ell_{q}$-norm, $q\in [0,\infty)$, by $\|\ba\|_{q}=(\sum_{j=1}^{p}|a_{j}|^{q})^{1/q}$ and the $\ell_{\infty}$-norm by $\|\ba\|_{\infty} = \max_{1\leq j \leq p} |a_{j}|$. For two vectors with the same size, $\ba,\bb \in \R^{p}$, we write $\ba<\bb$ to denote element-wise inequalities such that $a_{j}<b_{j}$, $\forall j=1,\ldots,p$. For a matrix $\bA \in \R^{n \times p}$, $\|\bA\|_{\infty}=\max_{j,k}|a_{jk}|$ denotes its $\ell_{\infty}$-norm, and for a square matrix $\bC \in \R^{p \times p}$, $|\bC|$ denotes its determiant. For two matrices with the same size, $\bA,\bB \in \R^{n \times p}$, $\bA \circ \bB $ denotes the Hadamard product defined as $\bA \circ \bB=[a_{jk}b_{jk}] \in \R^{n \times p}$. For two functions $f$ and $g$, we denote their composite function by $f \circ g = f(g(x))$. We let $1(\cdot)$ denote the indicator function taking the value 1 when its argument is true and 0 otherwise. For a sequence of random variables, $X_{1},\ldots,X_{n},\ldots$, we write $X_{n}=O_{p}(a_{n})$ if, for any $\varepsilon>0$, there exists $M,N>0$ such that $\PP(|x_{n}/a_{n}|>M) < \varepsilon$ for all $n > N$. We let $\Phi_{d}(a_{1},\ldots,a_{d};\bSigma)$ and $\Phi(\cdot)$ respectively denote the $d$-dimensional standard Gaussian distribution function with the correlation matrix $\bSigma$ evaluated at $(a_{1},\ldots,a_{d})^{\top}$ and  the univariate standard Gaussian distribution function, respectively. We use $C,C_{1},\ldots$ to denote generic constants that do not depend on the sample size $n$, dimension $p$, and model parameters.

\subsection{Model}\label{subsec:method_model}
Let $Y\in\{0, 1\}$ be a random variable corresponding to class label and $\bx = (X_{1},\ldots,X_{p})^{\top}\in \R^{p}$ be a random vector of covariates. To accommodate possibly skewed and zero-inflated $\bx$, we propose to model $\bx$ using truncated latent Gaussian copula of \citet{yoon2020sparse}. We first review standard Gaussian copula model, which is also referred to as nonparanormal (NPN) model \citep{Liu:2009wz}.

\begin{definition}[Gaussian copula model]\label{def:NPN}
A random vector $\bx \in \R^{p}$ satisfies the Gaussian copula model if there exist strictly increasing transformations $\{f_{j}\}_{j=1}^{p}$ such that $(Z_{1},\ldots,Z_{p} )^{\top}  = (f_{1}(X_{1}),\ldots,f_{p}(X_{p}))^{\top} \sim \N_{p}(\zeros,\bSigma)$, where $\bSigma$ is a correlation matrix. We write $\bx \sim \NPN(\zeros,\bSigma,\bdf)$.
\end{definition}
While Gaussian copula model can accommodate skewness through transformation functions $\{f_{j}\}_{j=1}^{p}$, it does not allow zero-inflatied variables. The model of \citet{yoon2020sparse} allows for both zero-inflation and skewness through the following extra truncation step.
\begin{definition}[Truncated Gaussian copula model]\label{def:trun}
A random vector $\bx \in \R^{p}$ satisfies the truncated latent Gaussian copula model if there exist a random vector $\bx^*\sim \NPN(\zeros,\bSigma,\bdf)$ and constants $D_j$, $j=1, \dots, p$, such that $X_j = 1(X_{j}^{*}>D_{j})X_{j}^{*}$.
\end{definition}

Combining the truncated Gaussian copula model for $\bx$ with the latent Gaussian copula model for binary variable \citep{Fan:2016um} leads to the joint model for the class assignment $Y$ and covariates $\bx$.

\begin{definition}[Latent Gaussian copula model for mixed binary-truncated data]\label{def:LNPN_BT}
A random vector $(Y,\bx^{\top})^{\top} \in \{0,1\}\times\R^{p}$ satisfies the latent Gaussian copula model for mixed binary-truncated data if there exist a random vector $(X_0^*, \bx^*)\sim \NPN(\zeros,\bSigma,\bdf)$ and constants $D_{j}$, $j=0, \dots, p$, such that 
\begin{equation}
\begin{split}\label{eq:model}
Y &=1(X_{0}^{*}>D_{0}),\\
X_{j} &=1(X_{j}^{*}>D_{j})X_{j}^{*}, \quad j=1,\ldots,p.
\end{split}
\end{equation}
\end{definition}
\noindent
A special case is $D_{j}=-\infty$, $j=1,\dots, p$, in which $\bx$ follows standard Gaussian copula (without truncation). Thus model~\eqref{eq:model} can also be used in binary classification settings with skewed (not necessarily zero-inflated) $\bx$.

The Bayes classification rule assigns a new observation $\bx$ to class 1 if $\PP(Y=1|\bx)>\PP(Y=0|\bx)$, and to class 0, otherwise. In the model~\eqref{eq:model}, the class label $Y$ depends on the covariates $\bx$ through underlying joint latent correlation matrix $\bSigma$. Due to the latent Gaussian layer, the conditional probability $\PP(Y=1|\bx)$ under the model~\eqref{eq:model} is closely connected to probit regression model as we demonstrate below.

\subsection{Classification rule}\label{subsec:classrule}

The probit regression model is given by $\PP(Y=1|\bx) = \Phi(\beta_{0}+\bbeta^{\top}\bx)$, where $\beta_{0} \in \R$ and $\bbeta \in \R^{p}$ are the intercept and regression coefficient, respectively. Under this model, the Bayes classifier is $\delta(\bx) = 1(\beta_0 +\bbeta^{\top}\bx > 0).$ We show that the Bayes classifier under the model~\eqref{eq:model} takes a similar form with appropriate choices of $\beta_0$ and $\bbeta$.

First, consider the special case, where $D_{j}=-\infty$, $j=1,\dots, p$; that is, $\bx$ follows standard Gaussian copula (without truncation). By the 
definition, there exists latent Gaussian vector $(Z_{y},Z_{1},\ldots,Z_{p})  \sim \N_{1+p}(\zeros,\bSigma)$ such that $Y=1(X_{0}>D_{0})= 1(Z_{y}>\Delta_{y})$, $\Delta_{y}=f_{0}(D_{0})$, and $X_{j} = f_j^{-1}(Z_j)$. Let $\bz = (Z_{1},\ldots,Z_{p})^{\top}$. Since $f_j$'s are strictly increasing, it follows that
$$
\PP(Y = 1|\bx)  = \PP(Z_{y}> \Delta_{y}|\bz).
$$
Divide the correlation matrix $\bSigma$ into blocks corresponding to the latent $Z_y$ and covariates $\bz$ so that $\cov(Z_{y},\bz)=\bSigma_{21}^{\top}$ and $\cov(\bz) = \bSigma_{22}$. By properties of the multivariate Gaussian distribution, $Z_y|\bz\sim \N({\bbeta^*}^{\top}\bz,\ v^2)$, where $\bbeta^*=\bSigma_{22}^{-1}\bSigma_{21}$ and $v^2 = 1-\bSigma_{21}^{\top}\bSigma_{22}^{-1}\bSigma_{21}$. Hence,
$$
\PP(Y = 1|\bx)  = \PP(Z_{y}> \Delta_{y}|\bz) = \PP\left(\frac{Z_{y} - {\bbeta^*}^{\top}\bz}{v}> \frac{\Delta_{y} - {\bbeta^*}^{\top}\bz}{v} \Bigg|\bz\right) = \Phi\left( \frac{ {\bbeta^*}^{\top}\bz - \Delta_{y}}{v}\right).
$$
Accordingly, since $\bz = (f_1(X_1), \dots, f_p(X_p))^{\top} = \bdf(\bx)$, we have the same linear Bayes classifier as the probit model, $\delta(\bx) = 1({\bbeta^*}^{\top}\bdf(\bx) - \Delta_{y} >0)$ with $\beta_0 = -\Delta_y$ and $\bbeta = \bbeta^*=\bSigma_{22}^{-1}\bSigma_{21}$. 





Now, consider the general truncated case of model~\eqref{eq:model}. Then $X_{j} = 1(Z_{j}>\Delta_{j})f_j^{-1}(Z_j)$ with $\Delta_{j} = f_{j}(D_{j})$, $j=1,\ldots,p$. For a new vector of covariates $\bx \in \R^p$, 
let $\bx_{t}\in \R^{p_{t}}$ and $\bx_{o}\in \R^{p_{o}}$ be the subvectors with truncated and observed realizations, respectively, where $p_{t}+p_{o}=p$. Likewise, let $\bz_{t}$ and $\bz_{o}$ be the corresponding latent Gaussian vectors, and $\bDelta_{t}$ and $\bDelta_{o}$ be the corresponding threshold vectors. Then it follows that
$$
\PP(Y=1|\bx) = \PP(Z_{y}>\Delta_{y}|\bz_{o},\bz_{t}<\bDelta_{t}).
$$
Since $Z_y|\bz\sim \N(\bbeta^{\top}\bz,\ v^2)$ as before, the posterior probability can be written as
\bfl{\nonumber
\PP(Z_{y}>\Delta_{y}|\bz_{o},\bz_{t}<\bDelta_{t})
&=
\left\{\PP(\bz_{t}<\bDelta_{t}|\bz_{o})\right\}^{-1} \int_{\Delta_{y}}^{\infty} \int_{\bz_{t}<\bDelta_{t}} p(z_{y},\bz_{t}|\bz_{o})d\bz_{t} dz_{y} \\ \nonumber
&=
\left\{\PP(\bz_{t}<\bDelta_{t}|\bz_{o})\right\}^{-1}
 \int_{\bz_{t}<\bDelta_{t}}
 \left\{
 \int_{\Delta_{y}}^{\infty}
 p(z_{y}|\bz_{t},\bz_{o}) dz_{y}
 \right\}
 p(\bz_{t}|\bz_{o}) d\bz_{t}\\ \nonumber
 &=
\left\{\PP(\bz_{t}<\bDelta_{t}|\bz_{o})\right\}^{-1}
 \int_{\bz_{t}<\bDelta_{t}}
\Phi
\left( 
\frac{{\bbeta^*_{t}}^{\top}\bz_{t} + {\bbeta^*_{o}}^{\top}\bz_{o} - \Delta_{y} }{v}
\right)
 p(\bz_{t}|\bz_{o}) d\bz_{t} \\ 
 \label{eq:clda_posterior_prob_expectationform}
&  = \E\left\{ 
    \Phi
    \left( 
    \frac{{\bbeta^*_{t}}^{\top}\bz_{t} + {\bbeta^*_{o}}^{\top}\bz_{o} - \Delta_{y} }{v}
    \right)
    \bigg|
    \bz_{o}, \bz_{t}<\bDelta_{t}
    \right\},
}
where the expectation is over the multivariate Gaussian distribution of $\bz_{t}$ given $\bz_{o}$ truncated to the region $\{\ba\in\R^{p_{t}}| \ba < \bDelta_{t}\}$. If the expectation \eqref{eq:clda_posterior_prob_expectationform} is larger than 0.5, the Bayes classification rule assigns a new observation $\bx$ to class 1, otherwise it assigns it to class 0.

Unfortunately, the expectation above has no closed form. We propose two approaches to address this difficulty as follows. First, we can sample $\bz_{t}$ from the multivariate truncated Gaussian distribution and approximate the posterior probability using the Monte-Carlo method as
\begin{align}\label{eq:postProb_MC}
    {\PP(Y=1|\bx) } = \E&\left\{ 
    \Phi 
    \left( 
    \frac{{\bbeta^*_{t}}^{\top}\bz_{t} + {\bbeta^*_{o}}^{\top}\bz_{o} - \Delta_{y} }{v}
    \right)
    \bigg|
    \bz_{o}, \bz_{t}<\bDelta_{t}
    \right\}\approx \frac{1}{S}\sum_{s=1}^{S}
    \Phi
    \left( 
    \frac{{\bbeta^*_{t}}^{\top}{\bz}_{t}^{(s)} + {\bbeta^*_{o}}^{\top}{\bz}_{o} - \Delta_{y} }{{v}}
    \right),
\end{align}
where $\{\bz_{t}^{(s)}\}_{s=1}^S$, is an independent sample from the $p_{t}$-variate truncated Gaussian. This approach, however, is computationally demanding, and makes the classification rule dependent on the scaling factor $v$ (recall that the classification rules of probit and standard Gaussian copula model do not depend on $v$). As an alternative, we consider the first-order Taylor approximation of the expectation around the mean $\bmu_{t}=\E ( \bz_{t}| {\bz}_{o}, \bz_{t}<{\bDelta}_{t})$, which leads to  (see Appendix \ref{appendix:Taylor_posterior_prob} for derivation), 
\begin{align}\label{eq:postProb_linear}
\PP(Y=1|\bx) =\E&\left\{ 
    \Phi 
    \left( 
    \frac{{\bbeta^*_{t}}^{\top}\bz_{t} + {\bbeta^*_{o}}^{\top}\bz_{o} - \Delta_{y} }{v}
    \right)
    \bigg|
    \bz_{o}, \bz_{t}<\bdelta_{t}
    \right\} 
    \approx
    \Phi
    \left( 
    \frac{{\bbeta^*_{t}}^{\top}\bmu_{t} + {\bbeta^*_{o}}^{\top}\bz_{o} - \Delta_{y} }{v}
    \right).
\end{align}
In general, the mean of multivariate truncated Gaussian distribution $\bmu_{t}$ has no closed form expression, and thus, it also needs to be estimated by Monte Carlo (MC) sampling. However, the classification rule based on~\eqref{eq:postProb_linear} is linear, $\delta(\bx) = 1({\bbeta^*_{t}}^{\top}\bmu_{t} + {\bbeta^*_o}^{\top}\bdf_o(\bx_o)  - \Delta_{y} >0)$, so its main advantage over~\eqref{eq:postProb_MC} is that it does not depend on the scaling factor $v$ .

For both~\eqref{eq:postProb_MC} and~\eqref{eq:postProb_linear}, the truncated variables enter the classification rule only through the inner-product with $\bbeta^*$. In Section~\ref{subsec:estimation_beta},
we estimate $\bbeta^*$ using sparsity regularization, and thus, in practice, we only need to generate a subvector of $\bz_t$ corresponding to non-zero elements in $\bbeta^{*}_{t}$, which makes MC sampling faster.


\subsection{Estimation of the classification direction}\label{subsec:estimation_beta}
The Bayes classification rule under model~\eqref{eq:model} depends crucially on $\bbeta^* = \bSigma_{22}^{-1}\bSigma_{21}\in \R^p$, which we refer to as classification direction.
Recall that the best linear unbiased predictor of $Z_{y}$, the latent Gaussian variable of the class label, is $\E(Z_y|\bz) = {\bbeta^*}^{\top}\bz$, and thus we can view $\bbeta^*$ as the minimizer of the mean squared error criterion:
\bfl{\label{eq:mse}
\bbeta^* = \underset{\bbeta\in \R^{p}}{\argmin} ~\E\left[ (Z_{y} - \bbeta^{\top}\bz)^{2}\right] =  \underset{\bbeta\in \R^{p}}{\argmin} ~ 
\left(
\bbeta^{\top}\bSigma_{22}\bbeta
 - 2\bbeta^{\top}\bSigma_{12}   \right).
}
In practice, $\bSigma_{12}$ and $\bSigma_{22}$ are unknown, and need to be estimated from the data. However, as $Z_{y}$ and $\bz$ are unobservable latent variables, $\bSigma_{21}$ and $\bSigma_{22}$ cannot be directly estimated using the sample correlation matrices. Instead, we propose to utilize rank-based estimators for $\bSigma_{12}$ and $\bSigma_{22}$ that take advantage of the bridge function connecting latent correlations to Kendall's $\tau$ values \citep{Fan:2016um,yoon2020sparse}. The advantage of this connection is that it enables consistent estimation of latent correlations based on ranks without requiring estimation of underlying monotone transformations $f_j$.


Concretely, a strictly increasing bridge function $G$ is defined such that $G(\sigma_{jk})=\E(\tauhat_{jk})=\tau_{jk}$, where $\sigma_{jk}$ is an element of the full correlation matrix $\bSigma$ corresponding to variables $j$ and $k$, $\tau_{jk}$ is the corresponding population Kendall's $\tau$, and $\tauhat_{jk}$ is the sample Kendall's $\tau$ 
\bfln{
\widehat{\tau}_{jk} = \frac{2}{n(n-1)}\sum_{1\leq i \leq i' \leq n} \sign(X_{ij} - X_{i'j})\sign(X_{ik} - X_{i'k}),
}
with $X_{ij}$ being the $i$-th observation of $X_j$ and $n$ being the sample size. The specific form of the bridge function $G$ depends on the type of observed variables. In our case, we are interested in binary/truncated pairs (correlations between binary class label and zero-inflated variables), and truncated/truncated pairs (correlations between zero-inflated variables).
The corresponding bridge functions $G_{BT}$ and $G_{TT}$ have the following closed forms \citep{yoon2020sparse}:
\begin{align}
G_{BT}(r;\Delta_{j},\Delta_{k})&= 2\{1-\Phi(\Delta_{j})\}\Phi(\Delta_{k}) - 2\Phi_{3}(-\Delta_{j},-\Delta_{k},0; \bSigma_{3a}(r))
         - 2\Phi_{3}(-\Delta_{j},-\Delta_{k},0; \bSigma_{3a}(r)), \notag\\
         \label{eq:bridgeForm}
G_{TT}(r;\Delta_{j},\Delta_{k})&= -2\Phi_{4}(-\Delta_{j},-\Delta_{k},0,0; \bSigma_{4a}(r)) + 2\Phi_{4}(-\Delta_{j},-\Delta_{k},0,0; \bSigma_{4b}(r)),
\end{align}
with
\bfln{
\begin{array}{cc}
\bSigma_{3a}(r) = 
\begin{pmatrix}
1   &   -r  &   1/\sqrt{2} \\
-r   &   1  &   -r/\sqrt{2} \\
1/\sqrt{2}   &   -r/\sqrt{2}  &   1 
\end{pmatrix},
& 
\bSigma_{3b}(r) = 
\begin{pmatrix}
1   &   0  &   1/\sqrt{2} \\
0   &   1  &   -r/\sqrt{2} \\
1/\sqrt{2}   &   -r/\sqrt{2}  &   1 
\end{pmatrix},
\\
\bSigma_{4a}(r) = 
\begin{pmatrix}
1   &   0  &   1/\sqrt{2} &   -r/\sqrt{2} \\
0   &   1  &   -r/\sqrt{2} &   1/\sqrt{2} \\
-r/\sqrt{2} &   1/\sqrt{2} &    1   &   -r \\
1/\sqrt{2} &   -r/\sqrt{2} &    -r   &   1 
\end{pmatrix},
&
\bSigma_{4b}(r) = 
\begin{pmatrix}
1   &   r  &   1/\sqrt{2} &    r/\sqrt{2} \\
r   &   1  &   r/\sqrt{2} &   1/\sqrt{2} \\
r/\sqrt{2} &  1/\sqrt{2} &    1   &   r \\
1/\sqrt{2} &   r/\sqrt{2} &   r   &   1 
\end{pmatrix}.
\end{array}
}

The moment equation $G(\sigma_{jk};\Delta_{j}, \Delta_{k}) = \E(\tauhat_{jk})$ and the strict monotonicity of $G$ enable estimation of the latent correlation matrix $\bSigma$ using the method of moments. To account for unknown $\Delta_{j}$,
we replace it with the moment estimator $\Deltahat_{j} = \Phi^{-1}(\pihat_{j})$ as in \citet{Fan:2016um}, where $\pihat_{j}=n_{0j}/n$ is the proportion of zeros in $X_j$ with $n_{0j} = \sum_{i=1}^{n}  1(x_{ij}=0)$,  leading to $\sigmahat_{jk} = G^{-1}(\tauhat_{jk};\Deltahat_{j},\Deltahat_{k})$. As the resulting $\bSigmah=[\sigmahat_{jk}]_{1\leq j,k \leq p}$ is not guaranteed to be positive-semidefinite \citep{Fan:2016um,yoon2020sparse}, it is projected onto the cone of positive-semidefinite matrices to obtain $\bSigma_p$ \citep{Fan:2016um}, and  \citet{yoon2020sparse} further consider $\widetilde{\bSigma} = (1-\nu)\bSigmah_{p} + \nu\bI$ with a small positive constant $\nu$ to make $\widetilde{\bSigma}$ positive definite. When $\nu=o(\sqrt{\log p/n})$, these modifications do not affect the consistency rates of resulting $\widetilde \bSigma$, see Corollary 2 in \citet{Fan:2016um} and Theorem 7 in \citet{yoon2020sparse}. In our implementation, we define $\widehat{\bSigma}=\widetilde{\bSigma}$ from the R-package \textsf{mixedCCA} \citep{yoon2021mixedcca} which uses the default value of $\nu=0.01$, and we refer to $\widehat{\bSigma}$ as rank-based estimator.

In summary, to estimate $\bbeta^*$, we propose to replace $\bSigma_{21}$ and $\bSigma_{22}$ in \eqref{eq:mse} with the corresponding rank-based estimators $\widehat{\bSigma}_{21}$ and  $\widehat{\bSigma}_{22}$, respectively. In addition, we consider a $\ell_{1}$-regularization to account for high-dimensionality and to enhance the interpretability of the resulting classification rule. Specifically, we consider the following minimization problem:
\begin{flalign}\label{eq:cda_sampleObj}
    \bbetah = \argmin_{\bbeta \in \R^{p}} ~   
    \left(
    \frac12 \bbeta^{\top} \widehat{\bSigma}_{22}\bbeta - \bbeta^{\top} \widehat{\bSigma}_{21} + \lambda \|\bbeta\|_{1}
    \right),
\end{flalign}
where $\lambda>0$ is the tuning parameter that controls the sparsity level of $\bbetah$. This optimization problem is convex and can be efficiently solved via the coordinate descent algorithm. To obtain the solution for fixed $\lambda$, we utilize the solver written in \textsf{C} in the R--package \textsf{MGSDA} \citep{gaynanova2016MGSDApackage}. 

\subsection{Estimation of the classification rule}\label{subsec:estimation_classificationrule}
Here we describe how to obtain sample Bayes classification rule based on the optimal rule in Section~\ref{subsec:classrule} and estimated classification direction $\widehat \bbeta$. The key difficulty is that both approximations of posterior probability~\eqref{eq:postProb_MC} and~\eqref{eq:postProb_linear} rely on the latent $\bz$, which is unobservable. We first show how to estimate subvector of $\bz$ corresponding to non-zero observed values in $\bx$, $\bz_o$. We then use this estimator to generate posterior samples for $\bz_t$, subvector of $\bz$ corresponding to zero values in $\bx$, to use directly in~\eqref{eq:postProb_MC}, and in computing conditional mean of $\bz_t$ for~\eqref{eq:postProb_linear}.

Let $(Y_{i},\bx_{i}^{\top})^{\top} \in \R^{1+p}$, $i=1,\ldots,n$, be a sample from the latent Gaussian copula model for binary/truncated mixed data as in Definition \ref{def:LNPN_BT}. For each $i$, we write $\bx_{i,t}$ and $\bx_{i,o}$ to denote the truncated and observed subvectors of $\bx_{i}$, respectively. Similarly,
we denote the truncated and observed subvectors of a new observation $\bx^{\new}$ by $\bx^{\new}_{t} \in \R^{p_{t}}$ and $\bx^{\new}_{o} \in \R^{p_{o}}$.  We let $\widehat{\bDelta}=(\widehat{\Delta}_{1},\ldots,\widehat{\Delta}_{p})^{\top}$ be the threshold vector estimate, where $\widehat{\Delta}_{j}=\Phi^{-1}(\pihat_{j})$ and $\pihat_{j}=\sum_{i=1}^{n} 1(x_{ij}=0)/n$.

To estimate $\bz_o^{\new}$ corresponding to observed $\bx^{\new}_{o}$, we recall that from Definition~\ref{def:LNPN_BT} it holds that $z^{\new}_{j,o} = f_{j}(x^{\new}_{j,o}) = \Phi^{-1}\circ F_{j}(x^{\new}_{j,o})$, where $F_{j}$ is the marginal distribution function of the $j$th latent variable $X_{j}^{*}$ \citep{Liu:2009wz}. We propose to estimate $F_{j}$ using empirical cumulative distribution function, where we apply the winsorization similar to \citet{Han:2013ju} to avoid  $f_j(x^{\new}_{j,o})$ being infinite. Specifically, based on observations $\bx_{1},\ldots,\bx_{n}$, we consider 
\bfln{
\widehat{F}_{j}(t ; \delta_{n},x_{1j},\ldots,x_{nj}) = 
W_{j}^{\delta_{n}}
\left( 
\frac{1}{n} \sum_{i=1}^{n} 1(x_{ij}\leq t)
\right),
}
where
\bfln{
W_{j}^{\delta_{n}}(x) =
\begin{cases}
\pihat_{j},& \quad \text{if} \quad x \leq \pihat_{j}\\
x,& \quad \text{if} \quad \pihat_{j} < x \leq 1-\delta_{n}  \\
1-\delta_{n},& \quad \text{if} \quad 1-\delta_{n}< x.
\end{cases}
}
In our numerical studies, we use $\delta_{n}=1/(2n)$ as recommended by \citet{Han:2013ju}. Based on $\widehat{F}_{j}$'s, we set $\widehat{f}_{j} = \Phi^{-1}\circ \widehat{F}_{j}$ and obtain $\widehat{\bz}^{\new}_{o}$ as $\widehat{z}^{\new}_{j,o}=\widehat{f}_{j}(x^{\new}_{j,o})$. 

Using $\widehat{\bz}^{\new}_{o}$, we generate posterior samples $\bz^{\new(s)}_{t}$, $s=1,\ldots,S$, as follows. Let $\var(\bz_{t}^{\new}) = \bSigma_{t}$, $\var(\bz_{o}^{\new}) = \bSigma_{o}$, and $\cov(\bz_{o}^{\new},\bz_{t}^{\new}) = \bSigma_{ot}$. By properties of multivariate Gaussian distribution, $\E(\bz_t|\bz_o)= \bSigma_{ot}^{\top} \bSigma_{o}^{-1}{\bz}_{o}$ and $\var(\bz_t|\bz_0) = \bSigma_{t} - \bSigma_{ot}^{\top} \bSigma_{o}^{-1}
\bSigma_{ot}$. By plugging estimators $\widehat{\bz}^{\new}_{o}$, $\bSigmah_t$, $\bSigmah_o$, $\bSigmah_{ot}$, and using estimated truncation levels $\widehat{\bDelta}_{t}$ (the subvector of $\widehat{\bDelta}$ corresponding to $\bz_{t}^{\new}$), we obtain multivariate truncated Gaussian distribution with the probability density
\bfl{\label{eq:conditionalpdf}
p(\bz_{t}|\widehat{\bz}^{\new}_{o}, \bz_{t} < \widehat{\bDelta}_{t}) =
\frac
{\N_{p_{t}}(\bz_{t}|\widehat{\bmu},\widehat{\bGamma})}
{\PP(\bz_{t}<\widehat{\bDelta}_{t}| \bz^{\new}_{o}=\widehat{\bz}^{\new}_{o}) } 1(\bz_{t} < \widehat{\bDelta}_{t} ),
}
where 
$
\widehat{\bmu}=\bSigmah_{ot}^{\top} \bSigmah_{o}^{-1}\widehat{\bz}_{o}^{\new}$, $ \widehat{\bGamma}=\bSigmah_{t} - \bSigmah_{ot}^{\top} \bSigmah_{o}^{-1}
\bSigmah_{ot}.$ Combing posterior samples with the estimate $\widehat \beta$ from Section~\ref{subsec:estimation_beta} and $\widehat v = \sqrt{1-\bSigmah_{21}^{\top}\bSigmah_{22}^{-1}\bSigmah_{21}}$, gives estimate of the posterior probability \eqref{eq:postProb_MC}
\bfl{\label{eq:posterior_prob_mc}
    \PP\widehat{(Y=1|\bx)} = \frac{1}{S}\sum_{s=1}^{S}
    \Phi
    \left( 
    \frac{ \widehat{\bbeta}_{t}^{\top}{\bz}_{t}^{\new(s)} + \widehat{\bbeta}_{o}^{\top}\widehat{\bz}^{\new}_{o} - \widehat{\Delta}_{y} }{{\widehat v}}
    \right).
}
Similarly, using $\widetilde{\bmu}_{t}=S^{-1}\sum_{s=1}^{S} \bz^{\new(s)}_{t}$, we obtain estimate of posterior probability~\eqref{eq:postProb_linear}:
\bfl{\label{eq:posterior_prob_linear}
    \PP\widehat{(Y=1|\bx)} =
    \Phi
    \left( 
    \frac{\bbetah_{t}^{\top}\widetilde{\bmu}_{t} + \bbetah_{o}^{\top}\widehat{\bz}^{\new}_{o} - \widehat{\Delta}_{y} }{\widehat{v} }
    \right).
}

The corresponding sample Bayes rule assigns a new observation $\bx^{\new}$ to class 1 if the sample posterior probability is greater than 0.5 and to class 0, otherwise. Note that while expression~\eqref{eq:posterior_prob_linear} for posterior probability depends on $\widehat v$, the corresponding classification rule does not, similarly to standard LDA rule.


\section{Theoretical Results}\label{sec:theory}
In this section we demonstrate that the estimated classification direction $\widehat \bbeta$ from~\eqref{eq:cda_sampleObj} is a consistent estimator for $\bbeta^*$. We make the following assumptions.

\begin{assumption}[Latent correlations]\label{assumption1}
All the elements of $\bSigma$ satisfy $|\sigma_{jk}|\leq 1-\varepsilon_{r}$ for some $\varepsilon_{r}>0$.
\end{assumption}
\begin{assumption}[Thresholds]\label{assumption2}
All the thresholds $\Delta_{j}$ satisfy $|\Delta_{j}|\leq M$ for some constant  $M>0$.
\end{assumption}
\begin{assumption}[Condition number]\label{assumption:condition_number}
$\mathsf{C}(\bSigma) = \frac{\lambda_{\max}(\bSigma)}{\lambda_{\min}(\bSigma)} \leq C_{\cov}$ for some constant $C_{\cov}$.
\end{assumption}
\begin{assumption}[Sparsity] \label{assumption:sparsity} 
$\bbeta^{*}$ is sparse with the support $\mathcal{S}=\{j:\beta_{j}^{*}\neq 0\}$ with $s=\card(\mathcal{S})$.
\end{assumption}
\begin{assumption}[Sample size] \label{assumption:samplesize} 
$s\log(p) = o(n)$.
\end{assumption}

Assumptions~\ref{assumption1}-\ref{assumption2} are needed to guarantee consistency of estimated latent correlations in $\widehat \bSigma_{22}$, $\widehat \bSigma_{21}$ \citep{yoon2020sparse}. Assumptions~\ref{assumption:condition_number}--\ref{assumption:samplesize} are used to account for high-dimensional setting when $p$ is large, potentially much greater than $n$. We also take advantage of restricted eigenvalue condition.

\begin{definition}[Restricted eigenvalue condition]\label{def:restrictedeigenvalue}
A $p \times p$ matrix $\bSigma$ satisfies restricted eigenvalue condition $RE(s,3)$ with parameter $\gamma=\gamma(\bSigma)$ if for all sets $S \subset \{1,\ldots,p\}$ with $\card(S)\leq s$, and for all $\ba \in \mathcal{C}(S,c)=\{\ba \in \R^{p}: \|\ba_{S^{c}}\|_{1} \leq 3 \|\ba_{S}\|_{1} \}$, it holds that
\bfln{
\ba^{\top}\bSigma\ba \geq \frac{ \|\ba_{S}\|_{2}^{2} }{ \gamma}. }
\end{definition}
First, we provide deterministic bound on estimation error, which is a standard bound for high-dimensional sparse regression \citep{bickel2009simultaneous,hastie2015statistical,negahban2012unified}. For completeness, the proof is presented in the Appendix.
\begin{theorem}\label{thm:deterministic2}
Under Assumption \ref{assumption:sparsity}, if $\lambda \geq 2\| \bSigmah_{21}  - \bSigmah_{22}\bbeta^{*}\|_{\infty} $ and $\bSigmah_{22}$ satisfies RE(s,3) with parameter $\gamma$, then
\begin{flalign*}
 \|\bbetah - \bbeta^{*}\|_{2} \leq \frac{15}{2} \gamma \sqrt{s} \lambda.
\end{flalign*}
\end{theorem}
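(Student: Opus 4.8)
The plan is to run the standard ``basic inequality'' argument for the Lasso/Dantzig-type estimator, specialized to the quadratic objective in~\eqref{eq:cda_sampleObj}. First I would use optimality of $\bbetah$: since $\bbetah$ minimizes $Q(\bbeta) = \tfrac12\bbeta^\top\bSigmah_{22}\bbeta - \bbeta^\top\bSigmah_{21} + \lambda\|\bbeta\|_1$, comparing the value at $\bbetah$ with the value at $\bbeta^*$ gives
\begin{flalign*}
\tfrac12 \bbetah^\top\bSigmah_{22}\bbetah - \bbetah^\top\bSigmah_{21} + \lambda\|\bbetah\|_1 \leq \tfrac12 {\bbeta^*}^\top\bSigmah_{22}\bbeta^* - {\bbeta^*}^\top\bSigmah_{21} + \lambda\|\bbeta^*\|_1.
\end{flalign*}
Writing $\bh = \bbetah - \bbeta^*$ and expanding the quadratic form, the left side minus the right side rearranges to
\begin{flalign*}
\tfrac12 \bh^\top \bSigmah_{22} \bh \leq \bh^\top(\bSigmah_{21} - \bSigmah_{22}\bbeta^*) + \lambda\|\bbeta^*\|_1 - \lambda\|\bbetah\|_1.
\end{flalign*}
Then I would bound the first term on the right by H\"older: $|\bh^\top(\bSigmah_{21}-\bSigmah_{22}\bbeta^*)| \leq \|\bh\|_1 \|\bSigmah_{21}-\bSigmah_{22}\bbeta^*\|_\infty \leq \tfrac{\lambda}{2}\|\bh\|_1$, using the hypothesis $\lambda \geq 2\|\bSigmah_{21}-\bSigmah_{22}\bbeta^*\|_\infty$.

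Next I would exploit sparsity of $\bbeta^*$ (Assumption~\ref{assumption:sparsity}) to convert $\|\bbeta^*\|_1 - \|\bbetah\|_1$ into a statement about $\bh$ restricted to $\mathcal{S}$ and $\mathcal{S}^c$. Since $\bbeta^*$ is supported on $\mathcal{S}$, we have $\|\bbeta^*\|_1 - \|\bbetah\|_1 = \|\bbeta^*_{\mathcal{S}}\|_1 - \|\bbetah_{\mathcal{S}}\|_1 - \|\bbetah_{\mathcal{S}^c}\|_1 \leq \|\bh_{\mathcal{S}}\|_1 - \|\bh_{\mathcal{S}^c}\|_1$ by the triangle inequality and $\bbetah_{\mathcal{S}^c} = \bh_{\mathcal{S}^c}$. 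Combining with the previous display and $\|\bh\|_1 = \|\bh_{\mathcal{S}}\|_1 + \|\bh_{\mathcal{S}^c}\|_1$, the nonnegativity of $\bh^\top\bSigmah_{22}\bh$ (which holds because $\bSigmah_{22}$ satisfying $RE(s,3)$ in particular forces the relevant quadratic form to be nonnegative on the cone, and in fact one first derives it is nonnegative to get the cone membership) yields $\|\bh_{\mathcal{S}^c}\|_1 \leq 3\|\bh_{\mathcal{S}}\|_1$, i.e.\ $\bh \in \mathcal{C}(\mathcal{S},3)$. This is the cone condition that unlocks the restricted eigenvalue hypothesis.

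Finally, with $\bh \in \mathcal{C}(\mathcal{S},3)$ I can apply $RE(s,3)$ to get $\bh^\top\bSigmah_{22}\bh \geq \|\bh_{\mathcal{S}}\|_2^2/\gamma$, and then close the loop: from $\tfrac12\bh^\top\bSigmah_{22}\bh \leq \tfrac{\lambda}{2}\|\bh\|_1 + \lambda(\|\bh_\mathcal{S}\|_1 - \|\bh_{\mathcal{S}^c}\|_1) \leq \tfrac{3\lambda}{2}\|\bh_{\mathcal{S}}\|_1$ (again using $\|\bh\|_1 \le 4\|\bh_\mathcal S\|_1$ on the cone, combined carefully with the signs), together with $\|\bh_{\mathcal{S}}\|_1 \leq \sqrt{s}\|\bh_{\mathcal{S}}\|_2 \leq \sqrt{s}\|\bh\|_2$, I obtain $\|\bh\|_2^2/\gamma \le \|\bh_{\mathcal S}\|_2^2 \cdot(\text{const})$ chained to $\lesssim \lambda\sqrt{s}\|\bh\|_2$, hence $\|\bh\|_2 \lesssim \gamma\sqrt{s}\lambda$; tracking the constants carefully gives the factor $15/2$. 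There is no real ``hard part'' here since this is a textbook deterministic argument and the excerpt itself flags it as standard; the only place requiring mild care is the bookkeeping of the cone constant (ensuring it is exactly $3$ so that $RE(s,3)$ applies rather than $RE(s,c)$ for some other $c$) and the propagation of absolute constants to land on $15/2$ rather than a looser bound. The genuinely novel work of the paper — showing the probabilistic event $\lambda \geq 2\|\bSigmah_{21}-\bSigmah_{22}\bbeta^*\|_\infty$ holds with high probability at the right rate, and that $\bSigmah_{22}$ satisfies $RE(s,3)$ — is deferred to later results and not needed for this theorem.
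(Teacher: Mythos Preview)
Your approach is essentially the same as the paper's: start from optimality of $\bbetah$ (the paper uses the KKT/subgradient form, you use the equivalent direct comparison of objective values), apply H\"older with the hypothesis on $\lambda$, use sparsity of $\bbeta^*$ to deduce the cone condition $\bh\in\mathcal C(\mathcal S,3)$, then invoke $RE(s,3)$. Up through the inequality $\tfrac12\bh^\top\bSigmah_{22}\bh\leq \tfrac{3\lambda}{2}\|\bh_{\mathcal S}\|_1$ and the cone membership, your sketch is correct and matches the paper.

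The gap is in your closing step. The $RE(s,3)$ condition as stated gives only $\bh^\top\bSigmah_{22}\bh \geq \|\bh_{\mathcal S}\|_2^2/\gamma$, not $\|\bh\|_2^2/\gamma$; so combining with the upper bound yields $\|\bh_{\mathcal S}\|_2 \leq 3\gamma\sqrt{s}\lambda$ (and hence $\bh^\top\bSigmah_{22}\bh\leq \tfrac{9}{4}\gamma s\lambda^2$), but says nothing directly about $\|\bh\|_2$. Your displayed chain ``$\|\bh\|_2^2/\gamma \le \|\bh_{\mathcal S}\|_2^2\cdot(\text{const})$'' is therefore unjustified --- the inequality $\|\bh\|_2^2\lesssim \|\bh_{\mathcal S}\|_2^2$ is precisely what still needs to be proved on the cone. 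The paper closes this gap with a standard block-peeling argument: letting $T_0,T_1,\dots$ be successive index sets of the $s$ largest-in-absolute-value coordinates of $\bh$, one shows $\bh\in\mathcal C(T_0,3)$ and $\|\bh\|_2\leq \|\bh_{T_0}\|_2 + s^{-1/2}\|\bh\|_1$. Then $\|\bh_{T_0}\|_2$ is controlled via $RE(s,3)$ and the quadratic-form bound, while $\|\bh\|_1\leq 4\|\bh_{\mathcal S}\|_1\leq 6\gamma s\lambda$; summing the two pieces gives exactly $\tfrac{3}{2}\gamma\sqrt{s}\lambda + 6\gamma\sqrt{s}\lambda = \tfrac{15}{2}\gamma\sqrt{s}\lambda$. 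You should insert this step in place of the garbled final chain.
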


To derive probabilistic bound, we need to control the size of the tuning parameter, that is $\| \bSigmah_{21}  - \bSigmah_{22}\bbeta^{*}\|_{\infty}$, and also ensure that restricted eigenvalue condition on $\bSigma$ implies the condition holds for $\widehat \bSigma$. Both proofs are non-trivial under the model~\eqref{eq:model}. To illustrate the difficulty, the existing results on consistency of $\widehat \bSigma$ \citep{yoon2020sparse} provide the following high probability bounds: $\|\bSigmah_{21}-\bSigma_{21}\|_{\infty}\leq C\sqrt{\log p/n}$ and $\|\bSigmah_{22}-\bSigma_{22}\|_{\infty}\leq C\sqrt{\log p/n}$. A direct application of these results to control $\| \bSigmah_{21}  - \bSigmah_{22}\bbeta^{*}\|_{\infty}$ gives
\bfln{
\| \bSigmah_{21}  - \bSigmah_{22}\bbeta^{*}\|_{\infty} 
&\leq \| \bSigmah_{21}  -\bSigma_{21}\|_{\infty} +\|(\bSigma_{22}- \bSigmah_{22})\bbeta^{*}\|_{\infty} \\
&\leq C\sqrt{\log p/n} +\|\bSigmah_{22}-\bSigma_{22}\|_{\infty}\|\bbeta^*\|_1
\leq C_1\sqrt{\log p/n}\|\bbeta^*\|_1.
}
The above bound is clearly suboptimal as $\|\bbeta^*\|_1$ scales approximately like $\sqrt{s}$, implying that the knowledge of true sparsity level is required to choose the tuning parameter $\lambda$. In contrast, the results from sparse high-dimensional regression \citep{bickel2009simultaneous,hastie2015statistical,negahban2012unified} suggest that the optimal rate should be of the order $\sqrt{\log p/n}$ without the extra dependence on $s$.

Our main theoretical result is obtaining such a bound for $\| \bSigmah_{21}  - \bSigmah_{22}\bbeta^{*}\|_{\infty}$ term under model~\eqref{eq:model}.
\begin{theorem}
Under Assumptions \ref{assumption1}--\ref{assumption:condition_number}, for some constant $C>0$, 
\begin{flalign*}
    \| \bSigmah_{21}  - \bSigmah_{22}\bbeta^{*}\|_{\infty} \leq C \sqrt{ \frac{\log(p)}{n} }
\end{flalign*}
with probability at least $1-p^{-1}$.
\end{theorem}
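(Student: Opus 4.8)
The plan is to exploit the identity $\bSigma_{21}=\bSigma_{22}\bbeta^{*}$, so that
\[
\bSigmah_{21}-\bSigmah_{22}\bbeta^{*}=(\bSigmah_{21}-\bSigma_{21})-(\bSigmah_{22}-\bSigma_{22})\bbeta^{*},
\]
whose $k$-th coordinate is $(\sigmahat_{0k}-\sigma_{0k})-\sum_{j=1}^{p}\beta_{j}^{*}(\sigmahat_{jk}-\sigma_{jk})$. The first summand is a single rank-based entry, hence $O_{p}(\sqrt{\log p/n})$ by the elementwise consistency of $\bSigmah$ \citep{yoon2020sparse}, so the whole task is to bound $\sum_{j}\beta_{j}^{*}(\sigmahat_{jk}-\sigma_{jk})$ uniformly in $k$. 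A triangle-inequality bound costs a factor $\|\bbeta^{*}\|_{1}\asymp\sqrt{s}$; the proof instead shows the relevant fluctuations are governed by $\|\bbeta^{*}\|_{2}$, which is $O(1)$ under Assumption~\ref{assumption:condition_number}: from $v^{2}=1-\bSigma_{21}^{\top}\bSigma_{22}^{-1}\bSigma_{21}\ge0$ one gets ${\bbeta^{*}}^{\top}\bSigma_{22}\bbeta^{*}\le1$, so $\|\bbeta^{*}\|_{2}^{2}\le1/\lambda_{\min}(\bSigma_{22})\le1/\lambda_{\min}(\bSigma)\le C_{\cov}/\lambda_{\max}(\bSigma)\le C_{\cov}$.

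Next I linearize $\sigmahat_{jk}=F(\tauhat_{jk},\pihat_{j},\pihat_{k})$, where $F(\tau,\pi_{j},\pi_{k})=G^{-1}(\tau;\Phi^{-1}(\pi_{j}),\Phi^{-1}(\pi_{k}))$ and $\sigma_{jk}=F(\tau_{jk},\pi_{j},\pi_{k})$, by a first-order Taylor expansion around the true arguments. Two facts about $F$ on the compact set $\{|r|\le1-\varepsilon_{r},\,|\Delta|\le M\}$ from Assumptions~\ref{assumption1}--\ref{assumption2} are needed: (i) its first and second partials are bounded there, which by the implicit function theorem reduces to bounding $\partial_{r}G$ away from $0$ and $\partial_{\Delta}G,\partial^{2}G,(\Phi^{-1})'$ from above on that set; and (ii) the structural identity $G(0;\Delta_{j},\Delta_{k})\equiv0$ (latent independence is equivalent to zero Kendall's $\tau$; visible in \eqref{eq:bridgeForm} since $\bSigma_{3a}(0)=\bSigma_{3b}(0)$ and $\bSigma_{4a}(0)=\bSigma_{4b}(0)$), which forces $\partial_{\Delta_{j}}G(0;\cdot,\cdot)=\partial_{\Delta_{k}}G(0;\cdot,\cdot)=0$ and hence $|F_{\pi_{j}}^{(jk)}|\le C|\sigma_{jk}|$, $|F_{\pi_{k}}^{(jk)}|\le C|\sigma_{jk}|$, with the mixed second partials that involve a $\pi$-derivative likewise carrying a factor $|\sigma_{jk}|$. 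I also replace $\tauhat_{jk}-\tau_{jk}$ by its H\'ajek projection $\tfrac{2}{n}\sum_{i}h_{jk}(\bx_{i})$ plus a degenerate $U$-statistic $U_{jk}^{\textup{deg}}$ with $\E(U_{jk}^{\textup{deg}})^{2}=O(n^{-2})$, and write $\pihat_{j}-\pi_{j}=\tfrac1n\sum_{i}\{1(x_{ij}=0)-\pi_{j}\}$; here $h_{jk}$ is bounded, mean zero, and depends on $\bx$ only through $(Z_{j},Z_{k})$.

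After substitution, $\sum_{j}\beta_{j}^{*}(\sigmahat_{jk}-\sigma_{jk})$ is, up to remainders, a sum of three empirical averages of bounded mean-zero variables, and I claim each has per-observation variance $O(1)$. For the $\tau$-part $\tfrac2n\sum_{i}\sum_{j}w_{kj}h_{jk}(\bx_{i})$, $w_{kj}=\beta_{j}^{*}F_{\tau}^{(jk)}$, conditioning inside the Kendall kernel gives $\var(\sum_{j}w_{kj}h_{jk}(\bx))\le\bw_{k}^{\top}\bT_{22}\bw_{k}$, where $\bT_{22}=[\tau_{j\ell}]$ is the covariates' Kendall's-$\tau$ matrix; expanding $\sign(X_{j}-X_{j}')$ in the bivariate Hermite basis of the independent pair $(Z_{j},Z_{j}')$ writes $\bT_{22}=\sum_{a+b\ge1}a!\,b!\,\bD_{ab}\bSigma_{22}^{\circ(a+b)}\bD_{ab}$ with diagonal $\bD_{ab}$, and the Schur product theorem together with $\|\bSigma_{22}^{\circ m}\|_{\mathrm{op}}\le\|\bSigma_{22}\|_{\mathrm{op}}\le\lambda_{\max}(\bSigma)\le C_{\cov}$ yields $\bw_{k}^{\top}\bT_{22}\bw_{k}\le C_{\cov}\sum_{j}w_{kj}^{2}\tau_{jj}\le C_{\cov}\|\bw_{k}\|_{2}^{2}\le C$. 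The $\pihat_{j}$-part is handled identically with $[\cov(1(X_{j}=0),1(X_{\ell}=0))]=\sum_{m\ge1}m!\,\bD_{m}'\bSigma_{22}^{\circ m}\bD_{m}'$, using $|F_{\pi_{j}}^{(jk)}|\le C|\sigma_{jk}|$ and $\sum_{j}(\beta_{j}^{*}\sigma_{jk})^{2}\le\|\bbeta^{*}\|_{2}^{2}\le C$; and the $\pihat_{k}$-part equals $(\pihat_{k}-\pi_{k})\sum_{j}\beta_{j}^{*}F_{\pi_{k}}^{(jk)}$ with $|\sum_{j}\beta_{j}^{*}F_{\pi_{k}}^{(jk)}|\le C\sum_{j}|\beta_{j}^{*}||\sigma_{jk}|\le C\|\bbeta^{*}\|_{2}\|\bSigma_{22}\be_{k}\|_{2}\le C$. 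Bernstein's inequality for each fixed $k$ and a union bound over $k=0,\dots,p$ at level $C\sqrt{\log p/n}$ then control the three leading terms simultaneously with probability at least $1-p^{-1}$ after adjusting constants.

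Finally I absorb the degenerate pieces $\sum_{j}\beta_{j}^{*}F_{\tau}^{(jk)}U_{jk}^{\textup{deg}}$ and the second-order Taylor remainders $\sum_{j}\beta_{j}^{*}R_{jk}$. On the event where $\max_{j,k}|\tauhat_{jk}-\tau_{jk}|$ and $\max_{j}|\pihat_{j}-\pi_{j}|$ are $O(\sqrt{\log p/n})$ (Hoeffding for $U$-statistics and for averages, plus a union bound, again with probability $\ge1-p^{-1}$), $|R_{jk}|$ is a constant multiple of the squared deviations, the $\partial_{\Delta}$-terms inherit the factor $|\sigma_{jk}|$, and the resulting sums are $o_{p}(\sqrt{\log p/n})$ uniformly in $k$; the genuinely second-order term $\sum_{j}\beta_{j}^{*}F_{\tau\tau}^{(jk)}(\tauhat_{jk}-\tau_{jk})^{2}$ needs a touch more care, e.g.\ the moment bound $\sum_{j}(\tauhat_{jk}-\tau_{jk})^{2}=O_{p}(p/n)$ together with the standing model assumptions, while the degenerate pieces are of order $\sqrt{\log p}/n$ after a union bound. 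Collecting everything gives $\|\bSigmah_{21}-\bSigmah_{22}\bbeta^{*}\|_{\infty}\le C\sqrt{\log p/n}$ with probability at least $1-p^{-1}$. I expect the main obstacle to be step~(ii): since the truncated bridge function has no closed-form inverse, proving uniform boundedness of the first and second partials of $F=G^{-1}$ on the compact region of Assumptions~\ref{assumption1}--\ref{assumption2}, and especially the vanishing $\partial_{\Delta}G(0;\cdot,\cdot)=0$ that supplies the crucial factor $|\sigma_{jk}|$, requires delicate estimates on the multivariate normal integrals defining $G$; the Hermite/Schur-product variance bound is the other key ingredient, being exactly what lets $\lambda$ scale like $\sqrt{\log p/n}$ rather than $\sqrt{s\log p/n}$.
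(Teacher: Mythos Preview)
Your overall architecture---write $\bSigmah_{21}-\bSigmah_{22}\bbeta^{*}=(\bSigmah_{21}-\bSigma_{21})-(\bSigmah_{22}-\bSigma_{22})\bbeta^{*}$, Taylor-expand each entry $\sigmahat_{jk}=F(\tauhat_{jk},\pihat_{j},\pihat_{k})$, and separate first- from second-order terms---is exactly what the paper does. Where you diverge is in the concentration tool for the first-order pieces: the paper invokes the sign-sub-Gaussian machinery of Barber--Kolar (their Lemma~E.2 gives an MGF bound for $\bu^{\top}(\bThat-\bT)\bv$ directly, and an analogous lemma handles $\bpihat-\bpi$), whereas you pass to the H\'ajek projection and control the variance of $\sum_{j}w_{kj}h_{jk}(\bx)$ via a Hermite expansion and the Schur-product bound $\lambda_{\max}(\bSigma_{22}^{\circ m})\le\lambda_{\max}(\bSigma_{22})$. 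Your route is valid and arguably more transparent about \emph{why} the condition-number assumption is what makes the $\ell_{2}$ scaling work; the paper's route is shorter because the sub-Gaussian lemma is quoted wholesale. The structural identity $G(0;\Delta_{j},\Delta_{k})\equiv0$ you highlight is correct and does give $|F_{\pi_{j}}^{(jk)}|\le C|\sigma_{jk}|$, but it is not needed for the first-order $\pi$-terms: a uniform bound $|F_{\pi_{j}}|\le C$ (which the paper proves) already keeps $\|\bw_{k}\|_{2}\le C\|\bbeta^{*}\|_{2}$.

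The real gap is in your second-order terms. First, the mixed partials $F_{\tau\pi_{j}}$ do \emph{not} inherit the factor $|\sigma_{jk}|$: from $F(0,\pi_{j},\pi_{k})\equiv0$ you get $F_{\pi_{j}}|_{\tau=0}=0$, but $\partial_{\tau}F_{\pi_{j}}|_{\tau=0}$ is generically nonzero, so the claim ``the mixed second partials that involve a $\pi$-derivative likewise carry a factor $|\sigma_{jk}|$'' fails for $F_{\tau\pi_{j}}$. Second, and more seriously, your sketch for $\sum_{j}\beta_{j}^{*}F_{\tau\tau}^{(jk)}(\tauhat_{jk}-\tau_{jk})^{2}$ is not a proof: the moment bound $\sum_{j}(\tauhat_{jk}-\tau_{jk})^{2}=O_{p}(p/n)$, even combined with Cauchy--Schwarz and $\|\bbeta^{*}\|_{2}\le C$, yields at best $O_{p}(\sqrt{p}\log p/n)$ uniformly in $k$, which is not $O(\sqrt{\log p/n})$ without a dimension restriction. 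The paper does not attempt an $\ell_{2}$ argument here; it bounds all second-order remainders crudely by $\|\bbeta^{*}\|_{1}\cdot\|\bThat-\bT\|_{\infty}^{2}\le C\sqrt{s}\,\log p/n$ and then invokes sparsity and the sample-size condition $s\log p=o(n)$ to absorb the $\sqrt{s}$. Indeed, the appendix version of this theorem is stated under Assumptions~\ref{assumption1}--\ref{assumption:samplesize}, not just \ref{assumption1}--\ref{assumption:condition_number}, so the paper's own proof uses Assumptions~\ref{assumption:sparsity} and~\ref{assumption:samplesize} for exactly these remainder terms. Your first-order argument is a genuine alternative; for the second-order terms you should fall back on the paper's $\|\bbeta^{*}\|_{1}$ bound rather than try to push the $\ell_{2}$ strategy further.
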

The full proof is presented in Appendix, and here we summarize argument at a high level. To our knowledge, the only similar result is obtained by \citet{barber2018rocket} in the case of continuous Gaussian copula of Definition~\ref{def:NPN}. Their proof, however, takes advantage of the closed form of the inverse bridge function $G^{-1}$ in the continuous case (it is a scaled cosine function). Due to significantly higher complexity of bridge function $G_{TT}$~\eqref{eq:bridgeForm} in the truncated Gaussian copula case, its inverse is not available in closed-form, leading to a more challenging proof. Additional complication arises from substitution of true thresholds $\Delta_j$ with their estimators $\widehat \Delta_j$, these thresholds being unique to truncated case. To overcome these challenges, we consider a 2nd-order Taylor expansion of $\widehat \sigma_{jk} = G_{TT}^{-1}(\widehat \tau_{jk}, \widehat \Delta_j, \widehat \Delta_k)$ with respect to $ \sigma_{jk} = G_{TT}^{-1}(\tau_{jk}, \Delta_j, \Delta_k)$. To control the first-order terms, we combine the bound on first derivatives of inverse bridge functions \citep{yoon2020sparse} with a concentration bound for deviations of quadratic forms involving the Kendall’s $\tau$ correlation matrix \citep{barber2018rocket}[Lemma~E.2] and sign sub-Gaussian property of the Gaussian vectors \citep{barber2018rocket}[Lemma~4.5]. To control the second-order terms, we establish that the second derivatives of inverse bridge functions are bounded, and use these bounds in conjunction with element-wise convergence of $\widehat \bSigma$ and $\widehat \bDelta$. Due to inverse bridge function not being available in closed form, establishing bounds on second derivative is highly non-trivial, and is a major technical part of the proof. A similar technique is used to prove that the restricted eigenvalue condition on $\bSigma$ implies the condition holds for $\widehat \bSigma$, leading to our final estimation bound.
\begin{theorem}
Under Assumptions \ref{assumption1}--\ref{assumption:samplesize}, if $\lambda = C\sqrt{\log(p)/n}$ for some constant $C>0$ and $\bSigma_{22}$ satisfies RE(s,3) with parameter $\gamma$,
then
\bfln{
\|\bbetah - \bbeta^{*}\|_{2}^{2} = O_{p}\left(\gamma^2 \frac{s \log(p)}{n}\right).
}
\end{theorem}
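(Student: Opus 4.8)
The plan is to combine three ingredients that are now in place: the deterministic oracle inequality of Theorem~\ref{thm:deterministic2}, the high-probability bound on $\|\bSigmah_{21}-\bSigmah_{22}\bbeta^*\|_\infty$ from the preceding theorem, and a transfer of the restricted eigenvalue condition from $\bSigma_{22}$ to $\bSigmah_{22}$. First, I would fix the constant in $\lambda = C\sqrt{\log(p)/n}$ to be at least twice the constant $C'$ for which $\|\bSigmah_{21}-\bSigmah_{22}\bbeta^*\|_\infty \le C'\sqrt{\log(p)/n}$ holds with probability at least $1-p^{-1}$; on that event the requirement $\lambda \ge 2\|\bSigmah_{21}-\bSigmah_{22}\bbeta^*\|_\infty$ of Theorem~\ref{thm:deterministic2} is satisfied.

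Second, I would show that $\bSigmah_{22}$ inherits $RE(s,3)$ from $\bSigma_{22}$ with a parameter of the same order, on a further event of probability at least $1-p^{-1}$. The key is a deviation-type inequality
\bfln{
\bigl|\ba^\top(\bSigmah_{22}-\bSigma_{22})\ba\bigr| \;\le\; \delta\,\|\ba\|_2^2 \;+\; C_1\,\frac{\log(p)}{n}\,\|\ba\|_1^2, \qquad \ba\in\R^p,
}
obtained by pushing the concentration bound for quadratic forms in the Kendall's $\tau$ matrix (\citet{barber2018rocket}, Lemma~E.2) through the first- and second-derivative bounds for $G_{TT}^{-1}$ developed for the preceding theorem, and absorbing the error from replacing $\Delta_j$ by $\widehat\Delta_j$. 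Taking $\delta$ a small enough multiple of $\lambda_{\min}(\bSigma_{22})$ so that $\ba^\top\bSigma_{22}\ba-\delta\|\ba\|_2^2\ge\tfrac12\ba^\top\bSigma_{22}\ba$, and using $\|\ba\|_1\le 4\sqrt{s}\,\|\ba_S\|_2$ on $\Ccal(S,3)$ together with Assumption~\ref{assumption:samplesize} (so that $s\log(p)/n=o(1)$), I obtain $\ba^\top\bSigmah_{22}\ba\ge\tfrac{1}{4\gamma}\|\ba_S\|_2^2$ for all $\ba\in\Ccal(S,3)$ and all $n$ large, i.e.\ $\bSigmah_{22}$ satisfies $RE(s,3)$ with parameter $4\gamma$ (using Assumption~\ref{assumption:condition_number} to keep the constants relating $\gamma$ and $\lambda_{\min}(\bSigma_{22})$ under control).

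Third, on the intersection of the two events (probability at least $1-2p^{-1}$) I would apply Theorem~\ref{thm:deterministic2} with restricted eigenvalue parameter $4\gamma$, giving $\|\bbetah-\bbeta^*\|_2\le 30\,\gamma\sqrt{s}\,\lambda = 30C\,\gamma\sqrt{s\log(p)/n}$. Squaring, and noting that the exceptional probability is $O(p^{-1})$, yields $\|\bbetah-\bbeta^*\|_2^2 = O_p\!\left(\gamma^2\, s\log(p)/n\right)$.

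The main obstacle is the second step, the restricted eigenvalue transfer. The naive route bounds $|\ba^\top(\bSigmah_{22}-\bSigma_{22})\ba|$ by $\|\bSigmah_{22}-\bSigma_{22}\|_\infty\|\ba\|_1^2 \lesssim s\sqrt{\log(p)/n}\,\|\ba_S\|_2^2$ on the cone, which only closes under a condition like $s^2\log(p)=O(n)$, strictly stronger than Assumption~\ref{assumption:samplesize}. To get away with $s\log(p)=o(n)$ one needs the sharper inequality above, with $\log(p)/n$ rather than $\sqrt{\log(p)/n}$ in front of $\|\ba\|_1^2$; in the truncated Gaussian copula model this forces one to carry the second-order Taylor expansion of the non-closed-form inverse bridge function $G_{TT}^{-1}$ through a quadratic-form concentration argument while simultaneously controlling the plug-in thresholds $\widehat\Delta_j$ --- essentially the same technical core that makes the preceding theorem difficult. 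Once that inequality is established, the remainder is routine bookkeeping with the standard Lasso argument.
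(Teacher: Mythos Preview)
Your three-step plan is exactly the paper's: invoke Theorem~\ref{thm:deterministic2}, use the preceding theorem to control $\lambda$, and transfer $RE(s,3)$ from $\bSigma_{22}$ to $\bSigmah_{22}$ (the paper packages this transfer as a separate Lemma). You also correctly identify the RE transfer as the nontrivial piece and point to the right tools (Taylor expansion of $G_{TT}^{-1}$, second-derivative bounds, and the Barber--Kolar quadratic-form concentration).

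The one substantive difference is the \emph{form} of the deviation inequality you target. You aim for a Rudelson--Zhou type bound $\bigl|\ba^\top(\bSigmah_{22}-\bSigma_{22})\ba\bigr|\le\delta\|\ba\|_2^2+C_1\tfrac{\log p}{n}\|\ba\|_1^2$ valid for all $\ba$, with $\delta$ a freely chosen small constant. The paper does not establish this, and it is not clear how to get a fixed small $\delta$ uniformly in $\ba$ here: the first-order part of the Taylor expansion is handled via H\"older on one side, $\bigl|\ba^\top G_\tau^{-1}(\bT,\bDelta)\circ(\bThat-\bT)\ba\bigr|\le\|\ba\|_1\max_j|\be_j^\top(\bThat-\bT)\bb_j|$, followed by a union bound over $j$ with Lemma~E.2 (one argument is the fixed unit vector $\be_j$, the other is $\bb_j=\ba\circ G_\tau^{-1}(\bT_j,\bDelta)$ with $\|\bb_j\|_2\lesssim\|\ba\|_2$). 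This gives $C\sqrt{\log p/n}\,\|\ba\|_1\|\ba\|_2$, which on the cone is $C\sqrt{s\log p/n}\,\|\ba_{T_0}\|_2^2$; the second-order terms give $C\tfrac{s\log p}{n}\|\ba_{T_0}\|_2^2$ exactly as you say. Both pieces are $o(1)\cdot\ba^\top\bSigma_{22}\ba$ under Assumption~\ref{assumption:samplesize}, so the transfer goes through with $\gamma(\bSigmah_{22})\le C\gamma$. In other words, the ``$\delta$'' in your inequality is not a free constant but should be read as $C\gamma\sqrt{s\log p/n}$; once you make that adjustment your argument and the paper's coincide.
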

The obtained rate in estimation error coincides with the optimal rate in sparse linear regression \citep{bickel2009simultaneous,hastie2015statistical,negahban2012unified}.

\section{Simulation}\label{sec:simulation}
In this section, we empirically evaluate the performance of the proposed method, which we refer to as Copula Linear Discriminant Analysis (CLDA). We consider two approaches for estimating class-conditional probabilities as described in Section~\ref{subsec:estimation_classificationrule}: formula \eqref{eq:posterior_prob_mc} based on Monte Carlo approximation (CLDA\_MC) and formula \eqref{eq:posterior_prob_linear} based on Taylor approximation (CLDA\_linear), where we use MC samples of size $S=100$ from \eqref{eq:conditionalpdf}. For comparison, we consider high-dimensional COpula Discriminant Analysis (CODA) of \citet{Han:2013ju}, and Oracle classifier, where Oracle classifier uses the population classification rule at the latent Gaussian level.  

To generate synthetic data, we fix the number of covariates $p=300$, and consider three correlation structures for the latent Gaussian vector associated with the covariates:
\begin{itemize}
	\item Auto-regressive (AR): $ \bSigma_{22} = [0.7^{|j-j'|}]_{1 \leq j,j' \leq p}$. 
	\item Compound symmetry (CS): $ \bSigma_{22} = 0.3  \bI_p + 0.7  \ones_{p,p}$, where $\bI_p$ is an identity matrix and $\ones_{p,p}$ is the $p \times p$ matrix of ones.
	\item Geometric decaying eigenvalues (GD): $ \bSigma_{22} =  \bGamma  \bN \bGamma^{\top} $ where $\bGamma$ is generated from the uniform distribution on $p$-dimensional orthogonal group following Theorem 2.2.1 of \citet{chikuse2012statistics} and $\bN$ is a diagonal matrix with geometrically decaying eigenvalues $\nu_1>\nu_2>\cdots>\nu_p$. 
	\[
	\nu_{j}=\frac{ p(0.9^{j-1}-0.9^{j})}{ 1-0.9^{p} }, \quad j=1,\ldots,p.
	\]
\end{itemize}
In the AR setting, each variable is highly correlated with only a few  variables. In the CS setting, all variables are moderately correlated. In the GD setting, the correlation structure mimics the one frequently observed in real high-dimensional data \citep{lee2013hdlss}. 

For fair comparison with CODA, we consider two types of models for data generation: joint model as in Definition~\ref{def:LNPN_BT} (favoring the proposed CLDA), and mixture model as defined in \citet{Han:2013ju} (favoring CODA). For both joint and mixture settings, we choose monotone transformations and zero-inflation thresholds to mimic the quantitative microbiome profiling (QMP) data of \citet{vandeputte2017quantitative}, where data generation details are given in Sections \ref{subsec:sim_joint} and \ref{subsec:sim_mixture}. We generate the true classification direction vector $\bbeta^*$ so that only the first $s=p\times 5\%=15$ variables are non-zero,  $\mathcal{S}=\{j:\beta_{j}^{*}\neq 0 \}=\{1,\ldots,15\}$. 
Throughout, we fix the sample sizes of training and test data at $n=150$ and $n_{\textup{test}}=300$, respectively. 



\subsection{Joint model}\label{subsec:sim_joint}

We generate data from the latent Gaussian copula model for binary/truncated mixed data as in Definition \ref{def:LNPN_BT}. Recall that given full correlation matrix $\bSigma$, the population direction is given by $\bbeta^* = \bSigma_{22}^{-1}\bSigma_{21}$. To generate $\bbeta^*$ with a given support $S=\{j: \beta_{j}^{*}\neq 0 \}$ for each of the three correlation structures $\bSigma_{22}$ from above, we define $\bSigma_{21}$ as follows.

Let $\bb=(b_{1},\ldots,b_{p})^{\top} \in \{0,1\}^{p}$ be the indicator vector for the signal variables such that $b_{j}=1$ if $j \in \mathcal{S}$ and $b_{j}=0$, otherwise.  Let $v^2=1-\bSigma_{21}^{\top}\bSigma_{22}^{-1}\bSigma_{21} = 0.05$ be the prespecified conditional variance of $Z_{y}|\bz$. Then setting $\bSigma_{21} =  (\sqrt{1-v^2}/\sqrt{\bb^{\top}\bSigma_{22}\bb}) \cdot \bSigma_{22}\bb$ ensures positive-definiteness of the full correlation matrix $\bSigma$ with corresponding $\bbeta^* = (\sqrt{1-v^2}/\sqrt{\bb^{\top}\bSigma_{22}\bb})\cdot \bb$.

Given $\bSigma$, we follow the synthetic microbiome data generation mechanism proposed in \citet{Yoon2019microbial}. Specifically, we select monotone transformations and truncation levels so that the resulting synthetic $\bx$ follow the empirical marginal cumulative distributions of the real QMP variables in \citet{vandeputte2017quantitative}. To investigate the effect of truncation, we divide all 101 QMP variables according to three truncation levels: no truncation (0\%), low truncation (10\%-50\%), and high truncation (40\%-80\%). For each level, we use empirical cdfs of corresponding QMP variables to generate $p=300$ covariates (as the number of QMP variables is less than 300, we use the same empirical cdf to generate multiple synthetic variables).

Let $\widetilde{F}_{j}$ be the empirical cdf chosen to represent variable $X_{j}$. For $i=1,\ldots,n$, we generate $(Z_{y,i},\bz_{i})^{\top} \sim \N_{1+p}(\zeros,\bSigma)$ and obtain $y_{i}$ and $\bx_{i}=(x_{i1},\ldots,x_{ip})^{\top}$ as
$
Y_{i}=1(Z_{y,i}>0)$, $x_{ij} = \widetilde{F}_{j}^{-}\circ \Phi(z_{ij})$, $j=1,\ldots,p$,
where $\widetilde{F}_{j}^{-}(u)=\min \{x_{ij}| \widetilde{F}_{j}(x_{ij}) \geq u \}$. As we threshold $Z_{y,i}$ at $\Delta_y = 0$, the resulting class sizes are approximately equal. Marginally, this data generation scheme for $\bx_{j}$ is the uniform sampling with replacement of the observations of the $j$th QMP variable but the joint association structure is induced by the prespecified latent correlation matrix $\bSigma$. 
Under the joint model, we define the Oracle classification rule as
\bfl{\label{eq:oracle_joint}
 \delta_{J}(\bx) = 1\left\{\bdf(\bx)^{\top}\bbeta^* >0\right\}.
}

\subsection{Mixture model}\label{subsec:sim_mixture}

\cite{Han:2013ju} consider the following model:
\bfl{\label{eq:coda_mixture_model}
\bx|Y=g \sim \NPN(\bmu_{g},\bSigma,\bdf), \quad g=0,1,
}
where $\bmu_{g}$ is the mean of class $g=0,1$ and $\bSigma \in \R^{p \times p}$ is a common covariance matrix. Thus, unlike Definition~\ref{def:NPN}, CODA allows latent Gaussian vector to have non-zero mean and covariance matrix by restricting monotone transformations $\bdf$ to be mean and variance preserving, i.e., each $f_j$ satisfies
\bfl{\label{eq:coda_moment_assumption}
\E\{f_{j}(X_{j})|Y=g\} &= \E(X_{j}|Y=g) =  \mu_{g,j},\\
 \var\{f_{j}(X_{j})|Y=g\} &= \var(X_{j}|Y=g) = \sigma^{2}_{j}, \quad j=1,\ldots,p.
}
Note that this model does not account for zero inflation; that is, it assumes continuous $\bx$.

To generate realistic simulation data, we set $\bSigma= \bS^{-1}\bSigma_{22}\bS^{-1}$, where $\bS=\diag(s_{1},\ldots,s_{p})$ contains the sample standard deviations from QMP data and $\bSigma_{22}$ is one of the three correlation structures described above. We set the means $\bmu_g$ based on discriminant direction $\bbeta^*$ as follows.

Let $\bmu_{a}=(\bmu_{1}+\bmu_{0})/2$ and $\bmu_{d}=\bmu_{1}-\bmu_{0}$ be the overall mean and mean difference, respectively. Under model \eqref{eq:coda_mixture_model}, the Bayes classification direction is given by $\bbeta^* = \bSigma^{-1}\bmu_{d}$. 
When $\PP(Y=1) = \PP(Y=0)$, the Bayes error rate is given by $\alpha = \Phi\left(-\sqrt{{\bbeta^*}^{\top}\bSigma\bbeta^*}/2\right)$. Given the support $\mathcal{S}$, let $\bb \in \{0,1\}^{p}$ be the corresponding indicator vector such that $b_{j}=1$ if $j \in \mathcal{S}$ and $b_{j}=0$, otherwise. Fixing the Bayes error rate $\alpha = 0.2$, we generate $\bbeta^*$ as $\bbeta^{*} = -2\Phi^{-1}(\alpha)\bb/\sqrt{\bb^{\top}\bSigma \bb}$, and obtain $\bmu_d = \bSigma\bbeta^*$. Finally, we set $\bmu_0 = C$ and $\bmu_1 = C + \bmu_d$, where the constant $C>0$ is chosen sufficiently large to mimic the means of QMP data, leading to generated synthetic data with non-negative values. 

Given $\bmu_g$ and $\bSigma$ from above, we generate Gaussian $\bz_{1},\ldots,\bz_{n}$ with equal class sizes, i.e., $n_{1}=n_{2} = n/2 = 75$, such that $\bz_i|Y_i = g\sim \mathcal{N}(\bmu_g, \bSigma)$. To obtain continuous $\bx_{1}^{*},\ldots,\bx_{n}^{*}$ that follow model~\eqref{eq:coda_mixture_model}, we use the following mean and variance preserving strictly monotone transformation \citep{Han:2013ju,Liu:2009wz}:
\bfln{
x_{ij}^{*} = s_{j}\left[ \sqrt{12} \left\{ f^{-1}_{j}(z_{ij}) - 0.5 \right\} \right] + \mu_{g,j},
}
where $f_{j}^{-1}(z) = \Phi\{(z-\mu_{g,j})/s_{j}\}$. 

To generate zero-inflated data, we further obtain $\bx_1, \ldots, \bx_n$ by  applying truncation
 $x_{ij}=1(x_{ij}^{*}>D_{j})x_{ij}^{*}$, where $D_{j}$, $j=1, \dots, p$, are independently drawn from $\Unif(0.1,0.5)$ (low truncation level) or $\Unif(0.4,0.8)$ (high truncation level). We set $\bx_{i}=\bx_{i}^{*}$ for the setting of no truncation (no zero-inflation). 
 Under the mixture model, the Oracle classification rule is given by
\bfl{\label{eq:oracle_mixture}
 \delta_{M}(\bx) = 1\left\{
 \left(\bdf(\bx)-\bmu_{a}\right)^{\top}\bbeta^* >0\right\}.
}

\subsection{Training and evaluation}\label{subsec:training}

The sparsity tuning parameters for both CLDA and CODA are chosen from the grid of length 100 using 5-fold cross-validation with the misclassification error rate as the tuning criterion. The optimization problems of CLDA and CODA involve the $\ell_{1}$ regularization, which shrinks the solution towards zero vector. Thus, the intercept terms $\widehat{\Delta}_{y}$ and $\log(n_{1}/n_{0})$ for the proposed model and CODA no longer produce optimal predictions. For the proposed model, we compensate the shrinkage effect by tuning $\Delta_{y}$ with the 100 equally spaced grid ranging from -1.5 to 1.5. For CODA, we use the optimal intercept for a sparse LDA \citep{Mai:2012bf}; the details are in Appendix \ref{subsec:coda_implementation}.

We assess the prediction performances using the out-of-sample misclassification rate on test data of size $n_{\text{test}}=300$. We consider 100 replications for each model type, correlation structure and truncation level. 
Figure \ref{fig:sim_joint} displays the results for the joint model in Section~\ref{subsec:sim_joint}. The proposed CLDA uniformly outperforms CODA, and the difference becomes more significant as truncation level increases. These results are expected, as the joint model favors the proposed CLDA, and CODA is not designed for zero-inflated data, leading to poor performance at high truncation levels.  Both approximations~\eqref{eq:posterior_prob_mc} and~\eqref{eq:posterior_prob_linear} of the conditional class probability lead to similar misclassification rates across all settings.

\begin{figure}[!t]
	\centering
	 \includegraphics[width=\textwidth]{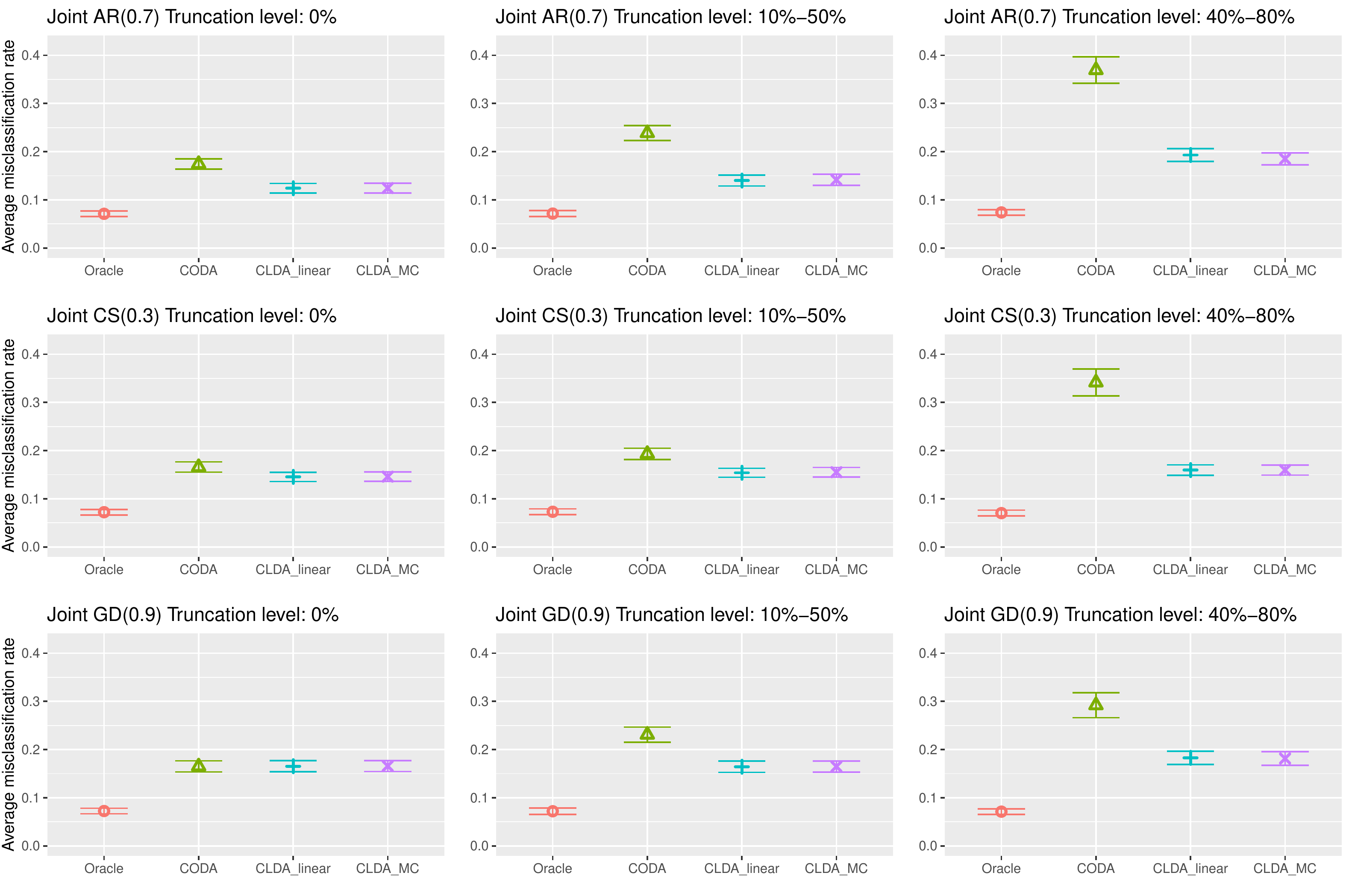}\\
	\caption{Joint model, average misclassification rates and 2 standard error bars over 100 replications. Columns correspond to truncation levels and rows correspond to correlation structures. Compared models are Oracle \eqref{eq:oracle_joint}, CLDA\_linear \eqref{eq:posterior_prob_linear}, CLDA\_MC \eqref{eq:posterior_prob_mc}, and CODA \citep{Han:2013ju}. } \label{fig:sim_joint}
\end{figure}

\begin{figure}[!t]
	\centering
	 \includegraphics[width=\textwidth]{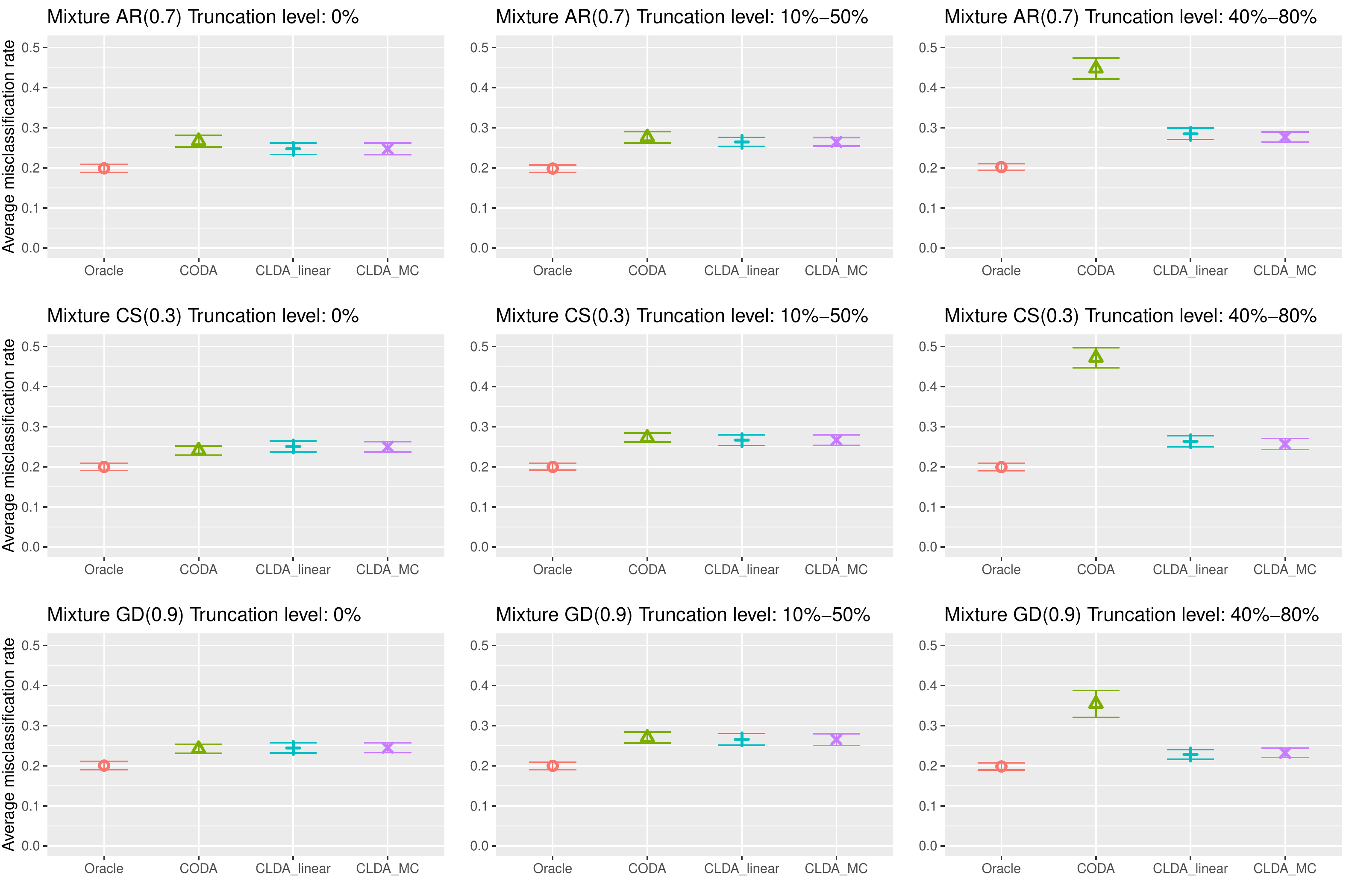}\\
	\caption{Mixture model, average misclassification rates and 2 standard error bars under over 100 replications. Columns correspond to truncation levels and rows correspond to correlation structures. Compared models are Oracle \eqref{eq:oracle_mixture}, CLDA\_linear \eqref{eq:posterior_prob_linear}, CLDA\_MC \eqref{eq:posterior_prob_mc}, and CODA \citep{Han:2013ju}.} \label{fig:sim_mixture}
\end{figure}

Figure \ref{fig:sim_mixture} displays the results for the mixture model in Section~\ref{subsec:sim_mixture}. While the mixture model favors CODA, the proposed CLDA performs as well as CODA when truncation levels are less than 50\%: no truncation and low truncation settings. However, CODA shows a significant deterioration under the high truncation level, having average misclassification rates close to random guessing (0.5) under AR and CS covariance settings. This is not the case for proposed CLDA, which maintains excellent performance even when truncation levels are high. As in the joint model, the two approximations~\eqref{eq:posterior_prob_mc} and~\eqref{eq:posterior_prob_linear} of the conditional class probability, lead to similar misclassification rates across all truncation levels and covariance settings.

\section{Application to Quantitative Microbiome Data}\label{sec:application}

We consider the QMP data  of \citet{vandeputte2017quantitative}, processed as in \citet{Yoon2019microbial}. The data contain $p=101$ genera from $n_1=29$ patients with Crohn's disease $(Y=0)$, a type of inflammatory bowel disease, and $n_2=106$ healthy controls $(Y=1)$. Out of 101 genera, 39 have unknown names, and we label them as Unknown 1 -- Unknown 39. The proportions of zeros across all 101 genera range from 1\% to 96\%, with 14 genera having no zeros. Our goal is to construct a classification rule that can separate patients with Crohn's disease from the healthy controls, and identify genera that have the most influence on classification.

We compare the performance of CLDA\_linear, CLDA\_MC and CODA as in Section~\ref{sec:simulation}. We randomly split the data into 4/5 and 1/5 and use the former to train the classifiers via 5-fold cross-validation and the latter to evaluate their misclassification error rate. We choose the tuning parameters as described in Section~\ref{subsec:training} and use MC samples of size $S=300$ from \eqref{eq:conditionalpdf} to compute the posterior probabilities \eqref{eq:posterior_prob_mc} and \eqref{eq:posterior_prob_linear}. Table \ref{tab:qmp_summary} displays the average misclassification rates and model sizes over 30 random splits. Overall, the proposed CLDA significantly outperforms CODA with lower misclassification rate and smaller model size. Specifically, CLDA has the average misclassification rate of less than 3\%, whereas CODA's rate is around 9\%. Futhermore, CLDA uses fewer genera for classification, selecting 14 variables on average, which is about 1/3 of the average model size of CODA, 37. The associated standard error of CLDA is also significantly smaller, indicating a more stable variable selection. 
The two approximations of the posterior probability used by CLDA,~\eqref{eq:posterior_prob_mc} and~\eqref{eq:posterior_prob_linear}, lead to nearly identical misclassification rates, model sizes, and standard errors.

\begin{table}[!t]
\centering
\caption{QMP data analysis, average misclassification rates and model sizes over 30 random splits with corresponding standard errors in parentheses. } \label{tab:qmp_summary}
\begin{tabular}{l|cc}
\hline
              & Misclassification rate & Average model size \\
\hline
CLDA\_linear  &  0.028 (0.005)    &  13.4 (0.8) \\
CLDA\_MC      &  0.027 (0.006)    &  13.7 (1.0) \\
CODA                    &  0.092 (0.011)    &  36.8 (2.8) \\
\hline
\end{tabular}
\end{table}

We further investigate and compare the selected genera by CLDA and CODA by considering their selection probabilities across 30 random splits. Table \ref{tab:qmp_variables} lists the genera that are selected more than 90\% of the times, i.e., at least 27 times out of 30. As before, there is agreement between the results of CLDA\_linear and CLDA\_MC. Both CLDA and CODA consistently select \emph{Fusobacterium}. We find that the CLDA-estimated coefficients of \emph{Fusobacterium} in $\widehat \bbeta$ are negative across all random splits indicating that an increase in \emph{Fusobacterium} is positively associated with Crohn's disease development  (labelled as $Y=0$). This result is consistent with existing findings \citep{han2015fusobacterium, strauss2011invasive} of positive correlation between \emph{Fusobacterium} and inflammatory bowel disease. When looking beyond \emph{Fusobacterium}, CLDA and CODA select non-overlapping genera. Observe that the remaining 12 genera with highest frequency for CODA all have no zeros or just a few zeros (less than 12\%). In constrast, the remaining 2 genera for CLDA have a high proportion of zeros, with \emph{Succinivibrio} (58.2\% of zeros) selected 28 out of 30 times, and the unknown genera in the \emph{Comamonadaceae} family (79.3\% of zeros) selected 30 out of 30 times. We consider the CLDA-estimated coefficients of \emph{Succinivibrio}, which are positive across all splits. This indicates that a subject with a high level of \emph{Succinivibrio} is likely to be healthy. This finding is consistent with existing results from animal studies. \citet{werner2011succini} argue that depletion of luminal iron can prevent Crohn's disease, and subsequently find that mice feeding on an iron sulfate free diet show a significant increase of \emph{Succinivibrio}. Given this biological evidence and the fact that CLDA has better misclassification error rate, we suspect that high zero inflation in \emph{Succinivibrio} makes CODA miss this important variable, and that CODA has selection bias towards variables with smaller proportion of zeros as is seen in Table~\ref{tab:qmp_variables}.

\begin{table}[!t]
\centering
\caption{QMP data analysis, list of genera with selection frequency over 90\% (more than 27 out of 30 random splits). \ech The numbers in parentheses indicate zero proportions.} \label{tab:qmp_variables}
\begin{tabular}{lll}
\hline
 \multicolumn{3}{c}{Genera selected more than 27 times} \\
  CLDA\_linear & CLDA\_MC & CODA \\ 
 \hline
   Fusobacterium  & Fusobacterium (48.9\%) & Fusobacterium  (48.9\%) \\ 
   Succinivibrio & Succinivibrio (58.2\%)  & Bacteroides (0\%) \\ 
   Unknown 8 & Unknown 8 (79.3\%) & Bifidobacterium (2.2\%)\\ 
    &  & Blautia (0\%) \\
    &  & Coprococcus (0\%) \\ 
    &  & Faecalibacterium (0\%) \\ 
    &  & Prevotella (11.9\%) \\ 
    &  & Roseburia (0\%) \\ 
    &  & Ruminococcus (0\%) \\ 
    &  & Unknown 11 (0\%) \\ 
    &  & Unknown 25 (0\%) \\ 
    &  & Unknown 35 (0\%) \\ 
    &  & Unknown 5 (0\%) \\ 
   \hline
\end{tabular}
\end{table}




\section{Discussion}\label{sec:disc}
In this work we develop a binary classification model for zero-inflated data, and show the consistency of estimated classification direction in high-dimensional settings. On simulated data, our method achieves similar classification accuracy as existing competitor on continuous data, and significantly better accuracy on zero-inflated data. In an application to quantitative microbiome data, our method achieves significantly smaller misclassification error rate while also leading to a more parsimonious model.  The selected genera align with existing results in microbiome literature.

There are several further research directions that could be pursued. First, our estimation consistency result is non-trivial leading us to develop a new technique to facilitate underlying theoretical analyses by combining sub-Gaussian properties of sign vector with newly established bounds on second derivatives of inverse bridge function for truncated/truncated cases. While this theoretical analysis has been restricted to mixed binary-truncated model in Definition~\ref{def:LNPN_BT}, the developed technique can be applied more generally. In particular, it can be used to establish estimation consistency in high-dimensional settings within general semiparametric Gaussian copula regression modeling framework \citep{dey2022semiparametric}, which encompasses all continuous, binary, ordinal and truncated variable types. This generalization requires establishing bounds on first and second derivatives of corresponding inverse bridge functions, which while technical, can be accomplished in the similar ways as Section \ref{suppl:subsec:bridge_inv_bound} in Appendix.
Secondly, our focus here is on binary classification problem, and multi-class extensions are of interest. In case the classes have a natural ordering, e.g., a disease classification as ``Mild", ``Moderate" or ``Severe", the extension is straightforward by considering mixed ordinal-truncated model, with corresponding bridge function for ordinal/truncated case as derived in \citet{huang2021latentcor}. However, it is unclear how to incorporate unordered class labels due to ambiguity in the underlying latent Gaussian representation, making it a compelling question for future study. 


\section*{Acknowledgements}
The authors thank Grace Yoon for constructive discussions on QMP data analysis. This work has been partially supported by the Texas A\&M Institute of Data Science (TAMIDS) and the Texas A\&M Strategic Transformative Research Program. Gaynanova's research was partially supported by the National Science Foundation (NSF CAREER DMS-2044823). Ni's research was partially supported by the National Science Foundation (NSF DMS-1918851, NSF DMS-2112943).
Portions of this research were conducted with the advanced computing resources provided by Texas A\&M High Performance Research Computing.

\newpage
\appendix
\section{Appendix}
\subsection{First-order Taylor approximation of the posterior probability}\label{appendix:Taylor_posterior_prob}
Let $\bmu_{t}=\E(\bz_{t}|\bz_{o},\bz_{t}\leq \bDelta_{t})$ and
\begin{align*}
    g(\bz_{t}) = 
    \Phi 
    \left( 
    \frac{{\bbeta^*_{t}}^{\top}\bz_{t} + {\bbeta^*_{o}}^{\top}\bz_{o} - \Delta_{y} }{v}
    \right). 
\end{align*}
Then, by Taylor,
\begin{align*}
\PP(Y=1|\bx) &=
    \E\left\{ 
    g(\bz_{t})
    \big|
    \bz_{o}, \bz_{t}<\bdelta_{t}
    \right\}\\
    &\approx \E\left\{ 
    g(\bmu_{t}) + \nabla g(\bmu_{t})^{\top}(\bz_{t}-\bmu_{t})
    \big|
    \bz_{o}, \bz_{t}<\bdelta_{t}
    \right\} \\
    &= g(\bmu_{t})
\end{align*}

\subsection{Implementation details of CODA} \label{subsec:coda_implementation}
CODA \citep{Han:2013ju} assumes that data satisfy the linear discriminant analysis assumption at the latent Gaussian level; that is,
\bfl{\label{eq:coda_mixture_model2}
\bx|(Y=g) \sim \NPN(\bmu_{g},\bSigma,\bdf), \quad g=0,1,
}
where $\bmu_{g}$ is the mean of class $g=0,1$ and $\bSigma \in \R^{p \times p}$ is a common covariance matrix. To address the identifiability issue, CODA assumes that $f_{j}$'s preserves the population mean and variance, i.e.,
\bfl{\label{eq:coda_moment_assumption_appendix}
\E(X_{j}) = \E\{f_{j}(X_{j})\} = \bmu_{g} \quad \text{and} \quad \var(X_{j}) = \var\{f_{j}(X_{j})\} = \sigma^{2}_{j}.
}
Under this model, the Bayes rule assigns a new observation $\bx$ to class 1 if
\bfln{
(\bdf(\bx)-\bmu_{d})^{\top} \bbeta + \log\left( \frac{\PP(Y=1)}{\PP(Y=0)} \right) >0
}
and class 0, otherwise, where $\bbeta=\bSigma^{-1}(\bmu_{1}-\bmu_{0})$ and $\bmu_{a} = (\bmu_{1}+\bmu_{0})/2$. 
The Bayes classification direction is estimated by solving
\bfl{\label{eq:coda_sampleObj}
\bbetah_{\textup{CODA}} = \underset{\bbeta \in \R^{p}}{\argmin}
\left\{ 
\frac{1}{2} \bbeta^{\top} \widehat{\bS} \bbeta + \frac{\nu}{2}( \bbeta^{\top}\bmu_{d} -1 )^2 + \lambda\|\bbeta\|_{1},
\right\}
}
where $\widehat{\bS} = n_{1}\widehat{\bS}_{1}/n + n_{0}\widehat{\bS}_{0}/n$, $\widehat{\bS}_{1}$ and $\widehat{\bS}_{0}$ are sample Kendall's $\tau$ covariance matrices and $\nu = (n_{0}n_{1})/n^2$.
Given $\widehat{\bbeta}_{\textup{CODA}}$, the optimal intercept $\widehat{\beta}_{0}^{opt}$ for a sparse LDA \citep{Mai:2012bf} is given by
\bfln{
\widehat{\beta}_{0}^{opt} = -\widehat{\bmu}_{a}^{\top}\bbetah_{\textup{CODA}} + 
\frac{  \bbetah_{\textup{CODA}}^{\top} \widehat{\bS}\bbetah_{\textup{CODA}} }
{\widehat{\bmu}_{d}^{\top}\bbetah_{\textup{CODA}}} \log\left( \frac{n_{1}}{n_{0}} \right).
}

Let $\widehat{\bmu}_{g}$, $g=0,1$, be the sample class means and $\widehat{\bmu} = \widehat{\bmu}_{1} - \widehat{\bmu}_{0}$. The sample classification rule assigns a new observation $\bx^{\new}$ to class 1 if
\bfln{
(\widehat{\bdf}(\bx^{\new})-\widehat{\bmu}_{d})^{\top} \widehat{\bbeta}_{\new} + \widehat{\beta}_{0}^{\textup{opt}} >0.
}

\section{Proofs of theoretical results}

Throughout, we use $G$ to denote the bridge function such that for TT case $\tau_{jk} = G(\Sigma_{jk}, \Delta_j, \Delta_k)$ with $\bSigma_{22} = \left[ G^{-1}(\tau_{jk},\Delta_{j},\Delta_{k}) \right]_{1\leq j,k \leq p} = G^{-1}(\bT, \bDelta)$ and $\bSigmah_{22} = G^{-1}(\bThat, \bDeltahat)$. Here $\bT$ and $\bThat$ are matrices of population and sample Kendall's $\tau$ values, respectively, $\bDelta = (\Delta_1, \dots, \Delta_p)^{\top}$, $\bDeltahat = (\widehat \Delta_1, \dots, \widehat \Delta_p)^{\top}$ with $\Delta_j = \Phi^{-1}(\pi_j)$ with $\pi_j = \PP(Z_j \leq \Delta_j) = \PP(X_j =0)$, and $\widehat \Delta_j = \Phi^{-1}(\widehat \pi_j)$ with $\widehat \pi_j = n_0/n$ being the observed proportion of zeros in the $j$th variable.

\subsection{Proofs of theorems from the main manuscript}

\begin{definition}[Restricted eigenvalue condition]\label{def:restrictedeigenvalue}
A $p \times p$ matrix $\bSigma$ satisfies restricted eigenvalue condition $RE(s,c)$ with parameter $\gamma=\gamma(s,c,\bSigma)$ if for all sets $S \subset \{1,\ldots,p\}$ with $\card(S)\leq s$, and for all $\ba \in \mathcal{C}(S,c)=\{\ba \in \R^{p}: \|\ba_{S^{c}}\|_{1} \leq c \|\ba_{S}\|_{1} \}$, it holds that
\bfln{
\ba^{\top}\bSigma\ba \geq \frac{ \|\ba_{S}\|_{2}^{2} }{ \gamma }.
}
\end{definition}

\begin{theorem}\label{thm:deterministic2}
Under Assumption \ref{assumption:sparsity}, if $\lambda \geq 2\| \bSigmah_{21}  - \bSigmah_{22}\bbeta^{*}\|_{\infty} $ and $\bSigmah_{22}$ satisfies RE(s,3) with parameter $\gamma$, then
 $$
 \|\bbetah - \bbeta^{*}\|_{2} \leq \frac{15}{2} \gamma \sqrt{s} \lambda.
 $$
\end{theorem}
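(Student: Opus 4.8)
The plan is the standard primal analysis of an $\ell_1$-penalized quadratic program. Write $\widehat{\bdelta}=\bbetah-\bbeta^*$. Since $\bbetah$ minimizes the convex objective in~\eqref{eq:cda_sampleObj}, comparing the objective value at $\bbetah$ and at $\bbeta^*$ and expanding the quadratic term around $\bbeta^*$ gives the \emph{basic inequality}
\[
\tfrac12\,\widehat{\bdelta}^{\top}\bSigmah_{22}\widehat{\bdelta}\;\le\;\widehat{\bdelta}^{\top}\bigl(\bSigmah_{21}-\bSigmah_{22}\bbeta^*\bigr)+\lambda\bigl(\|\bbeta^*\|_1-\|\bbetah\|_1\bigr).
\]
I would control the first term by H\"older's inequality, $\widehat{\bdelta}^{\top}(\bSigmah_{21}-\bSigmah_{22}\bbeta^*)\le\|\widehat{\bdelta}\|_1\,\|\bSigmah_{21}-\bSigmah_{22}\bbeta^*\|_{\infty}\le\tfrac{\lambda}{2}\|\widehat{\bdelta}\|_1$, using the hypothesis $\lambda\ge 2\|\bSigmah_{21}-\bSigmah_{22}\bbeta^*\|_{\infty}$; and the second term by the coordinatewise decomposition $\|\bbeta^*\|_1-\|\bbetah\|_1\le\|\widehat{\bdelta}_{\mathcal S}\|_1-\|\widehat{\bdelta}_{\mathcal S^c}\|_1$, which uses $\bbeta^*_{\mathcal S^c}=0$ and the triangle inequality.

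Substituting these bounds and splitting $\|\widehat{\bdelta}\|_1=\|\widehat{\bdelta}_{\mathcal S}\|_1+\|\widehat{\bdelta}_{\mathcal S^c}\|_1$ yields $\tfrac12\widehat{\bdelta}^{\top}\bSigmah_{22}\widehat{\bdelta}\le\tfrac{3\lambda}{2}\|\widehat{\bdelta}_{\mathcal S}\|_1-\tfrac{\lambda}{2}\|\widehat{\bdelta}_{\mathcal S^c}\|_1$. Since $\bSigmah_{22}$ is positive semidefinite, the left-hand side is nonnegative, which forces $\|\widehat{\bdelta}_{\mathcal S^c}\|_1\le 3\|\widehat{\bdelta}_{\mathcal S}\|_1$, i.e. $\widehat{\bdelta}\in\mathcal{C}(\mathcal S,3)$; thus the $RE(s,3)$ hypothesis on $\bSigmah_{22}$ applies to $\widehat{\bdelta}$. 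Dropping the negative term, using Cauchy--Schwarz $\|\widehat{\bdelta}_{\mathcal S}\|_1\le\sqrt{s}\,\|\widehat{\bdelta}_{\mathcal S}\|_2$, and combining with the restricted-eigenvalue bound $\widehat{\bdelta}^{\top}\bSigmah_{22}\widehat{\bdelta}\ge\gamma^{-1}\|\widehat{\bdelta}_{\mathcal S}\|_2^2$ produces $\|\widehat{\bdelta}_{\mathcal S}\|_2\le 3\gamma\sqrt{s}\,\lambda$. (It is convenient here to replace $\mathcal S$ by the set of the $s$ largest-in-magnitude coordinates of $\widehat{\bdelta}$; membership in the cone is preserved and the same bound holds for that set.)

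Finally I would pass from the restriction of $\widehat{\bdelta}$ to the full vector. Ordering the coordinates of $\widehat{\bdelta}$ outside the chosen $s$-set by decreasing magnitude, a best-$s$-term estimate (each such coordinate is at most $s^{-1}$ times the $\ell_1$-norm on the $s$-set, together with the cone inequality) bounds $\|\widehat{\bdelta}_{(\cdot)^c}\|_2$ by a constant multiple of $\|\widehat{\bdelta}_{(\cdot)}\|_2$, hence $\|\widehat{\bdelta}\|_2\le c\,\|\widehat{\bdelta}_{(\cdot)}\|_2$ for an explicit $c$; tracking the constants delivers $\|\widehat{\bdelta}\|_2\le\tfrac{15}{2}\gamma\sqrt{s}\,\lambda$. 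The argument is essentially routine — this is a textbook-style deterministic Lasso bound — and the only points needing care are invoking positive semidefiniteness of $\bSigmah_{22}$ to extract cone membership and the bookkeeping of constants in the passage from the $s$-restricted error to the full $\ell_2$ error. The substantive work of the paper lies elsewhere, namely in the later theorems that verify $\lambda\gtrsim\|\bSigmah_{21}-\bSigmah_{22}\bbeta^*\|_{\infty}$ holds at the rate $\sqrt{\log p/n}$ and that $RE(s,3)$ transfers from $\bSigma_{22}$ to $\bSigmah_{22}$.
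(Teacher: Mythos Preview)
Your overall strategy is correct and mirrors the paper's: derive a basic inequality, use the $\lambda$ hypothesis and sparsity to place $\widehat{\bdelta}$ in the cone $\mathcal C(\mathcal S,3)$, apply $RE(s,3)$, and then use the shelling/$T_0$ argument to pass from the $s$-restricted error to the full $\ell_2$ error. The paper does exactly this.

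There is, however, a small but concrete discrepancy in how you obtain the basic inequality. You compare objective values at $\bbetah$ and $\bbeta^*$, which yields
\[
\tfrac12\,\widehat{\bdelta}^{\top}\bSigmah_{22}\widehat{\bdelta}\;\le\;\widehat{\bdelta}^{\top}(\bSigmah_{21}-\bSigmah_{22}\bbeta^*)+\lambda(\|\bbeta^*\|_1-\|\bbetah\|_1),
\]
with the factor $\tfrac12$ on the quadratic. The paper instead starts from the first-order (subgradient) optimality condition $\bSigmah_{22}\bbetah-\bSigmah_{21}+\lambda\bg=\zeros$ and multiplies by $\widehat{\bdelta}$, which gives the same right-hand side but with the full quadratic $\widehat{\bdelta}^{\top}\bSigmah_{22}\widehat{\bdelta}$ on the left (no $\tfrac12$). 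Propagating your version through the remaining steps gives $\|\widehat{\bdelta}_{T_0}\|_2\le 3\gamma\sqrt{s}\lambda$, $\|\widehat{\bdelta}\|_1\le 12\gamma s\lambda$, and hence $\|\widehat{\bdelta}\|_2\le 15\gamma\sqrt{s}\lambda$, not $\tfrac{15}{2}\gamma\sqrt{s}\lambda$. So your claim that ``tracking the constants delivers $\tfrac{15}{2}$'' is off by exactly this factor of two; to recover the stated constant you would need to switch to the KKT/subgradient starting point the paper uses. Everything else in your outline is fine.
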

\begin{proof}[\textbf{ Proof of Theorem \ref{thm:deterministic2}}] This proof follows the proof of Theorem~2 in \citet{MyBernoulli}. We repeat it here for completeness.

By the optimality condition of equation (9) in the main paper, we have
\begin{flalign*}
    \bSigmah_{22}\widehat{\bbeta} - \bSigmah_{21} + \lambda \bg = \zeros,
\end{flalign*}
where $\bg$ is a subgradient of $\|\bbeta\|_{1}$ at $\bbetah$. This gives
\begin{flalign*}
    (\widehat{\bbeta} - \bbeta^{*})^{\top}(\bSigmah_{22}\widehat{\bbeta} - \bSigmah_{21} + \lambda \bg) = 0,
\end{flalign*}
and thus
\begin{flalign} \label{eq:optimal1}
(\widehat{\bbeta} - \bbeta^{*})^{\top} \bSigmah_{22} (\widehat{\bbeta} - \bbeta^{*}) - 
    (\widehat{\bbeta} - \bbeta^{*})^{\top}( \bSigmah_{21}  - \bSigmah_{22}\bbeta^{*}) + 
    \lambda(\widehat{\bbeta} - \bbeta^{*})^{\top}  \bg = 0.
\end{flalign}
Since $\bg$ is a subgradient of $\|\bbeta\|_{1}$ at $\bbetah$, 
\begin{flalign}\label{eq:subgrad}
(\widehat{\bbeta} - \bbeta^{*})^{\top} \bg \geq \|\bbetah\|_{1} - \|\bbeta^{*}\|_{1}.
\end{flalign}
By combining \eqref{eq:optimal1}, \eqref{eq:subgrad}, and H{\"o}lder's and triangle inequalities, we have
\begin{flalign*}
(\widehat{\bbeta} - \bbeta^{*})^{\top} \bSigmah_{22} (\widehat{\bbeta} - \bbeta^{*}) 
&\leq 
(\widehat{\bbeta} - \bbeta^{*})^{\top}( \bSigmah_{21}  - \bSigmah_{22}\bbeta^{*}) + \lambda \|\bbeta^{*}\|_{1} - \lambda\|\bbetah\|_{1} \\
&\leq
\|\widehat{\bbeta} - \bbeta^{*}\|_{1}  \|\bSigmah_{21}  - \bSigmah_{22}\bbeta^{*}\|_{\infty} + \lambda \|\bbeta^{*}\|_{1} - \lambda\|\bbetah\|_{1}\label{eq:s4}.
\end{flalign*}
Using condition on $\lambda$ and Assumption \ref{assumption:sparsity},
\begin{flalign*}
(\widehat{\bbeta} - \bbeta^{*})^{\top} \bSigmah_{22} (\widehat{\bbeta} - \bbeta^{*}) 
&\leq
\frac{\lambda}{2}\|\widehat{\bbeta} - \bbeta^{*}\|_{1}  + \lambda \|\bbeta^{*}\|_{1} - \lambda\|\bbetah\|_{1}\\
&=
\frac{\lambda}{2}\|\widehat{\bbeta}_{S} - \bbeta^{*}_{S}\|_{1} + 
\frac{\lambda}{2}\|\widehat{\bbeta}_{S^c} \|_{1} +
\lambda \|\bbeta^{*}\|_{1} - \lambda\|\bbetah\|_{1}\\
&=
\frac{\lambda}{2}\|\widehat{\bbeta}_{S} - \bbeta^{*}_{S}\|_{1} + 
\frac{\lambda}{2}\|\widehat{\bbeta}_{S^c} \|_{1} +
\lambda \|\bbeta^{*}_{S}\|_{1} - \lambda\|\bbetah\|_{1}\\
&=
\frac{\lambda}{2}\|\widehat{\bbeta}_{S} - \bbeta^{*}_{S}\|_{1} + 
\frac{\lambda}{2}\|\widehat{\bbeta}_{S^c} \|_{1} +
\lambda \|\bbeta^{*}_{S}\|_{1} - \lambda\|\bbetah_{S}\|_{1}- \lambda\|\bbetah_{S^c}\|_{1}.
\end{flalign*}
Using the triangle inequality, 
\begin{flalign}
(\widehat{\bbeta} - \bbeta^{*})^{\top} \bSigmah_{22} (\widehat{\bbeta} - \bbeta^{*}) 
&\leq
\frac{\lambda}{2}\|\widehat{\bbeta}_{S} - \bbeta^{*}_{S}\|_{1} + 
\frac{\lambda}{2}\|\widehat{\bbeta}_{S^c} \|_{1} +
 \lambda\|\bbetah_{S} - \bbeta_{S}^{*}\|_{1} - \lambda\|\bbetah_{S^c}\|_{1}\\
 \label{eq:bound1forQ}
 &=
 \frac{3\lambda}{2}\|\widehat{\bbeta}_{S} - \bbeta^{*}_{S}\|_{1}-\frac{\lambda}{2}\|\bbetah_{S^c}\|_{1}\\
 \label{eq:bound2forQ}
 &\leq
 \frac{3\lambda}{2}\|\widehat{\bbeta}_{S} - \bbeta^{*}_{S}\|_{1}.
\end{flalign}
As $(\widehat{\bbeta} - \bbeta^{*})^{\top} \bSigmah_{22} (\widehat{\bbeta} - \bbeta^{*})$ is non-negative and $\bbeta_{S^{c}}^{*}=\zeros$, \eqref{eq:bound1forQ} implies that $\|\bbetah_{S^c}\|_{1} \leq 3 \|\widehat{\bbeta}_{S} - \bbeta^{*}_{S}\|_{1}$, and thus, $\widehat{\bbeta} - \bbeta^{*}$ is in the cone $\Ccal(s,3)$. Since $\bSigmah_{22}$ satisfies RE($s,3$) with paramter $\gamma$, we have
\begin{flalign}\label{eq:REforQ}
    \|\widehat{\bbeta}_{S} - \bbeta^{*}_{S}\|_{2} \leq \sqrt{\gamma}\sqrt{ (\widehat{\bbeta} - \bbeta^{*})^{\top} \bSigmah_{22} (\widehat{\bbeta} - \bbeta^{*}) }.
\end{flalign}
Using $\|\widehat{\bbeta}_{S} - \bbeta^{*}_{S}\|_{1}\leq \sqrt{s}\|\widehat{\bbeta}_{S} - \bbeta^{*}_{S}\|_{2}$, \eqref{eq:bound2forQ} and \eqref{eq:REforQ} imply that
\begin{flalign}\label{eq:thm1bound1}
    (\widehat{\bbeta} - \bbeta^{*})^{\top} \bSigmah_{22} (\widehat{\bbeta} - \bbeta^{*})
    \leq
    \frac{9}{4} \gamma s \lambda^2
\end{flalign}
The bound for $\|\widehat{\bbeta}_{S} - \bbeta^{*}_{S}\|_{2}$ can be obtained as follows. Since $\widehat{\bbeta} - \bbeta^{*} \in \Ccal(S,3)$, 
\begin{flalign}
    \|\widehat{\bbeta} - \bbeta^{*}\|_{1}
    &= 
    \|\widehat{\bbeta}_{S} - \bbeta^{*}_{S}\|_{1} + 
    \|\widehat{\bbeta}_{S^c} - \bbeta^{*}_{S^c}\|_{1}\\
    &\leq 
    4\|\widehat{\bbeta}_{S} - \bbeta^{*}_{S}\|_{1}\\
    &\leq 
    4\sqrt{s}\|\widehat{\bbeta}_{S} - \bbeta^{*}_{S}\|_{2}\\
    &\leq 
    4\sqrt{s\gamma}\sqrt{ (\widehat{\bbeta} - \bbeta^{*})^{\top} \bSigmah_{22} (\widehat{\bbeta} - \bbeta^{*}) }\\
    \label{eq:boundL1ofa}
    &\leq 
    4\sqrt{s\gamma}\frac{3}{2}\sqrt{s \gamma} \lambda
    =6s\gamma\lambda.
\end{flalign}
Let $\ba = \widehat{\bbeta} - \bbeta^{*}$ for notational simplicity. 
For $j=0,1,\ldots, J$, let $T_{j}$ be the index set of $(j+1)$th $s$ largest (in absolute) components of $\ba$. 
Then, $\ba \in \Ccal(T_{0},3)$ as
\begin{flalign*}
    \|\ba_{T_{0}^{c}}\|_{1}  = \|\ba\|_{1} - \|\ba_{T_{0}}\|_{1}
    & \leq \|\ba\|_{1} - \|\ba_{S}\|_{1} \\
    & =  \| \ba_{S^c}\|_{1}\\
    & \leq 3 \|\ba_{S}\|_{1} \quad \text{since $\ba \in \Ccal(S,3)$}\\
    & \leq 3 \|\ba_{T_{0}}\|_{1}.
\end{flalign*}
Furthermore, it follows that $\|\ba_{T_j}\|_0 = s$ for $j=0, \dots, J-1$ with last $\|\ba_{T_J}\|_0 \leq s$. Also, for $j\geq 1$, $\|\ba_{T_j}\|_2\leq \sqrt{s}\|\ba_{T_j}\|_{\infty} \leq \sqrt{s}\|\ba_{T_{j-1}}\|_{1}/s$.
Thus, by the triangle inequality,
\begin{flalign}
    \|\ba\|_{2} &\leq \|\ba_{T_{0}}\|_{2} + \sum_{j=1}^{J} \|\ba_{T_{j}}\|_{2}
    \leq \|\ba_{T_{0}}\|_{2} + \sum_{j= 1}^{J} \sqrt{s}\|\ba_{T_{j}}\|_{\infty} \\
    \label{eq:a2normConeBound}
    & \leq  \|\ba_{T_{0}}\|_{2} + \sum_{j=0}^{J-1} \sqrt{s}\frac{1}{s}\|\ba_{T_{j}}\|_{1}
     \leq \|\ba_{T_{0}}\|_{2} + \frac{1}{\sqrt{s}}\|\ba \|_{1}.
\end{flalign}
Using that $\bSigmah_{22}$ satisfies RE$(s,3)$ and $\ba \in \Ccal(T_{0},3)$,
\begin{flalign*}
    \|\widehat{\bbeta} - \bbeta^{*}\|_{2} & = \|\ba\|_{2} \leq \|\ba_{T_{0}}\|_{2} + \frac{1}{\sqrt{s}}\|\ba \|_{1} \\
    &\leq \sqrt{\gamma} \sqrt{ (\widehat{\bbeta} - \bbeta^{*})^{\top} \bSigmah_{22} (\widehat{\bbeta} - \bbeta^{*}) } + 6\sqrt{s} \gamma \lambda \quad \text{as \eqref{eq:REforQ} and \eqref{eq:boundL1ofa} }\\
    &\leq \frac{3}{2} \gamma \sqrt{s} \lambda + 6\sqrt{s} \gamma \lambda \quad \text{by \eqref{eq:thm1bound1} }\\
    & = \frac{15}{2} \gamma \sqrt{s} \lambda.
\end{flalign*}

\end{proof}

\begin{theorem}\label{thm:key}
Under Assumptions \ref{assumption1}--\ref{assumption:samplesize}, for some constant $C>0$, with probability at least $1-p^{-1}$
\begin{flalign*}
    \| \bSigmah_{21}  - \bSigmah_{22}\bbeta^{*}\|_{\infty} \leq C \sqrt{ \frac{\log(p)}{n} }.
\end{flalign*}
\end{theorem}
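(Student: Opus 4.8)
The plan is to subtract the (zero) population quantity $\bSigma_{21}-\bSigma_{22}\bbeta^{*}$, expand the inverse bridge function to second order around the true correlations and thresholds, and then bound the first-order part with a sign sub-Gaussian argument and the remainder with uniform bounds on the first and second derivatives of the inverse bridge function.

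\textbf{Reduction and good event.} Writing $y$ for the index of the binary label, so that $\bSigmah_{21}=(\sigmahat_{jy})_j$ and $\bSigmah_{22}=(\sigmahat_{jk})_{j,k}$, and using $\bbeta^{*}=\bSigma_{22}^{-1}\bSigma_{21}$ (hence $\bSigma_{21}-\bSigma_{22}\bbeta^{*}=\zeros$), the $j$th coordinate of $\bSigmah_{21}-\bSigmah_{22}\bbeta^{*}$ equals
\[
(\sigmahat_{jy}-\sigma_{jy})-\sum_{k=1}^{p}(\sigmahat_{jk}-\sigma_{jk})\beta_k^{*},
\]
where each term has the form $G^{-1}(\tauhat;\Deltahat_j,\Deltahat_\ell)-G^{-1}(\tau;\Delta_j,\Delta_\ell)$ with $G=G_{TT}$ for the $\bSigma_{22}$ entries and $G=G_{BT}$ for the $\bSigma_{21}$ entries. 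I would first restrict to the event on which $|\tauhat_{jk}-\tau_{jk}|\le C_1\sqrt{\log p/n}$ for all pairs and $|\pihat_j-\pi_j|\le C_2\sqrt{\log p/n}$ for all $j$ (Hoeffding for $U$-statistics and for Bernoulli averages, plus a union bound over $O(p^{2})$ indices), which has probability at least $1-p^{-1}$ for large constants. On this event $\pi_j=\Phi(\Delta_j)$ is bounded away from $0$ and $1$ by Assumption~\ref{assumption2}, so $\Phi^{-1}$ is Lipschitz near $\pihat_j$, giving $|\Deltahat_j-\Delta_j|\le C_3\sqrt{\log p/n}$ and keeping all arguments of $G^{-1}$ in a fixed compact set on which $|\sigma|\le 1-\varepsilon_{r}$ by Assumption~\ref{assumption1}.

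\textbf{First-order terms.} A second-order Taylor expansion of $G^{-1}$ about $(\tau,\Delta_j,\Delta_\ell)$ splits the first-order part into a piece linear in the centered Kendall's $\tau$'s and a piece linear in $\Deltahat-\Delta$. For the first piece, substituting the $U$-statistic form $\tauhat_{jk}=\frac{2}{n(n-1)}\sum_{i<i'}\sign(X_{ij}-X_{i'j})\sign(X_{ik}-X_{i'k})$ shows that
\[
\partial_\tau G^{-1}_{jy}(\tauhat_{jy}-\tau_{jy})-\sum_{k}\partial_\tau G^{-1}_{jk}\,\beta_k^{*}\,(\tauhat_{jk}-\tau_{jk})
\]
is a centered degree-two $U$-statistic whose kernel is $\sign(X_{ij}-X_{i'j})$ times a linear combination of signs with coefficient vector $\ba$ obeying $\|\ba\|_2\le C\|\bbeta^{*}\|_2$. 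Here $\|\bbeta^{*}\|_2$ is bounded: $\bbeta^{*\top}\bSigma_{22}\bbeta^{*}=1-v^{2}\le 1$, and Assumption~\ref{assumption:condition_number} together with $\Tr\bSigma=p+1$ gives $1/C_{\cov}\le\lambda_{\min}(\bSigma)\le\lambda_{\max}(\bSigma)\le C_{\cov}$, so $\lambda_{\min}(\bSigma_{22})\ge 1/C_{\cov}$. Combining the sign sub-Gaussian property \citep[Lemma~4.5]{barber2018rocket}, which bounds the variance proxy of such a sign combination by $C\,\ba^{\top}\bSigma\ba\le C\,\lambda_{\max}(\bSigma)\|\ba\|_2^{2}=O(1)$, with the $U$-statistic concentration bound \citep[Lemma~E.2]{barber2018rocket}, this piece is $O(\sqrt{\log p/n})$ uniformly in $j$ after a union bound. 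For the second piece, the coefficient of $\Deltahat_j-\Delta_j$ is $\partial_{\delta_j}G^{-1}_{jy}-\sum_k\partial_{\delta_j}G^{-1}_{jk}\beta_k^{*}$; the key observation is that $G(0;\delta_j,\delta_\ell)=0$ identically (Kendall's $\tau$ vanishes under independence, for any thresholds), so $\partial_{\delta_j}G(r;\cdot)$ vanishes at $r=0$ and hence $|\partial_{\delta_j}G(r;\cdot)|\le C|r|$ by the (bounded) second derivatives of the forward bridge, while $\partial_{\delta_j}G^{-1}=-\partial_\tau G^{-1}\,\partial_{\delta_j}G$ with $|\partial_\tau G^{-1}|\le C$ \citep{yoon2020sparse}, giving $|\partial_{\delta_j}G^{-1}_{jk}|\le C|\sigma_{jk}|$. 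Consequently
\[
\sum_k|\partial_{\delta_j}G^{-1}_{jk}|\,|\beta_k^{*}|\le C\sum_k|\sigma_{jk}|\,|\beta_k^{*}|\le C\|\bbeta^{*}\|_2\Big(\sum_k\sigma_{jk}^{2}\Big)^{1/2}\le C\|\bbeta^{*}\|_2\,\lambda_{\max}(\bSigma_{22})=O(1),
\]
and multiplying by $|\Deltahat_j-\Delta_j|\le C_3\sqrt{\log p/n}$ yields $O(\sqrt{\log p/n})$; the $\Deltahat_\ell-\Delta_\ell$ and binary-threshold terms are bounded the same way.

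\textbf{Second-order remainder and the main obstacle.} The remainder involves the Hessian of $G^{-1}$ at intermediate points times products of $\tauhat-\tau$ and $\Deltahat-\Delta$, each $O(\sqrt{\log p/n})$; with $|\partial^{2}G^{-1}|\le C$ and summing against $\bbeta^{*}$, its size is at most $C(\log p/n)(1+\|\bbeta^{*}\|_1)\le C(\log p/n)(1+C'\sqrt{s})=O(\sqrt{\log p/n})$, since $\sqrt{s}(\log p/n)=\sqrt{\log p/n}\cdot\sqrt{s\log p/n}=o(\sqrt{\log p/n})$ by Assumption~\ref{assumption:samplesize}. Collecting the pieces and enlarging constants gives the stated bound with probability at least $1-p^{-1}$. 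The hard part, and the bulk of the technical work, is justifying $|\partial^{2}G^{-1}|\le C$ (and the bounded second derivatives of the forward bridge used above), because $G^{-1}$ has no closed form. I would obtain this through the inverse function theorem: differentiating $G(G^{-1}(\tau;\delta_j,\delta_\ell);\delta_j,\delta_\ell)=\tau$ twice expresses every second derivative of $G^{-1}$ through first and second derivatives of the \emph{forward} bridge $G$ (available explicitly via $\Phi_3,\Phi_4$ in~\eqref{eq:bridgeForm}) and $\partial_\tau G^{-1}$; differentiating $\Phi_3,\Phi_4$ in their correlation entries produces lower-dimensional Gaussian densities and their derivatives, which are bounded on the compact set $\{|\sigma|\le 1-\varepsilon_{r},\ |\Delta|\le M\}$ afforded by Assumptions~\ref{assumption1}--\ref{assumption2}, while $\partial_r G$ stays bounded below by a positive constant there \citep{yoon2020sparse}, so the required bounds follow by elementary algebra.
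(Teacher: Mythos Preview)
Your proposal is correct and follows essentially the same architecture as the paper: subtract the population identity, Taylor-expand the inverse bridge to second order, control the first-order $\tau$-piece via the sign sub-Gaussian/$U$-statistic concentration of \citet{barber2018rocket}, and absorb the quadratic remainder using bounded second derivatives of $G^{-1}$ together with $\|\bbeta^{*}\|_{1}\le C\sqrt{s}$ and Assumption~\ref{assumption:samplesize}. You also correctly identify that the technical bulk is bounding the second derivatives of $G^{-1}$ via the inverse-function theorem applied to the explicit forward bridge $G$, exactly as the paper does in Lemmas~\ref{lem:Ginv_second_tau_tau}--\ref{lem:Ginv_second_tau_pij} and Section~\ref{suppl:subsec:bridge_bound}.

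The one genuine difference is how you treat the first-order threshold term. The paper (Lemma~\ref{lem:I3}) splits it into a $\pi_1$-direction piece and a $\pi_2$-direction piece and invokes the vector sub-Gaussianity of $\widehat\bpi-\bpi$ (Lemma~\ref{lem:pi_subgaussian}) in the style of Lemma~\ref{lem:I1}. That works cleanly for the $\pi_2$-direction, where the $j$th coordinate is $\bm_j^{\top}(\widehat\bpi-\bpi)$ with $\|\bm_j\|_2\le C\|\bbeta^{*}\|_2$. For the $\pi_1$-direction, however, the randomness factors out as the single scalar $\widehat\pi_j-\pi_j$, and one is left needing the \emph{deterministic} coefficient $\sum_k\partial_{\Delta_j}G^{-1}_{jk}\,\beta_k^{*}$ to be $O(1)$. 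Your route---observe $G(0;\Delta_j,\Delta_k)\equiv 0$, hence $|\partial_{\Delta}G(r;\cdot)|\le C|r|$ by the bounded mixed second derivative (the paper's Lemma~\ref{lem:G_second_r_Deltaj}), so $|\partial_{\Delta}G^{-1}_{jk}|\le C|\sigma_{jk}|$, then Cauchy--Schwarz with $(\sum_k\sigma_{jk}^{2})^{1/2}\le\lambda_{\max}(\bSigma_{22})$---supplies exactly this bound and treats both threshold directions uniformly. This is a nice, more explicit alternative to what the paper writes at that step; the paper's brief ``follow Lemma~\ref{lem:I2}/\ref{lem:I1}'' for~\eqref{inlem:I31_4} leaves the $O(1)$ control of that deterministic sum implicit, whereas your argument makes it transparent.
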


\begin{proof} Using $\bbeta^* = \bSigma_{22}^{-1}\bSigma_{21}$ and triangle inequality, we have
\begin{flalign*}
    \| \bSigmah_{21}  - \bSigmah_{22}\bbeta^{*}\|_{\infty} 
    &= \| \bSigmah_{21}  - \bSigmah_{22}\bbeta^{*} +  \bSigma_{21} - \bSigma_{21}\|_{\infty}  \\
    &= \| \bSigmah_{21} - \bSigma_{21} + \bSigma_{21} - \bSigmah_{22}\bbeta^{*} \|_{\infty} \\
&\leq 
    \| \bSigmah_{21}  - \bSigma_{21}\|_{\infty} + 
    \| (\bSigma_{22}  - \bSigmah_{22})\bbeta^{*}\|_{\infty}.
\end{flalign*}
For $\| \bSigmah_{21}  - \bSigma_{21}\|_{\infty}$, it follows from Theorem 7 of \citet{yoon2020sparse} that, for some $C_{1}>0$,
\begin{flalign}\label{eq:sigma21bound}
    \| \bSigmah_{21}  - \bSigma_{21}\|_{\infty} \leq C_{1} \sqrt{ \frac{\log p}{n}}
\end{flalign}
with probability at least $1-p^{-1}$. 

Consider $\|(\bSigma_{22}  - \bSigmah_{22})\bbeta^{*}\|_{\infty}$.  Recall that  $\bSigma_{22} = G^{-1}(\bT, \bDelta) =\left[ G^{-1}(\tau_{jk},\Delta_{j},\Delta_{k}) \right]_{1\leq j,k \leq p} $, and  $\bSigmah_{22} = G^{-1}(\bThat, \bDeltahat)$, and let  $G^{-1}_{\tau} = \partial G^{-1}(\tau,\Delta_{j},\Delta_{k})/\partial \tau$ be the partial derivative of the inverse bridge function with respect to $\tau$. By adding and subtracting $G^{-1}(\bThat, \bDelta)$ from $G^{-1}(\bThat, \bDeltahat)$ and applying the mean value theorem to $G^{-1}(\bThat, \bDelta)$ with respect to $\bThat$, 
\bfln{
\bSigmah_{22} &= G^{-1}(\bThat, \bDeltahat) =  G^{-1}(\bThat, \bDelta) + \{G^{-1}(\bThat, \bDeltahat) - G^{-1}(\bThat, \bDelta)\} \\
&= G^{-1}(\bT, \bDelta) + G^{-1}_{\tau}(\widetilde{\bT}, \bDelta)\circ(\bThat - \bT) + \{G^{-1}(\bThat, \bDeltahat) - G^{-1}(\bThat, \bDelta)\} \\
&= \bSigma_{22} + G^{-1}_{\tau}(\widetilde{\bT}, \bDelta)\circ(\bThat - \bT) + \{G^{-1}(\bThat, \bDeltahat) - G^{-1}(\bThat, \bDelta)\},
}
where  $\widetilde{\bT}=[\widetilde{\tau}_{jk}]_{1\leq j,k \leq p}$ and $\widetilde{\tau}_{jk} \in (\widehat{\tau}_{jk},\tau_{jk})$.
Therefore
\bfln{
&(\bSigma_{22} - \bSigmah_{22})\bbeta^{*} =  -G^{-1}_{\tau}(\widetilde{\bT}, \bDelta)\circ(\bThat - \bT)\bbeta^{*} - \{G^{-1}(\bThat, \bDeltahat) - G^{-1}(\bThat, \bDelta)\}\bbeta^{*}.
}
By letting
\bfln{ 
G^{-1}_{\tau}(\widetilde{\bT},\bDelta) =  G^{-1}_{\tau}({\bT},\bDelta) +\{ G^{-1}_{\tau}(\widetilde{\bT},\bDelta) - G^{-1}_{\tau}({\bT},\bDelta) \},
}
we further have, by the triangle inequality, that
\bfln{
\|(\bSigma_{22} - \bSigmah_{22})\bbeta^{*}\|_{\infty} &\leq 
\underbrace{ \|G^{-1}_{\tau}(\bT, \bDelta)\circ(\bThat - \bT)\bbeta^{*}\|_{\infty} }_{\coloneqq I_{1}} + 
\underbrace{ \| \{ G^{-1}_{\tau}(\widetilde{\bT}, \bDelta) - G^{-1}_{\tau}(\bT, \bDelta)\}\circ(\bThat - \bT)\bbeta^{*}\|_{\infty} }_{\coloneqq I_{2}} \\
&+
\underbrace{  \|\{G^{-1}(\bThat, \bDeltahat) - G^{-1}(\bThat, \bDelta)\}\bbeta^{*} \|_{\infty}}_{\coloneqq I_{3}}.
}
We separately bound $I_{1}$, $I_{2}$, and $I_{3}$ in Lemmas \ref{lem:I1}, \ref{lem:I2}, and \ref{lem:I3}, respectively. Combining these bounds with~\eqref{eq:sigma21bound} completes the proof.
\end{proof}

\begin{theorem}
Under Assumptions \ref{assumption1}--\ref{assumption:samplesize}, if $\lambda = C\sqrt{\log(p)/n}$ for some constant $C>0$ and $\bSigma_{22}$ satisfies RE(s,3) with parameter $\gamma$,
then
\bfln{
\|\bbetah - \bbeta^{*}\|_{2}^{2} = O_{p}\left(\gamma^2 \frac{s \log(p)}{n}\right).
}
\end{theorem}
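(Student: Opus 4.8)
The plan is to combine the deterministic estimation bound of Theorem~\ref{thm:deterministic2} with the high-probability control of the tuning parameter from Theorem~\ref{thm:key}, after first transferring the restricted eigenvalue condition from $\bSigma_{22}$ to $\bSigmah_{22}$. Fix $\lambda = C\sqrt{\log(p)/n}$ with $C$ equal to twice the constant appearing in Theorem~\ref{thm:key}. Then, by Theorem~\ref{thm:key}, there is an event $\mathcal{E}_{1}$ with $\PP(\mathcal{E}_{1})\geq 1-p^{-1}$ on which $\lambda \geq 2\|\bSigmah_{21}-\bSigmah_{22}\bbeta^{*}\|_{\infty}$, so the first hypothesis of Theorem~\ref{thm:deterministic2} holds.

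The second, and main, step would be a lemma showing that, under Assumptions~\ref{assumption1}--\ref{assumption:samplesize}, there is an event $\mathcal{E}_{2}$ with $\PP(\mathcal{E}_{2})\to 1$ on which $\bSigma_{22}$ satisfying $RE(s,3)$ with parameter $\gamma$ implies that $\bSigmah_{22}$ satisfies $RE(s,3)$ with parameter $2\gamma$. For any $\ba\in\Ccal(S,3)$ one has $\|\ba\|_{1}\leq 4\|\ba_{S}\|_{1}\leq 4\sqrt{s}\,\|\ba_{S}\|_{2}$, so it suffices to establish
\[
\sup_{\ba\in\Ccal(S,3)}\frac{|\ba^{\top}(\bSigmah_{22}-\bSigma_{22})\ba|}{\|\ba_{S}\|_{2}^{2}}\;\leq\;\frac{1}{2\gamma},
\]
which, since $\gamma$ is bounded by Assumption~\ref{assumption:condition_number}, amounts to an $o(1)$ bound. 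I would obtain it by reusing the second-order Taylor expansion of $\widehat\sigma_{jk}=G_{TT}^{-1}(\widehat\tau_{jk},\widehat\Delta_{j},\widehat\Delta_{k})$ around $\sigma_{jk}=G_{TT}^{-1}(\tau_{jk},\Delta_{j},\Delta_{k})$ developed for Theorem~\ref{thm:key}: the leading term involving $G^{-1}_{\tau}(\bT,\bDelta)\circ(\bThat-\bT)$ is controlled, uniformly over the cone, by the quadratic-form concentration bound for the Kendall's $\tau$ matrix \citep{barber2018rocket}[Lemma~E.2] together with the sign sub-Gaussian property \citep{barber2018rocket}[Lemma~4.5], yielding a deviation of order $\sqrt{s\log(p)/n}$, while the remainder terms are bounded using the bounds on the first and second derivatives of the inverse bridge functions and the element-wise rates for $\bSigmah$ and $\bDeltahat$. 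By Assumption~\ref{assumption:samplesize}, $\sqrt{s\log(p)/n}\to 0$, which gives the required bound.

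On $\mathcal{E}_{1}\cap\mathcal{E}_{2}$ the hypotheses of Theorem~\ref{thm:deterministic2} are met with RE parameter $2\gamma$, so
\[
\|\bbetah-\bbeta^{*}\|_{2}\;\leq\;\tfrac{15}{2}(2\gamma)\sqrt{s}\,\lambda\;=\;15C\,\gamma\sqrt{s}\,\sqrt{\tfrac{\log(p)}{n}},
\]
and squaring gives $\|\bbetah-\bbeta^{*}\|_{2}^{2}\leq 225C^{2}\gamma^{2}\,s\log(p)/n$ on that event. Since $\PP(\mathcal{E}_{1}\cap\mathcal{E}_{2})\to 1$, this is precisely the statement $\|\bbetah-\bbeta^{*}\|_{2}^{2}=O_{p}\!\left(\gamma^{2}s\log(p)/n\right)$.

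I expect the RE transfer to be the hard part. The naive bound $|\ba^{\top}(\bSigmah_{22}-\bSigma_{22})\ba|\leq\|\bSigmah_{22}-\bSigma_{22}\|_{\infty}\|\ba\|_{1}^{2}\lesssim s\sqrt{\log(p)/n}\,\|\ba_{S}\|_{2}^{2}$ is too crude, because $s\sqrt{\log(p)/n}$ need not vanish under Assumption~\ref{assumption:samplesize} (only $\sqrt{s\log(p)/n}$ does); one genuinely needs the sharper, cone-restricted quadratic-form analysis through the second-order Taylor expansion and the derivative bounds on $G_{TT}^{-1}$, exactly mirroring the proof of Theorem~\ref{thm:key}. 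The remainder is bookkeeping: substituting the RE parameter into Theorem~\ref{thm:deterministic2} and intersecting two high-probability events.
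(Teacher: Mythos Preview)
Your proposal is correct and follows essentially the same route as the paper. The paper packages your ``hard part'' as a separate lemma (Lemma~\ref{lem:RE}), which transfers $RE(s,3)$ from $\bSigma_{22}$ to $\bSigmah_{22}$ with parameter $C\gamma$ via exactly the decomposition you describe: the same Taylor expansion of $G_{TT}^{-1}$ as in Theorem~\ref{thm:key}, first-order terms handled through the sub-Gaussian quadratic-form concentration (Lemmas~\ref{lem:pi_subgaussian} and~\ref{lem:u(Th-T)v_subgaussian}) to get the $\sqrt{s\log(p)/n}$ rate, and second-order terms handled by H\"older with the bounded second derivatives and element-wise rates to get $s\log(p)/n$; the final theorem is then the three-line combination you wrote.
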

\begin{proof}
From Theorem \ref{thm:deterministic2}, if $\widehat \bSigma_{22}$ satisfies RE(s, 3) and $\lambda \geq 2\|\widehat \bSigma_{21} - \widehat\bSigma_{22}\bbeta\|_{\infty}$, then
$$
\|\bbetah - \bbeta^{*}\|_{2}^{2} \leq C_1 \gamma^2 s \lambda^2.
$$
From Theorem \ref{thm:key}, if $\lambda = C\sqrt{\log(p)/n}$, then $\lambda \geq 2\|\widehat \bSigma_{21} - \widehat\bSigma_{22}\bbeta\|_{\infty}$ holds with probability at least $1 - p^{-1}$. From Lemma~\ref{lem:RE}, if $\bSigma_{22}$ satisfies RE(s,3) with parameter $\gamma$, then with high probability so does $\widehat \bSigma_{22}$ with $\gamma(\widehat \bSigma_{22}) = C_1\gamma(\bSigma_{22})$. Combining these results gives that, with high probability,
$$
\|\bbetah - \bbeta^{*}\|_{2}^{2} \leq C_2 \gamma^2 s \frac{\log p}{n},
$$
leading to the desired bound.
\end{proof}

\subsection{Main supporting lemmas}

\begin{lemma}\label{lem:I1}
Under Assumptions \ref{assumption1}--\ref{assumption:samplesize}, for some constant $C>0$, 
\bfln{
\|G^{-1}_{\tau}(\bT,\bDelta) \circ(\widehat{\bT} - \bT)\bbeta^{*} \|_{\infty}  \leq C\sqrt{\frac{\log(p)}{n}}
}
with probability at least $1-2/p$.
\end{lemma}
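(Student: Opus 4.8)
The plan is to exploit that each coordinate of $G^{-1}_{\tau}(\bT,\bDelta)\circ(\bThat-\bT)\bbeta^{*}$ is the inner product of a fixed, well-controlled weight vector with a vector of centered Kendall-$\tau$ deviations, rather than bounding it crudely by $\|\bSigma_{22}-\bSigmah_{22}\|_{\infty}\|\bbeta^{*}\|_{1}$, which is the step that would introduce the spurious $\sqrt{s}$ factor. Writing $w_{jk}=G^{-1}_{\tau}(\tau_{jk},\Delta_{j},\Delta_{k})\,\beta_{k}^{*}$ for $k\neq j$ and $w_{jj}=0$, the $j$th coordinate equals $\sum_{k}w_{jk}(\widehat{\tau}_{jk}-\tau_{jk})=\langle\bw_{j},\bThat_{j,\cdot}-\bT_{j,\cdot}\rangle$, where $\bThat_{j,\cdot},\bT_{j,\cdot}$ are the $j$th rows. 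The first step is to show $\sup_{j}\|\bw_{j}\|_{2}\leq C$. Under Assumptions~\ref{assumption1}--\ref{assumption2}, the uniform bound on the first derivative of the inverse bridge function from \citet{yoon2020sparse} gives $|G^{-1}_{\tau}(\tau_{jk},\Delta_{j},\Delta_{k})|\leq C_{0}$, hence $\|\bw_{j}\|_{2}\leq C_{0}\|\bbeta^{*}\|_{2}$; and $\|\bbeta^{*}\|_{2}^{2}\leq \bbeta^{*\top}\bSigma_{22}\bbeta^{*}/\lambda_{\min}(\bSigma_{22})=(1-v^{2})/\lambda_{\min}(\bSigma_{22})\leq 1/\lambda_{\min}(\bSigma)\leq C_{\cov}$, the last inequality because a $(p+1)$-dimensional correlation matrix has trace $p+1$, so $\lambda_{\min}(\bSigma)\leq 1\leq\lambda_{\max}(\bSigma)$, and Assumption~\ref{assumption:condition_number} then forces $\lambda_{\min}(\bSigma)\geq\lambda_{\max}(\bSigma)/C_{\cov}\geq 1/C_{\cov}$.

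Next I would view the coordinate as a centered second-order U-statistic,
\[
\langle\bw_{j},\bThat_{j,\cdot}-\bT_{j,\cdot}\rangle=\binom{n}{2}^{-1}\sum_{i<i'}\bigl\{g_{j}(\bx_{i},\bx_{i'})-\E\, g_{j}\bigr\},\qquad g_{j}(\bx,\bx')=\sign(X_{j}-X'_{j})\,\bigl\langle\bw_{j},\sign(\bx-\bx')\bigr\rangle,
\]
where $\sign(\bx-\bx')\in\{-1,0,1\}^{p}$ is taken coordinatewise, and Hoeffding-decompose it into a linear projection term and a degenerate term. The projection term is an empirical mean of i.i.d.\ copies of $\E[g_{j}(\bx,\bx')\mid\bx]$; since $|\sign(X_{j}-X'_{j})|\leq 1$ and, by the sign sub-Gaussian property of Gaussian(-copula) vectors (\citet{barber2018rocket}, Lemma~4.5), $\langle\bw_{j},\sign(\bx-\bx')\rangle$ is sub-Gaussian with variance proxy $\lesssim\lambda_{\max}(\bSigma_{22})\|\bw_{j}\|_{2}^{2}=O(1)$ under Assumption~\ref{assumption:condition_number}, the projection term concentrates at rate $\sqrt{\log(1/\delta)/n}$. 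The degenerate term, a completely degenerate U-statistic with a product-type kernel, is of smaller order $O_{p}(\log(p)/n)=o(\sqrt{\log(p)/n})$ under Assumption~\ref{assumption:samplesize}. Packaging these estimates is precisely the role of the quadratic-form deviation bound for the Kendall-$\tau$ matrix in \citet{barber2018rocket}, Lemma~E.2, which yields $|\langle\bw_{j},\bThat_{j,\cdot}-\bT_{j,\cdot}\rangle|\leq C\|\bw_{j}\|_{2}\sqrt{\log(1/\delta)/n}$ with probability at least $1-\delta$; taking $\delta=2/p^{2}$ and a union bound over $j=1,\dots,p$ gives $\max_{j}|\langle\bw_{j},\bThat_{j,\cdot}-\bT_{j,\cdot}\rangle|\leq C\sqrt{\log(p)/n}$ with probability at least $1-2/p$.

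The hard part is the sign sub-Gaussian step under truncation. In the continuous Gaussian copula case of \citet{barber2018rocket} one has $\sign(X_{j}-X'_{j})=\sign(Z_{j}-Z'_{j})$ with $\bz-\bz'$ jointly Gaussian, so Lemma~4.5 applies verbatim and the relevant variance proxy is controlled by $\lambda_{\max}(\bSigma_{22})\leq C_{\cov}$. Under model~\eqref{eq:model}, zeros create ties and instead $\sign(X_{j}-X'_{j})=\sign\bigl(\max(Z_{j},\Delta_{j})-\max(Z'_{j},\Delta_{j})\bigr)$, i.e.\ the sign of an antisymmetric, coordinatewise nondecreasing function of the latent Gaussian pair; I would verify that the moment generating function bound underlying the sign sub-Gaussian property is preserved under such monotone reparametrizations, with constants depending only on $\varepsilon_{r}$ and $C_{\cov}$. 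A secondary but necessary point is that $\bw_{j}$ depends on the unknown $\tau_{jk},\Delta_{j},\Delta_{k}$; however it enters only as a deterministic weight vector of bounded $\ell_{2}$ norm, so no additional estimation error is incurred at this step—the contributions of replacing $\bDelta$ and $\bT$ by $\bDeltahat$ and $\bThat$ are isolated in the terms $I_{2}$ and $I_{3}$.
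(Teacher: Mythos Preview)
Your approach is correct and essentially the same as the paper's: both rewrite the $j$th coordinate as a bilinear form $\|\bm_j\|_2\cdot\bu^{\top}(\bThat-\bT)\be_j$ with $\|\bu\|_2=1$, bound $\|\bm_j\|_2$ via the uniform bound on $|G^{-1}_{\tau}|$ together with $\|\bbeta^{*}\|_2\leq C_{\cov}$, and then invoke the MGF bound of \citet{barber2018rocket}, Lemma~E.2, followed by a union bound over $j$. The paper cites Lemma~E.2 as a black box and does not spell out the Hoeffding decomposition or your truncation concern; that subtlety---whether the sign sub-Gaussian machinery carries over when $\sign(X_j-X'_j)=\sign(\max(Z_j,\Delta_j)-\max(Z'_j,\Delta_j))$---is left implicit in the paper.
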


\begin{proof}
 Let $\be_{j}\in\R^{p}$ be the vector with 1 in the $j$th component and 0 otherwise. Then, we have
\bfln{
\|G^{-1}_{\tau}(\bT,\bDelta) \circ(\widehat{\bT} - \bT)\bbeta^{*} \|_{\infty} &= \max_{1\leq j \leq p}|\be_{j}^{\top} G^{-1}_{\tau}(\bT,\bDelta) \circ(\widehat{\bT} - \bT)\bbeta^{*} | \\
&= \max_{1\leq j \leq p}
\|\bm_{j}\|_{2} |\bu^{\top} (\widehat{\bT} - \bT)\be_{j}|.
}
where $\bm_{j} = (G^{-1}_{\tau}(\tau_{1j},\Delta_{1},\Delta_{j}) \beta^{*}_{1j},\ldots,G^{-1}_{\tau}(\tau_{pj},\Delta_{p},\Delta_{j}) \beta^{*}_{pj})^{\top}$ and $\bu=\bm_{j}/\|\bm_{j}\|_{2}$ is a deterministic unit vector. 
Since $|G^{-1}_{\tau}|\leq C_{1}$ for some constant $C_{1}>0$ by Theorem 6 of \citet{yoon2020sparse}, we have $\|\bm_{j}\|_{2} \leq C_{1}\|\bbeta^{*}\|_{2} \leq C_{1}C_{\cov}$ by Lemma \ref{lem:beta_norm_bounds}. Therefore, 
\bfln{
\|G^{-1}_{\tau}(\bT,\bDelta) \circ(\widehat{\bT} - \bT)\bbeta^{*} \|_{\infty} \leq C_{1}C_{\cov} \max_{1\leq j \leq p} |\bu^{\top} (\widehat{\bT} - \bT)\be_{j}|.
}
Consider $\bu^{\top} (\widehat{\bT} - \bT)\be_{j}$. By Lemma \ref{lem:u(Th-T)v_subgaussian}, for any $\epsilon>0$ and $0<t\leq n/C_{\cov}$,
\bfln{
\PP\left( \max_{1\leq j \leq p} |\bu^{\top} (\widehat{\bT} - \bT)\be_{j}| \geq \epsilon \right) 
&\leq
p \cdot \PP\left(  |\bu^{\top} (\widehat{\bT} - \bT)\be_{1}| \geq \epsilon \right) \\
&\leq
2p \cdot \PP\left\{  \exp\left( t \cdot \bu^{\top} (\widehat{\bT} - \bT)\be_{1} \right) \geq \exp(t\epsilon) \right\}
\\
&\leq
2p \cdot
\E\left[ \exp \left( t \cdot \bu^{\top} (\widehat{\bT} - \bT)\be_1 \right) \right] \exp(-t\epsilon) \\
&\leq
2p \cdot
\exp\left(\frac{t^2C_{\cov}^2}{n} - t\epsilon \right)\\
&=2\exp\left(
\log(p) + \frac{t^2C_{\cov}^2}{n} - t\epsilon
\right)
.
}
Letting $\epsilon=3C_{\cov}\sqrt{\log(p)/n}$ and $t=\sqrt{n\log(p)}/C_{\cov}$, we have
\bfln{
\PP\left( \max_{1\leq j \leq p} |\bu^{\top} (\widehat{\bT} - \bT)\be_{j}| \leq 3C_{\cov}\sqrt{\frac{\log(p)}{n} }  \right) 
\leq 1 - \frac{2}{p}.
}
Thus, for some constant $C>0$, $\|G^{-1}_{\tau}(\bT,\bDelta) \circ(\widehat{\bT} - \bT)\bbeta^{*} \|_{\infty} \leq C \sqrt{\log(p)/n}$ with probability at least $1-2p^{-1}$.
\end{proof}

\begin{lemma}\label{lem:I2}
Under Assumptions \ref{assumption1}--\ref{assumption:samplesize}, for some constant $C>0$, 
\bfln{
\left\| \left\{ G^{-1}_{\tau}(\widetilde{\bT}, \bDelta) - G^{-1}_{\tau}(\bT, \bDelta) \right\}\circ(\bThat - \bT)\bbeta^{*} \right\|_{\infty}  \leq C\sqrt{\frac{\log(p)}{n}}
}
with probability at least $1-p^{-1}$.
\end{lemma}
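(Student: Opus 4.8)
The plan is to treat $I_2$ as a genuine second-order term in the Kendall's $\tau$ deviations: it depends \emph{quadratically} on $\widehat{\tau}_{jk}-\tau_{jk}$, whereas $I_1$ is only linear, so $I_2$ should be of strictly smaller order and a crude argument suffices. First I would observe that the $(k,j)$ entry of $\{G^{-1}_{\tau}(\widetilde{\bT},\bDelta)-G^{-1}_{\tau}(\bT,\bDelta)\}\circ(\bThat-\bT)$ equals $\{G^{-1}_{\tau}(\widetilde{\tau}_{kj},\Delta_k,\Delta_j)-G^{-1}_{\tau}(\tau_{kj},\Delta_k,\Delta_j)\}(\widehat{\tau}_{kj}-\tau_{kj})$. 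Applying the mean value theorem once more, now to $G^{-1}_{\tau}$ in its $\tau$-argument, the first factor equals $G^{-1}_{\tau\tau}(\bar{\tau}_{kj},\Delta_k,\Delta_j)(\widetilde{\tau}_{kj}-\tau_{kj})$ for some $\bar{\tau}_{kj}$ between $\widetilde{\tau}_{kj}$ and $\tau_{kj}$; and since $\widetilde{\tau}_{kj}$ itself lies between $\widehat{\tau}_{kj}$ and $\tau_{kj}$, we have $|\widetilde{\tau}_{kj}-\tau_{kj}|\le|\widehat{\tau}_{kj}-\tau_{kj}|\le\|\bThat-\bT\|_{\infty}$.

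The essential analytic ingredient is the uniform bound $|G^{-1}_{\tau\tau}|\le C$ on the second derivative of the inverse bridge function for the truncated/truncated case, established elsewhere in the Appendix. Under Assumption~\ref{assumption2} the thresholds are bounded, and on the high-probability event $\mathcal{A}=\{\|\bThat-\bT\|_{\infty}\le C_1\sqrt{\log p/n}\}$ the intermediate points $\bar{\tau}_{kj}$ stay in a fixed compact subinterval of $(-1,1)$ for $n$ large, since $\tau_{kj}=G(\sigma_{kj})$ is bounded away from $\pm1$ by Assumption~\ref{assumption1} and $\widehat{\tau}_{kj}$ is within $C_1\sqrt{\log p/n}$ of $\tau_{kj}$; hence the second-derivative bound applies at every $\bar{\tau}_{kj}$. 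This gives, entrywise, a bound of order $|\widehat{\tau}_{kj}-\tau_{kj}|^{2}$, and after summing against $\bbeta^{*}$ and taking the $\ell_\infty$-norm,
\[
\bigl\|\{G^{-1}_{\tau}(\widetilde{\bT},\bDelta)-G^{-1}_{\tau}(\bT,\bDelta)\}\circ(\bThat-\bT)\bbeta^{*}\bigr\|_{\infty}\;\le\;C\,\|\bThat-\bT\|_{\infty}^{2}\,\|\bbeta^{*}\|_{1}.
\]

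To conclude, I would bound $\|\bbeta^{*}\|_{1}\le\sqrt{s}\,\|\bbeta^{*}\|_{2}\le\sqrt{s}\,C_{\cov}$ via Assumption~\ref{assumption:sparsity} and Lemma~\ref{lem:beta_norm_bounds}, and $\|\bThat-\bT\|_{\infty}\le C_1\sqrt{\log p/n}$ with probability at least $1-p^{-1}$ by Hoeffding's inequality for U-statistics with bounded kernels together with a union bound over the $\binom{p}{2}$ entries. Combining on $\mathcal{A}$,
\[
I_2\;\le\;C\,C_{\cov}\,\frac{\log p}{n}\,\sqrt{s}\;=\;C\,C_{\cov}\,\sqrt{\frac{\log p}{n}}\cdot\sqrt{\frac{s\log p}{n}}\;\le\;C'\sqrt{\frac{\log p}{n}},
\]
where the last step uses Assumption~\ref{assumption:samplesize} ($s\log p=o(n)$, so $\sqrt{s\log p/n}=o(1)$ is bounded for $n$ large). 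The genuinely hard part is not this lemma but the ingredient it imports: proving a \emph{uniform} bound on $G^{-1}_{\tau\tau}$ in the truncated/truncated case with no closed form available for $G^{-1}$. Granting that, the present lemma is simply the observation that one power of the quadratic Kendall's $\tau$ error absorbs the $\sqrt{s}$ produced by $\|\bbeta^{*}\|_{1}$, which is precisely why this second-order remainder does not degrade the optimal $\sqrt{\log p/n}$ rate.
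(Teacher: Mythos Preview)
Your proposal is correct and follows essentially the same route as the paper: mean value theorem on $G^{-1}_{\tau}$ to produce $G^{-1}_{\tau\tau}$, the uniform bound on $|G^{-1}_{\tau\tau}|$ from the Appendix, H\"older to pull out $\|\bbeta^{*}\|_{1}\le\sqrt{s}\,C_{\cov}$, the concentration bound $\|\bThat-\bT\|_{\infty}^{2}\le C\log p/n$, and finally Assumption~\ref{assumption:samplesize} to absorb the extra $\sqrt{s\log p/n}$ factor. If anything you are slightly more careful than the paper in arguing that the intermediate points $\bar{\tau}_{kj}$ remain in the domain where the second-derivative bound is valid, a point the paper leaves implicit.
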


\begin{proof}
By the mean value theorem, for some $\bar{\bT}=[\bar{\tau}_{jk}]_{1 \leq j,k \leq p}$, where $\bar{\tau}_{jk} \in (\widetilde{\tau}_{jk},\tau_{jk})$, we have
\bfln{
\| \{ G^{-1}_{\tau}(\widetilde{\bT}, \bDelta) - G^{-1}_{\tau}(\bT, \bDelta)\}\circ(\bThat - \bT)\bbeta^{*}\|_{\infty} 
=
\| G^{-1}_{\tau\tau}(\bar{\bT}, \bDelta) \circ(\widetilde{\bT} - \bT) \circ(\bThat - \bT)\bbeta^{*}\|_{\infty},
}
where $G^{-1}_{\tau\tau}$ is the 2nd partial derivative of inverse bridge function with respect to $\tau$. By Lemma~\ref{lem:Ginv_second_tau_tau},  $|G^{-1}_{\tau\tau}| \leq C_{1}$ for some constant $C_{1}>0$.  Since $|\bar{\tau}_{jk} - \tau_{jk}|\leq |\tilde{\tau}_{jk} - \tau_{jk}|\leq |\widehat{\tau}_{jk} - \tau_{jk}|$, by H\"older's inequality,
\bfln{
\| \{ G^{-1}_{\tau}(\widetilde{\bT}, \bDelta) - G^{-1}_{\tau}(\bT, \bDelta)\}\circ(\bThat - \bT)\bbeta^{*}\|_{\infty} 
&\leq
\| G^{-1}_{\tau\tau}(\bar{\bT}, \bDelta) \circ(\widetilde{\bT} - \bT) \circ(\bThat - \bT)\|_{\infty} \|\bbeta^{*}\|_{1} \\
&\leq 
C_{1} \|\widetilde{\bT} - \bT \|_{\infty} \|\widehat{\bT} - \bT \|_{\infty}  \| \bbeta^{*} \|_{1} \\
&\leq
C_{1} \|\widehat{\bT} - \bT \|_{\infty}^{2}  \| \bbeta^{*} \|_{1}.
}
By Lemma \ref{lem:Tau_infinity_norm_bound}, $\|\widehat{\bT} - \bT \|_{\infty}^{2}\leq C_{2}\log(p)/n$ with probability at least $1-p^{-1}$, and by Lemma \ref{lem:beta_norm_bounds}, $\|\bbeta^{*}\|_{1} \leq \sqrt{s}C_{\cov}$. Thus, under Assumption \ref{assumption:samplesize}, for sufficiently large $n$ and some constant $C>0$,
\bfln{
\|I_{2}\|_{\infty} 
\leq C \frac{\sqrt{s} \log( p )}{n} 
\leq C \sqrt{\frac{\log(p)}{n}}.
}
with probability at least $1-p^{-1}$.
\end{proof}

\begin{lemma}\label{lem:I3}
Under Assumptions \ref{assumption1}--\ref{assumption:samplesize}, for some constant $C>0$, 
\bfln{
\left\| \left\{G^{-1}(\bThat, \bDeltahat) - G^{-1}(\bThat, \bDelta)\right\}\bbeta^{*} \right\|_{\infty}  \leq C\sqrt{\frac{\log(p)}{n}}
}
with probability at least $1-p^{-1}$.
\end{lemma}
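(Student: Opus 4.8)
\textbf{Proof plan for Lemma~\ref{lem:I3}.}
The plan is to control $I_3 = \|\{G^{-1}(\bThat,\bDeltahat) - G^{-1}(\bThat,\bDelta)\}\bbeta^{*}\|_{\infty}$ by a first-order Taylor expansion in the \emph{threshold} arguments, isolating two linear terms and a quadratic remainder. Fix a row index $j$. For each column $k$ write
\[
G^{-1}(\widehat{\tau}_{jk},\widehat{\Delta}_j,\widehat{\Delta}_k) - G^{-1}(\widehat{\tau}_{jk},\Delta_j,\Delta_k) = G^{-1}_{\Delta_j}(\widehat{\tau}_{jk},\Delta_j,\Delta_k)(\widehat{\Delta}_j-\Delta_j) + G^{-1}_{\Delta_k}(\widehat{\tau}_{jk},\Delta_j,\Delta_k)(\widehat{\Delta}_k-\Delta_k) + R_{jk},
\]
where, by the bounds on the second derivatives of the inverse bridge function (the $\Delta\Delta$ and $\tau\Delta$ analogues of Lemma~\ref{lem:Ginv_second_tau_tau}), $|R_{jk}| \le C(\widehat{\Delta}_j-\Delta_j)^2 + C(\widehat{\Delta}_j-\Delta_j)(\widehat{\Delta}_k-\Delta_k) + C(\widehat{\Delta}_k-\Delta_k)^2$. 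Summing against $\bbeta^{*}$, the remainder contributes at most $C\|\bDeltahat-\bDelta\|_{\infty}^{2}\|\bbeta^{*}\|_{1}$ to the $j$th coordinate. I then invoke a uniform concentration bound $\|\bDeltahat-\bDelta\|_{\infty} \le C\sqrt{\log(p)/n}$ with probability at least $1-p^{-1}$ (Hoeffding for $\widehat{\pi}_j-\pi_j$, a union bound over $j$, and local Lipschitzness of $\Phi^{-1}$ near $\pi_j$, which is bounded away from $0$ and $1$ by Assumption~\ref{assumption2}), together with $\|\bbeta^{*}\|_{1} \le \sqrt{s}\,C_{\cov}$ from Lemma~\ref{lem:beta_norm_bounds}, so the remainder term is $\le C\sqrt{s}\log(p)/n \le C\sqrt{\log(p)/n}$ under Assumption~\ref{assumption:samplesize}, exactly as in the proof of Lemma~\ref{lem:I2}.

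The crux is the two linear terms, where the naive $\ell_\infty$–$\ell_1$ bound $|\widehat{\Delta}_j-\Delta_j|\sum_k |G^{-1}_{\Delta_j}|\,|\beta^{*}_k|$ is of order $\sqrt{\log(p)/n}\cdot\sqrt{s}$, off by a factor $\sqrt{s}$. The key observation that removes it is that $G^{-1}(0,\Delta_j,\Delta_k)\equiv 0$ as a function of $(\Delta_j,\Delta_k)$ (a latent correlation of $0$ means independence regardless of the thresholds), hence the threshold partials vanish at $\tau=0$, and boundedness of $G^{-1}_{\tau\Delta_j},G^{-1}_{\tau\Delta_k}$ gives $|G^{-1}_{\Delta_j}(\widehat{\tau}_{jk},\Delta_j,\Delta_k)|\le C|\widehat{\tau}_{jk}|$ and likewise for $\Delta_k$. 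Splitting $\widehat{\tau}_{jk}=\tau_{jk}+(\widehat{\tau}_{jk}-\tau_{jk})$ and using $|\tau_{jk}|\le C|\sigma_{jk}|$ (again from $G(0,\cdot,\cdot)=0$ with bounded $G_\sigma$), the deterministic part of $\sum_k |G^{-1}_{\Delta_j}(\widehat{\tau}_{jk},\cdot)|\,|\beta^{*}_k|$ is bounded by Cauchy--Schwarz by $C\|\bSigma_{22}\be_j\|_{2}\|\bbeta^{*}\|_{2}\le C\lambda_{\max}(\bSigma_{22})\|\bbeta^{*}\|_{2}$, which is $O(1)$ since $\lambda_{\max}(\bSigma_{22})\le\lambda_{\max}(\bSigma)=O(1)$ by Assumption~\ref{assumption:condition_number} (as in Lemma~\ref{lem:beta_norm_bounds}), while the fluctuation part is $\le C\|\bThat-\bT\|_{\infty}\|\bbeta^{*}\|_{1}$, of order $\sqrt{s\log(p)/n}=o(1)$ by Lemma~\ref{lem:Tau_infinity_norm_bound}. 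The same argument handles the $\widehat{\Delta}_k$ term: pull out $\|\bDeltahat-\bDelta\|_{\infty}$ and bound $\sum_k |G^{-1}_{\Delta_k}(\widehat{\tau}_{jk},\cdot)|\,|\beta^{*}_k|\le C\sum_k(|\sigma_{jk}|+\|\bThat-\bT\|_{\infty})|\beta^{*}_k|$, which is again $O(1)$. Multiplying the resulting $O(1)$ factors by $|\widehat{\Delta}_j-\Delta_j|\le C\sqrt{\log(p)/n}$ (respectively $\|\bDeltahat-\bDelta\|_{\infty}$) gives the bound for coordinate $j$; a union bound over the $p$ rows absorbs only into constants.

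The step I expect to be the main obstacle is the structural derivative bound $|G^{-1}_{\Delta}(\tau,\Delta_j,\Delta_k)|\le C|\tau|$ uniformly over the admissible region: although $G^{-1}(0,\cdot,\cdot)=0$ is conceptually immediate, using it requires comparing the inverse bridge function --- which has no closed form in the truncated/truncated case --- against its value at $\tau=0$, so it hinges on uniform boundedness of the mixed second derivatives $G^{-1}_{\tau\Delta}$ over the closed region carved out by Assumptions~\ref{assumption1}--\ref{assumption2}, which is precisely the implicit-function-theorem computation forming the heavy technical core of the appendix (analogous to Lemma~\ref{lem:Ginv_second_tau_tau}). Everything else is a routine chaining of triangle and Hölder/Cauchy--Schwarz inequalities against the already-established element-wise rates for $\bThat$ and $\bDeltahat$.
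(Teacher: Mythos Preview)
Your plan is correct and, for the quadratic remainder, identical in spirit to the paper's (and to Lemma~\ref{lem:I2}). The genuine difference is in the two linear terms. The paper reparametrizes $\Delta_j=\Phi^{-1}(\pi_j)$, applies the mean-value theorem, and then adds and subtracts $G^{-1}_{\pi_1}(\tau_{jk},\pi_j,\pi_k)(\widehat{\pi}_j-\pi_j)$ to recenter the first partial at the \emph{true} parameters; the recentering error becomes three more second-order terms, and the surviving first-order pieces $G^{-1}_{\pi_1}(\bT,\bpi)\circ(\bPihat-\bPi)\bbeta^{*}$ and $G^{-1}_{\pi_2}(\bT,\bpi)\circ(\bPihat-\bPi)^{\top}\bbeta^{*}$ are handled via the sub-Gaussian concentration of $\bpihat-\bpi$ (Lemma~\ref{lem:pi_subgaussian}), in the style of Lemma~\ref{lem:I1}. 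You instead exploit the structural identity $G^{-1}(0,\cdot,\cdot)\equiv0$, hence $|G^{-1}_{\Delta}(\tau,\cdot)|\le C|\tau|\le C'|\sigma|$, which converts the dangerous $\ell_1$ sum $\sum_k|G^{-1}_{\Delta}(\widehat{\tau}_{jk},\cdot)||\beta^{*}_k|$ into $\sum_k|\sigma_{jk}||\beta^{*}_k|\le\|\bSigma_{22}\be_j\|_{2}\|\bbeta^{*}\|_{2}\le C_{\cov}^{2}$. This buys you a fully deterministic treatment of the linear terms that needs only the elementwise rate $\|\bDeltahat-\bDelta\|_{\infty}\le C\sqrt{\log p/n}$, and it handles the $\Delta_j$ and $\Delta_k$ contributions symmetrically; in particular, for the $\pi_j$ linear term the paper's sub-Gaussian trick does not by itself bound the scalar $\sum_kG^{-1}_{\pi_1}(\tau_{jk},\cdot)\beta^{*}_k$, and it is precisely your structural observation that shows this quantity is $O(1)$. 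The two extra facts you invoke --- $|G_r|\le C$ (so $|\tau_{jk}|\le C|\sigma_{jk}|$) and $\lambda_{\max}(\bSigma_{22})\le C_{\cov}$ --- are not stated as separate lemmas but follow routinely from the density estimates in Section~\ref{suppl:subsec:bridge_bound} and from $\bSigma$ being a correlation matrix with bounded condition number.
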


\begin{proof}
We reparameterize $\Delta_{j}=\Phi^{-1}(\pi_{j})$ and write $G^{-1}(\bT,\bDelta) = G^{-1}(\bT,\bpi)$, where $\bpi=(\pi_{1},\ldots,\pi_{p})^{\top}$. We also write $G^{-1}_{\pi_{1}} = \partial G^{-1}(\tau,\pi_{1},\pi_{2})/\partial \pi_{1}$ and $G^{-1}_{\pi_{2}} = \partial G^{-1}(\tau,\pi_{1},\pi_{2})/\partial \pi_{2}$. For each element of $G^{-1}(\bThat,\bpihat) - G^{-1}(\bThat,\bpi)$, the multivariate mean value theorem gives
\bfln{
G^{-1}(\widehat{\tau}_{jk},\widehat{\pi}_{j},\widehat{\pi}_{k}) - G^{-1}(\widehat{\tau}_{jk},\pi_{j},\pi_{k})
=
\underbrace{ G^{-1}_{\pi_{1}}(\widehat{\tau}_{jk},\widetilde{\pi}_{j},\widetilde{\pi}_{k})(\pihat_{j} - \pi_{j})}_{\coloneqq I_{3,1}}
+
\underbrace{ G^{-1}_{\pi_{2}}(\widehat{\tau}_{jk},\widetilde{\pi}_{j},\widetilde{\pi}_{k}) (\pihat_{k}-\pi_{k}) }_{\coloneqq I_{3,2}},
}
for some $\widetilde{\pi}_{j} \in (\widehat{\pi}_{j}, \pi_{j})$ and $\widetilde{\pi}_{k} \in (\widehat{\pi}_{k}, \pi_{k})$, respectively.

Consider $I_{3,1}$. By adding and subtracting $G^{-1}_{\pi_{1}}({\tau}_{jk},{\pi}_{j},{\pi}_{k})(\pihat_{j} - \pi_{j})$ to $I_{3,1}$, we have that
\bfln{
G^{-1}_{\pi_{1}}(\widehat{\tau}_{jk},\widetilde{\pi}_{j},\widetilde{\pi}_{k})(\pihat_{j} - \pi_{j}) 
&=
\left\{
G^{-1}_{\pi_{1}}(\widehat{\tau}_{jk},\widetilde{\pi}_{j},\widetilde{\pi}_{k})
-G^{-1}_{\pi_{1}}({\tau}_{jk},{\pi}_{j},{\pi}_{k})
+G^{-1}_{\pi_{1}}({\tau}_{jk},{\pi}_{j},{\pi}_{k})
\right\}
(\pihat_{j} - \pi_{j}) \\
&=
\left\{
G^{-1}_{\pi_{1}}(\widehat{\tau}_{jk},\widetilde{\pi}_{j},\widetilde{\pi}_{k})
-G^{-1}_{\pi_{1}}({\tau}_{jk},{\pi}_{j},{\pi}_{k})
\right\}(\pihat_{j} - \pi_{j})
+G^{-1}_{\pi_{1}}({\tau}_{jk},{\pi}_{j},{\pi}_{k})
(\pihat_{j} - \pi_{j}).
}
By applying the multivariate mean value theorem to $\{
G^{-1}_{\pi_{1}}(\widehat{\tau}_{jk},\widetilde{\pi}_{j},\widetilde{\pi}_{k})
-G^{-1}_{\pi_{1}}({\tau}_{jk},{\pi}_{j},{\pi}_{k})\}$, we further have that
\bfln{
G^{-1}_{\pi_{1}}(\widehat{\tau}_{jk},\widetilde{\pi}_{j},\widetilde{\pi}_{k})(\pihat_{j} - \pi_{j}) 
=&
G^{-1}_{\pi_{1}\tau}(\bar{\tau}_{jk},\bar{\pi}_{j},\bar{\pi}_{k}) (\widehat{\tau} - \tau)(\widehat{\pi}_{j} - \pi_{j}) +
G^{-1}_{\pi_{1}\pi_{1}}(\bar{\tau}_{jk},\bar{\pi}_{j},\bar{\pi}_{k}) (\widetilde{\pi}_{j} - \pi_{j})(\widehat{\pi}_{j} - \pi_{j}) \\
&+
G^{-1}_{\pi_{1}\pi_{2}}(\bar{\tau}_{jk},\bar{\pi}_{j},\bar{\pi}_{k}) (\widetilde{\pi}_{k} - \pi_{k})(\widehat{\pi}_{j} - \pi_{j}) +
G^{-1}_{\pi_{1}}({\tau}_{jk},{\pi}_{j},{\pi}_{k})(\pihat_{j} - \pi_{j})
}
for some $\bar{\tau}_{jk} \in (\widehat{\tau}_{jk},\tau_{jk})$, $\bar{\pi}_{j} \in (\widetilde{\pi}_{j}, \pi_{j})$, and $\bar{\pi}_{k} \in (\widetilde{\pi}_{k}, \pi_{k})$.
Thus, by the triangle inequality,
\bfl{
\label{inlem:I31_1}
\| G^{-1}_{\pi_{1}}(\bThat,\widetilde{\bpi})\circ(\bPihat - \bPi) \bbeta^{*} \|_{\infty}
\leq &
\|G^{-1}_{\pi_{1}\tau}(\bar{\bT},\bar{\bpi}) \circ (\bThat - \bT) \circ (\bPihat - \bPi) \bbeta^{*} \|_{\infty} \\
\label{inlem:I31_2}
&+
\|G^{-1}_{\pi_{1}\pi_{1}}(\bar{\bT},\bar{\bpi})  \circ (\widetilde{\bPi} - \bPi)\circ (\widehat{\bPi} - \bPi) \bbeta^{*}\|_{\infty} \\
\label{inlem:I31_3}
&+
\|G^{-1}_{\pi_{1}\pi_{2}}(\bar{\bT},\bar{\bpi}) \circ (\bPihat^{\top} - \bPi^{\top})\circ (\widehat{\bPi} - \bPi)\bbeta^{*} \|_{\infty} \\
\label{inlem:I31_4}
&+
\|G^{-1}_{\pi_{1}}(\bT,\bpi)\circ(\bPihat - \bPi)\bbeta^{*} \|_{\infty},
}
where $\bPi = \bpi \ones_{p}^{\top}$ and $\bPihat = \bpihat \ones_{p}^{\top}$. We consider \eqref{inlem:I31_1}--\eqref{inlem:I31_3} and \eqref{inlem:I31_4} separately as follows.

For \eqref{inlem:I31_1}--\eqref{inlem:I31_3}, we know that $|G^{-1}_{\pi_{1}\tau}|$, $|G^{-1}_{\pi_{1}\pi_{1}}|$, and $|G^{-1}_{\pi_{1}\pi_{2}}|$ are bounded above by some positive constants by Lemmas \ref{lem:Ginv_second_tau_pij}, \ref{lem:Ginv_second_pij_pij},  \ref{lem:Ginv_second_pik_pij}, respectively. Also, for some constant $C_{1},C_{2}>0$, $\|\bpihat - \bpi\|_{\infty} \leq C_{1} \sqrt{\log(p)/n}$ and $\|\bThat - \bT\|_{\infty} \leq C_{2} \sqrt{\log(p)/n}$ with probability at least $1-p^{-1}$ by Lemmas \ref{lem:pi_subgaussian} and \ref{lem:Tau_infinity_norm_bound}. Since $|\widetilde{\pi}_{j}-\pi_{j}| \leq |\widehat{\pi}_{j}-\pi_{j}|$, $\|\widetilde{\bPi} - \bPi\|_{\infty} \leq \|\bPihat - \bPi\|_{\infty}$ and
\bfln{
\|\bPihat - \bPi\|_{\infty} = \|\bPihat^{\top} - \bPi^{\top}\|_{\infty} = \|\bpihat - \bpi\|_{\infty}.
}
Hence, we can show that, for some constant $C>0$, \eqref{inlem:I31_1}--\eqref{inlem:I31_3} are bounded above by $C\sqrt{\log(p)/n}$ with probability at least $1-p^{-1}$ by following the steps of Lemma \ref{lem:I1}. 

For \eqref{inlem:I31_4}, we know that $|G^{-1}_{\pi_{1}}|$ is bounded above by some positive constant by Lemma \ref{lem:Ginv_first_pi}. Thus, by following the steps of Lemma \ref{lem:I2}, we can show that, for some constant $C>0$, \eqref{inlem:I31_4} is bounded above by $C\sqrt{\log(p)/n}$ with probability at least $1-p^{-1}$. By combining these results, we have that, for some constant $C>0$,
\bfl{\label{inlem:I3_first}
\| G^{-1}_{\pi_{1}}(\bThat,\widetilde{\bpi})\circ(\bPihat - \bPi) \bbeta^{*} \|_{\infty}
\leq C \sqrt{\frac{\log(p)}{n}}
}
with probability at least $1-p^{-1}$.

By symmetrically applying above steps to $I_{3,2}$, we also have that, for some constant $C>0$,
\bfl{\label{inlem:I3_second}
\| G^{-1}_{\pi_{2}}(\bThat,\widetilde{\bpi})\circ(\bPihat - \bPi)^{\top} \bbeta^{*} \|_{\infty}
\leq C \sqrt{\frac{\log(p)}{n}}
}
with probability at least $1-p^{-1}$. Combining \eqref{inlem:I3_first} and \eqref{inlem:I3_second} completes the proof.

\end{proof}

\begin{lemma}\label{lem:beta_norm_bounds}
Let $\bbeta^{*}=\bSigma_{22}^{-1}\bSigma_{21}$. Under Assumptions \ref{assumption:condition_number}--\ref{assumption:sparsity}
\bfln{
\|\bbeta^{*}\|_{2} <C_{\cov} \quad \text{and} \quad \|\bbeta^{*}\|_{1} < \sqrt{s} C_{\cov}.
}
\end{lemma}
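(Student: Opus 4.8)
The plan is to bound $\|\bbeta^{*}\|_{2}$ by transferring the condition-number control on the full joint latent correlation matrix $\bSigma$ to a lower bound on $\lambda_{\min}(\bSigma_{22})$, and then to deduce the $\ell_{1}$ bound from sparsity via Cauchy--Schwarz. Throughout recall that $\bSigma$ is the $(1+p)\times(1+p)$ correlation matrix of $(Z_{y},\bz)$, so $\bSigma_{11}=1$, $\bSigma_{21}=\cov(\bz,Z_{y})\in\R^{p}$, and $\bSigma_{22}=\cov(\bz)\in\R^{p\times p}$.

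First I would record two elementary facts. Since $\bSigma$ is a correlation matrix, $\Tr(\bSigma)=p+1$, so $\lambda_{\max}(\bSigma)\ge \Tr(\bSigma)/(p+1)=1$; combined with Assumption~\ref{assumption:condition_number} this gives $\lambda_{\min}(\bSigma)\ge \lambda_{\max}(\bSigma)/C_{\cov}\ge 1/C_{\cov}>0$, so in particular $\bSigma$ is positive definite. Next, since $\bSigma_{22}$ is the principal submatrix of $\bSigma$ obtained by deleting the row and column of $Z_{y}$, Cauchy eigenvalue interlacing yields $\lambda_{\min}(\bSigma_{22})\ge \lambda_{\min}(\bSigma)\ge 1/C_{\cov}$.

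Now I would bound $\|\bbeta^{*}\|_{2}$ directly. Writing $\bbeta^{*}=\bSigma_{22}^{-1}\bSigma_{21}$ and substituting $\bu=\bSigma_{22}^{-1/2}\bSigma_{21}$, we get $\|\bbeta^{*}\|_{2}^{2}=\bSigma_{21}^{\top}\bSigma_{22}^{-2}\bSigma_{21}=\bu^{\top}\bSigma_{22}^{-1}\bu\le \lambda_{\min}(\bSigma_{22})^{-1}\|\bu\|_{2}^{2}=\lambda_{\min}(\bSigma_{22})^{-1}\,\bSigma_{21}^{\top}\bSigma_{22}^{-1}\bSigma_{21}$. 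The scalar $\bSigma_{21}^{\top}\bSigma_{22}^{-1}\bSigma_{21}$ equals $1-v^{2}$, where $v^{2}=\var(Z_{y}\mid\bz)$ is the Schur complement $\bSigma_{11}-\bSigma_{21}^{\top}\bSigma_{22}^{-1}\bSigma_{21}$; positive definiteness of $\bSigma$ forces $v^{2}>0$, hence $\bSigma_{21}^{\top}\bSigma_{22}^{-1}\bSigma_{21}<1$. Therefore $\|\bbeta^{*}\|_{2}^{2}<\lambda_{\min}(\bSigma_{22})^{-1}\le C_{\cov}$, and since a condition number is always at least $1$ we have $C_{\cov}\ge1$, so $\|\bbeta^{*}\|_{2}<\sqrt{C_{\cov}}\le C_{\cov}$. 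Finally, by Assumption~\ref{assumption:sparsity} the vector $\bbeta^{*}$ is supported on $\mathcal S$ with $\card(\mathcal S)=s$, so Cauchy--Schwarz gives $\|\bbeta^{*}\|_{1}=\|\bbeta^{*}_{\mathcal S}\|_{1}\le\sqrt{s}\,\|\bbeta^{*}_{\mathcal S}\|_{2}=\sqrt{s}\,\|\bbeta^{*}\|_{2}<\sqrt{s}\,C_{\cov}$.

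There is no deep obstacle here; the only step requiring any care is the reduction from the assumed condition number of the \emph{full} matrix $\bSigma$ to a usable lower bound on $\lambda_{\min}(\bSigma_{22})$. This hinges on two observations that are easy to miss: the trace normalization of a correlation matrix (which pins $\lambda_{\max}(\bSigma)\ge1$ and hence turns the condition-number bound into an absolute lower bound on $\lambda_{\min}(\bSigma)$), and Cauchy interlacing (which pushes that bound down to $\bSigma_{22}$); the bound $\bSigma_{21}^{\top}\bSigma_{22}^{-1}\bSigma_{21}\le1$ then comes for free from $\bSigma$ being a valid positive semidefinite correlation matrix.
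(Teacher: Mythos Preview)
Your argument is correct. It is close in spirit to the paper's proof but differs in one ingredient, and in fact yields a slightly sharper intermediate bound.

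Both proofs reduce the condition-number hypothesis on the full $\bSigma$ to a lower bound on $\lambda_{\min}(\bSigma_{22})$ via Cauchy interlacing, and both finish the $\ell_{1}$ bound by Cauchy--Schwarz using the sparsity assumption. The difference is in how the ``numerator'' is controlled. The paper uses the submultiplicative bound $\|\bbeta^{*}\|_{2}\le\|\bSigma_{22}^{-1}\|_{\textup{op}}\|\bSigma_{21}\|_{2}$ and then bounds $\|\bSigma_{21}\|_{2}<\lambda_{\max}(\bSigma)$ by applying the operator norm to the first column of $\bSigma$, so the ratio $\lambda_{\max}(\bSigma)/\lambda_{\min}(\bSigma)\le C_{\cov}$ appears directly. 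You instead write $\|\bbeta^{*}\|_{2}^{2}\le\lambda_{\min}(\bSigma_{22})^{-1}\,\bSigma_{21}^{\top}\bSigma_{22}^{-1}\bSigma_{21}$ and bound the quadratic form by $1$ via the Schur complement (positive definiteness of $\bSigma$). This gives $\|\bbeta^{*}\|_{2}<\sqrt{C_{\cov}}$, which you then relax to $C_{\cov}$; so your route is marginally tighter and avoids any explicit appeal to $\lambda_{\max}(\bSigma)$ beyond the trace fact $\lambda_{\max}(\bSigma)\ge1$ used to convert the condition-number bound into an absolute lower bound on $\lambda_{\min}(\bSigma)$.
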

\begin{proof}
Under Assumption \ref{assumption:condition_number}
\bfln{
\|\bbeta^{*}\|_{2} = \|\bSigma_{22}^{-1}\bSigma_{21}\|_{2}\leq  \|\bSigma_{22}^{-1}\|_{\textup{op}}\|\bSigma_{21}\|_{2}.
}
At the same time, using  $\be_{1}=(1,0,\ldots,0)^{\top}$, 
$$
\lambda_{\max}(\bSigma)\geq1\geq \|\bSigma \be_{1}\|_{2} =\sqrt{1+\|\bSigma_{21}\|_{2}^{2}} > \|\bSigma_{21}\|_2.
$$
Since $\|\bSigma_{22}^{-1}\|_{\textup{op}}= \{\lambda_{\min}(\bSigma)\}^{-1}$, we have that
\bfln{
\|\bbeta^{*}\|_{2} &\leq \|\bSigma_{22}^{-1}\|_{\textup{op}}\|\bSigma_{21}\|_{2}< \frac{\lambda_{\max}(\bSigma)}{\lambda_{\min}(\bSigma)} 
\leq C_{\cov}.
}
Under Assumption \ref{assumption:sparsity}, it follows
\bfln{
\|\bbeta^{*}\|_{1} &\leq  \sqrt{s} \|\bbeta^{*}\|_{2} <  \sqrt{s}C_{\cov}.
}
\end{proof}

\begin{lemma}\label{lem:pi_subgaussian}
For $j=1,\ldots,p$, let $\pi_{j} = \Phi(\Delta_{j})$ and $\pihat_{j}=n^{-1}\sum_{i=1}^{n}1(X_{ij}=0)$, where $\E( \pihat_{j} ) = \pi_{j}$. Also, let $\bpi=(\bpi_{1},\ldots,\bpi_{p})^{\top}$ and $\widehat{\bpi}=(\widehat{\bpi}_{1},\ldots,\widehat{\bpi}_{p})^{\top}$. Then for any deterministic $\|\bu\|_{2}=1$ and $t$,
\bfln{
\E \left\{ \exp(t \bu^{\top} (\bpihat - \bpi)) \right\} \leq \exp\left( \frac{t^2C}{n} \right)
}
for some constant $C>0$.
\end{lemma}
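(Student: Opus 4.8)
The plan is to write $\bu^{\top}(\bpihat - \bpi)$ as an average of $n$ i.i.d.\ mean-zero terms and, after recentering, to identify each term with a linear functional of a Gaussian \emph{sign} vector, so that the sign sub-Gaussian estimate of \citet{barber2018rocket} applies with a variance proxy controlled by $\lambda_{\max}(\bSigma_{22})$. Concretely, set $\xi_{i} = \bu^{\top}(\ba_{i} - \bpi)$ with $\ba_{i} = (1(X_{i1}=0),\ldots,1(X_{ip}=0))^{\top}$, so that $\bu^{\top}(\bpihat - \bpi) = n^{-1}\sum_{i=1}^{n}\xi_{i}$ with the $\xi_{i}$ i.i.d.\ and $\E\xi_{i}=0$; since the observations are independent, $\E[\exp(t\,\bu^{\top}(\bpihat-\bpi))] = \big(\E[\exp((t/n)\xi_{1})]\big)^{n}$. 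By Definition~\ref{def:LNPN_BT} and strict monotonicity of the $f_{j}$, $1(X_{1j}=0) = 1(Z_{1j}\le\Delta_{j})$ with $\bz_{1}=(Z_{11},\ldots,Z_{1p})^{\top}\sim \N_{p}(\zeros,\bSigma_{22})$; as ties have probability zero, $1(Z_{1j}\le\Delta_{j}) = \tfrac12\{1-\sign(Z_{1j}-\Delta_{j})\}$ and $\E\,\sign(Z_{1j}-\Delta_{j}) = 1-2\pi_{j}$, whence $\xi_{1} = -\tfrac12\,\bu^{\top}\{\sign(\bz_{1}-\bDelta)-\E\,\sign(\bz_{1}-\bDelta)\}$ with $\sign$ taken componentwise.

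Next I would apply the sign sub-Gaussian property. The vector $\bz_{1}-\bDelta$ is Gaussian with covariance $\bSigma_{22}$, so by \citet{barber2018rocket}[Lemma~4.5], which depends on the Gaussian only through its covariance and is therefore unaffected by the deterministic shift $\bDelta$, there is a universal $C_{0}>0$ with $\E[\exp(s\,\bu^{\top}\{\sign(\bz_{1}-\bDelta)-\E\,\sign(\bz_{1}-\bDelta)\})]\le\exp(C_{0}\,\lambda_{\max}(\bSigma_{22})\,s^{2})$ for all $s\in\R$ and the unit vector $\bu$. Since $\bSigma_{22}$ is a principal submatrix of the correlation matrix $\bSigma$ (unit diagonal, so $\Tr(\bSigma)$ equals its dimension and $\lambda_{\min}(\bSigma)\le1$), Assumption~\ref{assumption:condition_number} gives $\lambda_{\max}(\bSigma_{22})\le\lambda_{\max}(\bSigma)\le C_{\cov}\lambda_{\min}(\bSigma)\le C_{\cov}$. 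Hence $\E[\exp(s\xi_{1})]\le\exp(C_{1}s^{2})$ for all $s$ with $C_{1}=C_{0}C_{\cov}/4$, and taking $s=t/n$ and the $n$th power yields $\E[\exp(t\,\bu^{\top}(\bpihat-\bpi))]\le\exp(C_{1}t^{2}/n)$, which is the claim with $C=C_{1}$.

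The main obstacle is the dimension-free control of $\xi_{1}$ in the step above: a direct Hoeffding bound only gives variance proxy $O(\|\bu\|_{1}^{2})=O(p)$, because the indicators $1(X_{ij}=0)$ are dependent and the range of $\bu^{\top}\ba_{i}$ is not controlled independently of $p$; the $p$-free bound comes precisely from the Gaussian sign sub-Gaussian estimate, and the one item requiring care is its extension from the zero-threshold form in \citet{barber2018rocket} to arbitrary thresholds $\Delta_{j}$, i.e.\ to signs of a shifted Gaussian --- this is routine, since the recentering has already removed the mean and the variance proxy in that estimate does not see the threshold. (If the cited estimate is only available for $|s|$ bounded, the product identity still delivers the stated bound for all $|t|\lesssim n$, which suffices for the downstream applications under Assumption~\ref{assumption:samplesize}.)
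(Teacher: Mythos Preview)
Your proof is correct and follows essentially the same route as the paper: write $1(X_{ij}=0)=\tfrac12\{1-\sign(Z_{ij}-\Delta_j)\}$, invoke the sign sub-Gaussian lemma of \citet{barber2018rocket} (Lemma~4.5) for the shifted Gaussian $\bz_i-\bDelta$, and use the i.i.d.\ product structure to pick up the factor $1/n$. The only minor discrepancy is that the cited lemma gives variance proxy equal to the condition number $\mathsf{C}(\bSigma_{22})=\lambda_{\max}(\bSigma_{22})/\lambda_{\min}(\bSigma_{22})$ rather than $\lambda_{\max}(\bSigma_{22})$, but by eigenvalue interlacing $\mathsf{C}(\bSigma_{22})\le\mathsf{C}(\bSigma)\le C_{\cov}$ under Assumption~\ref{assumption:condition_number}, so your conclusion is unaffected.
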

\begin{proof}
For $i=1,\ldots,n$, let $\bb_{i} = (1(X_{i1}=0),\ldots,1(X_{ip}=0))^{\top}$ such that $n^{-1}\sum_{i=1}^{n}\bb_{i}= \bpihat$.
By definition of the truncated latent Gaussian copula model, we have
\bfln{
b_{ij} &= 1(X_{ij}=0) =   1(X^{*}_{ij}\leq D_{j})=   1(Z_{ij}\leq\Delta_{j}) \\
&=  1(Z_{ij} - \Delta_{j}\leq 0 ) =   \frac{\sign(Z_{ij}-\Delta_{j})+1 }{2}.
}
Since $\widetilde\bz_{i} = \bz_{i} - \bDelta \overset{iid}{\sim} \N_{p}(-\bDelta,\bSigma_{22})$, where $\bz_{i}=(Z_{i1},\ldots,Z_{ip})^{\top}$ and $\bDelta=(\Delta_{1},\ldots,\Delta_{p})^{\top}$, $\sign(\widetilde{\bz}_{i}) - \E\{\sign(\widetilde{\bz}_{i})\} $ is $\mathsf{C}(\bSigma_{22})$-subgaussian
by Lemma \ref{lem:sign_subgaussian}, and thus $\bpihat - \bpi = n^{-1}\sum_{i=1}^{n}\{\bb_{i}-\E(\bb_{i})\}$
is sum of $n$ iid $\mathsf{C}(\bSigma_{22})$-subgaussians.
Thus,
\bfln{
\E \left\{ \exp(t \bu^{\top} (\bpihat - \bpi)) \right\} &= \E \left[ \exp\left\{\frac{t}{n} \sum_{i=1}^{n}\bu^{\top} (\bb_{i} - \E(\bb_{i}) )\right\} \right]\\
&\leq
\prod_{i=1}^{n}\exp\left( \frac{t^2C}{n^{2}} \right) = \exp\left( \frac{t^2C}{n} \right).
}
\end{proof}

\begin{lemma}\label{lem:RE} Let $\bSigma_{22}$ satisfy $RE(s, 3)$ with parameter $\gamma= \gamma(\bSigma_{22})$. Let Assumptions~\ref{assumption1}, \ref{assumption2} and $\ref{assumption:samplesize}$ hold. Then with probability $1 - O(p^{-1})$, $\widehat \bSigma_{22}$ satisfies $RE(s, 3)$ with 
$$
\widehat \gamma = \gamma(\widehat \bSigma_{22}) \leq C \gamma
$$
for some constant $C>1$.
\end{lemma}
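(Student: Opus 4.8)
The plan is to show that, with probability at least $1-O(p^{-1})$,
\bfln{
\sup_{|S|\leq s}\ \sup_{\ba\in\Ccal(S,3),\ \ba\neq\zeros}\ \frac{\bigl|\ba^{\top}(\bSigmah_{22}-\bSigma_{22})\ba\bigr|}{\|\ba_{S}\|_{2}^{2}} = o(1),
}
since then, for any admissible $S$ and any $\ba\in\Ccal(S,3)$,
\bfln{
\ba^{\top}\bSigmah_{22}\ba
&\geq \ba^{\top}\bSigma_{22}\ba - \bigl|\ba^{\top}(\bSigmah_{22}-\bSigma_{22})\ba\bigr| \\
&\geq \Bigl(\tfrac{1}{\gamma}-o(1)\Bigr)\|\ba_{S}\|_{2}^{2} \geq \tfrac{1}{2\gamma}\|\ba_{S}\|_{2}^{2}
}
for $n$ large enough, which is exactly $RE(s,3)$ for $\bSigmah_{22}$ with $\widehat\gamma\leq 2\gamma$, i.e.\ the claim with $C=2$. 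Throughout I would use the cone inequality $\|\ba\|_{1}\leq 4\|\ba_{S}\|_{1}\leq 4\sqrt{s}\,\|\ba_{S}\|_{2}$, valid on $\Ccal(S,3)$.

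First I would recycle the Taylor decomposition from the proof of Theorem~\ref{thm:key}: writing $M=G^{-1}_{\tau}(\bT,\bDelta)$,
\bfln{
\bSigmah_{22}-\bSigma_{22} = M\circ(\bThat-\bT) + R,
}
where $R$ collects precisely the second-order remainder terms analyzed there (the $G^{-1}_{\tau\tau}$ term from $I_{2}$, together with the $G^{-1}_{\pi_{1}\tau}$, $G^{-1}_{\pi_{1}\pi_{1}}$, $G^{-1}_{\pi_{1}\pi_{2}}$ terms and their $\pi_{2}$-analogues from $I_{3}$). Using boundedness of these second derivatives (Lemmas~\ref{lem:Ginv_second_tau_tau}, \ref{lem:Ginv_second_tau_pij}, \ref{lem:Ginv_second_pij_pij}, \ref{lem:Ginv_second_pik_pij}) and the entrywise concentration $\|\bThat-\bT\|_{\infty}\leq C\sqrt{\log(p)/n}$, $\|\bpihat-\bpi\|_{\infty}\leq C\sqrt{\log(p)/n}$ (Lemmas~\ref{lem:Tau_infinity_norm_bound}, \ref{lem:pi_subgaussian}), one gets $\|R\|_{\infty}\leq C\log(p)/n$ with probability $1-O(p^{-1})$. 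Hence, on the cone,
\bfln{
\bigl|\ba^{\top}R\ba\bigr| \leq \|R\|_{\infty}\|\ba\|_{1}^{2} \leq 16C\,\frac{s\log(p)}{n}\,\|\ba_{S}\|_{2}^{2} = o\!\bigl(\|\ba_{S}\|_{2}^{2}\bigr)
}
uniformly in $S$ and $\ba$, by Assumption~\ref{assumption:samplesize}.

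The crux is the first-order term $Q(\ba):=\ba^{\top}\{M\circ(\bThat-\bT)\}\ba = \langle\bThat-\bT,\, M\circ(\ba\ba^{\top})\rangle$. Here the naive bound $|Q(\ba)|\leq C_{1}\|\bThat-\bT\|_{\infty}\|\ba\|_{1}^{2}\lesssim s\sqrt{\log(p)/n}\,\|\ba_{S}\|_{2}^{2}$ is \emph{not} $o(\|\ba_{S}\|_{2}^{2})$ under Assumption~\ref{assumption:samplesize} (it would require $s^{2}\log p=o(n)$), so a structure-aware argument is required. I would proceed in three steps. (i) Since $\Ccal(S,3)\cap\{\|\ba\|_{2}\leq 1\}$ is contained, up to an absolute constant, in the convex hull of $O(s)$-sparse unit vectors, and $Q$ is the diagonal restriction of a bilinear form, it suffices to bound $|\langle\bThat-\bT,\, M\circ(\bv{\bv'}^{\top})\rangle|$ uniformly over unit vectors $\bv,\bv'$ supported on a fixed set of size $O(s)$, then union over the $\binom{p}{O(s)}^{2}=e^{O(s\log p)}$ choices of support together with an $\varepsilon$-net of each unit sphere. (ii) For fixed $\bv,\bv'$ this is a centered quadratic form in the Kendall's $\tau$ U-statistic with coefficient matrix $M\circ(\bv{\bv'}^{\top})$ of Frobenius (hence operator) norm $\leq C_{1}$; by the sign-sub-Gaussian property of Gaussian vectors (\citet{barber2018rocket}, Lemma~4.5) and the quadratic-form concentration for Kendall's $\tau$ matrices (\citet{barber2018rocket}, Lemma~E.2) --- the very tools behind Lemma~\ref{lem:u(Th-T)v_subgaussian}, now applied to general sparse directions rather than to $\be_{j}$ --- it concentrates like a mean-zero sub-Gaussian with variance proxy $\asymp C_{1}^{2}/n$ in the relevant range. (iii) Balancing the entropy $O(s\log p)$ of the union bound against this sub-Gaussian tail yields $\sup_{\ba\in\Ccal(S,3),\,\|\ba\|_{2}\leq 1}|Q(\ba)|\leq C\sqrt{s\log(p)/n}$ with probability $1-O(p^{-1})$, i.e.\ $|Q(\ba)|=o(\|\ba_{S}\|_{2}^{2})$ by Assumption~\ref{assumption:samplesize}.

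Combining the bounds on $|\ba^{\top}R\ba|$ and $|Q(\ba)|$ yields the displayed uniform bound, hence $RE(s,3)$ for $\bSigmah_{22}$ with $\widehat\gamma\leq C\gamma$. The main obstacle is steps~(ii)--(iii): extracting the sharp $\sqrt{s\log(p)/n}$ rate for the first-order quadratic form --- rather than the lossy $s\sqrt{\log(p)/n}$ coming from the entrywise maximum $\|\bThat-\bT\|_{\infty}$ --- genuinely requires the quadratic-form concentration and the sign sub-Gaussianity, extending (but going beyond) the argument behind Lemma~\ref{lem:u(Th-T)v_subgaussian}; the remaining reduction to sparse directions and the handling of $R$ are standard RE-transfer bookkeeping.
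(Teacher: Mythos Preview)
Your overall strategy --- Taylor decomposition into first- and second-order pieces, entrywise control of the second-order remainder, and a sharp sub-Gaussian argument for the first-order piece --- matches the paper's. There are, however, two concrete gaps.

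First, your decomposition $\bSigmah_{22}-\bSigma_{22}=M\circ(\bThat-\bT)+R$ is incomplete. The expansion you are recycling (see display~\eqref{eq:SigmaDecomp} in the paper's own proof of this lemma) also produces the two \emph{first-order} threshold terms $G^{-1}_{\pi_{1}}(\bT,\bpi)\circ(\bPihat-\bPi)$ and $G^{-1}_{\pi_{2}}(\bT,\bpi)\circ(\bPihat-\bPi)^{\top}$. These are neither in your $M\circ(\bThat-\bT)$ nor among the second-order pieces you list in $R$; since $\|G^{-1}_{\pi_{j}}(\bT,\bpi)\circ(\bPihat-\bPi)\|_{\infty}\asymp\sqrt{\log(p)/n}$, they cannot be absorbed into the claim $\|R\|_{\infty}\leq C\log(p)/n$. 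They must receive the same sharp treatment as $Q(\ba)$, now using the sign sub-Gaussian property of $\bpihat-\bpi$ (Lemma~\ref{lem:pi_subgaussian}) in place of Lemma~\ref{lem:u(Th-T)v_subgaussian}. The paper does exactly this.

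Second, your target $\sup_{S,\ba}|\ba^{\top}(\bSigmah_{22}-\bSigma_{22})\ba|/\|\ba_{S}\|_{2}^{2}=o(1)$ is too strong and in fact fails: on $\Ccal(S,3)$ one only has $\|\ba\|_{2}^{2}\leq(1+9s)\|\ba_{S}\|_{2}^{2}$ (take $\ba_{S}$ flat and put all of $\ba_{S^{c}}$ on one coordinate), so your step~(iii) yields $|Q(\ba)|\lesssim \sqrt{s\log(p)/n}\,\|\ba\|_{2}^{2}\lesssim s^{3/2}\sqrt{\log(p)/n}\,\|\ba_{S}\|_{2}^{2}$, which is not $o(\|\ba_{S}\|_{2}^{2})$ under Assumption~\ref{assumption:samplesize}. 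The fix is to normalize by $\ba^{\top}\bSigma_{22}\ba$ (equivalently by $\|\ba_{T_{0}}\|_{2}^{2}$ with $T_{0}$ the top-$s$ indices of $\ba$): the shelling bound $\|\ba\|_{2}\leq 5\|\ba_{T_{0}}\|_{2}$ together with $RE(s,3)$ for $\bSigma_{22}$ applied on $T_{0}$ give $|Q(\ba)|\leq C\gamma\sqrt{s\log(p)/n}\,\ba^{\top}\bSigma_{22}\ba$, whence $\ba^{\top}\bSigmah_{22}\ba\geq(1-o(1))\ba^{\top}\bSigma_{22}\ba\geq(1-o(1))\gamma^{-1}\|\ba_{S}\|_{2}^{2}$. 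This is how the paper closes the argument.

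On the first-order term itself, the paper takes a slightly different route from your net argument: it factorizes $\ba^{\top}\{M\circ(\bThat-\bT)\}\ba=\sum_{j}a_{j}\,\be_{j}^{\top}(\bThat-\bT)\bb_{j}$ with $\bb_{j}=\ba\circ M_{\cdot j}$, bounds by $\|\ba\|_{1}\max_{j}|\be_{j}^{\top}(\bThat-\bT)\bb_{j}|$, and applies Lemma~\ref{lem:u(Th-T)v_subgaussian} directly to the rank-one forms. This has the advantage of never leaving the rank-one setting where Lemma~E.2 of \citet{barber2018rocket} applies verbatim, whereas your step~(ii) invokes concentration for $\langle\bThat-\bT,\,M\circ(\bv{\bv'}^{\top})\rangle$ with a coefficient matrix that is not rank one, so a small extension of that lemma is needed. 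Conversely, your convex-hull/net reduction makes uniformity in $\ba$ explicit; the paper's written proof applies the sub-Gaussian bound at an $\ba$-dependent direction $\bb_{j}$ and is terse about how the high-probability event is made uniform over the cone.
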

\begin{proof} Let $\ba \in \mathcal{C}(S,3)=\{\ba \in \R^{p}: \|\ba_{S^{c}}\|_{1} \leq 3 \|\ba_{S}\|_{1} \}$. Let $T_0$ be the index set of the $s$ largest elements of $\ba$ in absolute value. Then it holds that $\ba \in \mathcal{C}(T_0, 3)$, and
\begin{align}\label{eq:a1normbound}
    \|\ba\|_1 = \|\ba_S\|_1 + \|\ba_S^c\|_1\leq 4\|\ba_S\|_1\leq4\sqrt{s}\|\ba_S\|_2\leq 4\sqrt{s}\|\ba_{T_0}\|_2.
\end{align}
Furthermore, following \eqref{eq:a2normConeBound}, 
it holds that
\begin{align}\label{eq:a2normbound}
    \|\ba\|_2 \leq \|\ba_{T_0}\|_2 + \|\ba\|_1/\sqrt{s} \leq 5 \|\ba_{T_0}\|_2.
\end{align}
Consider
\begin{align}\label{eq:aSigma}
\ba^{\top}\widehat \bSigma_{22}\ba = \ba^{\top} \bSigma_{22}\ba + \ba^{\top} (\widehat \bSigma_{22} - \bSigma_{22})\ba \geq  \ba^{\top} \bSigma_{22}\ba - |\ba^{\top} (\widehat \bSigma_{22} - \bSigma_{22})\ba|.
\end{align}
Following the proof of Theorem \ref{thm:key} and reparameterization of $\bDelta$ in terms of $\bPi = \bpi \ones_{p}^{\top}$, we have the following decomposition
\begin{equation}\label{eq:SigmaDecomp}
\begin{split}
\widehat \bSigma_{22} - \bSigma_{22} &=  G^{-1}_{\tau}({\bT}, \bDelta)\circ(\bThat - \bT)+ G^{-1}_{\tau\tau}(\bar{\bT}, \bDelta)\circ(\widetilde{\bT} - \bT)\circ(\bThat - \bT) \\
&\quad + \{G^{-1}(\bThat, \bDeltahat) - G^{-1}(\bThat, \bDelta)\}\\
&= G^{-1}_{\tau}({\bT}, \bDelta)\circ(\bThat - \bT)+ G^{-1}_{\tau\tau}(\bar{\bT}, \bDelta)\circ(\widetilde{\bT} - \bT)\circ(\bThat - \bT) \\
&\quad +\{G^{-1}(\bThat, \bPihat) - G^{-1}(\bThat, \bPi)\}\\
&= G^{-1}_{\tau}({\bT}, \bDelta)\circ(\bThat - \bT)+ G^{-1}_{\tau\tau}(\bar{\bT}, \bDelta)\circ(\widetilde{\bT} - \bT)\circ(\bThat - \bT) \\
&\quad +G^{-1}_{\pi_{1}}(\bThat,\widetilde{\bpi})\circ(\bPihat - \bPi) + G^{-1}_{\pi_{2}}(\bThat,\widetilde{\bpi})\circ(\bPihat - \bPi)^{\top}\\
& =  G^{-1}_{\tau}({\bT}, \bDelta)\circ(\bThat - \bT) + G^{-1}_{\pi_{1}}(\bT,\bpi)\circ(\bPihat - \bPi) + G^{-1}_{\pi_{2}}(\bT,\bpi)\circ(\bPihat - \bPi)^{\top}\\
&\quad+ G^{-1}_{\tau\tau}(\bar{\bT}, \bDelta)\circ(\widetilde{\bT} - \bT)\circ(\bThat - \bT) \\
&\quad + G^{-1}_{\pi_{1}\tau}(\bar{\bT},\bar{\bpi}) \circ (\bThat - \bT) \circ (\bPihat - \bPi) +  G^{-1}_{\pi_{2}\tau}(\bar{\bT},\bar{\bpi}) \circ (\bThat - \bT) \circ (\bPihat - \bPi)^{\top}\\
&\quad + G^{-1}_{\pi_{1}\pi_{1}}(\bar{\bT},\bar{\bpi})  \circ (\widetilde{\bPi} - \bPi)\circ (\widehat{\bPi} - \bPi) + G^{-1}_{\pi_{2}\pi_{2}}(\bar{\bT},\bar{\bpi})  \circ (\widetilde{\bPi} - \bPi)^{\top}\circ (\widehat{\bPi} - \bPi)^{\top}\\
&\quad + G^{-1}_{\pi_{1}\pi_{2}}(\bar{\bT},\bar{\bpi}) \circ (\bPihat - \bPi)^{\top}\circ (\widehat{\bPi} - \bPi) + G^{-1}_{\pi_{2}\pi_{1}}(\bar{\bT},\bar{\bpi}) \circ (\bPihat - \bPi)\circ (\widehat{\bPi} - \bPi)^{\top},
\end{split}
\end{equation}
where $\bar{\tau}_{jk} \in (\widetilde \tau_{jk},\tau_{jk})$, $\widetilde \tau_{jk} \in (\widehat \tau_{jk},\tau_{jk})$, $\bar \pi_j \in (\widetilde \pi_j, \pi_j)$, and $\widetilde \pi_j \in (\widehat \pi_j, \pi_j)$. We will use one technique to bound all first order terms in~\eqref{eq:SigmaDecomp}, and another technique to bound all second-order terms.

Consider second-order terms in~\eqref{eq:SigmaDecomp}. Each term is bounded in the same way using H\"older's inequality and bounds on second derivatives. Concretely, consider the term corresponding to $G^{-1}_{\tau \tau}$, that is
\begin{align*}
     |\ba^{\top}G^{-1}_{\tau\tau}(\bar\bT, \Delta)\circ(\widetilde{\bT} - \bT)\circ(\bThat - \bT)\ba| \leq \|\ba\|_1^2 \|G^{-1}_{\tau\tau}(\bar\bT, \Delta)\circ(\widetilde{\bT} - \bT)\circ(\bThat - \bT)\|_{\infty}.
\end{align*}
By Lemma~\ref{lem:Ginv_second_tau_tau}, the 2nd derivative is bounded $|G^{-1}_{\tau \tau}|\leq C$, thus, since $\widetilde \tau_{jk}$ is between $\widehat \tau_{jk}$ and $\tau_{jk}$,
\begin{align*}
    |\ba^{\top}G^{-1}_{\tau\tau}(\bar\bT, \bDelta)\circ(\widetilde{\bT} - \bT)\circ(\bThat - \bT)\ba| \leq  C \|\ba\|_1^2\|\widetilde{\bT} - \bT\|_{\infty}\|\bThat - \bT\|_{\infty}\leq C \|\ba\|_1^2\|\bThat - \bT\|_{\infty}^2.
\end{align*}
Using the bound~\eqref{eq:a1normbound} on $\|\ba\|_1$, the condition that $\bSigma_{22}$ satisfies RE$(s,3)$, and Lemma~\ref{lem:Tau_infinity_norm_bound}, it follows that with high probability
\begin{align*}
     |\ba^{\top}G^{-1}_{\tau\tau}(\bar\bT, \bDelta)\circ(\widetilde{\bT} - \bT)\circ(\bThat - \bT)\ba|\leq C_1\|\ba_S\|_2^2 \frac{s\log p}{n}\leq  \ba^{\top}\bSigma_{22}\ba\ C_1\gamma \frac{s\log p}{n}.
\end{align*}
All the remaining 2nd order terms have the same bound as all the second derivatives are bounded, that is $|G^{-1}_{\pi_j\pi_j}| \leq C$ by Lemma~\ref{lem:Ginv_second_pij_pij}, $|G^{-1}_{\pi_j\tau}| \leq C$ by Lemma~\ref{lem:Ginv_second_tau_pij}, $|G^{-1}_{\pi_j\pi_k}| \leq C$ by Lemma~\ref{lem:Ginv_second_pik_pij}. Also $\|\widehat \bpi - \bpi\|_{\infty}\leq C_1\sqrt{\log p/n}$ with high probability by Hoeffding's inequality combined with union bound, and $\|\bThat - \bT\|_{\infty}\leq C_2\sqrt{\log p/n}$ with high probability by Lemma~\ref{lem:Tau_infinity_norm_bound}.

Consider first-order terms in~\eqref{eq:SigmaDecomp}. Each term is bounded in the same way using sub-gaussian properties in Lemma~\ref{lem:pi_subgaussian} (for $\widehat \bpi$) and Lemma~\ref{lem:u(Th-T)v_subgaussian} (for $\widehat \bT$) combined with the fact that the first derivatives are both bounded and fixed. Concretely, consider the term corresponding to $G^{-1}_{\tau}$, that is
\begin{align*}
\left|\ba^{\top}G^{-1}_{\tau}(\bT, \bDelta)\circ(\bThat - \bT)\ba\right| &= \bigg|(\sum_{j=1}^p a_j \be_j)^{\top}G^{-1}_{\tau}(\bT, \bDelta)\circ(\bThat - \bT)\ba\bigg|\\
&=\bigg|\sum_{j=1}^p a_j \left\{\be_j^{\top}G^{-1}_{\tau}(\bT, \bDelta)\circ(\bThat - \bT)\ba\right\}\bigg|\\
&\leq \|\ba\|_1\max_{1\leq j \leq p} \left|\be_j^{\top}G^{-1}_{\tau}(\bT, \bDelta)\circ(\bThat - \bT)\ba\right|\\
&\leq 4\sqrt{s}\|\ba_{T_0}\|_2\max_{1\leq j \leq p}|\be_j^{\top}(\bThat - \bT)\bb_j|,
\end{align*}
where $\bb_j = \ba \circ G_{\tau}^{-1}(\bT_j, \bDelta)$ and the last inequality follow from~\eqref{eq:a1normbound}. From Theorem 6 of \citet{yoon2020sparse}, $|G_{\tau}^{-1}|\leq C$ for some constant $C>0$, hence using~\eqref{eq:a2normbound}
$$
\|\bb_j\|_2 \leq C \|\ba\|_2 \leq C_1\|\ba_{T_0}\|_2.
$$
Combining this bound with Lemma~\ref{lem:Tau_infinity_norm_bound}, and following the proof of Lemma~\ref{lem:I1} gives, with high probability,
$$
\max_j|\be_j^{\top}(\bThat - \bT)\bb_j| \leq C_2\|\ba_{T_0}\|_2\sqrt{\log(p)/n},
$$
and using that $\bSigma_{22}$ satisfies RE$(s,3)$ gives
$$
|\ba^{\top}G^{-1}_{\tau}(\bT, \bDelta)\circ(\bThat - \bT)\ba|\leq C \|\ba_{T_0}\|_2^2\sqrt{\frac{s\log(p)}{n}}\leq \ba^{\top}\bSigma_{22}\ba\ C\gamma \sqrt{\frac{s\log p}{n}}.
$$
All the remaining first order terms have the same bound as the first derivatives $G^{-1}_{\pi_j}$ are fixed and bounded by Lemma~\ref{lem:Ginv_first_pi}, and $\widehat \bpi$ satisfy Lemma~\ref{lem:pi_subgaussian}.

Combining the bounds on the first and second-order terms coupled with Assumption~\ref{assumption:samplesize} gives with high probability
$$
\left|\ba^{\top} (\widehat \bSigma_{22} - \bSigma_{22})\ba\right| \leq \ba^{\top}\bSigma_{22}\ba C_3 \gamma \sqrt{\frac{s\log p}{n}}.
$$
Combining this bound with~\eqref{eq:aSigma} gives
$$
\ba^{\top}\widehat \bSigma_{22}\ba \geq \ba^{\top}\bSigma_{22}\ba\left(1- C_3 \gamma \sqrt{\frac{s\log p}{n}}\right).
$$
Under the scaling of Assumption~\ref{assumption:samplesize}, $C_3 \gamma \sqrt{\frac{s\log p}{n}} = o(1)$, thus it follows that $\gamma(\widehat \bSigma) \leq C\gamma$ for some constant $C>1$.


\end{proof}

\subsection{Supporting lemmas based on existing results}

\begin{lemma}[\citet{barber2018rocket} Lemma 4.5]\label{lem:sign_subgaussian}
Let $\bz \sim \N_{p}(\bmu,\bSigma)$. Then $\sign(\bz) - \E\{\sign(\bz)\}$ is $\mathsf{C}(\bSigma)$-subgaussian.
\end{lemma}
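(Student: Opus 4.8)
The plan is to verify, directly from the definition, that for every unit vector $\bu\in\R^{p}$ the random variable $\bu^{\top}(\sign(\bz)-\E\sign(\bz))$ has moment generating function bounded by $\exp(Ct^{2}\mathsf{C}(\bSigma))$ for all $t$ and some absolute constant $C$. (If $\bSigma$ is singular then $\mathsf{C}(\bSigma)=\infty$ and the claim is vacuous, so assume $\lambda:=\lambda_{\min}(\bSigma)>0$.) A direct attack fails because $\bz\mapsto\bu^{\top}\sign(\bz)$ is bounded but discontinuous: Hoeffding's lemma is unavailable since the coordinates of $\bz$ are correlated, and Gaussian concentration for Lipschitz functions is unavailable since $\sign$ is not Lipschitz. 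The trick I would use is a variance split that separates an independent-coordinate piece from a Lipschitz piece.

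The first step is to write $\bSigma=\lambda I_{p}+\bGamma$ with $\bGamma=\bSigma-\lambda I_{p}\succeq0$ and $\|\bGamma\|_{\mathrm{op}}=\lambda_{\max}(\bSigma)-\lambda$, and correspondingly represent $\bz\overset{d}{=}\bmu+\bY+\bW$ with $\bY\sim\N_{p}(\zeros,\lambda I_{p})$ and $\bW\sim\N_{p}(\zeros,\bGamma)$ independent. Conditionally on $\bW$ the entries $z_{j}=\mu_{j}+W_{j}+Y_{j}$ are independent across $j$, so $\bu^{\top}\sign(\bz)=\sum_{j}u_{j}\sign(z_{j})$ is conditionally a sum of independent summands with $u_{j}\sign(z_{j})\in[-|u_{j}|,|u_{j}|]$. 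Hoeffding's lemma applied coordinatewise gives, for every $t$,
\[
\E\big[\exp\{t\,\bu^{\top}\sign(\bz)\}\mid\bW\big]\le\exp\Big\{t\,\bu^{\top}\bm(\bW)+\tfrac{t^{2}}{2}\|\bu\|_{2}^{2}\Big\}=\exp\Big\{t\,\bu^{\top}\bm(\bW)+\tfrac{t^{2}}{2}\Big\},
\]
where $\bm(\bW)=\E[\sign(\bz)\mid\bW]$ has $j$-th coordinate $m_{j}(W_{j})=2\Phi\big((\mu_{j}+W_{j})/\sqrt{\lambda}\big)-1$, which is a $\sqrt{2/(\pi\lambda)}$-Lipschitz function of $W_{j}$.

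The second step is to bound $\E_{\bW}\exp\{t\,\bu^{\top}\bm(\bW)\}$. Since each $m_{j}$ depends only on $W_{j}$ and is $\sqrt{2/(\pi\lambda)}$-Lipschitz, the map $\bW\mapsto\bu^{\top}\bm(\bW)$ has gradient $(u_{j}m_{j}'(W_{j}))_{j}$ of Euclidean norm at most $\sqrt{2/(\pi\lambda)}\,\|\bu\|_{2}=\sqrt{2/(\pi\lambda)}$, so it is Lipschitz with that constant; Gaussian concentration for Lipschitz functions of $\bW\sim\N_{p}(\zeros,\bGamma)$ then gives
\[
\E_{\bW}\exp\big\{t\big(\bu^{\top}\bm(\bW)-\E\,\bu^{\top}\bm(\bW)\big)\big\}\le\exp\Big\{\tfrac{t^{2}}{2}\cdot\tfrac{2}{\pi\lambda}\cdot\|\bGamma\|_{\mathrm{op}}\Big\}\le\exp\Big\{\tfrac{t^{2}}{\pi}\mathsf{C}(\bSigma)\Big\},
\]
using $\|\bGamma\|_{\mathrm{op}}\le\lambda_{\max}(\bSigma)$. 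Taking expectations over $\bW$ in the conditional bound, using $\E\,\bu^{\top}\bm(\bW)=\bu^{\top}\E\sign(\bz)$ by the tower property, and multiplying the two estimates yields
\[
\E\exp\big\{t\,\bu^{\top}(\sign(\bz)-\E\sign(\bz))\big\}\le\exp\Big\{t^{2}\big(\tfrac12+\tfrac{1}{\pi}\mathsf{C}(\bSigma)\big)\Big\}\le\exp\big\{t^{2}\mathsf{C}(\bSigma)\big\},
\]
since $\mathsf{C}(\bSigma)\ge1$; as $\bu$ and $t$ were arbitrary this is exactly the $\mathsf{C}(\bSigma)$-subgaussian bound (up to the absolute constant implicit in the definition).

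I expect the one genuinely nonroutine point to be the choice of the split --- peeling off exactly $\lambda_{\min}(\bSigma)I_{p}$. This is what simultaneously makes $\bGamma$ positive semidefinite and keeps the product of the squared Lipschitz constant of $\bm(\cdot)$, which scales like $1/\lambda_{\min}(\bSigma)$, and $\|\bGamma\|_{\mathrm{op}}\le\lambda_{\max}(\bSigma)$ equal to a constant multiple of the condition number $\mathsf{C}(\bSigma)$, rather than a quantity that degrades with $p$; everything else is a routine combination of Hoeffding's lemma and the Gaussian Lipschitz concentration inequality.
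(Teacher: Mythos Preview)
Your argument is correct. The covariance split $\bSigma=\lambda_{\min}(\bSigma)\bI_{p}+\bGamma$, the conditional Hoeffding step given $\bW$, and the Gaussian Lipschitz concentration for $\bW\mapsto\bu^{\top}\bm(\bW)$ all go through as you wrote them; the final constant bookkeeping using $\mathsf{C}(\bSigma)\ge1$ is also fine.

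There is nothing to compare against in this paper: the lemma is quoted verbatim from \citet{barber2018rocket} and no proof is reproduced here. Your decomposition is in fact the same device used in the original source --- peeling off $\lambda_{\min}(\bSigma)\bI_{p}$ to make the coordinates conditionally independent, then handling the remaining correlated Gaussian piece via Lipschitz concentration --- so your proposal matches the intended argument rather than offering an alternative route.
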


\begin{lemma}[\citet{barber2018rocket} Lemma E.2]\label{lem:u(Th-T)v_subgaussian}
For fixed $\bu$ and $\bv$ with $\|\bu\|_{2}, \|\bv\|_{2}\leq1$, for any $|t|\leq n/C$,
\bfln{
\E\left[
\exp
\left\{ 
t\bu^{\top}\left(\bThat -  \bT\right)\bv
\right\}
\right]
\leq\exp
\left(
\frac{t^2C^2}{n}
\right).
}
\end{lemma}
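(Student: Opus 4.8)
The plan is to reconstruct the proof of Lemma~E.2 of \citet{barber2018rocket}. The starting observation is that, writing $\bs(\bx,\bx')=\sign(\bx-\bx')\in[-1,1]^{p}$ for the vector of coordinatewise signs, the bilinear form is a centered, symmetric U-statistic of order two,
\[
\bu^{\top}\bThat\bv = \binom{n}{2}^{-1}\sum_{1\le i<i'\le n} g(\bx_{i},\bx_{i'}), \qquad g(\bx,\bx')=\langle\bu,\bs(\bx,\bx')\rangle\,\langle\bv,\bs(\bx,\bx')\rangle,
\]
and, since $(\bx_{i},\bx_{i'})$ is exchangeable, $\E[\bs(\bx_1,\bx_2)]=\zeros$, so $\E[g(\bx_1,\bx_2)]=\bu^{\top}\bT\bv$ and $U:=\bu^{\top}(\bThat-\bT)\bv$ is a U-statistic with mean-zero kernel. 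The key structural point is that this kernel is a \emph{product of two linear forms in the mean-zero sign vector}; the proof then rests on three ingredients: the classical decoupling of a U-statistic into i.i.d.\ averages, the sub-Gaussianity of sign vectors of Gaussians (our Lemma~\ref{lem:sign_subgaussian}), and the elementary fact that a product of two sub-Gaussian variables is sub-exponential with a moment generating function (MGF) bounded near the origin.

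First I would decouple. By Hoeffding's averaging identity, $U=\frac{1}{n!}\sum_{\pi\in S_{n}}\frac{1}{m}\sum_{\ell=1}^{m}\{g(\bx_{\pi(2\ell-1)},\bx_{\pi(2\ell)})-\E g\}$ with $m=\lfloor n/2\rfloor$, i.e.\ $U$ is an average over permutations of averages of $m$ i.i.d.\ summands (the chosen pairs being disjoint). Convexity of $\exp$ and Jensen's inequality then yield $\E[\exp(tU)]\le\big(\E\exp\!\big(\tfrac{t}{m}\{g(\bx_1,\bx_2)-\E g\}\big)\big)^{m}$, so it suffices to bound the single-pair MGF at the scale $\lambda=t/m\asymp 2t/n$.

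For the per-pair bound, note that for a coordinate following the untruncated Gaussian copula $\sign(X_{1j}-X_{2j})=\sign(Z_{1j}-Z_{2j})$ because $f_{j}$ is strictly increasing, so $\bs(\bx_1,\bx_2)=\sign(\bz_1-\bz_2)$ with $\bz_1-\bz_2\sim\N_{p}(\zeros,2\bSigma_{22})$; by Lemma~\ref{lem:sign_subgaussian} this vector is $\mathsf{C}(2\bSigma_{22})=\mathsf{C}(\bSigma_{22})$-sub-Gaussian (the condition number is scale invariant, and $\mathsf{C}(\bSigma_{22})\le\mathsf{C}(\bSigma)$ by eigenvalue interlacing, hence bounded under Assumption~\ref{assumption:condition_number}). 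Consequently $A:=\langle\bu,\bs\rangle$ and $B:=\langle\bv,\bs\rangle$ are mean-zero and sub-Gaussian with $\|A\|_{\psi_{2}},\|B\|_{\psi_{2}}\le c_{0}\sqrt{\mathsf{C}(\bSigma_{22})}$ (using $\|\bu\|_{2},\|\bv\|_{2}\le1$), whence $AB-\E[AB]$ is sub-exponential with $\|AB-\E[AB]\|_{\psi_{1}}\le K$, $K\asymp\mathsf{C}(\bSigma_{22})$, and therefore $\E\exp(\lambda\{AB-\E[AB]\})\le\exp(C_{1}K^{2}\lambda^{2})$ for $|\lambda|\le c_{1}/K$. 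Substituting $\lambda=t/m$ and using $m\ge n/3$ gives $\E[\exp(tU)]\le\exp(3C_{1}K^{2}t^{2}/n)$ whenever $|t|\le c_{1}m/K$, i.e.\ $|t|\le n/C$ for $C:=\max\{\sqrt{3C_{1}}\,K,\,3K/c_{1}\}$, which is the claim. For the truncated (and, in the binary rows of $\bThat$, the binary) coordinates one instead has $\sign(X_{1j}-X_{2j})=\sign(\max(Z_{1j},\Delta_{j})-\max(Z_{2j},\Delta_{j}))$ (resp.\ a thresholded analogue), and a short coordinatewise case split --- reducing each entry to a $\pm1$-valued threshold of an affine function of the latent Gaussians --- shows the generalized sign vector is still $\mathsf{C}(\bSigma_{22})$-sub-Gaussian up to an absolute constant, so the identical argument applies.

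The decoupling step and the ``sub-Gaussian times sub-Gaussian is sub-exponential'' step are textbook; the one genuinely load-bearing input is Lemma~\ref{lem:sign_subgaussian}, which pins the variance proxy of a sign vector to the condition number of its covariance and is what produces the correct $t^{2}/n$ rate with $n$-independent constants. The only place where more than a transcription of \citet{barber2018rocket} is required is the last remark: checking that this sign-sub-Gaussianity survives the coordinatewise clipping $z\mapsto\max(z,\Delta_{j})$ (and the binary thresholding) special to our zero-inflated model. I expect that verification --- essentially a bounded-difference/case-analysis argument at the level of single coordinates, followed by re-invoking the Gaussian sign bound --- to be the main, though modest, obstacle.
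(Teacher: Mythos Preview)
The paper does not prove this lemma; it is listed under ``Supporting lemmas based on existing results'' and cited verbatim as Lemma~E.2 of \citet{barber2018rocket}. Your reconstruction --- Hoeffding averaging to reduce the order-two U-statistic to an average of $m=\lfloor n/2\rfloor$ i.i.d.\ terms, then Lemma~\ref{lem:sign_subgaussian} to make each kernel $g(\bx_1,\bx_2)=\langle\bu,\bs\rangle\langle\bv,\bs\rangle$ a product of two $\mathsf{C}(\bSigma)$-sub-Gaussians, hence sub-exponential, followed by the standard MGF bookkeeping --- is exactly the argument in \citet{barber2018rocket}, and your handling of the constants (the $\lambda=t/m$ substitution with $m\ge n/3$, and the choice $C=\max\{\sqrt{3C_1}\,K,\,3K/c_1\}$) is correct.

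You are, in fact, more scrupulous than the paper on one point. \citet{barber2018rocket} treat continuous transelliptical data, where $\sign(X_{1j}-X_{2j})=\sign(Z_{1j}-Z_{2j})$ and Lemma~\ref{lem:sign_subgaussian} applies directly to $\bz_1-\bz_2\sim\N(\zeros,2\bSigma_{22})$. The present paper invokes the lemma for the Kendall's~$\tau$ matrix of \emph{truncated} observations (e.g.\ in the proofs of Lemmas~\ref{lem:I1} and~\ref{lem:RE}) without further comment, and you rightly observe that in that setting $\sign(X_{1j}-X_{2j})=\sign(\max(Z_{1j},\Delta_j)-\max(Z_{2j},\Delta_j))$ is not literally the sign of a single Gaussian coordinate, so the dimension-free $\mathsf{C}(\bSigma)$-sub-Gaussianity of the resulting sign vector needs its own justification. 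The paper does not supply that step; your judgment that it is a modest additional verification is plausible, but it is a genuine detail you have identified that the paper leaves implicit.
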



\begin{lemma}[\citet{de2012decoupling} Theorem 4.1.8]\label{lem:Tau_infinity_norm_bound}
For any $\delta>0$, 
\bfln{
\|\bThat - \bT\|_{\infty} \leq \sqrt{ \frac{4\log\left( 2{p \choose 2}/\delta \right)}{n} }
}
with probability at least $1-\delta$. With $\delta=1/2p$, $\|\bThat-\bT\|_{\infty}^{2} \leq \frac{16\log(p)}{n}$ with probability at least $1-1/2p$.
\end{lemma}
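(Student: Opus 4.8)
The statement is a concentration inequality for the entrywise error of the sample Kendall's $\tau$ matrix, and the plan is to obtain it as a routine union bound over a per-entry U-statistic tail bound. For a fixed pair $(j,k)$ with $j\neq k$, note that $\widehat{\tau}_{jk}$ is a U-statistic of order $2$ with kernel $h\big((x_1,y_1),(x_2,y_2)\big)=\sign(x_1-x_2)\sign(y_1-y_2)$, which is bounded in $[-1,1]$ and has mean $\E(\widehat{\tau}_{jk})=\tau_{jk}$; the diagonal entries of $\bThat$ and $\bT$ are both exactly $1$ and contribute nothing to $\|\bThat-\bT\|_\infty$. So the three steps are: (i) an exponential tail bound $\PP(|\widehat{\tau}_{jk}-\tau_{jk}|>t)\le 2\exp(-nt^2/4)$ for a single entry; (ii) a union bound over the $\binom{p}{2}$ off-diagonal pairs; (iii) inverting the bound and substituting $\delta=1/(2p)$.

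For step (i), I would invoke Hoeffding's averaging representation of a U-statistic: $\widehat{\tau}_{jk}$ equals the average over all orderings of the sample of the quantity $m^{-1}\sum_{l=1}^{m} h$ evaluated on the $m=\lfloor n/2\rfloor$ disjoint consecutive pairs of that ordering. Each such inner average is a mean of $m$ i.i.d.\ terms bounded in $[-1,1]$ with common mean $\tau_{jk}$, hence sub-Gaussian with variance proxy of order $1/m=O(1/n)$ by Hoeffding's lemma; convexity of the exponential (Jensen's inequality applied to the average over orderings) transfers this moment generating function bound to $\widehat{\tau}_{jk}$ itself. This yields the per-entry bound, which is precisely the content of Theorem~4.1.8 of \citet{de2012decoupling}.

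For steps (ii)--(iii), the union bound over the at most $\binom{p}{2}$ pairs with $j\neq k$ gives $\PP\big(\|\bThat-\bT\|_\infty>t\big)\le 2\binom{p}{2}\exp(-nt^2/4)$; setting the right-hand side equal to $\delta$ and solving for $t$ gives $t=\sqrt{4\log(2\binom{p}{2}/\delta)/n}$, the first claim. Taking $\delta=1/(2p)$ makes the argument of the logarithm equal to $4p\binom{p}{2}=2p^2(p-1)\le 2p^3$, so that $t^2\le 4\log(2p^3)/n=4(\log 2+3\log p)/n\le 16\log(p)/n$ for $p\ge 2$, which is the second claim.

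The only real obstacle is step (i): since the $\binom{n}{2}$ kernel evaluations entering $\widehat{\tau}_{jk}$ are dependent, one must argue carefully that the variance proxy is $O(1/n)$ and not merely $O(1)$. The Hoeffding decoupling/averaging trick above is exactly what supplies this, and it is already packaged in the cited Theorem~4.1.8; once it is in hand, the remaining union bound and logarithmic bookkeeping are entirely routine.
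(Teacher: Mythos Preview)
Your proposal is correct and is the standard argument; the paper itself supplies no proof for this lemma, listing it under ``Supporting lemmas based on existing results'' and simply attributing the per-entry bound to Theorem~4.1.8 of \citet{de2012decoupling}. Your sketch (Hoeffding's averaging representation for the order-$2$ U-statistic, then a union bound over the $\binom{p}{2}$ off-diagonal pairs, then the $\delta=1/(2p)$ substitution with the elementary inequality $2p^2(p-1)\le p^4$) is exactly how one obtains the stated bound from that theorem.
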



\subsection{Bounds on partial derivatives of the inverse bridge function}\label{suppl:subsec:bridge_inv_bound}

\begin{lemma}\label{lem:Ginv_first_pi}
Let $G^{-1}(\tau)$ be the inverse bridge function for TT case, where $\tau = G_{TT}(r;\Delta_{j}, \Delta_{k})$.
Under Assumptions \ref{assumption1}--\ref{assumption2}, $|\partial G^{-1}(\tau)/\partial \pi_{j}|\leq C$ and $|\partial G^{-1}(\tau)/\partial \pi_{k}|\leq C$ for some constant $C>0$.
\end{lemma}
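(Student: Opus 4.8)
The plan is to combine the implicit function theorem with the closed form of $G_{TT}$ in~\eqref{eq:bridgeForm}. Reparameterize $\Delta_{j}=\Phi^{-1}(\pi_{j})$, $\Delta_{k}=\Phi^{-1}(\pi_{k})$ and recall that the inverse bridge function $r=G^{-1}(\tau;\pi_{j},\pi_{k})$ is defined by the identity $\tau=G_{TT}(r;\Delta_{j},\Delta_{k})$. Since $G_{TT}$ is smooth and strictly increasing in $r$, differentiating this identity with respect to $\pi_{j}$ at fixed $\tau$ and using $d\Delta_{j}/d\pi_{j}=1/\phi(\Phi^{-1}(\pi_{j}))$ gives
\[
\frac{\partial G^{-1}}{\partial \pi_{j}}
=-\frac{\partial_{\Delta_{j}}G_{TT}(r;\Delta_{j},\Delta_{k})}{\partial_{r}G_{TT}(r;\Delta_{j},\Delta_{k})}\cdot\frac{1}{\phi\!\left(\Phi^{-1}(\pi_{j})\right)},
\]
and symmetrically for $\pi_{k}$. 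Hence it suffices to (i) bound the denominator $|\partial_{r}G_{TT}|$ below by a positive constant, (ii) bound the numerator $|\partial_{\Delta_{j}}G_{TT}|$ above, and (iii) bound $1/\phi(\Phi^{-1}(\pi_{j}))$ above.

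Steps (i) and (iii) are immediate from the assumptions and existing results. For (iii), Assumption~\ref{assumption2} gives $|\Phi^{-1}(\pi_{j})|=|\Delta_{j}|\le M$, so $\phi(\Phi^{-1}(\pi_{j}))\ge \phi(M)>0$. For (i), Theorem~6 of \citet{yoon2020sparse} states $|G^{-1}_{\tau}|\le C$, i.e. $|\partial_{r}G_{TT}|=1/|G^{-1}_{\tau}|\ge 1/C$, on the region $\{|r|\le 1-\varepsilon_{r},\ |\Delta_{j}|,|\Delta_{k}|\le M\}$ fixed by Assumptions~\ref{assumption1}--\ref{assumption2}; equivalently, $\partial_{r}G_{TT}$ is continuous and strictly positive on this compact set, hence bounded away from zero.

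The substantive step is (ii). Differentiating $G_{TT}(r;\Delta_{j},\Delta_{k})=-2\Phi_{4}(-\Delta_{j},-\Delta_{k},0,0;\bSigma_{4a}(r))+2\Phi_{4}(-\Delta_{j},-\Delta_{k},0,0;\bSigma_{4b}(r))$ in $\Delta_{j}$ yields $\partial_{\Delta_{j}}G_{TT}=2\,\partial_{a_{1}}\Phi_{4}(\cdot;\bSigma_{4a}(r))-2\,\partial_{a_{1}}\Phi_{4}(\cdot;\bSigma_{4b}(r))$. Using the standard identity $\partial_{a_{1}}\Phi_{d}(a_{1},\dots,a_{d};\bSigma)=\phi(a_{1})\,\Phi_{d-1}(\tilde a_{2},\dots,\tilde a_{d};\bSigma_{-1\mid 1})$, where $\tilde a_{\ell}$ and $\bSigma_{-1\mid 1}$ are the conditional means and covariance obtained by fixing the first coordinate at $a_{1}$, together with the elementary bounds $\phi\le(2\pi)^{-1/2}$ and $\Phi_{d-1}\le 1$, each term is at most $(2\pi)^{-1/2}$ in absolute value, so $|\partial_{\Delta_{j}}G_{TT}|\le 4(2\pi)^{-1/2}$. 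The only point requiring care is that $\bSigma_{4a}(r)$, $\bSigma_{4b}(r)$, and the induced conditional covariance remain positive definite so that the formula applies; this holds for $|r|\le 1-\varepsilon_{r}$ by Assumption~\ref{assumption1}, and in any case $\Phi_{d-1}\le 1$ regardless of degeneracy. Combining (i)--(iii) in the displayed identity gives $|\partial G^{-1}/\partial \pi_{j}|\le C$ with $C$ depending only on $\varepsilon_{r}$ and $M$; the argument for $\partial G^{-1}/\partial \pi_{k}$ is identical, differentiating instead in the second argument of $\Phi_{4}$.

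I expect the main obstacle to be the bookkeeping in step (ii): justifying differentiation under the integral sign in the formula for $\partial_{a_{1}}\Phi_{4}$, correctly identifying the conditional covariance matrices built from $\bSigma_{4a}$ and $\bSigma_{4b}$, and checking their non-degeneracy on the parameter region. Unlike the softer compactness argument available for (i), this step genuinely uses the explicit block structure of $\bSigma_{4a},\bSigma_{4b}$, and it is also the piece that will be reused (with the chain rule applied once more) when bounding the second derivatives $G^{-1}_{\pi_{1}\pi_{1}}$, $G^{-1}_{\pi_{1}\pi_{2}}$, $G^{-1}_{\pi_{1}\tau}$ in the later lemmas.
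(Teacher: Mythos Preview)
Your proposal is correct and takes essentially the same route as the paper: the paper writes the chain-rule decomposition $\partial G^{-1}/\partial\pi_j = (\partial G^{-1}/\partial\tau)\cdot(\partial G/\partial\Delta_j)\cdot(\partial\Delta_j/\partial\pi_j)$ and bounds the three factors separately via Theorem~6 of \citet{yoon2020sparse}, Lemma~\ref{lem:G_first_Deltaj}, and Lemma~\ref{lem:Delta_wrt_pi_bounded}, which are exactly your steps (i), (ii), (iii). Your inline argument for (ii) --- factoring $\partial_{a_1}\Phi_4=\phi(a_1)\Phi_3(\cdot\mid a_1)\le(2\pi)^{-1/2}$ --- is precisely the content of the paper's Lemma~\ref{lem:G_first_Deltaj}, and your implicit-function formulation makes explicit the minus sign that the paper's chain-rule notation suppresses (harmless, since only $|\cdot|$ is needed).
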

\begin{proof}
By the multivariate chain rule, we have
\bfl{ \label{eq:multichain1}
\frac{\partial G^{-1}(\tau)}{\partial \pi_{j} } = 
\frac{\partial G^{-1}(\tau)}{\partial \tau }
\frac{\partial \tau }{\partial \Delta_{j} }
\frac{\partial \Delta_{j} }{\partial \pi_{j} }&= 
\frac{\partial G^{-1}(\tau)}{\partial \tau }
\frac{\partial G(r;\Delta_{j},\Delta_{k}) }{\partial \Delta_{j} } 
\frac{\partial \Delta_{j}}{\partial \pi_{j}} \\ \nonumber
&\coloneqq A_{1}A_{2}A_{3}.
}
By Theorem 6 in \citet{yoon2020sparse}, $|A_{1}|\leq C$.
By Lemma \ref{lem:G_first_Deltaj}, $A_2$ is bounded. By Lemma  \ref{lem:Delta_wrt_pi_bounded}, $A_3$ is bounded.
The proof for $\pi_{k}$ is analogous.
\end{proof}

\begin{lemma}\label{lem:Ginv_second_tau_tau}
Let $G^{-1}(\tau)$ be the inverse bridge function for TT case, where $\tau =  G_{TT}(r;\Delta_{j}, \Delta_{k})$.
Under Assumptions \ref{assumption1}-- \ref{assumption2}, $|G^{-1}_{\tau\tau} = \partial^2 G^{-1}(\tau)/\partial \tau^{2}|\leq C$ for some constant $C>0$ independent of $r$, $\Delta_j$, $\Delta_k$.
\end{lemma}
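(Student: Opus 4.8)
The plan is to route the bound through the bridge function $G_{TT}$ itself, since $G^{-1}$ has no closed form. Fix $\Delta_j,\Delta_k$ and set $g(r)=G_{TT}(r;\Delta_j,\Delta_k)$, so that $\tau=g(r)$ and $r=G^{-1}(\tau)$. Differentiating the identity $g\bigl(G^{-1}(\tau)\bigr)=\tau$ twice in $\tau$ (primes denoting $r$-derivatives of $g$) gives
$$
G^{-1}_{\tau}(\tau)=\frac{1}{g'(r)},\qquad G^{-1}_{\tau\tau}(\tau)=-\frac{g''(r)}{g'(r)^{3}}.
$$
Hence it suffices to bound $g'(r)$ away from $0$ from below and $|g''(r)|$ from above, uniformly over $r\in[-1+\varepsilon_r,\,1-\varepsilon_r]$ and $|\Delta_j|,|\Delta_k|\le M$, the ranges permitted by Assumptions~\ref{assumption1}--\ref{assumption2}.

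The lower bound on $g'$ is already available: Theorem~6 of \citet{yoon2020sparse}, used elsewhere in this appendix, gives $|G^{-1}_{\tau}|=1/g'(r)\le C_1$ on exactly this range, i.e. $g'(r)\ge 1/C_1>0$, so $g'(r)^{-3}\le C_1^{3}$. It therefore only remains to establish a uniform bound $|g''(r)|\le C_2$.

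For the second-derivative bound I would use the explicit form in~\eqref{eq:bridgeForm}, $g(r)=-2\Phi_4(-\Delta_j,-\Delta_k,0,0;\bSigma_{4a}(r))+2\Phi_4(-\Delta_j,-\Delta_k,0,0;\bSigma_{4b}(r))$. The CDF $\Phi_4(\ba;\bR)$ is a $C^{\infty}$ function of $(\ba,\bR)$ wherever $\bR$ is positive definite; here $\ba=(-\Delta_j,-\Delta_k,0,0)^{\top}$ is fixed and $\bR=\bSigma_{4a}(r)$ or $\bSigma_{4b}(r)$ is affine in $r$. One checks that $\bSigma_{4a}(r),\bSigma_{4b}(r)$ are uniformly positive definite for $|r|\le 1-\varepsilon_r$: their fixed off-diagonal entries equal $1/\sqrt2$ and their $r$-dependent entries have modulus at most $|r|\le 1-\varepsilon_r$, and since they are covariance matrices of explicit Gaussian vectors that degenerate only at $|r|=1$, their smallest eigenvalue is continuous and strictly positive on the compact interval, hence bounded below. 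Consequently $(r,\Delta_j,\Delta_k)\mapsto g''(r;\Delta_j,\Delta_k)$ is continuous on the compact set $[-1+\varepsilon_r,1-\varepsilon_r]\times[-M,M]^2$ and thus bounded by some constant $C_2$. Equivalently and more explicitly, one differentiates $\Phi_4$ in $r$ twice via Plackett's identity $\partial\Phi_d/\partial r_{ab}=\phi_2(\cdot;r_{ab})\,\Phi_{d-2}(\cdot\mid\cdot)$, which expresses $g'$ and $g''$ as finite sums of products of bivariate normal densities and bivariate/univariate normal CDFs whose arguments and correlations (Schur-complement blocks of $\bSigma_{4a},\bSigma_{4b}$) stay in the non-degenerate region; every such factor, and its $r$-derivative, is then manifestly bounded. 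Combining the two bounds gives $|G^{-1}_{\tau\tau}|=|g''(r)|/g'(r)^{3}\le C_1^{3}C_2$, the claimed constant $C$, independent of $r,\Delta_j,\Delta_k$.

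The main obstacle is the non-degeneracy bookkeeping in the previous paragraph: because $G^{-1}$ is not explicit, everything passes through $g=G_{TT}$, and the real content is verifying that $\bSigma_{4a}(r),\bSigma_{4b}(r)$ — and every conditional covariance block arising when $\Phi_4$ is differentiated in $r$ — have eigenvalues bounded away from $0$ uniformly over $|r|\le 1-\varepsilon_r$. Once that is secured, boundedness of the relevant Gaussian densities and CDFs (and their $r$-derivatives) on the compact parameter region is automatic, and the inverse-function-theorem identities complete the argument. The same scheme — identify the derivative as a ratio of $g$-derivatives (or, for the $\pi_j$-derivatives, compose further with $\Delta_j=\Phi^{-1}(\pi_j)$), bound $g'$ below via \citet{yoon2020sparse}, and bound the higher $r$-derivatives of $\Phi_4$ by compactness — will be reused for the mixed second derivatives $G^{-1}_{\pi_1\tau}$, $G^{-1}_{\pi_1\pi_1}$, $G^{-1}_{\pi_1\pi_2}$ in the lemmas that follow.
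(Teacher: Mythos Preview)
Your proposal is correct and structurally matches the paper's proof: both derive the inverse-function identity $G^{-1}_{\tau\tau}=-g''(r)/g'(r)^3$, invoke Theorem~6 of \citet{yoon2020sparse} to bound $g'(r)$ away from zero, and then bound $|g''(r)|$ from above. The one difference is in how $|g''(r)|$ is handled. The paper carries out the explicit computation you sketch as your ``equivalently and more explicitly'' option: it packages the bound as a separate Lemma~\ref{lem:G_second_r_r}, expands $g'(r)$ as a sum of Plackett-type terms $h_{ij}(r)$, and bounds each $\partial h_{ij}/\partial\rho_{k\ell}$ via conditional-moment estimates (Lemmas~\ref{lem:h_wrt_r_bound}, \ref{lem:bound_expection_of_product}, \ref{lem:bivariate_conditional_moments_bound}) together with the closed-form determinant $|\bSigma_{4a}|=|\bSigma_{4b}|=(1-r^2)^2/4$ (Lemma~\ref{lem:Sigma4_inv_det}). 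Your primary route --- continuity of $g''$ on the compact set $[-1+\varepsilon_r,1-\varepsilon_r]\times[-M,M]^2$ once uniform positive-definiteness of $\bSigma_{4a}(r),\bSigma_{4b}(r)$ is checked --- is a legitimate and shorter alternative that the paper does not use; it buys brevity at the cost of concrete constants, whereas the paper's explicit Plackett bookkeeping yields reusable intermediate lemmas that feed directly into the mixed second-derivative bounds.
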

\begin{proof} Let $h(r) = \partial G(r;\Delta_{j},\Delta_{k})/\partial r$ and consider
\bfln{
\frac{\partial^{2} G^{-1}(\tau)}{\partial \tau^{2}}
&=
\frac{\partial}{\partial \tau}
\left\{
\frac{\partial G(r;\Delta_{j},\Delta_{k})}{\partial r}
\right\}^{-1}
=
\frac{\partial}{\partial r}
\left\{
\frac{\partial G(r;\Delta_{j},\Delta_{k})}{\partial r}
\right\}^{-1}\frac{\partial r }{\partial \tau} \\
&=
\frac{\partial}{\partial r}
\left\{\frac{1}{h(r)}\right\} \left( \frac{\partial \tau }{\partial r } \right)^{-1}
= 
\frac{\partial}{\partial r}
\left\{\frac{1}{h(r)}\right\} \left\{
\frac{\partial G(r;\Delta_{j},\Delta_{k}) }{\partial r } 
\right\}^{-1}\\
&=
-
\frac{1}{h(r)^{2}}
\frac{\partial h(r)}{\partial r}\frac{1}{h(r)}
=
-
\frac{1}{h(r)^{3}}
\frac{\partial h(r)}{\partial r}
.
}
By Theorem 6 of \citet{yoon2020sparse}, $h(r)$ is positive and bounded from below by a positive constant independent of $r$, $\Delta_j$, $\Delta_k$. By Lemma \ref{lem:G_second_r_r}, $|\partial h(r)/\partial r|$ is bounded above by a positive constant. Thus, we have $|\partial^{2} G^{-1}(\tau)/\partial \tau^2|<C$ for some $C>0$.
\end{proof}

\begin{lemma}\label{lem:Ginv_second_pij_pij}
Let $G^{-1}(\tau)$ be the inverse bridge function for TT case, where $\tau = G_{TT}(r,\Delta_{j}, \Delta_{k})$.
Under Assumptions \ref{assumption1}--\ref{assumption2}, $|G^{-1}_{\pi_j\pi_j}=\partial^{2} G^{-1}(\tau)/\partial \pi_{j}^{2}|\leq C$ for some constant $C>0$ independent of $r$, $\Delta_j$, $\Delta_k$.
\end{lemma}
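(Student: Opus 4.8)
The plan is to carry the implicit-differentiation argument of Lemmas~\ref{lem:Ginv_first_pi} and~\ref{lem:Ginv_second_tau_tau} one order further. I would write $\psi(\pi_j) = \Phi^{-1}(\pi_j)$, so that $\Delta_j = \psi(\pi_j)$, and regard $r = G^{-1}(\tau,\pi_j,\pi_k)$ as the unique solution of $G_{TT}\big(r;\psi(\pi_j),\Delta_k\big) = \tau$. Holding $\tau$ and $\pi_k$ fixed and differentiating this identity once in $\pi_j$ recovers the expression of Lemma~\ref{lem:Ginv_first_pi},
\bfln{
G^{-1}_{\pi_j} = r_{\pi_j} = -\frac{G_{\Delta_j}\,\psi'(\pi_j)}{G_r},
}
where all partials of $G_{TT}$ are evaluated at $(r;\Delta_j,\Delta_k)$. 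Differentiating a second time in $\pi_j$ (using $\partial_r \psi' = 0$, $\partial_r G_{\Delta_j} = G_{r\Delta_j}$, and $\partial_{\pi_j} G_{\Delta_j} = G_{\Delta_j\Delta_j}\psi'$) yields the explicit formula
\bfln{
G^{-1}_{\pi_j\pi_j} &= -\frac{1}{G_r^{2}}\Big[\big(G_{r\Delta_j}\psi'\,r_{\pi_j} + G_{\Delta_j\Delta_j}(\psi')^{2} + G_{\Delta_j}\psi''\big)G_r \\
&\qquad\qquad - G_{\Delta_j}\psi'\big(G_{rr}\,r_{\pi_j} + G_{r\Delta_j}\psi'\big)\Big].
}
So the first, routine, step is this bookkeeping; afterwards the claim reduces to bounding each factor on the right-hand side by a constant depending only on $\varepsilon_r$ and $M$.

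I would then bound these factors one by one. For the denominator, Theorem~6 of \citet{yoon2020sparse} gives $G_r \ge c_1 > 0$ uniformly over $|r|\le 1-\varepsilon_r$ and $|\Delta_j|,|\Delta_k|\le M$ (exactly the lower bound on $h(r)$ used in Lemma~\ref{lem:Ginv_second_tau_tau}), so $G_r^{-2}$ is bounded. In the numerator, $|G_{\Delta_j}|\le C$ by Lemma~\ref{lem:G_first_Deltaj} and $|G_{rr}|\le C$ by Lemma~\ref{lem:G_second_r_r}; the two remaining second-order derivatives $|G_{\Delta_j\Delta_j}|\le C$ and $|G_{r\Delta_j}|\le C$ I would obtain by the same route as those two lemmas, i.e.\ by differentiating the closed form~\eqref{eq:bridgeForm} and bounding the resulting integrals of lower-dimensional Gaussian densities and CDFs uniformly on $\{|r|\le 1-\varepsilon_r,\ |\Delta_j|,|\Delta_k|\le M\}$. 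For the reparametrization factors, Assumption~\ref{assumption2} together with Lemma~\ref{lem:Delta_wrt_pi_bounded} gives $|\psi'(\pi_j)| = 1/\phi(\Delta_j) \le 1/\phi(M)$ (with $\phi$ the standard normal density), and differentiating once more gives $\psi''(\pi_j) = \Delta_j/\phi(\Delta_j)^{2}$, hence $|\psi''(\pi_j)|\le M/\phi(M)^{2}$. Finally $|r_{\pi_j}| = |G^{-1}_{\pi_j}|\le C$ by Lemma~\ref{lem:Ginv_first_pi}. Substituting all of these into the displayed formula gives $|G^{-1}_{\pi_j\pi_j}|\le C$ with $C$ independent of $r,\Delta_j,\Delta_k$, and the bound on $\partial^{2} G^{-1}/\partial\pi_k^{2}$ follows by symmetry.

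I expect the bookkeeping and the substitution to be routine; the real work sits in the auxiliary estimates $|G_{\Delta_j\Delta_j}|\le C$ and $|G_{r\Delta_j}|\le C$ for the bridge function itself, and that is the step I expect to be the main obstacle. This is the difficulty flagged in the introduction: $G_{TT}$ in~\eqref{eq:bridgeForm} is a difference of two four-dimensional Gaussian CDFs $\Phi_4(\cdot;\bSigma_{4a}(r))$, $\Phi_4(\cdot;\bSigma_{4b}(r))$ whose correlation matrices depend on $r$ while $\Delta_j$ enters as an integration limit, so $G_{r\Delta_j}$ and $G_{\Delta_j\Delta_j}$ reduce to integrals of three-dimensional Gaussian densities over truncation regions, and the hard part is showing these stay bounded even as $r$ approaches $\pm(1-\varepsilon_r)$, where $\bSigma_{4a}(r)$ and $\bSigma_{4b}(r)$ degenerate; this is where Assumption~\ref{assumption1} is used. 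A secondary point I would make explicit is that the value $r = G^{-1}(\tau,\pi_j,\pi_k)$ at which these derivative bounds are applied stays in the admissible region $|r|\le 1-\varepsilon_r$: for the true parameters this is Assumption~\ref{assumption1}, and at the intermediate arguments produced by the mean value theorem in Lemmas~\ref{lem:I3} and~\ref{lem:RE} it holds with high probability since $(\widehat\tau,\widehat\pi)$ is close to $(\tau,\pi)$ and $G^{-1}$ is continuous.
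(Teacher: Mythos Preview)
Your proposal is correct and follows essentially the same route as the paper: express $G^{-1}_{\pi_j\pi_j}$ via implicit differentiation of $G(r;\Delta_j,\Delta_k)=\tau$ and then bound each factor using the auxiliary estimates on $G_r^{-1}$, $G_{\Delta_j}$, $G_{\Delta_j\Delta_j}$, $G_{r\Delta_j}$, $\psi'$, $\psi''$ (which the paper supplies as Lemmas~\ref{lem:G_first_Deltaj}, \ref{lem:G_second_Deltaj_Deltaj}, \ref{lem:G_second_r_Deltaj}, \ref{lem:Delta_wrt_pi_bounded}, together with Theorem~6 of \citet{yoon2020sparse}). Your second-derivative display in fact retains the $G_{rr}\,r_{\pi_j}$ and $G_{r\Delta_j}\,r_{\pi_j}$ contributions that the paper's chain-rule computation drops when differentiating $1/h(r)$ and $G_{\Delta_j}$ with respect to $\pi_j$, but those extra terms are also bounded (via Lemma~\ref{lem:G_second_r_r} and Lemma~\ref{lem:Ginv_first_pi}), so both arguments close in the same way.
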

\begin{proof}
Let $h(r) = \partial G(r;\Delta_{j},\Delta_{k})/\partial r$ so that $\partial G^{-1}(\tau)/\partial \tau =(\partial G(r;\Delta_{j},\Delta_{k})/\partial r)^{-1}  = 1/h(r)$. By \eqref{eq:multichain1} and multivariate chain rule, we have
\bfln{
\frac{\partial^{2} G^{-1}(\tau)}{\partial \pi_{j}^{2} } 
&= 
\frac{\partial}{\partial \pi_{j}}
\left(
\frac{1}{h(r)}
\frac{\partial G(r;\Delta_{j},\Delta_{k}) }{\partial \Delta_{j} } 
\frac{\partial \Delta_{j}}{\partial \pi_{j}}
\right)\\
=& 
\left[ \frac{\partial}{\partial \pi_{j}}\left\{
\frac{1}{h(r)}\right\}
\right]
\frac{\partial G(r;\Delta_{j},\Delta_{k}) }{\partial \Delta_{j} } 
\frac{\partial \Delta_{j}}{\partial \pi_{j}}
+
\frac{1}{h(r)}
\left[ 
\frac{\partial^{2} G(r;\Delta_{j},\Delta_{k}) }{\partial \pi_{j} \partial \Delta_{j} } 
\right]
\frac{\partial \Delta_{j}}{\partial \pi_{j}}\\
&+
\frac{1}{h(r)}
\frac{\partial G(r;\Delta_{j},\Delta_{k}) }{\partial \Delta_{j} } 
\left[ \frac{\partial^{2} \Delta_{j}}{\partial \pi_{j}^{2}} \right] \\
=&
\left[ \frac{\partial\{h(r)\}^{-1}}{\partial \Delta_{j}}
\frac{\partial\Delta_{j}}{\partial \pi_{j}}
\right]
\frac{\partial G(r;\Delta_{j},\Delta_{k}) }{\partial \Delta_{j} } 
\frac{\partial \Delta_{j}}{\partial \pi_{j}}
+
\frac{1}{h(r)}
\left[ 
\frac{\partial^{2} G(r;\Delta_{j},\Delta_{k}) }{ \partial \Delta_{j}^2 }
\frac{\partial \Delta_{j}}{\partial \pi_{j}}
\right]
\frac{\partial \Delta_{j}}{\partial \pi_{j}}\\
&+
\frac{1}{h(r)}
\frac{\partial G(r;\Delta_{j},\Delta_{k}) }{\partial \Delta_{j} } 
\left[ \frac{\partial^{2} \Delta_{j}}{\partial \pi_{j}^{2}} \right]
\\
=&
\frac{\partial\{h(r)\}^{-1}}{\partial \Delta_{j}}
\frac{\partial G(r;\Delta_{j},\Delta_{k}) }{\partial \Delta_{j} } 
\left(
\frac{\partial \Delta_{j}}{\partial \pi_{j}}
\right)^2
+
\frac{1}{h(r)}
\frac{\partial^{2} G(r;\Delta_{j},\Delta_{k}) }{ \partial \Delta_{j}^2 }
\left(
\frac{\partial \Delta_{j}}{\partial \pi_{j}}
\right)^2\\
&+
\frac{1}{h(r)}
\frac{\partial G(r;\Delta_{j},\Delta_{k}) }{\partial \Delta_{j} } 
\frac{\partial^{2} \Delta_{j}}{\partial \pi_{j}^{2}} 
.
}
We next show that each term is bounded.

Consider $\partial \{h(r)\}^{-1}/\partial \Delta_{j}$. By the multivariate chain rule,
\bfln{
\frac{\partial \{h(r)\}^{-1}}{\partial \Delta_{j}} = 
-\frac{1}{h(r)^2}
\frac{\partial h(r)}{\partial \Delta_{j}}
=
-\frac{1}{h(r)^2}
\frac{\partial^2 G(r;\Delta_{j},\Delta_{k})}{\partial \Delta_{j} \partial r}.
}
The term $|\partial^2 G(r;\Delta_{j},\Delta_{k})/\partial \Delta_{j} \partial r|$ is bounded from above by Lemma \ref{lem:G_second_r_Deltaj}, and $|1/h(r)|$ is bounded from above by Theorem~6 in \citet{yoon2020sparse}. 
Furthermore,  $|\partial G(r;\Delta_{j},\Delta_{k})/ \partial \Delta_{j}|$ is bounded by Lemma \ref{lem:G_first_Deltaj}, $|\partial^2 G(r;\Delta_{j},\Delta_{k})/ \partial \Delta_{j}^2|$ is bounded by Lemma~\ref{lem:G_second_Deltaj_Deltaj}, and  $|\partial \Delta_{j}/\partial \pi_{j}|$, $|\partial^2 \Delta_{j}/\partial \pi_{j}^2|$ are both bounded by Lemma~\ref{lem:Delta_wrt_pi_bounded}. This concludes the proof.
\end{proof}

\begin{lemma}\label{lem:Ginv_second_pik_pij}
Let $G^{-1}(\tau)$ be the inverse bridge function for TT case, where $\tau = G_{TT}(r,\Delta_{j}, \Delta_{k})$.
Under Assumptions \ref{assumption1}--\ref{assumption2}, $|\partial^{2} G^{-1}(\tau)/\partial \pi_{k}\pi_{j}|\leq C$ for some constant $C>0$ independent of $r$, $\Delta_j$, $\Delta_k$.
\end{lemma}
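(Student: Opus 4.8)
The plan is to imitate the proof of Lemma~\ref{lem:Ginv_second_pij_pij} almost verbatim, the only change being that the second differentiation is with respect to $\pi_k$ rather than $\pi_j$. Starting from the chain-rule identity~\eqref{eq:multichain1},
\[
\frac{\partial G^{-1}(\tau)}{\partial \pi_j} = \frac{1}{h(r)}\,\frac{\partial G(r;\Delta_j,\Delta_k)}{\partial \Delta_j}\,\frac{\partial \Delta_j}{\partial \pi_j},\qquad h(r)\coloneqq \frac{\partial G(r;\Delta_j,\Delta_k)}{\partial r},
\]
I would differentiate with respect to $\pi_k$. Since $\Delta_j=\Phi^{-1}(\pi_j)$ is free of $\pi_k$, the factor $\partial\Delta_j/\partial\pi_j$ is inert, and the only $\pi_k$-dependence enters through $\Delta_k=\Phi^{-1}(\pi_k)$ --- both explicitly in $G$ and its derivatives, and implicitly through $r=G^{-1}(\tau;\Delta_j,\Delta_k)$, which shifts with $\Delta_k$ at fixed $\tau$. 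The product rule then produces a sum of a few terms; expanding $\partial/\partial\pi_k$ of $\{h(r)\}^{-1}$ and of $\partial G/\partial\Delta_j$ by the multivariate chain rule, and substituting the implicit-function relation $\partial r/\partial\Delta_k=-\{h(r)\}^{-1}\,\partial G/\partial\Delta_k$, reduces $\partial^2 G^{-1}(\tau)/\partial\pi_k\partial\pi_j$ to a finite sum of products whose factors are all among $\{h(r)\}^{-1}$, $\partial G/\partial\Delta_j$, $\partial G/\partial\Delta_k$, $\partial^2 G/\partial r^2$, $\partial^2 G/\partial\Delta_j\partial r$, $\partial^2 G/\partial\Delta_k\partial r$, $\partial^2 G/\partial\Delta_j\partial\Delta_k$, $\partial\Delta_j/\partial\pi_j$, and $\partial\Delta_k/\partial\pi_k$.

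Next I would invoke the boundedness of each of these factors: $\{h(r)\}^{-1}$ is bounded because $h(r)$ is bounded below by a positive constant independent of $r,\Delta_j,\Delta_k$ (Theorem~6 of \citet{yoon2020sparse}); $|\partial G/\partial\Delta_j|$ and $|\partial G/\partial\Delta_k|$ are bounded by Lemma~\ref{lem:G_first_Deltaj}; $|\partial^2 G/\partial r^2|$ by Lemma~\ref{lem:G_second_r_r}; $|\partial^2 G/\partial\Delta_j\partial r|$ and (by the symmetry of $G_{TT}$ in $\Delta_j,\Delta_k$) $|\partial^2 G/\partial\Delta_k\partial r|$ by Lemma~\ref{lem:G_second_r_Deltaj}; and $|\partial\Delta_j/\partial\pi_j|$, $|\partial\Delta_k/\partial\pi_k|$ by Lemma~\ref{lem:Delta_wrt_pi_bounded} (using $|\Delta_j|,|\Delta_k|\le M$ from Assumption~\ref{assumption2}). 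The only genuinely new ingredient is a uniform bound on the mixed threshold derivative $|\partial^2 G/\partial\Delta_j\partial\Delta_k|$; this is obtained by exactly the differentiation-under-the-integral argument used for $|\partial^2 G/\partial\Delta_j^2|$ in Lemma~\ref{lem:G_second_Deltaj_Deltaj}, since $G_{TT}$ is the same $\Phi_4$ expression~\eqref{eq:bridgeForm} and Assumptions~\ref{assumption1}--\ref{assumption2} keep $\bSigma_{4a}(r),\bSigma_{4b}(r)$ uniformly nonsingular with eigenvalues bounded away from $0$. Assembling the finitely many bounded factors by the triangle inequality yields $|\partial^2 G^{-1}(\tau)/\partial\pi_k\partial\pi_j|\le C$ for a constant $C>0$ independent of $r,\Delta_j,\Delta_k$.

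I expect the main obstacle to be purely the bookkeeping of the chain-rule expansion --- in particular remembering that $r$ is an implicit function of $\Delta_k$, so $\partial/\partial\pi_k$ applied to $\{h(r)\}^{-1}$ generates both a term through $\partial r/\partial\Delta_k$ (giving a $\partial^2 G/\partial r^2$ factor) and a term from the explicit $\Delta_k$-dependence of $h$ (giving a $\partial^2 G/\partial\Delta_k\partial r$ factor). There is no new analytic difficulty beyond what was already handled for Lemma~\ref{lem:Ginv_second_pij_pij}: every derivative of $G$ appearing is one whose boundedness has been, or can be by an identical computation, established in Section~\ref{suppl:subsec:bridge_inv_bound}.
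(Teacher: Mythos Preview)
Your proposal is correct and follows the same overall strategy as the paper: expand $\partial^{2}G^{-1}/\partial\pi_k\partial\pi_j$ via the chain rule starting from~\eqref{eq:multichain1}, then bound each factor by the appropriate lemma on partial derivatives of $G$ and of $\Delta=\Phi^{-1}(\pi)$. The mixed bound $|\partial^{2}G/\partial\Delta_j\partial\Delta_k|\le C$ that you flag as the one ``new'' ingredient is exactly Lemma~\ref{lem:G_second_Deltak_Deltaj} in the paper.

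There is one noteworthy difference: your bookkeeping is actually more careful than the paper's. You correctly treat $r=G^{-1}(\tau;\Delta_j,\Delta_k)$ as an implicit function of $\Delta_k$ at fixed $\tau$, so that differentiating $\{h(r)\}^{-1}$ and $G_{\Delta_j}$ with respect to $\pi_k$ produces additional terms carrying factors of $\partial^{2}G/\partial r^{2}$ and $\partial^{2}G/\partial r\,\partial\Delta_j$ (via $\partial r/\partial\Delta_k=-h^{-1}G_{\Delta_k}$). The paper's own expansion in the proof of Lemma~\ref{lem:Ginv_second_pik_pij} appears to omit these implicit-$r$ contributions, writing only $\partial\{h(r)\}^{-1}/\partial\Delta_k=-h^{-2}G_{r\Delta_k}$ and $\partial G_{\Delta_j}/\partial\Delta_k=G_{\Delta_j\Delta_k}$. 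Since the omitted factors are themselves bounded by Lemmas~\ref{lem:G_second_r_r} and~\ref{lem:G_second_r_Deltaj}, the paper's conclusion is unaffected, but your expansion is the complete one.
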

\begin{proof}
Let $h(r) = \partial G(r;\Delta_{j},\Delta_{k})/\partial r$ so that $\partial G^{-1}(\tau)/\partial \tau =(\partial G(r;\Delta_{j},\Delta_{k})/\partial r)^{-1}  = 1/h(r)$. By \eqref{eq:multichain1} and multivariate chain rule, we have
\bfln{
\frac{\partial^{2} G^{-1}(\tau)}{\partial \pi_{k}\pi_{j} } 
=& 
\frac{\partial}{\partial \pi_{k}}
\left(
\frac{1}{h(r)}
\frac{\partial G(r;\Delta_{j},\Delta_{k}) }{\partial \Delta_{j} } 
\frac{\partial \Delta_{j}}{\partial \pi_{j}}
\right)\\
=& 
\left[ \frac{\partial}{\partial \pi_{k}}\left\{
\frac{1}{h(r)}\right\}
\right]
\frac{\partial G(r;\Delta_{j},\Delta_{k}) }{\partial \Delta_{j} } 
\frac{\partial \Delta_{j}}{\partial \pi_{j}}
+
\frac{1}{h(r)}
\left[ 
\frac{\partial^{2} G(r;\Delta_{j},\Delta_{k}) }{\partial \pi_{k} \partial \Delta_{j} } 
\right]
\frac{\partial \Delta_{j}}{\partial \pi_{j}}\\
&+
\frac{1}{h(r)}
\frac{\partial G(r;\Delta_{j},\Delta_{k}) }{\partial \Delta_{j} } 
\left[ \frac{\partial^{2} \Delta_{j}}{\partial \pi_{k} \partial \pi_{j}} \right].
}
As $\partial^2 \Delta_{j}/\partial \pi_{k}\partial \pi_{j}=0 $, we further have that
\bfln{
\frac{\partial^{2} G^{-1}(\tau)}{\partial \pi_{k}\pi_{j} } 
=&
\left[ \frac{\partial\{h(r)\}^{-1}}{\partial \Delta_{k}}
\frac{\partial\Delta_{k}}{\partial \pi_{k}}
\right]
\frac{\partial G(r;\Delta_{j},\Delta_{k}) }{\partial \Delta_{j} } 
\frac{\partial \Delta_{j}}{\partial \pi_{j}}
+
\frac{1}{h(r)}
\left[ 
\frac{\partial^{2} G(r;\Delta_{j},\Delta_{k}) }{ \partial \Delta_{k} \partial \Delta_{j} }
\frac{\partial \Delta_{k}}{\partial \pi_{k}}
\right]
\frac{\partial \Delta_{j}}{\partial \pi_{j}}\\
=&
\frac{\partial\{h(r)\}^{-1}}{\partial \Delta_{k}}
\frac{\partial G(r;\Delta_{j},\Delta_{k}) }{\partial \Delta_{j} } 
\left(
\frac{\partial \Delta_{k}}{\partial \pi_{k}}
\frac{\partial \Delta_{j}}{\partial \pi_{j}}
\right)
+
\frac{1}{h(r)}
\frac{\partial^{2} G(r;\Delta_{j},\Delta_{k}) }{ \partial \Delta_{k} \partial \Delta_{j} }
\left(
\frac{\partial \Delta_{k}}{\partial \pi_{k}}
\frac{\partial \Delta_{j}}{\partial \pi_{j}}
\right),
}
where $|\partial \Delta_{j}/\partial \pi_{j}|$ and $|\partial \Delta_{k}/\partial \pi_{k}|$ are bounded above by some constant $C>0$ by Lemma \ref{lem:Delta_wrt_pi_bounded}. Thus, by the triangle inequality,
\bfln{
\left|
\frac{\partial^{2} G^{-1}(\tau)}{\partial \pi_{k}\pi_{j} } 
\right|
&\leq
\left|
\frac{\partial\{h(r)\}^{-1}}{\partial \Delta_{k}}
\frac{\partial G(r;\Delta_{j},\Delta_{k}) }{\partial \Delta_{j} } 
\left(
\frac{\partial \Delta_{k}}{\partial \pi_{k}}
\frac{\partial \Delta_{j}}{\partial \pi_{j}}
\right)
\right|
+
\left|
\frac{1}{h(r)}
\frac{\partial^{2} G(r;\Delta_{j},\Delta_{k}) }{ \partial \Delta_{k} \partial \Delta_{j} }
\left(
\frac{\partial \Delta_{k}}{\partial \pi_{k}}
\frac{\partial \Delta_{j}}{\partial \pi_{j}}
\right)
\right| \\
&\leq
C^2
\left|
\frac{\partial\{h(r)\}^{-1}}{\partial \Delta_{k}}
\frac{\partial G(r;\Delta_{j},\Delta_{k}) }{\partial \Delta_{j} } 
\right|
+
C^2
\left|
\frac{1}{h(r)}
\frac{\partial^{2} G(r;\Delta_{j},\Delta_{k}) }{ \partial \Delta_{k} \partial \Delta_{j} }
\right|
}
We next show that each term is bounded.

Consider $\partial \{h(r)\}^{-1}/\partial \Delta_{k}$. By the multivariate chain rule,
\bfln{
\frac{\partial \{h(r)\}^{-1}}{\partial \Delta_{k}} = 
-\frac{1}{h(r)^2}
\frac{\partial h(r)}{\partial \Delta_{k}}
=
-\frac{1}{h(r)^2}
\frac{\partial^2 G(r;\Delta_{j},\Delta_{k})}{\partial \Delta_{k} \partial r}.
}
The term $|\partial^2 G(r;\Delta_{j},\Delta_{k})/\partial \Delta_{k} \partial r|$ is bounded from above by Lemma \ref{lem:G_second_r_Deltaj}, and $|1/h(r)|$ is bounded from above by Theorem~6 in \citet{yoon2020sparse}. 
Furthermore,  $|\partial G(r;\Delta_{j},\Delta_{k})/ \partial \Delta_{j}|$ is bounded by Lemma \ref{lem:G_first_Deltaj}, $|\partial^2 G(r;\Delta_{j},\Delta_{k})/ \partial \Delta_{k} \partial \Delta_{j}|$ is bounded by Lemma~\ref{lem:G_second_Deltak_Deltaj}. This concludes the proof.
\end{proof}

\begin{lemma}\label{lem:Ginv_second_tau_pij}
Let $G^{-1}(\tau)$ be the inverse bridge function for TT case, where $\tau = G_{TT}(r,\Delta_{j}, \Delta_{k})$.
Under Assumptions \ref{assumption1}--\ref{assumption2}, $|\partial^{2} G^{-1}(\tau) /\partial \pi_{j}\partial \tau  |\leq C$ for some constant $C>0$ independent of $r$, $\Delta_j$, $\Delta_k$.
\end{lemma}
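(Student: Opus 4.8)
The plan is to reduce this to the same machinery used for Lemmas~\ref{lem:Ginv_second_tau_tau} and~\ref{lem:Ginv_second_pij_pij}: express the mixed derivative in terms of low-order partial derivatives of the bridge function $G$, and then invoke the boundedness results of Section~\ref{suppl:subsec:bridge_inv_bound}. Write $h(r) = \partial G(r;\Delta_j,\Delta_k)/\partial r$, so that $\partial G^{-1}(\tau)/\partial\tau = 1/h(r)$. Starting from the chain-rule identity~\eqref{eq:multichain1} established in the proof of Lemma~\ref{lem:Ginv_first_pi}, namely
\[
\frac{\partial G^{-1}(\tau)}{\partial\pi_j} = \frac{1}{h(r)}\,\frac{\partial G(r;\Delta_j,\Delta_k)}{\partial\Delta_j}\,\frac{\partial\Delta_j}{\partial\pi_j},
\]
I would differentiate in $\tau$. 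Since $\partial\Delta_j/\partial\pi_j$ depends only on $\pi_j$ and not on $\tau$, and since $\Delta_j,\Delta_k$ are held fixed while $\tau$ varies — so that $\partial/\partial\tau = (\partial r/\partial\tau)\,\partial/\partial r = h(r)^{-1}\,\partial/\partial r$ — the product rule, together with $\partial h(r)/\partial r = \partial^2 G(r;\Delta_j,\Delta_k)/\partial r^2$, gives
\[
\frac{\partial^2 G^{-1}(\tau)}{\partial\pi_j\,\partial\tau}
= \frac{\partial\Delta_j}{\partial\pi_j}\left\{
-\frac{1}{h(r)^3}\,\frac{\partial^2 G(r;\Delta_j,\Delta_k)}{\partial r^2}\,\frac{\partial G(r;\Delta_j,\Delta_k)}{\partial\Delta_j}
+ \frac{1}{h(r)^2}\,\frac{\partial^2 G(r;\Delta_j,\Delta_k)}{\partial r\,\partial\Delta_j}
\right\}.
\]

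Next I would bound each factor uniformly over $|r|\le 1-\varepsilon_r$ and $|\Delta_j|,|\Delta_k|\le M$ (Assumptions~\ref{assumption1}--\ref{assumption2}). By Theorem~6 of \citet{yoon2020sparse}, $h(r)$ is positive and bounded below by a positive constant independent of $r,\Delta_j,\Delta_k$, so $h(r)^{-2}$ and $h(r)^{-3}$ are bounded; by Lemma~\ref{lem:G_second_r_r}, $|\partial^2 G(r;\Delta_j,\Delta_k)/\partial r^2|\le C$; by Lemma~\ref{lem:G_first_Deltaj}, $|\partial G(r;\Delta_j,\Delta_k)/\partial\Delta_j|\le C$; by Lemma~\ref{lem:G_second_r_Deltaj}, $|\partial^2 G(r;\Delta_j,\Delta_k)/\partial r\,\partial\Delta_j|\le C$; and by Lemma~\ref{lem:Delta_wrt_pi_bounded}, $|\partial\Delta_j/\partial\pi_j|\le C$. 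Applying the triangle inequality to the displayed identity then yields $|\partial^2 G^{-1}(\tau)/\partial\pi_j\,\partial\tau|\le C$ for a constant depending only on $\varepsilon_r$ and $M$ (through the constants of the invoked lemmas).

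I do not anticipate a genuine obstacle here: the computation is the same product-rule bookkeeping as in Lemmas~\ref{lem:Ginv_second_tau_tau} and~\ref{lem:Ginv_second_pij_pij}, and the substantive analytic work — bounding the first and second derivatives of the closed-form but unwieldy bridge function $G_{TT}$ and of $\Delta_j=\Phi^{-1}(\pi_j)$ — has already been isolated in the lemmas of Section~\ref{suppl:subsec:bridge_inv_bound}. The one point that needs a line of care is the legitimacy of replacing $\partial/\partial\tau$ by $h(r)^{-1}\partial/\partial r$, which is the inverse-function identity $\partial r/\partial\tau = 1/h(r)$, valid precisely because $h(r)$ stays bounded away from zero.
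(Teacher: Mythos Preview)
Your proof is correct and uses the same chain-rule bookkeeping and the same auxiliary bounds as the paper. The one difference is the order of differentiation: the paper applies $\partial/\partial\pi_j$ directly to $\partial G^{-1}/\partial\tau = 1/h(r)$, obtaining the single term
\[
-\frac{1}{h(r)^2}\,\frac{\partial^2 G(r;\Delta_j,\Delta_k)}{\partial\Delta_j\,\partial r}\,\frac{\partial\Delta_j}{\partial\pi_j},
\]
whereas you apply $\partial/\partial\tau$ to $\partial G^{-1}/\partial\pi_j$ and pick up an additional summand involving $\partial^2 G/\partial r^2$. Consequently your route also calls on Lemma~\ref{lem:G_second_r_r}, which the paper's shorter expression does not need; otherwise the cited lemmas (Theorem~6 of \citet{yoon2020sparse}, Lemmas~\ref{lem:G_first_Deltaj}, \ref{lem:G_second_r_Deltaj}, \ref{lem:Delta_wrt_pi_bounded}) are identical. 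Your extra care in noting that $\Delta_j,\Delta_k$ are held fixed under $\partial/\partial\tau$, so that only $r$ varies, makes the dependency tracking a bit more transparent than in the paper's version.
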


\begin{proof}
Let $h(r) = \partial G(r;\Delta_{j},\Delta_{k})/\partial r$ so that $\partial G^{-1}(\tau)/\partial \tau =(\partial G(r;\Delta_{j},\Delta_{k})/\partial r)^{-1}  = 1/h(r)$.
By the multivariate chain rule,
\bfln{
\frac{\partial^{2} G^{-1}(\tau)}{ \partial \pi_{j} \partial \tau}
=&
\frac{\partial}{\partial \pi_{j} }
\frac{\partial G^{-1}(\tau)}{\partial \tau }
=
\frac{\partial}{\partial \pi_{j} }
\left\{\frac{1}{h(r)}\right\}
=
\frac{\partial}{\partial \Delta_{j} }\left\{
\frac{1}{h(r) }\right\}\frac{\partial \Delta_{j}}{\partial 
\pi_{j} }
\\
=&
-
\frac{1}{h(r)^2 }
\frac{\partial h(r)}{\partial \Delta_{j} }\frac{\partial \Delta_{j}}{\partial 
\pi_{j} }
=
-
\frac{1}{h(r)^2 }
\frac{\partial G(r;\Delta_{j},\Delta_{k})}{\partial \Delta_{j} \partial r}\frac{\partial \Delta_{j}}{\partial 
\pi_{j} }.
}
The terms $|\partial \Delta_{j}/\partial 
\pi_{j}|$, $|1/h(r)^2|$, and $|\partial^2 G(r;\Delta_{j},\Delta_{k})/\partial \Delta_{j} \partial r|$ are bounded above by constants by Lemma \ref{lem:Delta_wrt_pi_bounded}, Theorem 6 of \citet{yoon2020sparse}, and Lemma \ref{lem:G_second_r_Deltaj}, respectively. Thus, for some constant $C>0$, we have
\bfln{
\left|
\frac{\partial^{2} G^{-1}(\tau)}{ \partial \pi_{j} \partial \tau}
\right| \leq C.
}
\end{proof}

\subsection{Bounds on the partial derivatives of the bridge function}\label{suppl:subsec:bridge_bound}
Here we bound partial derivatives of the bridge function $\brg(r,\Delta_{j}, \Delta_{k})$ for TT case, where
\bfln{
\brg(r,\Delta_{j}, \Delta_{k}) = -2\Phi_{4}(-\Delta_{j},-\Delta_{k},0,0;\bSigma_{4a}) + 2\Phi_{4}(-\Delta_{j},-\Delta_{k},0,0;\bSigma_{4b}).
}
As the bridge function consists of two 4-dimensional Gaussian distribution functions, we will show that, whether  $\bSigma_{4} = \bSigma_{4a}$ or $\bSigma_{4} = \bSigma_{4b}$, the absolute values of partial derivatives of $\Phi_{4}(-\Delta_{j},-\Delta_{k},0,0;\bSigma_{4})$ are bounded from above.


\begin{lemma}\label{lem:G_first_Deltaj}
Under Assumptions \ref{assumption1}--\ref{assumption2}, $|\partial G(r;\Delta_{j},\Delta_{k})/\partial \Delta_{j}|$ and $|\partial G(r;\Delta_{j},\Delta_{k})/\partial \Delta_{k}|$ are bounded above by some constant $C>0$ independent from $r$, $\Delta_j$, $\Delta_k$.
\end{lemma}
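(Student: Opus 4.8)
The plan is to reduce the bound to an elementary fact about multivariate Gaussian distribution functions. Recall that for the TT case
\[
G(r;\Delta_{j},\Delta_{k}) = -2\Phi_{4}(-\Delta_{j},-\Delta_{k},0,0;\bSigma_{4a}(r)) + 2\Phi_{4}(-\Delta_{j},-\Delta_{k},0,0;\bSigma_{4b}(r)),
\]
and that $\bSigma_{4a}(r)$ and $\bSigma_{4b}(r)$ are correlation matrices (unit diagonal). Hence it is enough to show that $|\partial \Phi_{4}(a_{1},a_{2},a_{3},a_{4};\bSigma)/\partial a_{1}|$ --- and, by the symmetry of the roles of $a_{1}$ and $a_{2}$, also $|\partial \Phi_{4}/\partial a_{2}|$ --- is bounded by an absolute constant, uniformly over correlation matrices $\bSigma$ with unit diagonal. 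The chain rule then contributes only the factors $-1$ (from $a_{1}=-\Delta_{j}$, $a_{2}=-\Delta_{k}$) together with the coefficients $\pm 2$, so that $|\partial G/\partial\Delta_{j}|\le 2\sup|\partial_{a_{1}}\Phi_{4}| + 2\sup|\partial_{a_{1}}\Phi_{4}|$, and likewise for $\Delta_{k}$.

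The key step is a monotone coupling argument. Writing $U=(U_{1},U_{2},U_{3},U_{4})\sim\N_{4}(\zeros,\bSigma)$, for any $h>0$
\[
0 \le \Phi_{4}(a_{1}+h,a_{2},a_{3},a_{4};\bSigma) - \Phi_{4}(a_{1},a_{2},a_{3},a_{4};\bSigma) = \PP(a_{1}<U_{1}\le a_{1}+h,\ U_{2}\le a_{2},\ U_{3}\le a_{3},\ U_{4}\le a_{4}) \le \Phi(a_{1}+h)-\Phi(a_{1}),
\]
with the analogous two-sided bound for $h<0$. Dividing by $h$ and letting $h\to 0$ shows that, wherever the derivative exists, $0\le \partial_{a_{1}}\Phi_{4}(a_{1},a_{2},a_{3},a_{4};\bSigma)\le \phi(a_{1})\le\phi(0)=(2\pi)^{-1/2}$, with the same bound for $\partial_{a_{2}}\Phi_{4}$. (Differentiability of $G$ in $\Delta_{j},\Delta_{k}$ is already established in \citet{yoon2020sparse}; alternatively, Assumption \ref{assumption1} forces $|r|\le 1-\varepsilon_{r}$, keeping $\bSigma_{4a}(r)$ and $\bSigma_{4b}(r)$ nonsingular, which suffices for differentiability of the relevant Gaussian CDFs.) Combining this with the reduction of the previous paragraph yields $|\partial G(r;\Delta_{j},\Delta_{k})/\partial\Delta_{j}| \le 4(2\pi)^{-1/2}$ and the same bound for the derivative with respect to $\Delta_{k}$; the constant $C=4(2\pi)^{-1/2}$ depends on nothing, which is exactly the claim.

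I do not expect a genuine obstacle in this lemma: the only point needing a line of care is justifying that the difference quotients above actually converge to a derivative, which is immediate once one records either that $G$ is differentiable (as in \citet{yoon2020sparse}) or that $|r|\le 1-\varepsilon_{r}$ keeps the $4\times 4$ matrices away from singularity. Assumption \ref{assumption2} is not needed for this particular bound; it will become relevant only in the later derivative estimates, where factors involving $\Delta_{j},\Delta_{k}$ themselves must be controlled.
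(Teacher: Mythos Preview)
Your proposal is correct and arrives at exactly the same bound as the paper, $|\partial_{a_1}\Phi_4|\le\phi(a_1)\le\phi(0)=(2\pi)^{-1/2}$, by essentially the same mechanism. The paper applies the Leibniz rule to write the derivative as $\phi(-\Delta_j)$ times a conditional probability and bounds the latter by $1$; your difference-quotient argument $\PP(a_1<U_1\le a_1+h,\ U_2\le a_2,\dots)\le\PP(a_1<U_1\le a_1+h)$ is the probabilistic rephrasing of the same factorisation. Your observation that Assumption~\ref{assumption2} is not actually used here is also correct---the paper's proof does not invoke it either, relying only on $\phi(-\Delta_j)\le\phi(0)$.
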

\begin{proof}
By the Leibniz rule,
\bfl{ \nonumber
\frac{\partial}{\partial \Delta_{j}}
\Phi_{4}( -\Delta_{j}, -\Delta_{k}, 0, 0; \bSigma_{4})
&  =
\frac{\partial}{\partial \Delta_{j}}
\int_{-\infty}^{0}
\int_{-\infty}^{0} 
\int_{-\infty}^{-\Delta_{k}} 
\int_{-\infty}^{-\Delta_{j}}
\phi(z_{1},z_{2},z_{3},z_{4})
dz_{1} dz_{2}dz_{3}dz_{4} 
\\ \nonumber
&= 
(-1)
\int_{-\infty}^{0}
\int_{-\infty}^{0}
\int_{-\infty}^{-\Delta_{k}} 
\phi(-\Delta_{j},z_{2},z_{3},z_{4})
 dz_{2}dz_{3}dz_{4}.
}
Thus, regardless of $\bSigma_{4} = \bSigma_{4a}$ or $\bSigma_{4} = \bSigma_{4b}$,
\bfln{
\left| 
\frac{\partial}{\partial \Delta_{j}}
\Phi_{4}( -\Delta_{j}, -\Delta_{k}, 0, 0; \bSigma_{4})
\right|
&= 
\int_{-\infty}^{0}
\int_{-\infty}^{0}
\int_{-\infty}^{-\Delta_{k}} 
\phi(-\Delta_{j},z_{2},z_{3},z_{4})
 dz_{2}dz_{3}dz_{4}\\
 &= 
\phi(-\Delta_{j})
\int_{-\infty}^{0}
\int_{-\infty}^{0}
\int_{-\infty}^{-\Delta_{k}} 
\phi(z_{2},z_{3},z_{4}|-\Delta_{j})
dz_{2}dz_{3}dz_{4},
}
where $\phi(z_{2},z_{3},z_{4}|-\Delta_{j})$ is the conditional pdf given $Z_1 = -\Delta_j$. Therefore, the three-dimensional integral above corresponds to a probability (and is bounded by one), leading to
\bfln{
\left| 
\frac{\partial}{\partial \Delta_{j}}
\Phi_{4}( -\Delta_{j}, -\Delta_{k}, 0, 0; \bSigma_{4})
\right|
&\leq \phi(-\Delta_j) \leq \phi(0) = 1/\sqrt{2\pi}.
}
The proof for $\Delta_k$ is analogous.
\end{proof}


\begin{lemma}\label{lem:G_second_r_Deltaj} Under Assumptions \ref{assumption1}--\ref{assumption2}, $|\partial^2 G(r;\Delta_{j},\Delta_{k})|/\partial r \partial \Delta_{j}|$ and $|\partial^2 G(r;\Delta_{j},\Delta_{k})|/\partial r \partial \Delta_{k}|$ are bounded above by some constant $C>0$.
\end{lemma}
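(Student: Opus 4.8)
The plan is to reduce the mixed partial derivative of $G$ to a derivative of the four-dimensional Gaussian density with respect to its correlation parameters and then bound it uniformly using Assumptions~\ref{assumption1}--\ref{assumption2}. Since $G(r;\Delta_{j},\Delta_{k})=-2\Phi_{4}(-\Delta_{j},-\Delta_{k},0,0;\bSigma_{4a}(r))+2\Phi_{4}(-\Delta_{j},-\Delta_{k},0,0;\bSigma_{4b}(r))$, by the triangle inequality it suffices to bound $|\partial^{2}\Phi_{4}(-\Delta_{j},-\Delta_{k},0,0;\bSigma_{4}(r))/\partial r\,\partial\Delta_{j}|$ for $\bSigma_{4}\in\{\bSigma_{4a},\bSigma_{4b}\}$; the argument for $\Delta_{k}$ is symmetric. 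The conceptual point is that by Assumption~\ref{assumption2} the argument vector $(-\Delta_{j},-\Delta_{k},0,0)$ ranges over the compact set $[-M,M]^{2}\times\{0\}^{2}$, while by Assumption~\ref{assumption1} the relevant correlation satisfies $|r|\le 1-\varepsilon_{r}$, on which range $\bSigma_{4a}(r)$ and $\bSigma_{4b}(r)$ — the correlation matrices produced in the derivation of the $TT$ bridge function in \citet{yoon2020sparse} — stay positive definite with eigenvalues in a fixed interval $[c(\varepsilon_{r}),C]$; hence $\Phi_{4}$ viewed as a function of $(\Delta_{j},\Delta_{k},r)$ is smooth on a neighborhood of a compact set, so all of its partial derivatives are bounded there. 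I would make this quantitative as follows.

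First, from the computation in the proof of Lemma~\ref{lem:G_first_Deltaj},
\begin{equation*}
\frac{\partial}{\partial\Delta_{j}}\Phi_{4}(-\Delta_{j},-\Delta_{k},0,0;\bSigma_{4})=-\int_{-\infty}^{0}\int_{-\infty}^{0}\int_{-\infty}^{-\Delta_{k}}\phi_{4}(-\Delta_{j},z_{2},z_{3},z_{4};\bSigma_{4})\,dz_{2}\,dz_{3}\,dz_{4},
\end{equation*}
where $\phi_{4}(\,\cdot\,;\bSigma_{4})$ denotes the density of $\N_{4}(\zeros,\bSigma_{4})$. The entries of $\bSigma_{4a}(r)$ and $\bSigma_{4b}(r)$ are affine in $r$ with slopes in $\{0,\pm1,\pm1/\sqrt{2}\}$, so $r$ enters the integrand only through $\phi_{4}$, and differentiating under the integral sign (justified because $\phi_{4}$ and $\partial\phi_{4}/\partial r$ admit a uniform integrable Gaussian envelope on $|r|\le 1-\varepsilon_{r}$) gives
\begin{equation*}
\frac{\partial^{2}}{\partial r\,\partial\Delta_{j}}\Phi_{4}(-\Delta_{j},-\Delta_{k},0,0;\bSigma_{4})=-\int_{-\infty}^{0}\int_{-\infty}^{0}\int_{-\infty}^{-\Delta_{k}}\frac{\partial}{\partial r}\phi_{4}(-\Delta_{j},z_{2},z_{3},z_{4};\bSigma_{4}(r))\,dz_{2}\,dz_{3}\,dz_{4}.
\end{equation*}
Writing $\phi_{4}(\bz;\bSigma_{4})=(2\pi)^{-2}|\bSigma_{4}|^{-1/2}\exp(-\tfrac12\bz^{\top}\bSigma_{4}^{-1}\bz)$ and using $\partial\bSigma_{4}^{-1}/\partial r=-\bSigma_{4}^{-1}(\partial\bSigma_{4}/\partial r)\bSigma_{4}^{-1}$, one obtains $|\partial\phi_{4}(\bz;\bSigma_{4})/\partial r|\le (a+b\|\bz\|_{2}^{2})\exp(-\tfrac12\lambda_{\max}(\bSigma_{4})^{-1}\|\bz\|_{2}^{2})$, where $a,b$ depend only on $|\bSigma_{4}|$, $\|\bSigma_{4}^{-1}\|_{\textup{op}}$, and $\|\partial\bSigma_{4}/\partial r\|_{\textup{op}}$, all controlled by $\varepsilon_{r}$-dependent constants by the conditioning statement above. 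Substituting $z_{1}=-\Delta_{j}$ with $|\Delta_{j}|\le M$ and enlarging the region of integration from the box to all of $\R^{3}$, the right-hand side is at most a finite constant $C=C(\varepsilon_{r},M)$ (a Gaussian second-moment integral), which bounds $|\partial^{2}\Phi_{4}/\partial r\,\partial\Delta_{j}|$; summing the two contributions from $G$ proves the claim.

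The main obstacle is the quantitative conditioning of $\bSigma_{4a}(r)$ and $\bSigma_{4b}(r)$, i.e.\ showing that their smallest eigenvalue — equivalently their determinant, a polynomial in $r$ — is bounded away from $0$ on $|r|\le 1-\varepsilon_{r}$, so that the density, its inverse-covariance quadratic form, and its $r$-derivative are uniformly controlled; once this is in hand, everything else is routine differentiation under the integral sign together with Gaussian moment bounds, mirroring Lemma~\ref{lem:G_first_Deltaj}. A cleaner alternative that avoids differentiating the density is to apply Plackett's identity $\partial\Phi_{d}(\ba;\bR)/\partial R_{k\ell}=\phi_{2}(a_{k},a_{\ell};R_{k\ell})\,\Phi_{d-2}(\,\cdot\,)$ together with the chain rule to express $\partial\Phi_{4}/\partial r$ as a finite sum of products of two-dimensional Gaussian densities and CDFs with arguments in a compact set and correlations bounded away from $\pm1$, and then differentiate in $\Delta_{j}$ via the Leibniz rule; each resulting factor and its $\Delta_{j}$-derivative is manifestly bounded, giving the same conclusion.
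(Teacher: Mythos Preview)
Your proposal is correct and takes a genuinely different, more streamlined route than the paper. Both start from the same $\Delta_{j}$-derivative integral (Lemma~\ref{lem:G_first_Deltaj}) and differentiate under the integral in $r$, but diverge from there. The paper applies the chain rule through each correlation entry $\rho_{k\ell}$ of $\bSigma_{4}$, invokes the Plackett-type identity $\partial\phi/\partial\rho_{k\ell}=\partial^{2}\phi/\partial z_{k}\partial z_{\ell}$ term by term, and then treats the resulting integrals case by case: some collapse to a lower-dimensional Gaussian density times a conditional probability, while others (notably the $(1,2)$ term, where one of the differentiated coordinates is fixed at $-\Delta_{j}$) produce first conditional moments that are bounded through the explicit conditional mean and variance formulas gathered in Lemma~\ref{lem:bivariate_conditional_moments_bound} and the explicit inverse of $\bSigma_{4}$ in Lemma~\ref{lem:Sigma4_inv_det}. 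Your main argument instead bounds $\partial\phi_{4}/\partial r$ pointwise by a single quadratic-times-Gaussian envelope governed only by $|\bSigma_{4}|^{-1/2}$, $\|\bSigma_{4}^{-1}\|_{\textup{op}}$, $\|\partial_{r}\bSigma_{4}\|_{\textup{op}}$, and $\lambda_{\max}(\bSigma_{4})$, and then integrates over $\R^{3}$; this bypasses the case analysis and the auxiliary conditional-moment lemmas entirely, at the cost of less explicit constants. Your Plackett-based alternative is closest to the paper's proof but reverses the order of differentiation ($r$ first, then $\Delta_{j}$). The one substantive input you flagged---uniform conditioning of $\bSigma_{4a}(r)$ and $\bSigma_{4b}(r)$ on $|r|\le 1-\varepsilon_{r}$---is exactly what the paper supplies in Lemma~\ref{lem:Sigma4_inv_det} via the explicit computation $|\bSigma_{4a}|=|\bSigma_{4b}|=(1-r^{2})^{2}/4$ together with closed-form inverses.
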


\begin{proof}
We start from the partial derivative with respect to $\Delta_{j}$ given in Lemma \ref{lem:G_first_Deltaj} as
\bfln{ 
\frac{\partial}{\partial \Delta_{j}}
\Phi_{4}( -\Delta_{j}, -\Delta_{k}, 0, 0; \bSigma_{4})
&= 
(-1)
\int_{-\infty}^{0}
\int_{-\infty}^{0}
\int_{-\infty}^{-\Delta_{k}} 
\phi(-\Delta_{j},z_{2},z_{3},z_{4};\bSigma_{4})
dz_{2}dz_{3}dz_{4}.
}
Let $\bSigma_{4}=[\rho_{jk}]_{1\leq j,k \leq 4}$ and consider the following multivariate chain rule:
\bfln{ 
\frac{\partial}{\partial r}
\int_{-\infty}^{0}
\int_{-\infty}^{0}
\int_{-\infty}^{-\Delta_{k}}
&\phi(-\Delta_{j},z_{2},z_{3},z_{4};\bSigma_{4})
dz_{2}dz_{3}dz_{4} \\
&= 
\sum_{j<k}
\left\{
\frac{\partial}{\partial \rho_{jk}}
\int_{-\infty}^{0}
\int_{-\infty}^{0}
\int_{-\infty}^{-\Delta_{k}} 
\phi(-\Delta_{j},z_{2},z_{3},z_{4};\bSigma_{4})
dz_{2}dz_{3}dz_{4}
\frac{\partial \rho_{jk}}{\partial r}
\right\}\\
&=
\sum_{j<k}
\left\{
\int_{-\infty}^{0}
\int_{-\infty}^{0}
\int_{-\infty}^{-\Delta_{k}} 
\frac{\partial \phi(-\Delta_{j},z_{2},z_{3},z_{4};\bSigma_{4})}{\partial \rho_{jk}}
dz_{2}dz_{3}dz_{4}
\frac{\partial \rho_{jk}}{\partial r}
\right\}
.
}
In the above, we only consider partial derivatives with respect to $\rho_{12}$, $\rho_{14}$, $\rho_{23}$, and $\rho_{34}$ because $ \rho_{13}$, $\rho_{24}$ do not involve $r$ whether $\bSigma_{4}=\bSigma_{4a}$ or $\bSigma_{4}=\bSigma_{4b}$, i.e., $\partial \rho_{jk}/\partial r = 0$. 

Consider the case $(j,k)=(2,3)$. By \citet{plackett1954reduction},
\bfl{\nonumber
\int_{-\infty}^{0}
\int_{-\infty}^{0}
\int_{-\infty}^{-\Delta_{k}} 
\frac{\partial \phi(-\Delta_{j},z_{2},z_{3},z_{4})}{\partial \rho_{23}}
dz_{2}dz_{3}dz_{4} 
&=
\int_{-\infty}^{0}
\int_{-\infty}^{0}
\int_{-\infty}^{-\Delta_{k}} 
\frac{\partial^{2} \phi(-\Delta_{j},z_{2},z_{3},z_{4})}{\partial z_{2}\partial z_{3}}
dz_{2}dz_{3}dz_{4} 
\\ \nonumber
&=
\int_{-\infty}^{0}
 \phi(-\Delta_{j},-\Delta_{k},0,z_{4})
dz_{4} \\ \nonumber
&=
\int_{-\infty}^{0}
\phi(z_{4}|-\Delta_{j},-\Delta_{k},0)\phi(-\Delta_{j},-\Delta_{k},0)
dz_{4} \\ \nonumber
&=
\phi(-\Delta_{j},-\Delta_{k},0)
\int_{-\infty}^{0}
 \phi(z_{4}|-\Delta_{j},-\Delta_{k},0)
dz_{4}, 
}
where $\phi(z_{4}|-\Delta_{j},-\Delta_{k},0)$ is the conditional pdf given $Z_{1}=-\Delta_{j}$, $Z_{2}=-\Delta_{k}$, and $Z_{3}=0$. Therefore, above integral corresponds to a probability (and is bounded by one), leading to
\bfl{\nonumber
\int_{-\infty}^{0}
\int_{-\infty}^{0}
\int_{-\infty}^{-\Delta_{k}} 
\frac{\partial \phi(-\Delta_{j},z_{2},z_{3},z_{4})}{\partial \rho_{23}}
dz_{2}dz_{3}dz_{4} 
\leq
\phi(-\Delta_{j},-\Delta_{k},0)
\leq |\bSigma_{4}|^{-1/2},
}
where above inequalities hold because $\phi(-\Delta_{j},-\Delta_{k},0)\leq \phi(0,0,0)= |\bSigma_{3}|^{-1/2} \leq |\bSigma_{4}|^{-1/2}$ and $\bSigma_{3}=\var\{(Z_{1},Z_{2},Z_{3})\}$. As Lemma \ref{lem:Sigma4_inv_det} provides that $|\bSigma_{4}|^{-1/2} \leq C$ for some constant $C>0$, we have the desired result. The case $(j,k)=(3,4)$ is similar with the same bound.

For $(j,k)=(1,2)$, again by \citet{plackett1954reduction}, we have
\bfln{
\int_{-\infty}^{0}
\int_{-\infty}^{0}
\int_{-\infty}^{-\Delta_{k}}
\frac{\partial \phi(-\Delta_{j},z_{2},z_{3},z_{4})}{\partial \rho_{12}}
dz_{2}dz_{3}dz_{4} 
&=
-\int_{-\infty}^{0}
\int_{-\infty}^{0}
\frac{\partial \phi(-\Delta_{j},-\Delta_{k},z_{3},z_{4})}{\partial \Delta_{j}}
dz_{3}dz_{4}.
}
For notational convenience, let $\by=(-\Delta_{j},-\Delta_{k},z_{3},z_{4})^{\top}=(y_{1},y_{2},y_{3},y_{4})^{\top}$ and write
\bfl{\nonumber
\int_{-\infty}^{0}
\int_{-\infty}^{0}
\frac{\partial \phi(-\Delta_{j},-\Delta_{k},z_{3},z_{4})}{\partial \Delta_{j}}
dz_{3}dz_{4}
&= 
\int_{-\infty}^{0}
\int_{-\infty}^{0}
\frac{\partial \phi(y_{1},y_{2},y_{3},y_{4})}{\partial y_{1}}
dy_{3}dy_{4}
\\\label{inlem:G_second_r_Deltaj_term2}
&=
\int_{-\infty}^{0}
\int_{-\infty}^{0}
(-\bom_{1}^{\top}\by)
\phi(y_{1},y_{2},y_{3},y_{4})
dy_{3}dy_{4},
}
where $\bom^{\top}_{i}$ is the $i$th row of $\bSigma_{4}^{-1}$.
Then, by extending the range of integrations, the absolute value of \eqref{inlem:G_second_r_Deltaj_term2} is bounded above as
\bfln{
\left|
\int_{-\infty}^{0}
\int_{-\infty}^{0}
(-\bom_{1}^{\top}\by)
\phi(y_{1},y_{2},y_{3},y_{4})
dy_{3}dy_{4}
\right|
&\leq
\int_{-\infty}^{0}
\int_{-\infty}^{0}
\left|\bom_{1}^{\top}\by\right|
\phi(y_{1},y_{2},y_{3},y_{4})
dy_{3}dy_{4}\\
&\leq
\int_{-\infty}^{\infty}
\int_{-\infty}^{\infty}
\left|\bom_{1}^{\top}\by\right|
\phi(y_{1},y_{2},y_{3},y_{4})
dy_{3}dy_{4}.
}
By the triangle inequality,
\bfln{
\int_{-\infty}^{\infty}
\int_{-\infty}^{\infty}
\left|\bom_{1}^{\top}\by\right|
&\phi(y_{1},y_{2},y_{3},y_{4})
dy_{3}dy_{4} \\
&=
\int_{-\infty}^{\infty}
\int_{-\infty}^{\infty}
\sum_{i'=1}^{4}
\left|\omega_{1i'}y_{i'}\right|
\phi(y_{1},y_{2},y_{3},y_{4})
dy_{3}dy_{4} \\
&\leq
\sum_{i'=1}^{2}|\omega_{1i'}y_{i'}|\phi(y_{1},y_{2}) 
+
\sum_{i'=3}^{4} |\omega_{1i'}|
\int_{-\infty}^{\infty}
\int_{-\infty}^{\infty}
|y_{i'}|
\phi(y_{1},y_{2},y_{3},y_{4})
dy_{3}dy_{4} \\
&\leq
|\bSigma_{4}|^{-1/2}
\left\{
\sum_{i'=1}^{2}|\omega_{1i'}y_{i'}|
+
\sum_{i'=3}^{4} |\omega_{1i'}|
\int_{-\infty}^{\infty}
\int_{-\infty}^{\infty}
|y_{i'}|
\phi(y_{3},y_{4}|y_{1},y_{2})
dy_{3}dy_{4}
\right\},
}
where the last inequality holds as $\phi(y_{1},y_{2})\leq |\bSigma_{2}|^{-1/2} \leq |\bSigma_{4}|^{-1/2}$. Under Assumption \ref{assumption2}, $|y_{1}| = |\Delta_j|\leq M$,$|y_{2}|  = |\Delta_{k}|\leq M$. By Lemma \ref{lem:Sigma4_inv_det}, whether $\bSigma_{4}=\bSigma_{4a}$ or $\bSigma_{4}=\bSigma_{4b}$, $|\bSigma_{4}|^{-1/2}$ is bounded above and all elements of $\bSigma_{4}^{-1}=[\omega_{\ell\ell'}]_{1\leq \ell\ell' \leq 4}$ are all bounded above. Thus, we have
\bfln{
\int_{-\infty}^{\infty}
\int_{-\infty}^{\infty}
\left|\bom_{1}^{\top}\by\right|
&\phi(y_{1},y_{2},y_{3},y_{4})
dy_{3}dy_{4}
\leq
C_{1}+ C_{2}
\sum_{i'=3}^{4}
\int_{-\infty}^{\infty}
\int_{-\infty}^{\infty}
|y_{i'}|
\phi(y_{3},y_{4}|y_{1},y_{2})
dy_{3}dy_{4}.
}
By Lemma \ref{lem:bivariate_conditional_moments_bound}, for some constant $C>0$,
\bfln{
\sum_{i'=3}^{4}
\int_{-\infty}^{\infty}
\int_{-\infty}^{\infty}
|y_{i'}|
\phi(y_{3},y_{4}|y_{1},y_{2}) dy_{3}dy_{4}
= 
\sum_{i'=3}^{4}
\E( Y_{i'}|Y_{1}=y_{1}, Y_{2}=y_{2} ) \leq C.
}
This concludes the proof and the proof for $\Delta_{k}$ is analogous.


\end{proof}

\begin{lemma}\label{lem:G_second_Deltaj_Deltaj}
Under Assumptions \ref{assumption1} and \ref{assumption2}, $|\partial^2 G(r;\Delta_{j},\Delta_{k})/\partial \Delta_{j}^2|$ and $|\partial^2 G(r;\Delta_{j},\Delta_{k})/\partial \Delta_{k}^2|$  are bounded above by some constant $C>0$.
\end{lemma}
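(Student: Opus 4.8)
The plan is to differentiate once more the expression for the first derivative already computed in the proof of Lemma~\ref{lem:G_first_Deltaj}, and then bound the resulting integral by the same bookkeeping used in the proof of Lemma~\ref{lem:G_second_r_Deltaj}. Recall from that proof that, whether $\bSigma_{4}=\bSigma_{4a}$ or $\bSigma_{4}=\bSigma_{4b}$,
$$
\frac{\partial}{\partial \Delta_{j}}\Phi_{4}(-\Delta_{j},-\Delta_{k},0,0;\bSigma_{4})
= -\int_{-\infty}^{0}\!\int_{-\infty}^{0}\!\int_{-\infty}^{-\Delta_{k}}\phi(-\Delta_{j},z_{2},z_{3},z_{4};\bSigma_{4})\,dz_{2}\,dz_{3}\,dz_{4}.
$$
Only the first argument of the integrand depends on $\Delta_{j}$, so differentiating again and using the Gaussian identity $\partial\phi(\by)/\partial y_{1}=-(\bom_{1}^{\top}\by)\phi(\by)$, where $\bom_{1}^{\top}$ is the first row of $\bSigma_{4}^{-1}$, gives
$$
\frac{\partial^{2}}{\partial \Delta_{j}^{2}}\Phi_{4}(-\Delta_{j},-\Delta_{k},0,0;\bSigma_{4})
= -\int_{-\infty}^{0}\!\int_{-\infty}^{0}\!\int_{-\infty}^{-\Delta_{k}}(\bom_{1}^{\top}\by)\,\phi(\by;\bSigma_{4})\,dz_{2}\,dz_{3}\,dz_{4},
\qquad \by=(-\Delta_{j},z_{2},z_{3},z_{4})^{\top}.
$$

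Next I would take absolute values, extend the ranges of integration to $\R^{3}$, apply the triangle inequality $|\bom_{1}^{\top}\by|\le\sum_{i=1}^{4}|\omega_{1i}|\,|y_{i}|$, and factor $\phi(\by;\bSigma_{4})=\phi(-\Delta_{j})\,\phi(z_{2},z_{3},z_{4}\mid -\Delta_{j})$ (valid because $\bSigma_{4}$ is a correlation matrix, so its first marginal is standard normal). Since $\phi(-\Delta_{j})\le\phi(0)=1/\sqrt{2\pi}$ and the $y_{1}$ coordinate is held fixed at $-\Delta_{j}$, this yields
$$
\left|\frac{\partial^{2}}{\partial \Delta_{j}^{2}}\Phi_{4}\right|
\le \frac{1}{\sqrt{2\pi}}\left(|\omega_{11}|\,|\Delta_{j}|+\sum_{i=2}^{4}|\omega_{1i}|\,\E\big(|Y_{i}|\,\big|\,Y_{1}=-\Delta_{j}\big)\right).
$$
By Assumption~\ref{assumption2}, $|\Delta_{j}|\le M$; for $i\ge 2$, $Y_{i}$ given $Y_{1}=-\Delta_{j}$ is Gaussian with mean $-\rho_{1i}\Delta_{j}$ and variance $1-\rho_{1i}^{2}\le 1$, so $\E(|Y_{i}|\mid Y_{1}=-\Delta_{j})\le M+\sqrt{2/\pi}$ (one may alternatively invoke Lemma~\ref{lem:bivariate_conditional_moments_bound}); and by Lemma~\ref{lem:Sigma4_inv_det} the entries $\omega_{1i}$ of $\bSigma_{4}^{-1}$ are bounded above by a constant independent of $r$, $\Delta_{j}$, $\Delta_{k}$, for both $\bSigma_{4a}$ and $\bSigma_{4b}$. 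Hence $|\partial^{2}\Phi_{4}/\partial\Delta_{j}^{2}|\le C$, and since $G(r;\Delta_{j},\Delta_{k})$ is a fixed linear combination of two such terms, $|\partial^{2}G/\partial\Delta_{j}^{2}|\le C$. The bound on $|\partial^{2}G/\partial\Delta_{k}^{2}|$ follows in exactly the same way, differentiating twice in the second argument and using the second row $\bom_{2}^{\top}$ of $\bSigma_{4}^{-1}$ together with $|\Delta_{k}|\le M$.

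All computations here are routine: the repeated Leibniz rule, the Gaussian density identity, and a conditional first-absolute-moment estimate for a univariate Gaussian. The only point requiring care is keeping track of which coordinate is held fixed at $-\Delta_{j}$ (and therefore bounded by $M$) versus which coordinates are integrated out (contributing bounded conditional moments), and invoking Lemma~\ref{lem:Sigma4_inv_det} to control $\bSigma_{4}^{-1}$ — precisely the bookkeeping already performed in the proof of Lemma~\ref{lem:G_second_r_Deltaj}. I do not expect any genuine obstacle.
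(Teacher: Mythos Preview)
Your proof is correct and follows essentially the same template as the paper's proof of Lemma~\ref{lem:G_second_r_Deltaj}, but the paper's own proof of this lemma takes a different conditioning direction. The paper factors $\phi(-\Delta_{j},z_{2},z_{3},z_{4})=\phi(-\Delta_{j}\mid z_{2},z_{3},z_{4})\,\phi(z_{2},z_{3},z_{4})$, differentiates the univariate conditional density in its first argument to obtain the factor $(\Delta_{j}+\mu)/v^{2}$ with $\mu=\E(Z_{1}\mid Z_{2},Z_{3},Z_{4})$ and $v^{2}=\var(Z_{1}\mid Z_{2},Z_{3},Z_{4})$, and then invokes Lemma~\ref{lem:univariate_conditional} for explicit formulas $\mu=rz_{2}+z_{3}/\sqrt{2}-rz_{4}/\sqrt{2}$ and $v^{2}=(1-r^{2})/2$ to bound the two pieces. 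You instead differentiate the joint density via the precision row $\bom_{1}$, condition the \emph{other} way on $Y_{1}=-\Delta_{j}$, and bound the resulting first absolute moments of the remaining coordinates directly; this avoids Lemma~\ref{lem:univariate_conditional} at the cost of invoking Lemma~\ref{lem:Sigma4_inv_det} for the entries of $\bSigma_{4}^{-1}$. Both routes are short and valid; yours has the mild advantage of reusing verbatim the bookkeeping from Lemma~\ref{lem:G_second_r_Deltaj}, while the paper's route makes the dependence on $r$ through $v^{2}=(1-r^{2})/2$ more explicit. One small remark: your parenthetical pointer to Lemma~\ref{lem:bivariate_conditional_moments_bound} is not quite the right citation, since that lemma conditions on two coordinates rather than one; your own inline bound $\E(|Y_{i}|\mid Y_{1}=-\Delta_{j})\le M+\sqrt{2/\pi}$ is already sufficient.
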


\begin{proof}
From the proof of Lemma \ref{lem:G_first_Deltaj}, we have 
\bfln{ 
\frac{\partial}{\partial \Delta_{j}}
\Phi_{4}( -\Delta_{j}, -\Delta_{k}, 0, 0; \bSigma_{4})
&= 
(-1)
\int_{-\infty}^{0}
\int_{-\infty}^{0}
\int_{-\infty}^{-\Delta_{k}} 
\phi(-\Delta_{j}|z_{2},z_{3},z_{4})
\phi(z_{2},z_{3},z_{4})  dz_{2}dz_{3}dz_{4}.
}
By interchanging differentiation and integration,
\bfln{ 
\frac{\partial^{2}}{\partial \Delta_{j}^{2}}
\Phi_{4}( -\Delta_{j}, -\Delta_{k}, 0, 0; \bSigma_{4})
=& 
(-1)
\int_{-\infty}^{0}
\int_{-\infty}^{0}
\int_{-\infty}^{-\Delta_{k}} 
\frac{\partial}{\partial \Delta_{j}}
\phi(-\Delta_{j}|z_{2},z_{3},z_{4})
\phi(z_{2},z_{3},z_{4})  dz_{2}dz_{3}dz_{4} \\
=& 
\int_{-\infty}^{0}
\int_{-\infty}^{0}
\int_{-\infty}^{-\Delta_{k}} 
\frac{\Delta_{j}+\mu}{v^2}
\phi(-\Delta_{j}|z_{2},z_{3},z_{4})
\phi(z_{2},z_{3},z_{4})  dz_{2}dz_{3}dz_{4},
}
where $\E( z_{1}|z_{2},z_{3},z_{4}) = \mu$ and $\var(z_{1}|z_{2},z_{3},z_{4})= v^2$ as in Lemma \ref{lem:univariate_conditional}.
Thus
\bfl{\label{inlem:G_second_Deltaj_Deltaj_term1} 
\left|
\frac{\partial^{2}}{\partial \Delta_{j}^{2}}
\Phi_{4}( -\Delta_{j}, -\Delta_{k}, 0, 0; \bSigma_{4})
\right|
\leq& 
\int_{-\infty}^{0}
\int_{-\infty}^{0}
\int_{-\infty}^{-\Delta_{k}} 
\left|\frac{\Delta_{j}}{v^2}\right|
\phi(-\Delta_{j}|z_{2},z_{3},z_{4})
\phi(z_{2},z_{3},z_{4})  dz_{2}dz_{3}dz_{4} \\
\label{inlem:G_second_Deltaj_Deltaj_term2}
&+ 
\int_{-\infty}^{0}
\int_{-\infty}^{0}
\int_{-\infty}^{-\Delta_{k}} 
\left| \frac{\mu}{v^2} \right|
\phi(-\Delta_{j}|z_{2},z_{3},z_{4})
\phi(z_{2},z_{3},z_{4})  dz_{2}dz_{3}dz_{4}.
}
Consider the first term \eqref{inlem:G_second_Deltaj_Deltaj_term1}. Following the proof of Lemma~\ref{lem:G_first_Deltaj},
\bfl{\nonumber 
\int_{-\infty}^{0}
\int_{-\infty}^{0}
\int_{-\infty}^{-\Delta_{k}}
&
\left|\frac{\Delta_{j}}{v^2}\right|
\phi(-\Delta_{j}|z_{2},z_{3},z_{4})
\phi(z_{2},z_{3},z_{4})  dz_{2}dz_{3}dz_{4} \\
\nonumber
=&
\left|\frac{\Delta_{j}}{v^2}\right|
\int_{-\infty}^{0}
\int_{-\infty}^{0}
\int_{-\infty}^{-\Delta_{k}}
\phi(-\Delta_{j},z_{2},z_{3},z_{4})  dz_{2}dz_{3}dz_{4} 
\\ 
\leq&
\left|\frac{\Delta_{j}}{v^2}\right| \frac{1}{\sqrt{2\pi }}
\leq C,
}
where the last inequality holds as $|\Delta_{j}|\leq M$ under Assumption \ref{assumption2}, and $v^2$ is bounded below by Lemma \ref{lem:univariate_conditional}.

Consider the second term \eqref{inlem:G_second_Deltaj_Deltaj_term2}. Let $\bz_{-1}=(z_{2},z_{3},z_{4})^{\top}$ and write $\mu = rz_{2} + z_{3}/\sqrt{2} - rz_{4}/\sqrt{2} = \bu^{\top}\bz_{-1}$ as in Lemma \ref{lem:univariate_conditional}. Then, since $\phi(-\Delta_{j}|z_{2},z_{3},z_{4}) \leq 1/\sqrt{2\pi v^{2}}$,
\bfln{
\int_{-\infty}^{0}
\int_{-\infty}^{0}
\int_{-\infty}^{-\Delta_{k}} 
&
\left| \frac{\mu}{v^2} \right|
\phi(-\Delta_{j}|z_{2},z_{3},z_{4})
\phi(z_{2},z_{3},z_{4})  dz_{2}dz_{3}dz_{4} \\
&\leq
\frac{1}{v^3\sqrt{2\pi}}
\int_{-\infty}^{0}
\int_{-\infty}^{0}
\int_{-\infty}^{-\Delta_{k}} 
| \bu^{\top} \bz_{-1} |
\phi(z_{2},z_{3},z_{4})  dz_{2}dz_{3}dz_{4}.
}
Since $\bu^{\top}\bz_{-1} \sim \N(0, \frac{1+r^2}{2})$ and $|\bu^{\top}\bz_{-1}|$ follows the folded Gaussian with mean $\E|\bu^{\top}\bz_{-1}| = \sqrt{(1+r^2)/\pi}$, we further have that
\bfln{
\int_{-\infty}^{0}
\int_{-\infty}^{0}
\int_{-\infty}^{-\Delta_{k}} 
\left| \frac{\mu}{v^2} \right|
\phi(-\Delta_{j}|z_{2},z_{3},z_{4})
\phi(z_{2},z_{3},z_{4})  dz_{2}dz_{3}dz_{4} 
&\leq
\frac{1}{v^3\sqrt{2\pi}} \E|\bu^{\top}\bz_{-1}| \\
&= \frac{1}{v^3\pi}\sqrt{\frac{1+r^2}{2}} \\
&\leq C,
}
where the last inequality holds as $|r|\leq 1- \varepsilon_{r}$ and $v^3$ is bounded below under Assumption \ref{assumption1}.
\end{proof}

\begin{lemma}\label{lem:G_second_Deltak_Deltaj}
Under Assumptions \ref{assumption1} and \ref{assumption2}, $|\partial^2 G(r;\Delta_{j},\Delta_{k})/\partial \Delta_{k} \partial \Delta_{j}|$ is bounded above by some constant $C>0$.
\end{lemma}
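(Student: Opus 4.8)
The plan is to follow the template of the proof of Lemma~\ref{lem:G_second_Deltaj_Deltaj}, but the mixed second derivative turns out to be strictly easier: in the first-order expression obtained in Lemma~\ref{lem:G_first_Deltaj}, the threshold $\Delta_k$ appears \emph{only} through the upper limit $-\Delta_k$ of the innermost integration, so differentiating in $\Delta_k$ by the Leibniz rule produces no density-weighted moment terms analogous to the $\mu/v^{2}$ term that had to be controlled in Lemma~\ref{lem:G_second_Deltaj_Deltaj}. Hence no conditional-moment estimates are needed; only an elementary bound on a bivariate Gaussian density.

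Concretely, I would start from the identity established inside the proof of Lemma~\ref{lem:G_first_Deltaj},
\[
\frac{\partial}{\partial \Delta_{j}}\,\Phi_{4}(-\Delta_{j},-\Delta_{k},0,0;\bSigma_{4}) = -\int_{-\infty}^{0}\!\int_{-\infty}^{0}\!\int_{-\infty}^{-\Delta_{k}} \phi(-\Delta_{j},z_{2},z_{3},z_{4};\bSigma_{4})\,dz_{2}\,dz_{3}\,dz_{4},
\]
and differentiate it in $\Delta_k$. Since the only $\Delta_k$-dependence is in the inner limit, the Leibniz rule gives
\[
\frac{\partial^{2}}{\partial \Delta_{k}\,\partial \Delta_{j}}\,\Phi_{4}(-\Delta_{j},-\Delta_{k},0,0;\bSigma_{4}) = \int_{-\infty}^{0}\!\int_{-\infty}^{0} \phi(-\Delta_{j},-\Delta_{k},z_{3},z_{4};\bSigma_{4})\,dz_{3}\,dz_{4}.
\]
Writing $\phi(-\Delta_{j},-\Delta_{k},z_{3},z_{4};\bSigma_{4}) = \phi(z_{3},z_{4}\mid -\Delta_{j},-\Delta_{k})\,\phi(-\Delta_{j},-\Delta_{k})$, where $\phi(-\Delta_{j},-\Delta_{k})$ is the bivariate marginal density of $(Z_{1},Z_{2})$, the remaining double integral of the conditional density over $\{z_{3}\le 0,\,z_{4}\le 0\}$ is a probability and therefore at most one. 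This yields
\[
\left|\frac{\partial^{2}}{\partial \Delta_{k}\,\partial \Delta_{j}}\,\Phi_{4}(-\Delta_{j},-\Delta_{k},0,0;\bSigma_{4})\right| \le \phi(-\Delta_{j},-\Delta_{k}) \le \phi(0,0) = \frac{1}{2\pi\,|\bSigma_{2}|^{1/2}},
\]
where $\bSigma_{2}$ is the upper-left $2\times 2$ block of $\bSigma_{4}$.

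It remains to bound $|\bSigma_{2}|^{-1/2}$ by a constant free of $r,\Delta_j,\Delta_k$: for $\bSigma_{4}=\bSigma_{4a}$ one has $\bSigma_{2}=\bI_{2}$ and $|\bSigma_{2}|^{-1/2}=1$, while for $\bSigma_{4}=\bSigma_{4b}$ one has $|\bSigma_{2}|=1-r^{2}\ge \varepsilon_{r}^{2}$ under Assumption~\ref{assumption1} (alternatively, $|\bSigma_{2}|^{-1/2}\le |\bSigma_{4}|^{-1/2}\le C$ via Fischer's inequality for correlation matrices and Lemma~\ref{lem:Sigma4_inv_det}). Since $G(r;\Delta_{j},\Delta_{k}) = -2\Phi_{4}(\cdot;\bSigma_{4a}) + 2\Phi_{4}(\cdot;\bSigma_{4b})$, the triangle inequality collapses the two bounds into a single constant $C$, proving $|\partial^{2}G(r;\Delta_{j},\Delta_{k})/\partial\Delta_{k}\,\partial\Delta_{j}|\le C$. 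I do not foresee any genuine obstacle here; the only routine checks are the justification of interchanging differentiation and integration (done exactly as in the companion lemmas, using smoothness and boundedness of the Gaussian density together with Assumption~\ref{assumption2}) and the elementary determinant estimate just noted.
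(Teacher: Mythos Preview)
Your proof is correct and essentially identical to the paper's: both apply the Leibniz rule twice to obtain $\int_{-\infty}^{0}\int_{-\infty}^{0}\phi(-\Delta_j,-\Delta_k,z_3,z_4;\bSigma_4)\,dz_3\,dz_4$, factor out the bivariate marginal $\phi(-\Delta_j,-\Delta_k)$, bound the remaining conditional-probability integral by one, and finish with $\phi(-\Delta_j,-\Delta_k)\le\phi(0,0)\le|\bSigma_4|^{-1/2}\le C$ via Lemma~\ref{lem:Sigma4_inv_det}. The only cosmetic difference is that you also spell out the direct computation of $|\bSigma_2|$ in each case, whereas the paper uses only the Fischer-type inequality $|\bSigma_2|^{-1/2}\le|\bSigma_4|^{-1/2}$.
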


\begin{proof}
By the Leibniz rule,
\bfln{ 
\frac{\partial^2}{\partial \Delta_{k}\partial \Delta_{j}}
\Phi_{4}( -\Delta_{j}, -\Delta_{k}, 0, 0; \bSigma_{4})
&= 
\int_{-\infty}^{0}
\int_{-\infty}^{0}
\phi(z_{3},z_{4}|-\Delta_{j},-\Delta_{k})
\phi(-\Delta_{j},-\Delta_{k})  dz_{3}dz_{4}\\
&= 
\phi(-\Delta_{j},-\Delta_{k})
\int_{-\infty}^{0}
\int_{-\infty}^{0}
\phi(z_{3},z_{4}|-\Delta_{j},-\Delta_{k}) dz_{3}dz_{4},
}
where $\phi(z_{3},z_{4}|-\Delta_{j},-\Delta_{k})$ is the conditional pdf given $Z_{1}=-\Delta_{j}$ and $Z_{2}=-\Delta_{k}$. Thus, the two-dimensional integral above corresponds to a probability (and is bounded by one), leading to
\bfln{
\left|
\frac{\partial^2}{\partial \Delta_{k}\partial \Delta_{j}}
\Phi_{4}( -\Delta_{j}, -\Delta_{k}, 0, 0; \bSigma_{4})
\right| \leq \phi(-\Delta_{j},-\Delta_{k}) \leq \phi(0,0) = |\bSigma_{2}|^{-1/2} \leq |\bSigma_{4}|^{-1/2},
}
where $\bSigma_{2}=\var\{(Z_{1},Z_{2})\}$. As Lemma \ref{lem:Sigma4_inv_det} provides that $|\bSigma_{4}|^{-1/2} \leq C$ for some constant $C>0$, this concludes the proof.
\end{proof}

\begin{lemma}\label{lem:G_second_r_r}
Under Assumptions \ref{assumption1} and \ref{assumption2}, $|\partial^2 G(r;\Delta_{j},\Delta_{k})/\partial r^2|$ is bounded above by some constant $C>0$.
\end{lemma}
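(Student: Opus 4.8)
The plan is to reduce everything to bounding $\bigl|\partial^2\Phi_4(-\Delta_j,-\Delta_k,0,0;\bSigma_4)/\partial r^2\bigr|$ for $\bSigma_4\in\{\bSigma_{4a}(r),\bSigma_{4b}(r)\}$ and then invoke the triangle inequality on the two-term expression for $G$. The entries of $\bSigma_4(r)$ depend on $r$ only through a fixed, small set of off-diagonal correlations $\rho_{jk}$ — the pairs $(1,4),(2,3),(3,4)$ for $\bSigma_{4a}$, and additionally $(1,2)$ for $\bSigma_{4b}$ — and each such $\rho_{jk}$ is an affine function of $r$ (a constant multiple of $r$). Hence the terms involving $\partial^2\rho_{jk}/\partial r^2$ vanish and the multivariate chain rule gives
\[
\frac{\partial^2\Phi_4}{\partial r^2}=\sum_{(j,k)}\sum_{(\ell,m)}\frac{\partial\rho_{jk}}{\partial r}\,\frac{\partial\rho_{\ell m}}{\partial r}\,\frac{\partial^2\Phi_4}{\partial\rho_{jk}\,\partial\rho_{\ell m}},
\]
a finite sum in which every $\partial\rho_{jk}/\partial r\in\{\pm1,\pm1/\sqrt2\}$. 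It thus suffices to bound each mixed partial $\bigl|\partial^2\Phi_4/\partial\rho_{jk}\partial\rho_{\ell m}\bigr|$ by a constant independent of $r$, $\Delta_j$, $\Delta_k$.

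For each such term I would proceed as in the proof of Lemma~\ref{lem:G_second_r_Deltaj}, applying Plackett's identity \citep{plackett1954reduction}, $\partial\phi_4/\partial\rho_{jk}=\partial^2\phi_4/\partial z_j\partial z_k$, and differentiating under the integral sign over the box $\{z_1\le-\Delta_j,\ z_2\le-\Delta_k,\ z_3\le0,\ z_4\le0\}$. Whenever one of the coordinates $z_i$ hit by a $z$-derivative is still a free integration variable, the fundamental theorem of calculus collapses the integral in $z_i$, replacing it by the value at the upper endpoint (the contribution at $-\infty$ vanishes). In the "generic" case, where the two Plackett pairs together involve four distinct indices, both differentiations collapse and one is left with the pointwise value $\phi_4(-\Delta_j,-\Delta_k,0,0;\bSigma_4)\le\phi_4(0,0,0,0;\bSigma_4)=(2\pi)^{-2}|\bSigma_4|^{-1/2}$, which is bounded by Lemma~\ref{lem:Sigma4_inv_det}. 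The remaining cases — the diagonal terms $\partial^2\Phi_4/\partial\rho_{jk}^2$ and the terms whose two index sets share exactly one coordinate — cannot be fully collapsed, since one or two of the differentiated coordinates are already pinned at a boundary value; there I would compute the relevant mixed derivative of $\phi_4$ explicitly using $\partial_{z_i}\phi_4=-(\bSigma_4^{-1}\bz)_i\phi_4$, which produces a polynomial of degree at most two in $\bz$ (with coefficients given by entries of $\bSigma_4^{-1}$ and the pinned values $-\Delta_j,-\Delta_k,0$) multiplying a Gaussian density in the one or two remaining free variables.

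After this reduction every summand is a one- or two-dimensional integral of the form $\int(\text{polynomial of degree}\le2\text{ in }\bz)\,\phi_d(\cdots)\,d(\text{free variables})$, and I would bound it in three steps: (i) factor out the Gaussian density of the pinned coordinates conditional on the free ones, which is bounded by a constant because $|r|\le1-\varepsilon_r$ keeps the conditional covariance non-degenerate with determinant bounded below (Assumption~\ref{assumption1}, together with the determinant and inverse-entry bounds of Lemma~\ref{lem:Sigma4_inv_det}); (ii) enlarge the remaining truncated integral to all of $\R$ or $\R^2$ and recognize it as a Gaussian moment of degree at most two, which is finite and uniformly bounded, in the spirit of Lemmas~\ref{lem:univariate_conditional} and~\ref{lem:bivariate_conditional_moments_bound}; (iii) control the polynomial coefficients using $|\Delta_j|,|\Delta_k|\le M$ (Assumption~\ref{assumption2}) and the bound on the entries of $\bSigma_4^{-1}$ (Lemma~\ref{lem:Sigma4_inv_det}). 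Since there are only finitely many summands and each is bounded by a constant depending only on $\varepsilon_r$ and $M$, the sum — and hence $|\partial^2 G(r;\Delta_j,\Delta_k)/\partial r^2|$ — is bounded, which proves the claim.

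I expect the main obstacle to be the bookkeeping for the diagonal and one-shared-index terms: tracking which coordinates get collapsed by the fundamental theorem of calculus versus which remain pinned, and, crucially, verifying that the conditional covariances of the pinned block given the free block stay uniformly non-degenerate as $|r|\to 1-\varepsilon_r$. That uniformity is precisely what Assumption~\ref{assumption1} and the quantitative content of Lemma~\ref{lem:Sigma4_inv_det} supply; the rest is routine differentiation of Gaussian densities and estimation of low-order Gaussian moments.
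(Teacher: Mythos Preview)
Your proposal is correct and follows essentially the same route as the paper. The paper first writes $\partial G/\partial r$ as a finite linear combination of terms $h_{ij}(r)=\partial\Phi_4/\partial\rho_{ij}$ (borrowed from Theorem~6 of \citet{yoon2020sparse}), then differentiates each $h_{ij}$ in $r$ via the chain rule through the $\rho_{k\ell}$'s, and defers the bound on $|\partial h_{ij}/\partial\rho_{k\ell}|$ to a separate lemma (Lemma~\ref{lem:h_wrt_r_bound}) that runs the same three-case analysis you describe ($\card(\{i,j\}\cap\{k,\ell\})=0,1,2$), using Plackett's identity, explicit differentiation of $\phi_4$, extension of the integration domain, and the conditional-moment Lemmas~\ref{lem:bivariate_conditional_moments_bound}, \ref{lem:univariate_conditional_moments_bound}, \ref{lem:bound_expection_of_product}. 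Your double chain-rule expansion and the paper's two-step version are equivalent, and the bookkeeping you flag as the main obstacle is exactly what Lemma~\ref{lem:h_wrt_r_bound} carries out.
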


\begin{proof}
We start from the partial derivative with respect to $r$ given in Theorem 6 of \citet{yoon2020sparse} as
\bfln{
\frac{\partial G(r,\Delta_{j},\Delta_{k} )}{\partial r} & = 
-2\frac{\partial \Phi_{4}\{\Delta_{i},\Delta_{j},0,0;\bSigma_{4a}(r)\} }{\partial r}
+
2\frac{\partial \Phi_{4}\{\Delta_{i},\Delta_{j},0,0;\bSigma_{4b}(r)\} }{\partial r}
\\
= &
\sqrt{2}h_{14a}(r) + \sqrt{2}h_{23a}(r) + 2 h_{23a}(r) +2 h_{12b}(r) + \sqrt{2}h_{14b}(r) + \sqrt{2} h_{23b}(r) + 2 h_{34b}(r),
}
where $h_{14a}(r)$ is defined as
\bfln{
h_{14a}(r) = \frac{\partial \Phi(a_{1},\ldots,a_{4};\bSigma_{4a})}{\partial \rho_{14}(r)}
=
\int_{-\infty}^{a_{3}}  \int_{-\infty}^{a_{2} }
\phi(a_{1},y_{2}, y_{3},a_{4};\bSigma_{4})dy_{2} dy_{3}
}
and the rest of $h_{ij}(r)$'s are analogously defined.

As $\partial G(r;\Delta_{j},\Delta_{k})/\partial r$ is a sum of $h_{ij}(r)$'s, we show that $|\partial h_{ij}(r)/\partial r|$ is bounded above for all $i$ and $j$ whether $\bSigma_{4} = \bSigma_{4a}$ and $\bSigma_{4} = \bSigma_{4b}$. 
Using the multivariate chain rule and triangle inequality,
\bfln{
\Big|
\frac{\partial h_{ij}(r)}{\partial r} 
\Big|
= 
\left|
\sum_{k<\ell} \frac{\partial h_{ij}(r)}{\partial \rho_{k\ell}}\frac{\partial \rho_{k\ell}}{\partial r}
\right|
\leq
\sum_{k<\ell} 
\left|
\frac{\partial h_{ij}(r)}{\partial \rho_{k\ell}} 
\right|
\left|
\frac{\partial \rho_{k\ell}}{\partial r}
\right|.
}
By Lemma \ref{lem:h_wrt_r_bound}, for all $1\leq i<j\leq 4$ and $1\leq k < \ell \leq 4$, $|\partial h_{ij}(r)/ \partial \rho_{k\ell}| \leq C$ for some constant $C>0$. Also, as $\rho_{k\ell}$'s are linear in $r$, $|\partial \rho_{k\ell}/\partial r|$'s are bounded above some positive constant. This concludes the proof.
\end{proof}

\subsection{Auxillary lemmas}
From Theorem 4 in \citet{yoon2020sparse}, the bridge function for TT case takes the following form 
\bfln{
\brg(r,\Delta_{j}, \Delta_{k}) = -2\Phi_{4}(-\Delta_{j},-\Delta_{k},0,0,;\bSigma_{4a}) + 2\Phi_{4}(-\Delta_{j},-\Delta_{k},0,0,;\bSigma_{4b})
}
with $\Delta_{j} = f_{j}(D_{j})$, $\Delta_{k} = f_{k}(C_{k})$,
\bfl{\label{eq:bridge_covs}
&\bSigma_{4a}=
\bpm
1 &  0  & 1/\sqrt{2}  & -r/\sqrt{2}\\
0 &  1  & -r/\sqrt{2} & 1/\sqrt{2}\\
1/\sqrt{2}  & -r/\sqrt{2} & 1 &  -r \\
-r/\sqrt{2} & 1/\sqrt{2} & -r & 1 
\epm, \quad
\bSigma_{4b}=
\bpm
1 &  r & 1/\sqrt{2}  & r/\sqrt{2}   \\
r & 1 & r/\sqrt{2} & 1/\sqrt{2}  \\ 
1/\sqrt{2}  & r/\sqrt{2} & 1 &  r \\
r/\sqrt{2} & 1/\sqrt{2} & r & 1 
\epm.
}

\begin{lemma}\label{lem:Sigma4_inv_det} Let $\bSigma_{4} = \bSigma_{4a}$ or $\bSigma_{4} = \bSigma_{4b}$ from above, and let its inverse be $\bSigma_{4}^{-1} = [\omega_{\ell\ell'}]_{1\leq \ell,\ell' \leq 4}$. Under Assumption \ref{assumption1}, $|\omega_{\ell\ell'}|\leq C_{1}$, $1\leq \ell,\ell' \leq 4$, for some constant $C_{1}>0$. Also, $|\bSigma_{4}|^{-1} \leq C_{2}$ for some constant $C_{2}>0$ regardless of $\bSigma_{4} = \bSigma_{4a}$ or $\bSigma_{4} = \bSigma_{4b}$. 
\end{lemma}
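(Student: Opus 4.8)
The plan is to reduce both bounds to a single fact: for $\bSigma_4\in\{\bSigma_{4a},\bSigma_{4b}\}$ one has $\lambda_{\min}(\bSigma_4)\ge c(\varepsilon_{r})$ for a positive constant $c(\varepsilon_{r})$ depending only on $\varepsilon_{r}$. Throughout, Assumption~\ref{assumption1} gives $|r|\le 1-\varepsilon_{r}$, since $r$ is a latent correlation entry of $\bSigma$. Granting the eigenvalue bound, the two conclusions follow at once: writing $\bSigma_4^{-1}=[\omega_{\ell\ell'}]$ and using $|\omega_{\ell\ell'}|=|\be_\ell^\top\bSigma_4^{-1}\be_{\ell'}|\le\|\bSigma_4^{-1}\|_{\textup{op}}=1/\lambda_{\min}(\bSigma_4)$ yields $|\omega_{\ell\ell'}|\le C_1:=1/c(\varepsilon_{r})$ for all $\ell,\ell'$; and since every eigenvalue of $\bSigma_4$ is at least $c(\varepsilon_{r})$, one gets $|\bSigma_4|=\prod_{i=1}^4\lambda_i(\bSigma_4)\ge c(\varepsilon_{r})^4$, i.e.\ $|\bSigma_4|^{-1}\le C_2:=c(\varepsilon_{r})^{-4}$, uniformly over $\bSigma_4=\bSigma_{4a}$ and $\bSigma_4=\bSigma_{4b}$.

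To obtain $\lambda_{\min}(\bSigma_4)\ge c(\varepsilon_{r})$ I would exploit the block structure of the two matrices to reduce each to a direct sum of $2\times2$ blocks. For $\bSigma_{4b}$, swapping coordinates $2$ and $3$ reveals the Kronecker product
\[
\bSigma_{4b}\ \cong\ \bpm 1 & r\\ r & 1 \epm\otimes\bpm 1 & 1/\sqrt2\\ 1/\sqrt2 & 1 \epm ,
\]
so its eigenvalues are the four products $(1\pm r)(1\pm 1/\sqrt2)$, the smallest of which is at least $\varepsilon_{r}(1-1/\sqrt2)>0$. For $\bSigma_{4a}$ I would write it in $2\times2$ block form
\[
\bSigma_{4a}=\bpm \bI_2 & \tfrac{1}{\sqrt2}\bB\\ \tfrac{1}{\sqrt2}\bB & \bB \epm ,\qquad \bB=\bpm 1 & -r\\ -r & 1 \epm ,
\]
and note that $\bI_2$ and $\bB$ are simultaneously diagonalized by the orthogonal matrix $\bU$ of eigenvectors of $\bB$; conjugating by $\mathrm{diag}(\bU,\bU)$ and then permuting coordinates block-diagonalizes $\bSigma_{4a}$ into the direct sum, over $\mu\in\{1-r,\,1+r\}$, of the $2\times2$ matrices $\bpm 1 & \mu/\sqrt2\\ \mu/\sqrt2 & \mu\epm$. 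Each of these has determinant $\mu(2-\mu)/2=(1-r^2)/2\ge\varepsilon_{r}/2$ (both values of $\mu$ give the same determinant) and trace $1+\mu\le 3$, hence smallest eigenvalue at least $(1-r^2)/\{2(1+\mu)\}\ge\varepsilon_{r}/6>0$. Taking $c(\varepsilon_{r})=\min\{\varepsilon_{r}(1-1/\sqrt2),\,\varepsilon_{r}/6\}$ completes this step.

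The only slightly delicate point is the reduction of $\bSigma_{4a}$: unlike $\bSigma_{4b}$ it is not literally a Kronecker product, so one must invoke the commutativity of the $2\times2$ blocks $\bI_2$ and $\bB$ to diagonalize them jointly before carrying out the $2\times2$ eigenvalue computation. An alternative that sidesteps eigenvectors is to compute the Schur complement of the leading $\bI_2$ block, which equals $\bB-\tfrac12\bB^2=\tfrac{1-r^2}{2}\bI_2$, deduce positive definiteness from there, and bound $\lambda_{\min}$ via a standard block-matrix inequality; either route is routine once the block decomposition is written down.
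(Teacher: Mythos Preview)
Your proof is correct, but it takes a genuinely different route from the paper's own argument. The paper simply computes the determinants and inverses in closed form: it shows $|\bSigma_{4a}|=|\bSigma_{4b}|=(1-r^2)^2/4$, writes out $\bSigma_{4a}^{-1}$ and $\bSigma_{4b}^{-1}$ explicitly (every entry is of the form $c/(r^2-1)$ with $|c|\le 2$), and then reads off the bounds $|\omega_{\ell\ell'}|\le 2/\{1-(1-\varepsilon_{r})^2\}$ and $|\bSigma_4|^{-1}\le 4/\{1-(1-\varepsilon_{r})^2\}^2$ directly from Assumption~\ref{assumption1}. Your approach instead bounds $\lambda_{\min}(\bSigma_4)$ from below by revealing the block structure: a Kronecker factorization for $\bSigma_{4b}$ and a simultaneous block-diagonalization for $\bSigma_{4a}$, after which both conclusions follow from $\|\bSigma_4^{-1}\|_{\textup{op}}=1/\lambda_{\min}$ and $|\bSigma_4|\ge\lambda_{\min}^4$. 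The paper's computation is shorter and gives sharper explicit constants, while your argument is more structural and explains \emph{why} the determinants coincide (both decompositions yield the product $\{(1-r^2)/2\}^2$) without ever inverting a $4\times4$ matrix by hand; it would also generalize more cleanly if the matrices were larger or the entries less explicit.
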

\begin{proof}
Computing determinants gives $|\bSigma_{4a}| = |\bSigma_{4b}| =(1-r^2)^2/4$, and thus by Assumption \ref{assumption1}, $|\bSigma_{4}|\geq (1-(1-\varepsilon_{r}^2)^2/4$, whether $\bSigma_{4} = \bSigma_{4a}$ or $\bSigma_{4} = \bSigma_{4b}$. Also,
the inverses of $\bSigma_{4a}$ and $\bSigma_{4b}$ are given by
\bfln{
&\bSigma_{4a}^{-1}=
\frac{1}{r^2-1}
\bpm
-2 &  2r  & \sqrt{2}  & -\sqrt{2}r\\
2r &  -2  & -\sqrt{2}r & \sqrt{2}\\
\sqrt{2}  & -\sqrt{2}r & -2 &  0 \\
-\sqrt{2}r & \sqrt{2} & 0 & -2 
\epm, \quad
\bSigma_{4b}^{-1}=
\frac{1}{r^2-1}
\bpm
-2 &  2r & \sqrt{2}  & -\sqrt{2}r   \\
2r & -2 & -\sqrt{2}r &  \sqrt{2}  \\ 
\sqrt{2}  & -\sqrt{2}r & -2 &  2r \\
-\sqrt{2}r & \sqrt{2} & 2r & -2
\epm,
}
respectively. Under Assumption \ref{assumption1}, $|\omega_{\ell\ell'}|$'s are all bounded above by $2/\{1-(1-\varepsilon_{r})^{2}\}$.
\end{proof}

\begin{lemma}\label{lem:Delta_wrt_pi_bounded}
Let $\Delta=\Phi^{-1}(\pi)$. Then, under Assumption \ref{assumption2},
\bfln{
\left|\frac{\partial \Delta}{\partial \pi}\right|\leq C_{1}
\quad \text{and} \quad
\left|\frac{\partial^{2} \Delta}{\partial \pi^{2} }\right|\leq C_{2}
}
for some constants $C_{1},C_{2}>0$.
\end{lemma}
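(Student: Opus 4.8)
The plan is to compute both derivatives in closed form via the inverse-function theorem and then bound them on the compact range of $\Delta$ furnished by Assumption~\ref{assumption2}. Since $\Delta=\Phi^{-1}(\pi)$ is equivalent to $\pi=\Phi(\Delta)$, I would differentiate both sides with respect to $\pi$, which gives $1=\phi(\Delta)\,\partial\Delta/\partial\pi$ (with $\phi=\Phi'$ the standard Gaussian density), hence
\[
\frac{\partial\Delta}{\partial\pi}=\frac{1}{\phi(\Delta)}.
\]

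For the second derivative, differentiate this identity once more in $\pi$ using the chain rule together with $\phi'(t)=-t\,\phi(t)$:
\[
\frac{\partial^{2}\Delta}{\partial\pi^{2}}
=-\frac{\phi'(\Delta)}{\phi(\Delta)^{2}}\cdot\frac{\partial\Delta}{\partial\pi}
=\frac{\Delta\,\phi(\Delta)}{\phi(\Delta)^{3}}
=\frac{\Delta}{\phi(\Delta)^{2}}.
\]
Now invoke Assumption~\ref{assumption2}: $|\Delta|\le M$, so $\Delta$ ranges over the compact set $[-M,M]$, on which $\phi$ attains its minimum at the endpoints, $\phi(\Delta)\ge\phi(M)=(2\pi)^{-1/2}e^{-M^{2}/2}>0$. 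Consequently
\[
\left|\frac{\partial\Delta}{\partial\pi}\right|=\frac{1}{\phi(\Delta)}\le\sqrt{2\pi}\,e^{M^{2}/2}=:C_{1},
\qquad
\left|\frac{\partial^{2}\Delta}{\partial\pi^{2}}\right|=\frac{|\Delta|}{\phi(\Delta)^{2}}\le 2\pi M\,e^{M^{2}}=:C_{2},
\]
and both constants depend only on $M$, as required.

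There is no genuine technical obstacle here; the computation is elementary. The one point worth emphasizing is that the uniform bound is truly a consequence of Assumption~\ref{assumption2}: without a uniform bound on $|\Delta_j|$, the quantity $1/\phi(\Delta)$ blows up as $|\Delta|\to\infty$ (equivalently as $\pi\to 0$ or $\pi\to 1$), so the uniform control on the zero-proportion parameters is exactly what makes these derivatives uniformly bounded, and this is what subsequent lemmas (e.g.\ Lemmas~\ref{lem:Ginv_second_pij_pij}, \ref{lem:Ginv_second_pik_pij}, \ref{lem:Ginv_second_tau_pij}) rely on.
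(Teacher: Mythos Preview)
Your proof is correct and essentially identical to the paper's: both compute $\partial\Delta/\partial\pi=1/\phi(\Delta)$ via the inverse-function relation, differentiate once more to get $\partial^{2}\Delta/\partial\pi^{2}=\Delta/\phi(\Delta)^{2}$, and bound both using $|\Delta|\le M$ and $\phi(\Delta)\ge\phi(M)$. The only cosmetic difference is that you write out the explicit constants $C_{1}=\sqrt{2\pi}\,e^{M^{2}/2}$ and $C_{2}=2\pi M\,e^{M^{2}}$, whereas the paper leaves them as $1/\phi(M)$ and $M/\phi(M)^{2}$.
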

\begin{proof}
Since $|\Delta|\leq M$ and $\phi(|x|)$ is a decreasing function, 
\bfln{
\frac{\partial \Delta}{\partial \pi} = 
\left( 
\frac{\partial \pi}{\partial \Delta}
\right)^{-1}
=
\left\{
\frac{\partial \Phi(\Delta)}{\partial \Delta}
\right\}^{-1}
=
\frac{1}{\phi(\Delta)}
\leq
\frac{1}{\phi(M)}.
}
Furthermore, as the second derivative is given by
\bfln{
\frac{\partial^2 \Delta}{\partial \pi^2}
=
\frac{\partial }{\partial \pi}
\frac{1}{\phi(\Delta)}
=
\frac{\partial \Delta }{\partial \pi}
\frac{\partial }{\partial \Delta}
\frac{1}{\phi(\Delta)}
=
-\frac{1}{\{\phi(\Delta)\}^3 }
\frac{\partial \phi(\Delta)}{\partial \Delta}
=
\frac{\Delta}{\{\phi(\Delta)\}^2 }
}
we have
\bfln{
\left|
\frac{\partial^2 \Delta}{\partial \pi^2}
\right| \leq \frac{M}{\{\phi(M)\}^2}.
}
\end{proof}
\begin{lemma}\label{lem:univariate_conditional}
Let $(Z_{1},Z_{2},Z_{3},Z_{4})^{\top} \sim \N_{4}(\zeros,\bSigma_{4})$. Then, it follows that regardless of $\bSigma_{4}=\bSigma_{4a}$ or $\bSigma_{4}=\bSigma_{4b}$, the conditional distribution of $Z_{1}$ given $Z_{2},Z_{3},Z_{4}$ is $\N(\mu,v^2)$, where 
\bfln{
\mu &\coloneqq \E( Z_{1}|Z_{2},Z_{3},Z_{4} ) = rZ_{2} +Z_{3}/\sqrt{2} - rZ_{4}/\sqrt{2} 
 \\
v^2 &\coloneqq \var( Z_{1}|Z_{2},Z_{3},Z_{4} ) =
(1-r^2)/2.
}
\end{lemma}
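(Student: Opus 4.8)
The plan is to apply the classical formula for the conditional distribution of a coordinate of a jointly Gaussian vector, and then to verify the claimed mean and variance by a short, sign-sensitive computation that goes through uniformly for both $\bSigma_{4a}$ and $\bSigma_{4b}$.

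First I would recall the standard fact: if $(W,\bV^{\top})^{\top}\sim\N_{4}(\zeros,\bSigma)$ with block partition $\bSigma_{WW}=1\in\R$, $\bSigma_{WV}\in\R^{1\times3}$, $\bSigma_{VW}=\bSigma_{WV}^{\top}$, and $\bSigma_{VV}\in\R^{3\times3}$ invertible, then
\[
W\mid\bV\ \sim\ \N\!\left(\bSigma_{WV}\bSigma_{VV}^{-1}\bV,\ \ 1-\bSigma_{WV}\bSigma_{VV}^{-1}\bSigma_{VW}\right).
\]
Here I set $W=Z_{1}$ and $\bV=(Z_{2},Z_{3},Z_{4})^{\top}$, so $\bSigma_{WW}=1$ in both cases; for $\bSigma_{4a}$ one has $\bSigma_{WV}=(0,\ 1/\sqrt{2},\ -r/\sqrt{2})$ and $\bSigma_{VV}$ equal to the lower-right $3\times3$ block of $\bSigma_{4a}$, while for $\bSigma_{4b}$ one has $\bSigma_{WV}=(r,\ 1/\sqrt{2},\ r/\sqrt{2})$ and $\bSigma_{VV}$ the lower-right $3\times3$ block of $\bSigma_{4b}$. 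By Lemma \ref{lem:Sigma4_inv_det}, $|\bSigma_{4a}|=|\bSigma_{4b}|=(1-r^{2})^{2}/4>0$ for $|r|<1$, so each $\bSigma_{4}$ is positive definite, hence so is its principal submatrix $\bSigma_{VV}$; in particular $\bSigma_{VV}$ is invertible and the formula applies.

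Rather than invert $\bSigma_{VV}$, I would take the candidate regression vector $\bbeta=(r,\ 1/\sqrt{2},\ -r/\sqrt{2})^{\top}$ and verify the normal equations $\bSigma_{VV}\bbeta=\bSigma_{VW}$ by a row-by-row matrix--vector product; the identity holds for \emph{both} choices of $(\bSigma_{VV},\bSigma_{VW})$, the only difference being the placement of the minus signs among the $r$ and $r/\sqrt{2}$ entries, which cancel in the same pattern. It follows that $\bbeta=\bSigma_{VV}^{-1}\bSigma_{VW}$, whence
\[
\mu=\bSigma_{WV}\bSigma_{VV}^{-1}\bV=\bbeta^{\top}\bV=rZ_{2}+Z_{3}/\sqrt{2}-rZ_{4}/\sqrt{2},
\]
and
\[
v^{2}=1-\bSigma_{WV}\bSigma_{VV}^{-1}\bSigma_{VW}=1-\bbeta^{\top}\bSigma_{VW};
\]
evaluating $\bbeta^{\top}\bSigma_{VW}=(1+r^{2})/2$ in both cases gives $v^{2}=(1-r^{2})/2$, as claimed.

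The computation is entirely routine; the only point requiring care is the bookkeeping of the off-diagonal signs, since $\bSigma_{4a}$ and $\bSigma_{4b}$ differ precisely in those signs and one must confirm that the same $\bbeta$ solves the normal equations for both matrices. I do not anticipate a genuine obstacle.
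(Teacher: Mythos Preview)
Your proposal is correct and follows essentially the same approach as the paper, which simply states that the result follows from the standard conditional Gaussian formulas applied to the explicit forms of $\bSigma_{4a}$ and $\bSigma_{4b}$. Your write-up is more detailed---verifying the normal equations $\bSigma_{VV}\bbeta=\bSigma_{VW}$ with the guessed $\bbeta$ rather than inverting $\bSigma_{VV}$ directly---but this is a cosmetic variation on the same direct computation.
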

\begin{proof}
The results follow from the properties of conditional normal distribution using the form of $\bSigma_{4a}$ and $\bSigma_{4b}$.
\end{proof}

\begin{lemma}\label{lem:bivariate_conditional_moments_bound}
Let $\by \sim \N_{4}(\zeros,\bSigma_{4})$, where $\bSigma_{4}=\bSigma_{4a}$ or $\bSigma_{4}=\bSigma_{4b}$. Then, under Assumptions \ref{assumption1} and \ref{assumption2}, for any $1\leq k<\ell \leq 4$ and $1\leq i \leq 4$,
\bfln{
\E\left( |Y_{i}|~ \big|  Y_{k}=y_{k}, Y_{\ell} =y_{\ell} \right) 
\leq C_{1}, 
\\
0<
C_{2}
\leq
\var\left(Y_{i}~ \big|  Y_{k}=y_{k}, Y_{\ell} =y_{\ell} \right)
\leq C_{3}
}
for some $C_{1},C_{2},C_{3}>0$, where
\bfln{
y_{m} = 
\begin{cases}
-\Delta_{j}, & \text{if }  \quad m = 1;\\
-\Delta_{k}, & \text{if }  \quad m = 2;\\
0, & \text{otherwise. }
\end{cases}
}
\end{lemma}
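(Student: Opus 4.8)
The plan is to invoke the fact that for a jointly Gaussian vector the conditional law of one coordinate given a subset of the others is again Gaussian with closed-form mean and variance, and then bound these quantities using Assumptions~\ref{assumption1}--\ref{assumption2} together with the explicit inverses recorded in Lemma~\ref{lem:Sigma4_inv_det}. Throughout, take $i \notin \{k,\ell\}$ (if $i\in\{k,\ell\}$ then $Y_i$ conditionally equals $y_i$, so the mean bound is trivial and the variance statement is not needed). Writing $S=\{k,\ell\}$, the standard conditioning formula gives $Y_i \mid (Y_k,Y_\ell) \sim \N(\mu_{i\mid S}, v^2_{i\mid S})$ with $\mu_{i\mid S} = \bSigma_{i,S}(\bSigma_{S,S})^{-1}(y_k,y_\ell)^{\top}$ and $v^2_{i\mid S} = 1 - \bSigma_{i,S}(\bSigma_{S,S})^{-1}\bSigma_{S,i}$, where $\bSigma_{i,S}$ and $\bSigma_{S,S}$ are the relevant blocks of $\bSigma_4$ and all diagonal entries of $\bSigma_4$ equal $1$.

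First I would bound the conditional mean. Up to sign, every off-diagonal entry of $\bSigma_4$ (whether $\bSigma_{4a}$ or $\bSigma_{4b}$) is one of $0$, $r$, $r/\sqrt 2$, $1/\sqrt 2$, hence of modulus at most $\max\{1/\sqrt 2,\, 1-\varepsilon_{r}\} < 1$; in particular $\bSigma_{S,S}$ is a $2\times 2$ correlation matrix whose off-diagonal entry is bounded away from $\pm 1$, so $\det(\bSigma_{S,S})$ is bounded below by a positive constant (depending on $\varepsilon_{r}$) and the entries of $(\bSigma_{S,S})^{-1}$ are bounded above. Since the conditioning values satisfy $|y_k|, |y_\ell| \le M$ by Assumption~\ref{assumption2} (indeed $y_m = 0$ for $m\ge 3$), it follows that $|\mu_{i\mid S}| \le C$ for a constant $C$. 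Combining this with $v_{i\mid S} \le 1$ and the elementary bound $\E|W| \le |\mu| + \sigma\sqrt{2/\pi}$ for $W\sim\N(\mu,\sigma^2)$ gives $\E(|Y_i| \mid Y_k=y_k, Y_\ell=y_\ell) \le C + \sqrt{2/\pi} =: C_1$.

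For the conditional variance, the upper bound $v^2_{i\mid S} \le 1$ is immediate from the formula. For the lower bound I would use monotonicity of Gaussian conditional variances: by the law of total variance, and because for a Gaussian vector the inner conditional variance is deterministic, enlarging the conditioning set can only shrink the conditional variance, so $v^2_{i\mid S} \ge v^2_{i \mid \{1,\dots,4\}\setminus\{i\}} = 1/\omega_{ii}$, where $\omega_{ii} = [\bSigma_4^{-1}]_{ii}$. From the explicit inverses in Lemma~\ref{lem:Sigma4_inv_det}, $\omega_{ii} = 2/(1-r^2)$, which under Assumption~\ref{assumption1} is at most $2/\{1-(1-\varepsilon_{r})^2\}$; hence $v^2_{i\mid S} \ge \{1-(1-\varepsilon_{r})^2\}/2 =: C_2 > 0$, and we may take $C_3 = 1$. (Equivalently, $v^2_{i\mid S} = \det(\bSigma_{\{i\}\cup S})/\det(\bSigma_S)$, and the $3\times 3$ determinant in the numerator can be lower-bounded directly using Assumption~\ref{assumption1}.)

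The only mildly delicate point is this lower bound on the conditional variance: one must justify the monotonicity (a one-line law-of-total-variance argument, using that Gaussian conditional variances do not depend on the conditioning values) and then read $\omega_{ii}$ off Lemma~\ref{lem:Sigma4_inv_det}. Everything else reduces to the boundedness of the correlation entries and of the thresholds $\Delta_j$, $\Delta_k$, which are exactly Assumptions~\ref{assumption1}--\ref{assumption2}.
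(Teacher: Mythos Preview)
Your argument is correct, and it is genuinely cleaner than what the paper does. The paper proves this lemma by brute force: it writes out, case by case, every conditional mean vector $\E(Y_{I^c}\mid Y_I=y_I;\bSigma_{4a})$ and $\E(Y_{I^c}\mid Y_I=y_I;\bSigma_{4b})$ and every conditional covariance matrix, and then inspects the resulting formulas to check that each mean entry involves only $r$, $\Delta_j$, $\Delta_k$ in a bounded way and that each variance entry is bounded above and below by positive constants when $|r|\le 1-\varepsilon_r$. The mean bound is then finished exactly as you do, via the folded-Gaussian inequality $\E|W|\le |\mu|+\sigma\sqrt{2/\pi}$ (Lemma~\ref{lem:foldedGaussian_mean}).

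Your route avoids all the case analysis. For the mean you observe that every off-diagonal of $\bSigma_4$ has modulus at most $\max\{1/\sqrt 2,\,1-\varepsilon_r\}<1$, so the $2\times2$ conditioning block is uniformly well conditioned and $(\bSigma_{S,S})^{-1}$ has bounded entries; combined with $|y_k|,|y_\ell|\le M$ this gives $|\mu_{i\mid S}|\le C$ in one stroke. For the variance lower bound, your monotonicity trick---$\var(Y_i\mid Y_S)\ge \var(Y_i\mid Y_{-i})=1/\omega_{ii}$, with $\omega_{ii}=2/(1-r^2)$ read off from Lemma~\ref{lem:Sigma4_inv_det}---is a nice shortcut that sidesteps computing any $2\times2$ Schur complements. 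The paper's explicit computations do yield the exact conditional laws (potentially useful elsewhere), but at the cost of tedium; your approach is shorter, easier to verify, and makes the dependence on Assumptions~\ref{assumption1}--\ref{assumption2} transparent.
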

\begin{proof}
We first calculate the conditional means and covariance matrices using the properties of multivariate Gaussian distribution to obtain:
\bfln{
\E(Y_{1},Y_{4}|Y_{2}=-\Delta_{k},Y_{3}=0; \bSigma_{4a}) &= \frac{1}{2-r^2} \bpm -\Delta_{k}r \\ \sqrt{2}\Delta_{k}(1-r^2) \epm, \\
\E(Y_{2},Y_{3}|Y_{1}=-\Delta_{j},Y_{4}=0; \bSigma_{4a}) &= \frac{1}{2-r^2} \bpm -\Delta_{j}r \\ \sqrt{2}\Delta_{j}(1-r^2) \epm, \\
\E(Y_{3},Y_{4}|Y_{1}=-\Delta_{j},Y_{2}=-\Delta_{k}; \bSigma_{4a}) &= \frac{1}{\sqrt{2}} \bpm \Delta_{j}r - \Delta_{i}   \\ \Delta_{i}r -\Delta_{j} \epm, \\
\E(Y_{1},Y_{2}|Y_{3}=0,Y_{4}=0; \bSigma_{4b}) &= \bpm 0 \\ 0 \epm, \\
\E(Y_{1},Y_{4}|Y_{2}=-\Delta_{k},Y_{3}=0; \bSigma_{4b}) &= \frac{1}{2-r^2} \bpm -\Delta_{k}r \\ -\sqrt{2}\Delta_{k}(1-r^2) \epm, \\
\E(Y_{2},Y_{3}|Y_{1}=-\Delta_{j},Y_{4}=0; \bSigma_{4b}) &= \frac{1}{2-r^2} \bpm -\Delta_{j}r \\ -\sqrt{2}\Delta_{j}(1-r^2) \epm, \\
\E(Y_{3},Y_{4}|Y_{1}=-\Delta_{j},Y_{2}=-\Delta_{k}; \bSigma_{4b}) &= \frac{1}{\sqrt{2}} \bpm -\Delta_{j}\\ -\Delta_{k}\epm, \\
}
and
\bfln{
&\var(Y_{1},Y_{4}|Y_{2},Y_{3};\bSigma_{4a}) = \var(Y_{2},Y_{3}|Y_{1},Y_{4};\bSigma_{4a}) \\
&= \var(Y_{1},Y_{4}|Y_{2},Y_{3};\bSigma_{4b}) = \var(Y_{2},Y_{3}|Y_{1},Y_{4};\bSigma_{4b}) =
\left(1-\frac{1}{2-r^2}\right)
\bpm
2 & \sqrt{2}r\\
\sqrt{2}r & 2 
\epm, \\
&\var(Y_{1},Y_{2}|Y_{3},Y_{4};\Sigma_{4b}) = \var(Y_{3},Y_{4}|Y_{1},Y_{2};\Sigma_{4b}) =
\frac{1}{2}
\bpm
1 & -r \\
-r & 1 
\epm,\\
&\var(Y_{3},Y_{4}|Y_{1},Y_{2};\Sigma_{4a}) =
\frac{2}{1-r^2}
\bpm
1 & 0\\
0 & 1
\epm
}

Consider $\var(Y_{i}|Y_{k} = y_{k}, Y_{\ell} = y_{\ell})$. From the above, it is clear that all conditional variances are bounded below by some positive constant. It can be also seen that all conditional variances are bounded above as long as $1-r^2 \geq C$ for some constant $C>0$. Under Assumptions \ref{assumption1}, $|r|\leq 1-\varepsilon_{r}$ and thus $1-r^2 \geq 1- (1-\varepsilon_{r})^2 >0$.

Consider $\E\left( |Y_{i}|~ \big|  Y_{k}=y_{k}, Y_{\ell} =y_{\ell} \right)$. If $i=j$ or $i=\ell$, then $\E\left( |Y_{i}|~ \big|  Y_{k}=y_{k}, Y_{\ell} =y_{\ell} \right) = |y_{i}| $ and the result is immediate under Assumption \ref{assumption2}. For $i \neq j,k$, let $\E(Y_{i}|Y_{k} = y_{k}, Y_{\ell} = y_{\ell}) = \mu_{i}$ and $\var(Y_{i}|Y_{k} = y_{k}, Y_{\ell} = y_{\ell}) = \sigma_{i}^2$, where detailed expressions are given above. Then, by Lemma \ref{lem:foldedGaussian_mean}, we have that
\bfln{
\E\left\{|Y_{i}|~ \big|  Y_{k}=y_{k}, Y_{\ell}=y_{\ell} \right\}
&=
\left[
\sigma_{i}\sqrt{ \frac{2}{\pi} } \exp\left(-\frac{\mu_{i}^{2}}{2\sigma_{i}^2}\right) +
\mu_{i}\left\{ 1 - 2\Phi\left(-\frac{\mu_{i}}{\sigma_{i}}\right)\right\}
\right] \\
&\leq
\sigma_{i}\sqrt{ \frac{2}{\pi} }  + |\mu_{i}|.
}
We can see from the above conditional means that, under Assumptions \ref{assumption1} and \ref{assumption2}, $|\mu_{i}|$ is bounded above by some positive constant. As we already showed that $\sigma_{i}^2$ is bounded above, the proof is complete.
\end{proof}

\begin{lemma}\label{lem:univariate_conditional_moments_bound}
Let $\by \sim \N_{4}(\zeros,\bSigma_{4})$, where $\bSigma_{4}=\bSigma_{4a}$ or $\bSigma_{4}=\bSigma_{4b}$. Also let $\by_{-i}$ be the 3-dimensional random vector without the $i$th component and $y_{-i}=(y_{j},y_{k},y_{\ell})^{\top}$ be its realization such that  
\bfln{
y_{m} = 
\begin{cases}
-\Delta_{j}, & \text{if }  \quad m = 1;\\
-\Delta_{k}, & \text{if }  \quad m = 2;\\
0, & \text{otherwise. }
\end{cases}
}
Then, under Assumptions \ref{assumption1} and \ref{assumption2}, for any $1\leq i \leq 4$, 
\bfln{
\E\left( |Y_{i}|~ \big|  \by_{-i}=y_{-i}\right) 
\leq C
}
for some constant $C>0$.
\end{lemma}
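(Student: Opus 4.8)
The plan is to reduce the statement to the one-dimensional conditional law of $Y_i$ given $\by_{-i} = y_{-i}$, which is Gaussian, and then invoke the closed-form expression for the mean of a folded Gaussian (Lemma~\ref{lem:foldedGaussian_mean}, already used in the proof of Lemma~\ref{lem:bivariate_conditional_moments_bound}). Writing $\mu_i = \E(Y_i \mid \by_{-i} = y_{-i})$ and $\sigma_i^2 = \var(Y_i \mid \by_{-i} = y_{-i})$, that lemma gives $\E(|Y_i| \mid \by_{-i} = y_{-i}) = \sigma_i\sqrt{2/\pi}\,\exp(-\mu_i^2/(2\sigma_i^2)) + \mu_i\{1 - 2\Phi(-\mu_i/\sigma_i)\} \le \sigma_i\sqrt{2/\pi} + |\mu_i|$, since $\exp(-\mu_i^2/(2\sigma_i^2))\le 1$ and $|1 - 2\Phi(-\mu_i/\sigma_i)|\le 1$. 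Hence it suffices to show that $\sigma_i$ and $|\mu_i|$ are bounded above by constants depending only on $\varepsilon_r$ and $M$.

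For these two quantities I would pass to the precision matrix $\bOmega = \bSigma_4^{-1} = [\omega_{\ell\ell'}]$. By the standard Gaussian conditioning identities, $\sigma_i^2 = 1/\omega_{ii}$ and $\mu_i = -\omega_{ii}^{-1}\sum_{m\neq i}\omega_{im} y_m$. Lemma~\ref{lem:Sigma4_inv_det} gives the explicit inverses for both $\bSigma_{4a}$ and $\bSigma_{4b}$; in particular $\omega_{ii} = 2/(1-r^2)$, which under Assumption~\ref{assumption1} lies in $[\,2,\ 2/\{1-(1-\varepsilon_r)^2\}\,]$, so $\sigma_i^2 \le 1/2$, and every off-diagonal $|\omega_{im}|$ is likewise bounded by $2/\{1-(1-\varepsilon_r)^2\}$. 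For $\mu_i$, each entry of $y_{-i}$ is one of $-\Delta_j$, $-\Delta_k$, or $0$, hence $|y_m|\le M$ by Assumption~\ref{assumption2}; combining this with the boundedness of the $\omega_{im}$ and of $\omega_{ii}^{-1}$ yields $|\mu_i|\le C'$ for a constant $C'$ depending only on $\varepsilon_r$ and $M$. Substituting the bounds on $\sigma_i$ and $|\mu_i|$ into the folded-Gaussian expression above gives the claim.

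The argument is essentially bookkeeping, and there is no genuine analytic obstacle: the statement is the three-variables-conditioned analogue of Lemma~\ref{lem:bivariate_conditional_moments_bound}, and the same inputs (explicit $\bSigma_4^{-1}$ from Lemma~\ref{lem:Sigma4_inv_det}, the folded-Gaussian mean formula, and the threshold bound of Assumption~\ref{assumption2}) apply verbatim. The only point requiring a little care is that all the bounds be applied uniformly over the choice $\bSigma_4 \in \{\bSigma_{4a},\bSigma_{4b}\}$ and over the index $i\in\{1,\dots,4\}$, which is immediate from the symmetric structure of the two matrices. If one wishes to avoid the precision-matrix shortcut, one can instead compute $\mu_i = \bSigma_{i,-i}\bSigma_{-i,-i}^{-1} y_{-i}$ and $\sigma_i^2 = 1 - \bSigma_{i,-i}\bSigma_{-i,-i}^{-1}\bSigma_{-i,i}$ directly, bounding the $3\times 3$ submatrix inverse via $|\bSigma_4|^{-1}$ from Lemma~\ref{lem:Sigma4_inv_det}; this is slightly longer but reaches the same conclusion.
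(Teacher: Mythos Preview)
Your proposal is correct and follows essentially the same approach as the paper: reduce to the univariate Gaussian conditional law, bound its mean and variance, and apply the folded-Gaussian bound of Lemma~\ref{lem:foldedGaussian_mean}. The only cosmetic difference is that the paper computes the conditional means and variance directly from the block-partitioned covariance (obtaining $\sigma_i^2=(1-r^2)/2$ and, e.g., $\mu_1=-r\Delta_k$, hence $|\mu_i|\le\sqrt{2}M$), whereas you reach the same bounds via the precision-matrix identities $\sigma_i^2=1/\omega_{ii}$ and $\mu_i=-\omega_{ii}^{-1}\sum_{m\neq i}\omega_{im}y_m$ together with Lemma~\ref{lem:Sigma4_inv_det}.
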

\begin{proof}
It follows by the conditional mean and variance formulas of the multivariate Gaussian distribution that, regardless of $\bSigma_{4}=\bSigma_{4a}$ or $\bSigma_{4}=\bSigma_{4b}$,
\bfln{
\var(Y_{i}|\by_{-i}=y_{-i};\bSigma_{4}) = \frac{(1-r^2)}{2}, \quad i=1,\ldots,4,
}
and
\bfln{
\E(Y_{1}|\by_{-1}=y_{-1};\bSigma_{4})&= -\Delta_{k}r, \\
\E(Y_{2}|\by_{-2}=y_{-2};\bSigma_{4})&= -\Delta_{j}r, \\
\E(Y_{3}|\by_{-3}=y_{-3};\bSigma_{4})&= - \frac{\Delta_{j}-\Delta_{k}r}{\sqrt{2}}, \\
\E(Y_{4}|\by_{-4}=y_{-4};\bSigma_{4})&= - \frac{\Delta_{k}-\Delta_{j}r}{\sqrt{2}}. \\
}
Under Assumptions \ref{assumption1} and \ref{assumption2}, the absolute values of the conditional means and conditional variances are bounded above by $\sqrt{2}M$ and $1/2$, respectively. Then the result follows by Lemma \ref{lem:foldedGaussian_mean}.
\end{proof}

\begin{lemma}\label{lem:bound_expection_of_product}
Let $\by \sim \N_{4}(\zeros,\bSigma_{4})$, where $\bSigma_{4}=\bSigma_{4a}$ or $\bSigma_{4}=\bSigma_{4b}$. Then, for any $1\leq k<\ell \leq 4$ and $1\leq i<j \leq 4$,
\bfln{
\E\left\{|Y_{i}Y_{j}|~ \big|  Y_{k}=y_{k}, Y_{\ell} =y_{\ell} \right\} \leq C
}
for some constant $C>0$. 
\end{lemma}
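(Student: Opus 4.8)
The plan is to reduce the claim to bounds on \emph{conditional second moments}, which are already available from Lemma \ref{lem:bivariate_conditional_moments_bound}. First I would apply the conditional Cauchy--Schwarz inequality: for any realization $(y_k, y_\ell)$ of $(Y_k, Y_\ell)$,
\[
\E\left\{|Y_{i}Y_{j}| \mid Y_{k}=y_{k}, Y_{\ell}=y_{\ell}\right\} \leq \sqrt{\E\left(Y_{i}^{2} \mid Y_{k}=y_{k}, Y_{\ell}=y_{\ell}\right)}\,\sqrt{\E\left(Y_{j}^{2} \mid Y_{k}=y_{k}, Y_{\ell}=y_{\ell}\right)}.
\]
Hence it suffices to show that for every $m\in\{1,2,3,4\}$ the quantity $\E(Y_{m}^{2} \mid Y_{k}=y_{k}, Y_{\ell}=y_{\ell})$ is bounded above by a constant that does not depend on $r$, $\Delta_j$, $\Delta_k$.

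Second, I would decompose $\E(Y_{m}^{2} \mid Y_{k}=y_{k}, Y_{\ell}=y_{\ell}) = \{\E(Y_{m} \mid Y_{k}=y_{k}, Y_{\ell}=y_{\ell})\}^{2} + \var(Y_{m} \mid Y_{k}=y_{k}, Y_{\ell}=y_{\ell})$. By Lemma \ref{lem:bivariate_conditional_moments_bound}, whether $\bSigma_{4}=\bSigma_{4a}$ or $\bSigma_{4}=\bSigma_{4b}$, we have $|\E(Y_{m} \mid Y_{k}=y_{k}, Y_{\ell}=y_{\ell})| \leq \E(|Y_{m}| \mid Y_{k}=y_{k}, Y_{\ell}=y_{\ell}) \leq C_{1}$ and $\var(Y_{m} \mid Y_{k}=y_{k}, Y_{\ell}=y_{\ell}) \leq C_{3}$; when $m\in\{k,\ell\}$ the conditional law of $Y_m$ is degenerate at $y_m$ with $|y_m|\leq M$ by Assumption \ref{assumption2}, which is again covered by these inequalities. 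Therefore $\E(Y_{m}^{2} \mid Y_{k}=y_{k}, Y_{\ell}=y_{\ell}) \leq C_{1}^{2}+C_{3}$, and plugging this into the Cauchy--Schwarz bound yields the claim with $C = C_{1}^{2}+C_{3}$.

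I do not expect a genuine obstacle here: all the substantive work — writing the conditional means and covariance matrices of the relevant two-dimensional Gaussian sub-vectors explicitly and checking that they stay bounded under Assumptions \ref{assumption1}--\ref{assumption2}, which relies on $1-r^{2}\geq 1-(1-\varepsilon_{r})^{2}>0$ and $|\Delta_j|,|\Delta_k|\leq M$ — has already been carried out in Lemma \ref{lem:bivariate_conditional_moments_bound}. The only minor point is to observe that the Cauchy--Schwarz step and the decomposition above remain valid when $i=j$ or when $\{i,j\}$ intersects the conditioning indices $\{k,\ell\}$, in which case the relevant conditional moments reduce either to $y_i^{2}$ or to a single-variable conditional moment that Lemma \ref{lem:bivariate_conditional_moments_bound} already bounds.
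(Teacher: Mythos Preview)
Your proof is correct and, in fact, more streamlined than the paper's. The paper proceeds by splitting into three cases according to $\card(\{i,j\}\cap\{k,\ell\})$: when the intersection is empty it standardizes $Y_i,Y_j$ and uses the polarization-type bound $|Z_iZ_j|\leq\frac14(Z_i+Z_j)^2+\frac14(Z_i-Z_j)^2$ together with noncentral $\chi^2$ expectations; when one index overlaps it factors out $|y_{I_1}|\leq M$ and invokes Lemma~\ref{lem:bivariate_conditional_moments_bound} for the remaining factor; when both overlap the bound is just $|y_{I_1}y_{I_2}|\leq M^2$. Your single Cauchy--Schwarz step collapses all three cases into one: since $\E(Y_m^2\mid Y_k,Y_\ell)=\{\E(Y_m\mid Y_k,Y_\ell)\}^2+\var(Y_m\mid Y_k,Y_\ell)$ and both summands are bounded by Lemma~\ref{lem:bivariate_conditional_moments_bound} (with the degenerate case $m\in\{k,\ell\}$ handled directly by $|y_m|\leq M$), you avoid the case analysis and the noncentral $\chi^2$ detour entirely. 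Both routes ultimately rest on the same inputs---the explicit conditional means and variances computed in Lemma~\ref{lem:bivariate_conditional_moments_bound}---so your shortcut loses nothing and is the cleaner argument.
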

\begin{proof}
Let $I=\{k,\ell\}$ and write $\by_{I} = (y_{k},y_{\ell})^{\top} = (y_{I_{1}},y_{I_{2}})^{\top}$.  We prove this lemma by considering the following three cases: $\card\left(\{i,j\} \cap I\right) =0,1,2$, namely cases 1, 2, and 3, respectively.

\paragraph{Case 1.} Consider the case $\card(\{i,j\} \cap I) = 0$. Let 
\bfln{
\E(Y_{i}|\by_{I}) = \mu_{i}, &\quad \var(Y_{i}|\by_{I}) = \sigma_{i}^2, \\
\E(Y_{j}|\by_{I}) = \mu_{j}, &\quad 
\var(Y_{j}|\by_{I}) = \sigma_{j}^2,
}
whose detailed expressions are provided in Lemma \ref{lem:bivariate_conditional_moments_bound}. Also, let $Z_{i}=Y_{i}/\sqrt{2}\sigma_{i} \sim \N(\mu_{i}/\sqrt{2}\sigma_{i},1/2)$ and $Z_{j}=Y_{j}/\sqrt{2}\sigma_{j} \sim \N(\mu_{j}/\sqrt{2}\sigma_{j},1/2)$ and write 
\bfln{
|Y_{i}Y_{j}| = 2\sigma_{i}\sigma_{j} 
\left|
\frac{Y_{i}}{\sqrt{2}\sigma_{i}}
\frac{Y_{j}}{\sqrt{2}\sigma_{j}}
\right| 
&=
2 \sigma_{i}\sigma_{j} 
\left| 
\frac{1}{4}(Z_{i}+Z_{j})^2 - 
\frac{1}{4}(Z_{i}-Z_{j})^2
\right| \\
&\leq
2\sigma_{i}\sigma_{j} 
\left\{
\frac{1}{4}(Z_{i}+Z_{j})^2
+
\frac{1}{4}(Z_{i}-Z_{j})^2
\right\}
,
}
where the last inequality holds by the triangle inequality. We have that $(Z_{i}+Z_{j})^{2}$ and $(Z_{i}-Z_{j})^{2}$ follow non-central $\chi^2_{\textup{df}=1}$ distributions with non-centrality parameters $\lambda_{+} = \mu_{i}^2/(2\sigma_{i}^2) + \mu_{j}^2/(2\sigma_{j}^2)$ and  $\lambda_{-} = \mu_{i}^2/(2\sigma_{i}^2) - \mu_{j}^2/(2\sigma_{j}^2)$, 
and thus,
\bfln{
\E\left\{|Y_{i}Y_{j}|~ \big|  Y_{I_{1}},Y_{I_{2}} \right\}
\leq
\frac{\sigma_{i}\sigma_{j}}{2}
\left\{ \lambda_{+} + \lambda_{-} +2 \right\}.
}
By Lemma \ref{lem:bivariate_conditional_moments_bound}, we have $\E\left\{|Y_{i}Y_{j}|~ \big|  Y_{I_{1}},Y_{I_{2}} \right\}<C$ for some constant $C>0$.

\paragraph{Case 2.} For the case $\card\left(\{i,j\} \cap I\right)=1$, we assume, without loss of generality that, $\{i,j\} \cap I= \{i\}$ and write
\bfln{
\E\left\{|Y_{i}Y_{j}|~ \big|  Y_{I_{1}} = y_{I_{1}},Y_{I_{2}} = y_{I_{2}} \right\}
&=
|y_{I_{1}}| ~ \E\left\{|Y_{j}|~ \big|  Y_{I_{1}} = y_{I_{1}},Y_{I_{2}} = y_{I_{2}} \right\} \\
&\leq 
M ~ \E\left\{|Y_{j}|~ \big|  Y_{I_{1}} = y_{I_{1}},Y_{I_{2}} = y_{I_{2}} \right\},
}
where $|y_{k}|\leq M$ by Assumption \ref{assumption2}. Then, by Lemma \ref{lem:bivariate_conditional_moments_bound}, we have
\bfln{
\E\left\{|Y_{i}Y_{j}|~ \big|  Y_{I_{1}} = y_{I_{1}},Y_{I_{2}} = y_{I_{2}} \right\} \leq C
}
for some constant $C>0$.

\paragraph{Case 3.} For the case $\{i, j\} \cap I = \{i,j\}$, Assumption \ref{assumption2} gives that
\bfln{
\E\left\{|Y_{i}Y_{j}|~ \big|  Y_{I_{1}} = y_{I_{1}},Y_{I_{2}} = y_{I_{2}} \right\} = |y_{I_{1}}y_{I_{2}}| \leq M^2.
}
This concludes the proof.
\end{proof}

\begin{lemma}\label{lem:h_wrt_r_bound}
Let $h_{ij}(r) = \partial \Phi(a_{1},\ldots,a_{4};\bSigma_{4})/\partial \rho_{ij}(r)$, where $\bSigma_{4}=[\rho_{ij}(r)]_{1\leq i,j \leq 4}$. Then, for any $1\leq i<j \leq 4$ and $1\leq k<\ell \leq 4$, 
\bfln{
\left|\frac{\partial h_{ij}(r)}{\partial\rho_{k\ell}}\right| \leq C
}
for some constant $C>0$.
\end{lemma}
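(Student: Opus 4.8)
The plan is to make each $h_{ij}(r)$ explicit through Plackett's identity \citep{plackett1954reduction} and then differentiate once more in $\rho_{k\ell}$. Write $\phi(\cdot\,;\bSigma_{4})$ for the density of $\N_{4}(\zeros,\bSigma_{4})$ and $\bOmega=\bSigma_{4}^{-1}=[\omega_{st}]_{1\le s,t\le 4}$, and recall the elementary identities $\partial\phi(\by)/\partial y_{s}=-(\bOmega\by)_{s}\phi(\by)$ and $\partial^{2}\phi(\by)/\partial y_{s}\partial y_{t}=\{(\bOmega\by)_{s}(\bOmega\by)_{t}-\omega_{st}\}\phi(\by)$. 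By Plackett's identity $h_{ij}(r)=\partial\Phi_{4}/\partial\rho_{ij}=\partial^{2}\Phi_{4}/\partial a_{i}\partial a_{j}$, so
\bfln{
h_{ij}(r)=\int_{-\infty}^{a_{m}}\int_{-\infty}^{a_{n}}\phi(a_{i},a_{j},y_{m},y_{n};\bSigma_{4})\,dy_{n}\,dy_{m},\qquad \{m,n\}=\{1,2,3,4\}\setminus\{i,j\},
}
where the $i$th and $j$th coordinates are held at $a_{i},a_{j}$. Differentiating under the integral sign and using Plackett's identity a second time replaces $\partial/\partial\rho_{k\ell}$ by $\partial^{2}/\partial y_{k}\partial y_{\ell}$, and the analysis then splits according to $c=\card(\{k,\ell\}\cap\{i,j\})\in\{0,1,2\}$. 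As in the preceding lemmas, the arguments $(a_{1},a_{2},a_{3},a_{4})$ are the bridge-function arguments $(-\Delta_{j},-\Delta_{k},0,0)$, so every $|a_{s}|\le M$ by Assumption~\ref{assumption2}.

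When $c=0$ (i.e.\ $\{k,\ell\}=\{m,n\}$), both differentiations act on integration variables and the fundamental theorem of calculus collapses both integrals, giving $\partial h_{ij}/\partial\rho_{k\ell}=\phi(a_{1},a_{2},a_{3},a_{4};\bSigma_{4})\le\phi(0,0,0,0;\bSigma_{4})=(2\pi)^{-2}|\bSigma_{4}|^{-1/2}$, which is bounded by Lemma~\ref{lem:Sigma4_inv_det}. When $c=1$, say $k=i$ and $\ell=m$, one differentiation collapses the $y_{m}$-integral (the boundary term at $-\infty$ vanishing by Gaussian decay) while the other becomes $\partial/\partial a_{i}$; after extending the remaining integral to all of $\R$ this gives $|\partial h_{ij}/\partial\rho_{k\ell}|\le\int_{\R}|(\bOmega\by)_{i}|\,\phi(a_{i},a_{j},a_{m},y_{n};\bSigma_{4})\,dy_{n}$, and factoring $\phi$ as a trivariate marginal at a bounded point times a conditional density bounds this by $\phi(0,0,0;\bSigma_{4})\cdot\E\{|(\bOmega\by)_{i}|\mid Y_{i}=a_{i},Y_{j}=a_{j},Y_{m}=a_{m}\}$; since $(\bOmega\by)_{i}$ is affine in the single remaining coordinate with coefficients bounded by Lemma~\ref{lem:Sigma4_inv_det} and intercept bounded by $\max_{s}|a_{s}|\le M$, the conditional expectation is controlled by Lemma~\ref{lem:univariate_conditional_moments_bound}. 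When $c=2$, i.e.\ $\{k,\ell\}=\{i,j\}$, both differentiations act on the fixed coordinates and produce the quadratic form $(\bOmega\by)_{i}(\bOmega\by)_{j}-\omega_{ij}$; bounding the absolute value, extending the integration region, and factoring $\phi$ through the bivariate marginal at $(a_{i},a_{j})$ gives $|\partial h_{ij}/\partial\rho_{ij}|\le\phi(0,0;\bSigma_{4})\{\E[|(\bOmega\by)_{i}(\bOmega\by)_{j}|\mid Y_{i}=a_{i},Y_{j}=a_{j}]+|\omega_{ij}|\}$, and expanding the product $(\bOmega\by)_{i}(\bOmega\by)_{j}$ as a linear combination (with bounded coefficients, by Lemma~\ref{lem:Sigma4_inv_det}) of terms $y_{s}y_{t}$ reduces the conditional expectation to the product-moment bound of Lemma~\ref{lem:bound_expection_of_product}. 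Since the bounds furnished by Lemmas~\ref{lem:Sigma4_inv_det}, \ref{lem:univariate_conditional_moments_bound} and \ref{lem:bound_expection_of_product} are uniform in $r$, $\Delta_{j}$, $\Delta_{k}$, so is the resulting constant $C$.

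The main obstacle is the case $c=2$: it is the only one in which the $\rho_{k\ell}$-derivative produces a genuine quadratic form rather than a bounded density or a single linear form, and hence the only one that truly needs the conditional product-moment estimate of Lemma~\ref{lem:bound_expection_of_product}; the cases $c=0$ and $c=1$ collapse, via the fundamental theorem of calculus, to lower-dimensional Gaussian densities evaluated at points of norm at most a constant, which are immediately controlled by $|\bSigma_{4}|^{-1/2}$. The only other point requiring care is the routine justification of differentiating under the integral sign and of the vanishing of the boundary terms at $-\infty$; both follow from the Gaussian tail decay together with the boundedness of the entries of $\bOmega$ guaranteed by Lemma~\ref{lem:Sigma4_inv_det} under Assumption~\ref{assumption1}.
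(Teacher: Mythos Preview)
Your proposal is correct and follows essentially the same approach as the paper: both proofs write $h_{ij}$ via Plackett's identity, split into the three cases $\card(\{i,j\}\cap\{k,\ell\})\in\{0,1,2\}$, and handle each case with the same tools---the $c=0$ case collapses to the density value bounded via Lemma~\ref{lem:Sigma4_inv_det}, the $c=1$ case yields a single linear form controlled by Lemma~\ref{lem:univariate_conditional_moments_bound}, and the $c=2$ case yields the quadratic form controlled by Lemma~\ref{lem:bound_expection_of_product}. The only differences are cosmetic (index labeling and order of presentation of the cases).
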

\begin{proof}
Let $I=\{i,j\}$, $K=\{k,\ell\}$. We write $\bx_{I}=(x_{i},x_{j})^{\top}=(x_{I_{1}},x_{I_{2}})^{\top}$ and $\Rcal_{I}=\{(x_{i},x_{j})| x_{i}<a_{i}, x_{j}<a_{j}\} \subset \R^{2}$, where. 
\bfln{
a_{m} =
\begin{cases}
-\Delta_{j}, & \text{if }  \quad m = 1;\\
-\Delta_{k}, & \text{if }  \quad m = 2;\\
0, & \text{otherwise. }
\end{cases}
}
We consider three cases where $\card(I\cap K)=2,1,0$. 

Consider the case $\card(I\cap K)=0$, i.e., $K= \{1,\ldots,4\}-I = I^{c}$. By \citet{plackett1954reduction}
\bfln{
\frac{\partial h_{I}(r)}{ \partial \rho_{I^{c}}} 
=
\frac{\partial}{\partial \rho_{I^{c}}}
\int_{\Rcal_{I^{c}}}
\phi(\ba_{I}, \bx_{I^{c}};\bSigma_{4})d\bx_{I^{c}} = \phi(\ba_{I}, \ba_{I^{c}};\bSigma_{4})
\leq
|\bSigma_{4}|^{1/2}
}
because $\phi(\ba_{I}, \ba_{I^{c}};\bSigma_{4})\leq \phi(0,0,0,0;\bSigma_{4}) = |\bSigma_{4}|^{-1/2}$. By Lemma \ref{lem:Sigma4_inv_det}, we have
\bfln{
\left|
\frac{\partial h_{I}(r)}{ \partial \rho_{I^{c}}}
\right|
\leq C
}
for some constant $C>0$.

Consider the case $\card(I\cap K)=2$, i.e., $I=K$. For notational convenience, we write $\ba_{I}=\by_{I}$ and $ \bx_{I^{c}} = \by_{I^{c}}$. Then, 
\bfln{
\frac{\partial h_{I}(r)}{ \partial \rho_{I}} 
&=
\frac{\partial}{\partial \rho_{I}}
\int_{\Rcal_{I^{c}}}
\phi(\ba_{I}, \bx_{I^{c}};\bSigma_{4})d\bx_{I^{c}} 
=
\frac{\partial}{\partial \rho_{I}}
\int_{\Rcal_{I^{c}}}
\phi(\by_{I}, \by_{I^{c}};\bSigma_{4})d\by_{I^{c}} \\
&=
\int_{\Rcal_{I^{c}}}
\frac{\partial}{\partial \rho_{I}}
\phi(\by_{I}, \by_{I^{c}};\bSigma_{4})d\by_{I^{c}}
=
\int_{\Rcal_{I^{c}}}
\frac{\partial^2}{\partial y_{I_{1}} \partial y_{I_{2}} }
\phi(\by_{I}, \by_{I^{c}};\bSigma_{4})d\by_{I^{c}},
}
where the last equality is due to \citet{plackett1954reduction}.
Let $\bom_{j}^{\top}$ be the $j$th row of $\bSigma_{4}^{-1}=[\omega_{ij}]_{1\leq i,j \leq 4}$, $\bSigma_{I} = \var(\by_{I})$, and $\bSigma_{I^{c}|I}=\var(\by_{I^{c}}|\by_{I})$. By differentiating the multivariate Gaussian density, we have 
\bfln{
\left|
\int_{\Rcal_{I^{c}}}
\frac{\partial^2}{\partial y_{I_{1}} \partial y_{I_{2}} }
\phi(\by_{I}, \by_{I^{c}};\bSigma_{4})d\by_{I^{c}}
\right|
&=
\left|
\int_{\Rcal_{I^{c}}}
\left\{(\bom_{I_{1}}^{\top}\by)(\bom_{I_{2}}^{\top}\by) - \omega_{I}\right\}
\phi(\by_{I}, \by_{I^{c}};\bSigma_{4})d\by_{I^{c}}
\right| \\
&=
\left|
\phi(\by_{I};\bSigma_{I})
\int_{\Rcal_{I^{c}}}
\left\{(\bom_{I_{1}}^{\top}\by)(\bom_{I_{2}}^{\top}\by) - \omega_{I}\right\}
\phi(\by_{I^{c}}|\by_{I};\bSigma_{I^{c}|I})
d\by_{I^{c}}
\right|.
}
We also have that $\phi(\by_{I};\bSigma_{I}) \leq \phi(0,0;\bSigma_{I}) = |\bSigma_{I}|^{-1/2} \leq |\bSigma_{4}|^{-1/2}$ for any $I$, and by Lemma \ref{lem:Sigma4_inv_det}, $|\bSigma_{4}|^{-1/2}|\leq C_{2}^{1/2}$ for some constant $C_{2}>0$. Thus, 
\bfln{
\left|
\int_{\Rcal_{I^{c}}}
\frac{\partial^2}{\partial y_{I_{1}} \partial y_{I_{2}} }
\phi(\by_{I}, \by_{I^{c}};\bSigma_{4})d\by_{I^{c}}
\right|
&\leq
C_{2}^{1/2}
\left|
\int_{\Rcal_{I^{c}}}
\left\{(\bom_{I_{1}}^{\top}\by)(\bom_{I_{2}}^{\top}\by) - \omega_{I}\right\}
\phi(\by_{I^{c}}|\by_{I};\bSigma_{I^{c}|I})
d\by_{I^{c}}
\right|.
}
The absolute value of the integral of the last term is bounded as
\bfln{
\left|
\int_{\Rcal_{I^{c}}}
\left\{(\bom_{I_{1}}^{\top}\by)(\bom_{I_{2}}^{\top}\by) - \omega_{I}\right\}
\phi(\by_{I^{c}}|\by_{I};\bSigma_{I^{c}|I})
d\by_{I^{c}}
\right|
&\leq 
\int_{\Rcal_{I^{c}}}
\left|
(\bom_{I_{1}}^{\top}\by)(\bom_{I_{2}}^{\top}\by) - \omega_{I}
\right|
\phi(\by_{I^{c}}|\by_{I};\bSigma_{I^{c}|I})
d\by_{I^{c}}\\
&\leq 
\int_{\R^2}
\left|
(\bom_{I_{1}}^{\top}\by)(\bom_{I_{2}}^{\top}\by) - \omega_{I}
\right|
\phi(\by_{I^{c}}|\by_{I};\bSigma_{I^{c}|I})
d\by_{I^{c}}\\
&\leq 
\int_{\R^2}
\left|
(\bom_{I_{1}}^{\top}\by)(\bom_{I_{2}}^{\top}\by)
\right|
\phi(\by_{I^{c}}|\by_{I};\bSigma_{I^{c}|I})
d\by_{I^{c}} 
+ |\omega_{I}|,
}
where the second inequality is due to expanding the range of integration, and the third inequality is due to the triangle inequality.
By Lemma \ref{lem:Sigma4_inv_det}, we know that, whether $\bSigma_{4}=\bSigma_{4a}$ or $\bSigma_{4}=\bSigma_{4b}$, $|\omega_{jk}|\leq C_{1}$, for all $1\leq j,k \leq 4 $. Also, by the triangle inequality,
\bfln{
\left|(\bom_{I_{1}}^{\top}\by)(\bom_{I_{2}}^{\top}\by)\right| 
= 
\left|
\sum_{i'=1}^{4} \sum_{j'=1}^{4} \omega_{I_{1}i'}\omega_{I_{2}j'} y_{i'}y_{j'}
\right|
\leq
C_{1}^{2}
\sum_{i'=1}^{4} \sum_{j'=1}^{4} \left| y_{i'}y_{j'}
\right|.
}
Hence, for some constant $C>0$,
\bfln{
\left|
\frac{\partial h_{I}(r)}{ \partial \rho_{I}} 
\right| 
&\leq
C_{2}^{1/2}
\int_{\R^2}
\left|
(\bom_{I_{1}}^{\top}\by)(\bom_{I_{2}}^{\top}\by)
\right|
\phi(\by_{I^{c}}|\by_{I};\bSigma_{I^{c}|I})
d\by_{I^{c}}  + C_{2}^{1/2}C_{1} \\
&\leq
C_{2}^{1/2}C_{1}^2
\sum_{i'=1}^{4}\sum_{j'=1}^{4}
\int_{\R^2}
\left|
y_{i'}y_{j'}
\right|
\phi(\by_{I^{c}}|\by_{I};\bSigma_{I^{c}|I})
d\by_{I^{c}}  + 
C_{2}^{1/2}C_{1}\\
&\leq C,
}
where the last inequality holds by Lemma \ref{lem:bound_expection_of_product}.

Consider the case $\card(I\cap K)=1$. We assume, without loss of generality, that $I=\{i,j\}$ and $K=\{j,\ell\}$, i.e., $I\cap K = \{j\}$. Then, by \citet{plackett1954reduction},
\bfln{
\frac{\partial h_{ij}(r)}{ \partial \rho_{j\ell}} 
&=
\frac{\partial}{\partial \rho_{j\ell}}
\int_{-\infty}^{a_{\ell}}
\int_{-\infty}^{a_{k}}
\phi(a_{i}, a_{j}, x_{k}, x_{\ell};\bSigma_{4})dx_{k}dx_{\ell} 
= 
\int_{-\infty}^{a_{k}}
\frac{\partial}{\partial a_{j}}
\phi(a_{i},a_{j}, x_{k},a_{\ell};\bSigma_{4})dx_{k}.
}
For notational convenience, let $\by=(a_{i},a_{j},x_{k},a_{\ell})^{\top}=(y_{i},y_{j},y_{k},y_{\ell})^{\top}$ and write
\bfln{
\frac{\partial h_{ij}(r)}{ \partial \rho_{j\ell}} 
&= 
\int_{-\infty}^{a_{k}}
\frac{\partial}{\partial y_{j}}
\phi(\by;\bSigma_{4})dy_{k} 
=
\int_{-\infty}^{a_{k}}
(-\bom^{\top}_{j}\by)
\phi(\by;\bSigma_{4})dy_{k}.
}
Then, we have that
\bfln{
\left| 
\frac{\partial h_{ij}(r)}{ \partial \rho_{j\ell}} 
\right|
&=
\left| 
\int_{-\infty}^{a_{k}}
(-\bom^{\top}_{j}\by)
\phi(\by;\bSigma_{4})dy_{k}
\right| \\
&\leq
\int_{-\infty}^{a_{k}}
| \bom^{\top}_{j}\by |
\phi(\by;\bSigma_{4})dy_{k}  \\
&\leq
\int_{-\infty}^{\infty}
| \bom^{\top}_{j}\by |
\phi(\by;\bSigma_{4})dy_{k}
\quad \text{(by expanding the range of integration)}
\\
&\leq
\sum_{i'\neq k}|\omega_{ji'}y_{i'}|\phi(y_{i},y_{j}, y_{\ell};\bSigma_{3}) + 
\phi(y_{i},y_{j}, y_{\ell};\bSigma_{3})
\int_{-\infty}^{\infty}
| \omega_{jk} y_{k} |
\phi(y_{k}|y_{i},y_{j}, y_{\ell};\bSigma_{4})dy_{k},
}
where the last inequality holds by the triangle inequality.
By Lemma \ref{lem:Sigma4_inv_det}, $|\omega_{jk}| \leq C_{1}$ for all $1\leq j,k \leq 4$, and by Assumption \ref{assumption2}, $|y_{i}|\leq M$ for all for $1\leq i \leq 4$. This gives
\bfln{
\left| 
\frac{\partial h_{ij}(r)}{ \partial \rho_{j\ell}} 
\right|
&\leq
\sum_{i'\neq k}|\omega_{ji'}y_{i'}|\phi(y_{i},y_{j}, y_{\ell};\bSigma_{3}) + \int_{-\infty}^{\infty}
| \omega_{jk} y_{k} |
\phi(y_{k}|y_{i},y_{j}, y_{\ell};\bSigma_{4})dy_{k}
\left\{ \phi(y_{i},y_{j}, y_{\ell};\bSigma_{3})\right\} \\
&\leq
3C_{1}M |\bSigma_{4}|^{-1/2} + 
C_{1}|\bSigma_{4}|^{-1/2}
\int_{-\infty}^{\infty}
| y_{k} |
\phi(y_{k}|y_{i},y_{j}, y_{\ell};\bSigma_{4})dy_{k}.
}
Again, by Lemma \ref{lem:Sigma4_inv_det}, $|\bSigma_{4}|^{-1/2}\leq C_{2}^{1/2}$, and by Lemma \ref{lem:univariate_conditional_moments_bound}, above integral is bounded above by some positive constant. Thus, 
\bfln{
\left| 
\frac{\partial h_{ij}(r)}{ \partial \rho_{j\ell}} 
\right| \leq C
}
for some constant $C>0$.
\end{proof}

\begin{lemma}\label{lem:foldedGaussian_mean}
Let $X \sim \N(\mu,\sigma^2)$. Then $\E(|X|) \leq \sigma \sqrt{2/\pi} + |\mu|$
\end{lemma}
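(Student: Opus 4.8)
The plan is to reduce to the standard normal case and apply the triangle inequality. First I would write $X = \mu + \sigma Z$ with $Z \sim \N(0,1)$, which is the usual location–scale representation of a Gaussian. By the triangle inequality applied pointwise, $|X| = |\mu + \sigma Z| \leq |\mu| + \sigma |Z|$, and taking expectations on both sides (monotonicity and linearity of expectation) gives $\E(|X|) \leq |\mu| + \sigma\, \E(|Z|)$.

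It then remains to evaluate $\E(|Z|)$ for $Z \sim \N(0,1)$. This is the elementary half-normal mean computation: by symmetry $\E(|Z|) = 2\int_0^\infty z\,\tfrac{1}{\sqrt{2\pi}} e^{-z^2/2}\,dz$, and the substitution $u = z^2/2$ (or simply recognizing the antiderivative $-e^{-z^2/2}$) yields $\E(|Z|) = \tfrac{2}{\sqrt{2\pi}} = \sqrt{2/\pi}$. Substituting this back gives $\E(|X|) \leq |\mu| + \sigma\sqrt{2/\pi}$, which is the claim.

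There is essentially no obstacle here: the only nontrivial ingredient is the closed-form Gaussian integral $\E(|Z|) = \sqrt{2/\pi}$, which is classical. If one prefers to avoid even that computation, one can instead bound $\E(|Z|) \leq \sqrt{\E(Z^2)} = 1$ by Jensen's inequality, but that yields the weaker bound $\E(|X|) \le |\mu| + \sigma$; since the lemma as stated asks for the sharper constant $\sqrt{2/\pi}$, I would carry out the one-line integral. Note also that the bound is used elsewhere (e.g., in Lemmas~\ref{lem:bivariate_conditional_moments_bound} and~\ref{lem:h_wrt_r_bound}) only in the crude form "bounded by a constant times $\sigma$ plus $|\mu|$," so any of these versions suffices downstream.
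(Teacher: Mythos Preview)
Your proof is correct and is actually more direct than the paper's. The paper first quotes the exact folded-Gaussian mean formula
\[
\E(|X|) = \sigma\sqrt{\tfrac{2}{\pi}}\exp\!\left(-\tfrac{\mu^2}{2\sigma^2}\right) + \mu\bigl\{1 - 2\Phi(-\mu/\sigma)\bigr\},
\]
and then bounds each term separately via $\exp(-a^2)\le 1$ and $|1-2\Phi(\cdot)|\le 1$. Your route---write $X=\mu+\sigma Z$, apply the pointwise triangle inequality, and compute $\E|Z|=\sqrt{2/\pi}$---bypasses the need to state or verify that closed-form mean, at the cost of nothing: the resulting bound is identical. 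The paper's approach has the minor conceptual advantage of making visible that the inequality is tight at $\mu=0$ (since the exponential factor is then $1$ and the second term vanishes), but for the downstream use in Lemmas~\ref{lem:bivariate_conditional_moments_bound} and~\ref{lem:univariate_conditional_moments_bound} this sharpness is irrelevant, and your argument is cleaner.
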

\begin{proof}
For $X \sim \N(\mu,\sigma^2)$, $|X|$ follows the followed Gaussian distribution with mean
\bfln{
\E(|X|) &= \sigma \sqrt{ \frac{2}{\pi} } \exp\left(-\frac{\mu^{2}}{2\sigma^2}\right) +
\mu \left\{ 1 - 2\Phi\left(-\frac{\mu}{\sigma}\right)\right\}.
}
Since $\exp(-a^2)\leq 1$ and $0 \leq a\{1-2\Phi(a)\} \leq |a|$, $a\in \R$, we have that
\bfln{
\E(|X|) &= \sigma \sqrt{ \frac{2}{\pi} } \exp\left(-\frac{\mu^{2}}{2\sigma^2}\right) +
\mu \left\{ 1 - 2\Phi\left(-\frac{\mu}{\sigma}\right)\right\} \\
& \leq
\sigma \sqrt{ \frac{2}{\pi} }  +
|\mu|.
}
\end{proof}

\bibliographystyle{chicago}
\bibliography{References}

\begin{thebibliography}{}

\bibitem[\protect\citeauthoryear{Ahn, Chung, and Jeon}{Ahn
  et~al.}{2021}]{ahn2021trace}
Ahn, J., H.~C. Chung, and Y.~Jeon (2021).
\newblock Trace ratio optimization for high-dimensional multi-class
  discrimination.
\newblock {\em Journal of Computational and Graphical Statistics\/}~{\em
  30\/}(1), 192--203.

\bibitem[\protect\citeauthoryear{Ahn and Marron}{Ahn and
  Marron}{2010}]{ahn2010maximal}
Ahn, J. and J.~Marron (2010).
\newblock The maximal data piling direction for discrimination.
\newblock {\em Biometrika\/}~{\em 97\/}(1), 254--259.

\bibitem[\protect\citeauthoryear{Barber and Kolar}{Barber and
  Kolar}{2018}]{barber2018rocket}
Barber, R.~F. and M.~Kolar (2018).
\newblock Rocket: Robust confidence intervals via kendall’s tau for
  transelliptical graphical models.
\newblock {\em The Annals of Statistics\/}~{\em 46\/}(6B), 3422--3450.

\bibitem[\protect\citeauthoryear{Bickel and Levina}{Bickel and
  Levina}{2004}]{bickel2004some}
Bickel, P.~J. and E.~Levina (2004).
\newblock Some theory for fisher's linear discriminant function,naive bayes',
  and some alternatives when there are many more variables than observations.
\newblock {\em Bernoulli\/}~{\em 10\/}(6), 989--1010.

\bibitem[\protect\citeauthoryear{Bickel, Ritov, and Tsybakov}{Bickel
  et~al.}{2009}]{bickel2009simultaneous}
Bickel, P.~J., Y.~Ritov, and A.~B. Tsybakov (2009).
\newblock Simultaneous analysis of lasso and dantzig selector.
\newblock {\em The Annals of statistics\/}~{\em 37\/}(4), 1705--1732.

\bibitem[\protect\citeauthoryear{Chikuse}{Chikuse}{2003}]{chikuse2012statistics}
Chikuse, Y. (2003).
\newblock {\em Statistics on Special Manifolds}, Volume 174.
\newblock Springer New York, NY.

\bibitem[\protect\citeauthoryear{Clemmensen, Hastie, Witten, and
  Ersb{\o}ll}{Clemmensen et~al.}{2011}]{clemmensen2011sparse}
Clemmensen, L., T.~Hastie, D.~Witten, and B.~Ersb{\o}ll (2011).
\newblock Sparse discriminant analysis.
\newblock {\em Technometrics\/}~{\em 53\/}(4), 406--413.

\bibitem[\protect\citeauthoryear{De~la Pena and Gin{\'e}}{De~la Pena and
  Gin{\'e}}{2012}]{de2012decoupling}
De~la Pena, V. and E.~Gin{\'e} (2012).
\newblock {\em Decoupling: from dependence to independence}.
\newblock Springer Science \& Business Media.

\bibitem[\protect\citeauthoryear{Dey and Zipunnikov}{Dey and
  Zipunnikov}{2022}]{dey2022semiparametric}
Dey, D. and V.~Zipunnikov (2022).
\newblock Semiparametric gaussian copula regression modeling for mixed data
  types (sgcrm).
\newblock {\em arXiv preprint arXiv:2205.06868\/}.

\bibitem[\protect\citeauthoryear{Dong, Zhao, Tong, and Wan}{Dong
  et~al.}{2016}]{dong2016nblda}
Dong, K., H.~Zhao, T.~Tong, and X.~Wan (2016).
\newblock Nblda: negative binomial linear discriminant analysis for rna-seq
  data.
\newblock {\em BMC bioinformatics\/}~{\em 17\/}(1), 1--10.

\bibitem[\protect\citeauthoryear{Fan, Liu, Ning, and Zou}{Fan
  et~al.}{2017}]{Fan:2016um}
Fan, J., H.~Liu, Y.~Ning, and H.~Zou (2017).
\newblock {High dimensional semiparametric latent graphical model for mixed
  data}.
\newblock {\em Journal of the Royal Statistical Society: Series B (Statistical
  Methodology)\/}~{\em 79\/}(2), 405--421.

\bibitem[\protect\citeauthoryear{Feng and Ning}{Feng and
  Ning}{2019}]{Feng:2019wh}
Feng, H. and Y.~Ning (2019).
\newblock {High-dimensional Mixed Graphical Model with Ordinal Data - Parameter
  Estimation and Statistical Inference.}
\newblock {\em AISTATS\/}.

\bibitem[\protect\citeauthoryear{Gaynanova}{Gaynanova}{2020}]{MyBernoulli}
Gaynanova, I. (2020).
\newblock {Prediction and estimation consistency of sparse multi-class
  penalized optimal scoring}.
\newblock {\em Bernoulli\/}~{\em 26\/}(1), 286--322.

\bibitem[\protect\citeauthoryear{Gaynanova}{Gaynanova}{2021}]{gaynanova2016MGSDApackage}
Gaynanova, I. (2021).
\newblock {\em MGSDA: Multi-Group Sparse Discriminant Analysis}.
\newblock R package version 1.6.

\bibitem[\protect\citeauthoryear{Gaynanova, Booth, and Wells}{Gaynanova
  et~al.}{2016}]{Gaynanova:2016wk}
Gaynanova, I., J.~G. Booth, and M.~T. Wells (2016).
\newblock {Simultaneous sparse estimation of canonical vectors in the p $\gg$ N
  setting}.
\newblock {\em Journal of the American Statistical Association\/}~{\em
  111\/}(514), 696--706.

\bibitem[\protect\citeauthoryear{Guo, Hastie, and Tibshirani}{Guo
  et~al.}{2007}]{guo2007regularized}
Guo, Y., T.~Hastie, and R.~Tibshirani (2007).
\newblock Regularized linear discriminant analysis and its application in
  microarrays.
\newblock {\em Biostatistics\/}~{\em 8\/}(1), 86--100.

\bibitem[\protect\citeauthoryear{Han and Liu}{Han and Liu}{2014}]{Han:2014je}
Han, F. and H.~Liu (2014).
\newblock {Scale-invariant sparse PCA on high-dimensional meta-elliptical
  data}.
\newblock {\em Journal of the American Statistical Association\/}~{\em
  109\/}(505), 275--287.

\bibitem[\protect\citeauthoryear{Han, Zhao, and Liu}{Han
  et~al.}{2013}]{Han:2013ju}
Han, F., T.~Zhao, and H.~Liu (2013).
\newblock {CODA: High dimensional copula discriminant analysis}.
\newblock {\em Journal of Machine Learning Research\/}~{\em 14}, 629--671.

\bibitem[\protect\citeauthoryear{Han}{Han}{2015}]{han2015fusobacterium}
Han, Y.~W. (2015).
\newblock Fusobacterium nucleatum: a commensal-turned pathogen.
\newblock {\em Current Opinion in Microbiology\/}~{\em 23}, 141--147.

\bibitem[\protect\citeauthoryear{Hastie, Tibshirani, Friedman, and
  Friedman}{Hastie et~al.}{2009}]{hastie2009elements}
Hastie, T., R.~Tibshirani, J.~H. Friedman, and J.~H. Friedman (2009).
\newblock {\em The elements of statistical learning: data mining, inference,
  and prediction}, Volume~2.
\newblock Springer New York, NY.

\bibitem[\protect\citeauthoryear{Hastie, Tibshirani, and Wainwright}{Hastie
  et~al.}{}]{hastie2015statistical}
Hastie, T., R.~Tibshirani, and M.~Wainwright.
\newblock {\em Statistical Learning with Sparsity. The Lasso and
  Generalizations.}
\newblock CRC Press.

\bibitem[\protect\citeauthoryear{Hern{\'a}ndez and Velilla}{Hern{\'a}ndez and
  Velilla}{2005}]{hernandez2005dimension}
Hern{\'a}ndez, A. and S.~Velilla (2005).
\newblock Dimension reduction in nonparametric kernel discriminant analysis.
\newblock {\em Journal of Computational and Graphical Statistics\/}~{\em
  14\/}(4), 847--866.

\bibitem[\protect\citeauthoryear{Huang, M{\"u}ller, and Gaynanova}{Huang
  et~al.}{2021}]{huang2021latentcor}
Huang, M., C.~L. M{\"u}ller, and I.~Gaynanova (2021).
\newblock latentcor: An r package for estimating latent correlations from mixed
  data types.
\newblock {\em Journal of Open Source Software\/}~{\em 6\/}(65), 3634.

\bibitem[\protect\citeauthoryear{Lapanowski and Gaynanova}{Lapanowski and
  Gaynanova}{2019}]{lapanowski2019sparse}
Lapanowski, A.~F. and I.~Gaynanova (2019).
\newblock Sparse feature selection in kernel discriminant analysis via optimal
  scoring.
\newblock In {\em The 22nd International Conference on Artificial Intelligence
  and Statistics}, pp.\  1704--1713. PMLR.

\bibitem[\protect\citeauthoryear{Lee, Ahn, and Jeon}{Lee
  et~al.}{2013}]{lee2013hdlss}
Lee, M.~H., J.~Ahn, and Y.~Jeon (2013).
\newblock Hdlss discrimination with adaptive data piling.
\newblock {\em Journal of Computational and Graphical Statistics\/}~{\em
  22\/}(2), 433--451.

\bibitem[\protect\citeauthoryear{Lin and Jeon}{Lin and
  Jeon}{2003}]{lin2003discriminant}
Lin, Y. and Y.~Jeon (2003).
\newblock Discriminant analysis through a semiparametric model.
\newblock {\em Biometrika\/}~{\em 90\/}(2), 379--392.

\bibitem[\protect\citeauthoryear{Liu, Lafferty, and Wasserman}{Liu
  et~al.}{2009}]{Liu:2009wz}
Liu, H., J.~D. Lafferty, and L.~Wasserman (2009).
\newblock {The Nonparanormal: Semiparametric Estimation of High Dimensional
  Undirected Graphs}.
\newblock {\em Journal of Machine Learning Research\/}~{\em 10}, 2295--2328.

\bibitem[\protect\citeauthoryear{Mai, Zou, and Yuan}{Mai
  et~al.}{2012}]{Mai:2012bf}
Mai, Q., H.~Zou, and M.~Yuan (2012).
\newblock {A direct approach to sparse discriminant analysis in ultra-high
  dimensions}.
\newblock {\em Biometrika\/}~{\em 99\/}(1), 29--42.

\bibitem[\protect\citeauthoryear{Negahban, Ravikumar, Wainwright, and
  Yu}{Negahban et~al.}{2012}]{negahban2012unified}
Negahban, S.~N., P.~Ravikumar, M.~J. Wainwright, and B.~Yu (2012).
\newblock A unified framework for high-dimensional analysis of $ m $-estimators
  with decomposable regularizers.
\newblock {\em Statistical science\/}~{\em 27\/}(4), 538--557.

\bibitem[\protect\citeauthoryear{Plackett}{Plackett}{1954}]{plackett1954reduction}
Plackett, R.~L. (1954).
\newblock A reduction formula for normal multivariate integrals.
\newblock {\em Biometrika\/}~{\em 41\/}(3-4), 351--360.

\bibitem[\protect\citeauthoryear{Quan, Booth, and Wells}{Quan
  et~al.}{2018}]{Quan:2018tv}
Quan, X., J.~G. Booth, and M.~T. Wells (2018).
\newblock {Rank-based approach for estimating correlations in mixed ordinal
  data}.
\newblock {\em arXiv.org\/}.

\bibitem[\protect\citeauthoryear{Silverman, Roche, Mukherjee, and
  David}{Silverman et~al.}{2020}]{silverman2020naught}
Silverman, J.~D., K.~Roche, S.~Mukherjee, and L.~A. David (2020).
\newblock Naught all zeros in sequence count data are the same.
\newblock {\em Computational and structural biotechnology journal\/}~{\em 18},
  2789--2798.

\bibitem[\protect\citeauthoryear{Strauss, Kaplan, Beck, Rioux, Panaccione,
  DeVinney, Lynch, and Allen-Vercoe}{Strauss
  et~al.}{2011}]{strauss2011invasive}
Strauss, J., G.~G. Kaplan, P.~L. Beck, K.~Rioux, R.~Panaccione, R.~DeVinney,
  T.~Lynch, and E.~Allen-Vercoe (2011).
\newblock Invasive potential of gut mucosa-derived fusobacterium nucleatum
  positively correlates with {IBD} status of the host.
\newblock {\em Inflammatory Bowel Diseases\/}~{\em 17\/}(9), 1971--1978.

\bibitem[\protect\citeauthoryear{Vandeputte, Kathagen, D’hoe, Vieira-Silva,
  Valles-Colomer, Sabino, Wang, Tito, De~Commer, Darzi, et~al.}{Vandeputte
  et~al.}{2017}]{vandeputte2017quantitative}
Vandeputte, D., G.~Kathagen, K.~D’hoe, S.~Vieira-Silva, M.~Valles-Colomer,
  J.~Sabino, J.~Wang, R.~Y. Tito, L.~De~Commer, Y.~Darzi, et~al. (2017).
\newblock Quantitative microbiome profiling links gut community variation to
  microbial load.
\newblock {\em Nature\/}~{\em 551\/}(7681), 507--511.

\bibitem[\protect\citeauthoryear{Werner, Wagner, Martinez, Walter, Chang,
  Clavel, Kisling, Schuemann, and Haller}{Werner
  et~al.}{2011}]{werner2011succini}
Werner, T., S.~J. Wagner, I.~Martinez, J.~Walter, J.-S. Chang, T.~Clavel,
  S.~Kisling, K.~Schuemann, and D.~Haller (2011).
\newblock Depletion of luminal iron alters the gut microbiota and prevents
  crohn{\textquoteright}s disease-like ileitis.
\newblock {\em Gut\/}~{\em 60\/}(3), 325--333.

\bibitem[\protect\citeauthoryear{Witten}{Witten}{2011}]{witten2011classification}
Witten, D.~M. (2011).
\newblock Classification and clustering of sequencing data using a {P}oisson
  model.
\newblock {\em The Annals of Applied Statistics\/}~{\em 5\/}(4), 2493--2518.

\bibitem[\protect\citeauthoryear{Witten and Tibshirani}{Witten and
  Tibshirani}{2011}]{witten2011penalized}
Witten, D.~M. and R.~Tibshirani (2011).
\newblock Penalized classification using {F}isher's linear discriminant.
\newblock {\em Journal of the Royal Statistical Society: Series B (Statistical
  Methodology)\/}~{\em 73\/}(5), 753--772.

\bibitem[\protect\citeauthoryear{Yoon, Carroll, and Gaynanova}{Yoon
  et~al.}{2020}]{yoon2020sparse}
Yoon, G., R.~J. Carroll, and I.~Gaynanova (2020).
\newblock Sparse semiparametric canonical correlation analysis for data of
  mixed types.
\newblock {\em Biometrika\/}~{\em 107\/}(3), 609--625.

\bibitem[\protect\citeauthoryear{Yoon and Gaynanova}{Yoon and
  Gaynanova}{2021}]{yoon2021mixedcca}
Yoon, G. and I.~Gaynanova (2021).
\newblock {\em Sparse Canonical Correlation Analysis for High-Dimensional Mixed
  Data}.
\newblock R package version 1.4.6.

\bibitem[\protect\citeauthoryear{Yoon, Gaynanova, and M{\"u}ller}{Yoon
  et~al.}{2019}]{Yoon2019microbial}
Yoon, G., I.~Gaynanova, and C.~L. M{\"u}ller (2019).
\newblock Microbial networks in spring-semi-parametric rank-based correlation
  and partial correlation estimation for quantitative microbiome data.
\newblock {\em Frontiers in Genetics\/}~{\em 10}, 516.

\end{thebibliography}

\end{document}